\ifdefined\pdfoutput
  \pdfoutput=1
  \documentclass[12pt]{article}
\else
  \documentclass[12pt,dvipdfmx]{article}
\fi

\ifdefined\pdfoutput
  \usepackage{xcolor}
\else
  \usepackage[dvipdfmx]{xcolor}
\fi
\usepackage{mathrsfs,bm,sasaki,braket}
 
\usepackage{mathtools,amssymb,amsthm}   

\usepackage{tikz}  

\usepackage[a4paper]{geometry}
\newgeometry{left=30mm,right=30mm,bottom=32mm,top=32mm}

\usepackage[unicode,psdextra,hypertexnames=false]{hyperref}
\hypersetup{
    colorlinks=true,
    linkcolor=blue!60!black,  
    urlcolor=teal!80!black,   
    citecolor=green!50!black  
}

\usepackage{autonum}

\newcommand{\Phig}{\Phi_\mathrm{g}}
\newcommand{\Phigt}{{\til\Phi}_\mathrm{g}}

\theoremstyle{plain}
\newtheorem{Theorem}{Theorem}[section]

\newtheorem{Corollary}[Theorem]{Corollary}
\newtheorem{Lemma}[Theorem]{Lemma}
\newtheorem{Proposition}[Theorem]{Proposition}
\newtheorem{Conjecture}[Theorem]{Conjecture}
\numberwithin{equation}{section}

\theoremstyle{remark}
\newtheorem{Remark}[Theorem]{Remark}
\newtheorem{Example}[Theorem]{Example}

\title{\sc Holomorphy of the ground state and energy for bosonic quadratic Hamiltonians}

\author{%
  Kota Imura \and
  Shinnosuke Izumi \and
  Yasumichi Matsuzawa \and
  Itaru Sasaki
}

\date{}   

\begin{document}
\maketitle   

\begin{abstract}
We study the holomorphy of the ground state and its energy with respect to the coupling constant
for bosonic quadratic Hamiltonians.
We prove general theorems establishing this holomorphy, thereby providing a 
rigorous justification for formal perturbative expansions. 
This theory is applied to several concrete models, including the single pair interaction 
model and the Pauli--Fierz model in the dipole approximation. 
Importantly, for the Pauli--Fierz model with a realistic ultraviolet cutoff, the perturbation expansions 
in the elementary charge converge at the physical charge value, even when mass renormalization is taken into account.
Furthermore, we explicitly determine the radii of convergence 
of the ground states and their energies for both the single pair interaction model
 and the fibers of the translation-invariant Pauli--Fierz model.
\end{abstract}

\tableofcontents 


\section{Introduction}
Many discussions in quantum field theory rely on perturbative expansions. 
Perturbation theory allows us to expand various quantities as formal power series
 in terms of the coupling constant, so that each term can be determined algebraically.
However, the convergence of such series is generally far from obvious; in many cases, they are known to diverge 
 (see e.g., \cite{Jaffe1965}). 
Therefore, a rigorous understanding of the regime in which formal perturbative
 expansions are justified remains a fundamental challenge in the mathematical study of quantum fields.

 For models coupled to a massive bosonic field, analyticity at small couplings often follows from 
Kato's analytic perturbation theory \cite{MR1335452}. By contrast, the massless case is substantially more delicate.
To address this, several rigorous approaches have been developed.
 For instance, Bach, Fröhlich, and Sigal \cite{BFS1998} introduced the operator-theoretic 
renormalization group method to construct ground states. 
Based on this framework, Griesemer and Hasler \cite{MR2519822} proved the 
analyticity of ground states and their energies.
There have also been various related studies 
\cite{MR2846670, MR2890307, Arai2014, MR3926124, MR3339159, MR2738099, HaslerHerbst2011_SB, HaslerHerbst2011_Convergent, HaslerHerbst2011_Smooth, hasler2025resonances}.
However, these results are typically restricted to the regime of small coupling constants. 
Establishing analyticity for larger coupling and determining the radius of
 convergence---and whether it reaches actual physical values---remain open problems.

In this paper, we consider bosonic quadratic Hamiltonians. Under appropriate assumptions, 
such Hamiltonians can be diagonalized via Bogoliubov transformations, 
and it is known that the ground states and their energies can be expressed explicitly 
in terms of one-particle operators \cite{MR4213757, Rui78}.
More precisely, the ground states are given by squeezed states, 
while the energies are represented as traces of differences of 
certain one-particle operators.
Thus, an advantage of these models over general quantum field models is that
the structure of the ground states and their energies can be understood relatively explicitly.
 Alternative mathematical approaches to the diagonalization of quadratic Hamiltonians can also be 
found in \cite{BachBru2016, 2017Derezinski, 2016NamNapiSolovej}.

However, because these expressions involve the square roots of unbounded operators,
 their analyticity with respect 
to the coupling constant is not obvious. Furthermore, depending on the value of the coupling constant, 
the Hamiltonian may no longer be bounded from below (see, e.g., \cite{AF}).
This suggests that the ground states and their energies have a finite radius of convergence. 
Therefore, the problem of analyticity is nontrivial.

A naive way to prove the convergence of a perturbative series is to estimate each term 
and show that the series converges. However, in such an approach, errors in the estimates
 tend to accumulate, leading to a bound on the radius of convergence that is far from optimal.
 To avoid this difficulty, we instead extend the coupling constant to a complex parameter
 and establish the holomorphy of the ground states and their energies. 
The convergence of the perturbative expansion then follows immediately from this holomorphy.

In concrete applications, the Hamiltonian does not always depend linearly 
on the coupling constant.
We therefore introduce an abstract framework and study holomorphy in this setting.
Specifically, for the ground state energy, we formulate conditions under which 
the one-particle operators appearing in the trace representation can be defined for 
complex values of the coupling constant, so that the ground state energy admits an analytic continuation.
Under the same conditions, we show that the squeezed state for complex values of 
the parameter is well-defined as a vector in Fock space and is holomorphic in the complex domain.

As applications of this general framework, we consider the single pair interaction model,
 the Pauli--Fierz model in the dipole approximation with a harmonic potential, 
and the fiber Hamiltonians of its translation-invariant version, and provide explicit 
domains in the complex plane to which the ground states and their energies can be 
holomorphically extended.
In particular, for the single pair interaction model and the fibers of the dipole-approximated 
Pauli--Fierz model, we determine the radii of convergence 
for the ground states and their energies. 
Furthermore, for these Pauli--Fierz Hamiltonians, under a physical ultraviolet 
cutoff---namely the cutoff at the Compton wave number used in the explanation of 
the Lamb shift \cite{PhysRev.72.339}---we prove that the Maclaurin series for the 
ground states and their energies with respect to the elementary charge converge at the physical value,
even when the effect of mass renormalization is taken into account.


The rest of this paper is organized as follows. 
In Section \ref{sec:Definition_of_the_Model}, 
we define bosonic quadratic Hamiltonians and summarize their diagonalization
 via Bogoliubov transformations, along with the representations of the ground states 
and their energies.
In Section \ref{sect:GSE}, we present our general setting and, under this setting, 
prove the holomorphy of the ground state energy using integral representations 
of the square roots of m-accretive operators and estimates in the trace norm. 
In Section \ref{sect:GS}, we prove the holomorphy of the ground state based on norm 
and number estimates for squeezed states. 
In Section \ref{sect:example}, we apply this general theory to concrete models: the single pair interaction model, 
the Pauli--Fierz model in the dipole approximation with a harmonic potential, 
and the translation-invariant Pauli--Fierz model in the dipole approximation. 
We remark that, for the translation-invariant model with $\bP\neq \boldsymbol{0}$, 
the ground state is not a simple squeezed state but is accompanied by a displacement,
 requiring additional arguments.

The operator theoretic tools required for the proofs are mainly collected in the appendices. 
In Appendix \ref{AppenA}, we derive a Heinz-type inequality for m-accretive operators. 
We use this to prove that the squeezed state for complex values of the parameter belongs 
to the Fock space. 
In Appendix \ref{AppenB}, we provide a criterion for the holomorphy of Hilbert space-valued
 functions. 
In Appendix \ref{AppenC}, we show that every squeezed state associated with a Hilbert--Schmidt 
operator can be realized as the vacuum of a suitably constructed Bogoliubov transformation---a general, 
model-independent correspondence that may be of independent interest beyond the present application. 
Several auxiliary estimates for squeezed states, such as norm and inner-product formulas, 
are also established there and used throughout the paper.
In Appendix \ref{AppenD}, 
we collect several technical estimates for integrals of trace class operators.
Finally, in Appendix \ref{appenE}, we recall a useful result due to Mattner \cite{MR1823156}
on complex differentiation under the integral sign.

\section{Definition of the Model }\label{sec:Definition_of_the_Model}
In this section, we introduce bosonic quadratic Hamiltonians and provide specific 
expressions for their ground states and ground state energies.
We use the same symbols and definitions as in the previous paper \cite{MR4213757}.
For more fundamental background on Fock spaces, we refer the reader to \cite{AraiBook2}.

Let $\sH$ be a separable complex Hilbert space.
 The boson Fock space over $\sH$ is denoted by $\Fb(\sH)$.
A vector $\Psi \in \Fb(\sH)$ is denoted by $\Psi = (\Psi^{(n)})_{n=0}^\infty$ with
$\Psi^{(n)} \in \tensor_\mathrm{s}^n \sH$.
The standard creation and annihilation operators for $f\in \sH$ are denoted by $A^*(f)$ and $A(f)$,
respectively.
Note that $A^*(f)$ is linear in $f$, and $A(f)$ is anti-linear in $f$.
The Segal field operator $\PhiS(f)$ is defined by
\begin{equation}
  \PhiS(f) \coloneqq  \frac{1}{\sqrt{2}} \ovl{ (A(f)+A^*(f)) },
\end{equation}
where $\ovl{S}$ denotes the closure of a closable operator $S$.
Let $T$ be a self-adjoint operator on $\sH$. 
We denote by $\dGb(T)$ the second quantization of $T$ which is a self-adjoint operator on $\Fb(\sH)$.

The bosonic quadratic Hamiltonian that we consider in this paper is defined by 
\begin{equation}
  H = \dGb(T) + \frac{1}{2} \sum_{n=1}^\infty \lambda_n \PhiS(g_n)^2   \label{HAMIL}
\end{equation}
which acts on $\Fb(\sH)$. 

We assume the following conditions.
\begin{enumerate}
\item[(B1)] $T$ is an injective non-negative self-adjoint operator acting on $\sH$~(i.e., $T>0$).
\item[(B2)] $\lambda_n\in \RR$ and $g_n \in \dom(T^{1/2}) \cap \dom(T^{-1/2})$ for all $n\in\NN$.
\item[(B3)] $\sum_{n=1}^\infty |\lambda_n| \norm{T^{-1/2}g_n}^2<\infty$.
\item[(B4)] $\sum_{n=1}^\infty |\lambda_n| \norm{T^{1/2}g_n}^2<\infty$.
\item[(B5)] For some $\vep>0$, the operator inequality
  \begin{equation}
     \one + \sum_{n=1}^\infty \lambda_n \ketbra{T^{-1/2}g_n}{T^{-1/2}g_n} \geq \vep
  \end{equation}
holds.
\item[(B6)] There exists a conjugation $J$ on $\sH$ such that
  \begin{equation}
    JTJ = T, \qquad Jg_n = g_n, \qquad n\in\NN.
  \end{equation}
\end{enumerate}

We note that $T>0$ does not necessarily mean $\inf\sigma(T)>0$.
The self-adjointness of $H$ is ensured by the following theorem.  
\begin{Theorem}[{\cite[Theorem 4.3]{MR4213757}}] \label{sa}
  Assume (B1)--(B6). Then $H$ is self-adjoint on $\dom(\dGb(T))$, bounded from below,
and essentially self-adjoint on any core for $\dGb(T)$.
\end{Theorem}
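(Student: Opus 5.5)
The natural tool is the Kato--Rellich theorem applied to the perturbation $V=\frac12\sum_n\lambda_n\PhiS(g_n)^2$ of $\dGb(T)$. The difficulty is that $V$ need not be small relative to $\dGb(T)$, since (B5) allows large negative $\lambda_n$. I would therefore split $V$ into a small tail and a finite-rank part.

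The basic input is the standard estimate
\[
\norm{A(f)\Psi}\le \norm{T^{-1/2}f}\,\norm{\dGb(T)^{1/2}\Psi},
\qquad
\norm{A^*(f)\Psi}^2=\norm{A(f)\Psi}^2+\norm{f}^2\norm{\Psi}^2 .
\]
From these one gets, for $\Psi$ in the finite-particle subspace of $\dom(\dGb(T))$,
\[
\norm{\PhiS(g)^2\Psi}\le C\bigl(\norm{T^{-1/2}g}^2+\norm{T^{-1/2}g}\norm{T^{1/2}g}\bigr)\norm{\dGb(T)\Psi}+C'\bigl(\norm{g}^2+\dots\bigr)\norm{\Psi}.
\]
The $T^{1/2}g$ term enters through the commutator $[\dGb(T),A(g)]=-A(Tg)$, which appears when $A^*A$ is bounded by $\dGb(T)$ in operator norm. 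The tail of $\sum_n$ is then controlled as follows. By (B3)--(B4) and Cauchy--Schwarz, the sum $\sum_n|\lambda_n|\norm{T^{-1/2}g_n}\norm{T^{1/2}g_n}$ is finite. Hence for $N$ large, $V_{>N}=\frac12\sum_{n>N}\lambda_n\PhiS(g_n)^2$ is $\dGb(T)$-bounded with relative bound less than $1$, and convergence of the series on $\dom(\dGb(T))$ follows from the same bounds. Kato--Rellich then shows that $\dGb(T)+V_{>N}$ is self-adjoint on $\dom(\dGb(T))$, and essentially self-adjoint on any core.

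The finite part $V_{\le N}$ is not small, so for it I would use the Bogoliubov theory instead of a perturbation bound. With $N$ fixed, (B5) still holds for the finite sum, after possibly adjusting $N$ so that the tail term has small norm. By (B6) the Hamiltonian $\dGb(T)+V_{\le N}$ is a real quadratic form in the fields. It is implemented by a Bogoliubov transformation $U$ with $U(\dGb(T)+V_{\le N})U^*=\dGb(\Omega)+E$, where
\[
\Omega=\bigl(T^{1/2}(\one+K)T^{1/2}\bigr)^{1/2},\qquad K=\sum_{n\le N}\lambda_n\ketbra{T^{-1/2}g_n}{T^{-1/2}g_n}.
\]
The implementability (Shale) condition is a Hilbert--Schmidt condition, which (B3)--(B4) should supply. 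This gives self-adjointness on $U^*\dom(\dGb(\Omega))$ and boundedness from below. To identify this domain with $\dom(\dGb(T))$, I would show that $\Omega^2-T^2$ is finite rank in the form sense, which gives $\dom\Omega=\dom T$. Since $U$ mixes creation and annihilation operators only by a Hilbert--Schmidt amount, $U$ should preserve $\dom(\dGb(T))$.

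Finally I would put the two parts together. The main obstacle is this last step, namely showing that $V_{>N}$ is relatively bounded with bound less than $1$ with respect to $\dGb(T)+V_{\le N}$ rather than $\dGb(T)$. This requires the graph norms of $\dGb(T)$ and $\dGb(T)+V_{\le N}$ to be equivalent, which follows from the domain equality together with the closed graph theorem. Once the relative bound is established, Kato--Rellich gives self-adjointness of $H$ on $\dom(\dGb(T))$, boundedness from below, and essential self-adjointness on every core of $\dGb(T)$.
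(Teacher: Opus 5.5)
Your last step does not go through. For each fixed $N$, the closed graph theorem gives \emph{some} constant $C_N$ with $\norm{\dGb(T)\Psi}\le C_N(\norm{H_N\Psi}+\norm{\Psi})$, where $H_N=\dGb(T)+V_{\le N}$. It says nothing about the size of $C_N$. Kato--Rellich, however, needs $a_NC_N<1$, where $a_N\to0$ is the $\dGb(T)$-bound of $V_{>N}$. The constant $C_N$ only exists once $N$ is fixed, and increasing $N$ to shrink $a_N$ changes $H_N$ and hence $C_N$. Peeling off a middle block $V_{(N_0,M]}$ just reproduces the problem. So nothing in your argument ensures $a_NC_N<1$. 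The missing ingredient is a quantitative bound $\norm{\dGb(T)\Psi}\le C(\norm{H_N\Psi}+\norm{\Psi})$ with $C$ independent of $N\ge N_0$, depending only on $\vep$ in (B5) and the sums in (B3)--(B4). With such a bound your scheme closes; without it, it does not. That uniform bound is the real content of the theorem, and once you have it the split buys nothing. $W=\sum_n\lambda_n\ketbra{T^{1/2}g_n}{T^{1/2}g_n}$ is already trace class by (B4), and (B5) holds for the full sum, so whatever the Bogoliubov argument gives for $W_N$ it gives for $W$ directly.

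The Bogoliubov step also needs repair.

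\emph{Wrong dispersion.} $\dGb(T)+V_{\le N}$ is diagonalized by $S_N=(T^2+W_N)^{1/2}=(T(\one+K)T)^{1/2}$, not by $(T^{1/2}(\one+K)T^{1/2})^{1/2}$. For one mode with $T=\omega$ and $g=1$, the frequency is $\sqrt{\omega(\omega+\lambda)}$, not $\sqrt{\omega+\lambda}$. With your operator, its square minus $T^2$ equals $T-T^2$ plus a finite-rank term, so your domain argument does not start. With $S_N$ it works. The bound $\one+K\ge\vep/2$ gives $c\norm{Tf}\le\norm{S_Nf}\le C\norm{Tf}$ with no additive constant. By L\"owner--Heinz and positivity of tensor products of positive operators, this lifts to $\dom(\dGb(S_N))=\dom(\dGb(T))$. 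An additive constant would not lift, since $\Nb$ is not $\dGb(T)$-bounded when $\inf\sigma(T)=0$.

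\emph{Domain preservation.} The claim that $U$ preserves $\dom(\dGb(T))$ because $Y$ is Hilbert--Schmidt is false in general; Hilbert--Schmidt $Y$ is only Shale's implementability condition. Already $U^*\Omega\propto e^{-\frac12\Delta^*(K_1)}\Omega$ lies in $\dom(\dGb(T))$ only if its two-particle component does, which forces roughly $TK_1\in\cS_2(\sH)$. For a counterexample, take $Te_k=ke_k$, $Y=\sum_k k^{-1}\ketbra{e_k}{e_k}$ and $X=(\one+Y^2)^{1/2}$. Then $Y$ is Hilbert--Schmidt, but $\norm{TK_1}_2^2=\sum_k (1+k^{-2})^{-1}=\infty$. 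You need weighted estimates on $X_N$ and $Y_N$ derived from (B3)--(B4), of the kind used in the proof of Theorem~\ref{thm:K4_1} and in Appendix~\ref{AppenD}, and uniform in $N$ if you keep the split.

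\emph{Intertwining identity.} The identity $UH_NU^*=\dGb(S_N)+E_N$ must be verified on a core by hand. You cannot quote Theorem~\ref{bogo} for the finite sum: as an operator identity with $H_N$ on $\dom(\dGb(T))$, it already contains the self-adjointness you are trying to prove.
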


Next we explain the results on the diagonalization of the Hamiltonian.
Let 
\begin{equation}
  W \coloneqq  \sum_{n=1}^\infty \lambda_n \ketbra{T^{1/2}g_n}{T^{1/2}g_n}. \label{defW}
\end{equation}
Since $T^2 + W>0$ by (B5), we can define
\begin{equation}
  S \coloneqq  \left( T^2 + W \right)^{1/2}.
\end{equation}
It is shown in \cite[Lemma 5.2]{MR4213757} that
\begin{equation}
  X \coloneqq  \frac{1}{2} \left(\ovl{T^{-1/2}S^{1/2}} + \ovl{T^{1/2}S^{-1/2}} \right), \qquad 
  Y \coloneqq  \frac{1}{2} \left(\ovl{T^{-1/2}S^{1/2}} - \ovl{T^{1/2}S^{-1/2}} \right)  \label{defXY}
\end{equation}
are bounded, and $Y$ is Hilbert--Schmidt on $\sH$.
For $f\in \sH$, we define
\begin{equation}
  B(f) := \ovl{A(Xf)+ A^*(JYf)}.
\end{equation}
Then, the unitary operator $\sU $ on $\Fb(\sH)$ satisfying the equation \eqref{defU*} 
in Appendix \ref{AppenC} yields
\begin{equation}
  \sU B(f)\sU^* = A(f),\qquad f\in \sH.  \label{BtoA}
\end{equation}
Here, $\sU$ provides the unitary implementation of the Bogoliubov transformation.

The bosonic quadratic Hamiltonian $H$ is diagonalized as follows.
\begin{Theorem}[{\cite[Theorems 5.3 and 6.1]{MR4213757}}] \label{bogo}
Assume (B1)--(B6). Then
\begin{equation}
  \sU H \sU^* = \dGb(S) + E,   \label{diag}
\end{equation}
where the ground state energy $E$ is given by
\begin{equation}
  E = \frac{1}{2} \tr (\ovl{S-T}).   \label{gse}
\end{equation}
\end{Theorem}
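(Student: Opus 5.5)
The plan is to diagonalize $H$ by first verifying how $\sU$ acts on the field operators, and then rewriting $H$ in terms of the new modes $B(f)$.

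\textbf{Step 1: the Bogoliubov coefficients.} From the definitions \eqref{defXY} one checks the symplectic relations
\begin{equation}
X^*X - Y^*Y = \one, \qquad X^*JYJ = Y^*JXJ,
\end{equation}
using $JTJ=T$ and $JSJ=S$; the latter follows from (B6) and $Jg_n=g_n$. These relations imply that the $B(f)$ satisfy the canonical commutation relations. Together with the Hilbert--Schmidt property of $Y$ (Shale's criterion), they guarantee that $\sU$ exists and satisfies \eqref{BtoA}. Inverting the relations expresses $A(f)$ through $B(X^*\cdot)$ and $B^*(\cdot)$, so the Segal fields transform linearly. In particular $\sU\PhiS(g)\sU^*$ is again a field operator, with test function determined by $T^{\pm1/2}S^{\mp1/2}$ acting on $g$.

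\textbf{Step 2: the quadratic form identity.} I will work on a common core: finite-particle vectors with components in $\dom(T)$. On this core, write formally $\dGb(T)=\sum_j \langle$modes$\rangle$ via a basis, but I prefer a basis-free route. Introduce the position-type and momentum-type fields $\phi(f)=\PhiS(T^{-1/2}f)$ and $\pi(f)=\PhiS(iT^{1/2}f)$ for real vectors $f=Jf$. Then
\begin{equation}
H=\tfrac12\bigl(\pi\cdot\pi + \phi\cdot (T^2+W)\phi\bigr)-\tfrac12\tr T
\end{equation}
holds formally, and likewise $\dGb(S)=\tfrac12\bigl(\pi_S\cdot\pi_S+\phi_S\cdot S^2\phi_S\bigr)-\tfrac12\tr S$, where the new canonical pair is $\phi_S=S^{-1/2}T^{1/2}\phi$ and $\pi_S=S^{1/2}T^{-1/2}\pi$. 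This is precisely what $X$ and $Y$ implement. Subtracting the two expressions, the divergent constants combine into $\tfrac12\tr(S-T)$. To make this rigorous without infinite traces, I will compute $\langle \Psi, \sU H\sU^*\Psi\rangle-\langle\Psi,\dGb(S)\Psi\rangle$ for $\Psi$ in a core of $\dGb(S)$. Normal ordering $\sU\PhiS(g_n)^2\sU^*$ and $\sU\dGb(T)\sU^*$ produces constants $\tfrac12\lambda_n\|JY\,\cdot\|^2$-type terms and $\tr(Y^*TY)$-type terms. These are finite by (B3), (B4) and the Hilbert--Schmidt property. The operator parts cancel by the identity $S^2=T^2+W$.

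\textbf{Step 3: identifying the constant and the operator equality.} The constant equals $\tr(Y^*SY)$ plus the contributions of $W$. An algebraic manipulation using $S^2-T^2=W$ should show that this equals $\tfrac12\tr(\ovl{S-T})$. Here $\ovl{S-T}$ must be shown to be trace class, which I will do with the resolvent integral representation
\begin{equation}
S-T=\frac1\pi\int_0^\infty t^{-1/2}\bigl[(T^2+t)^{-1}-(S^2+t)^{-1}\bigr]\,dt = \frac1\pi\int_0^\infty t^{-1/2}(T^2+t)^{-1}W(S^2+t)^{-1}\,dt .
\end{equation}
The trace norm of the integrand is estimated via $\sum_n|\lambda_n|\,\|(T^2+t)^{-1}T^{1/2}g_n\|\,\|(S^2+t)^{-1}T^{1/2}g_n\|$, which is integrable thanks to (B3)--(B5). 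Once equality of the quadratic forms holds on a core, the self-adjointness from Theorem \ref{sa}, together with the fact that $\sU$ maps $\dom(\dGb(T))$ into $\dom(\dGb(S))$ (these domains being comparable), upgrades it to the operator equality \eqref{diag}.

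\textbf{Main obstacle.} The main difficulty is the constant bookkeeping in Step 3. Each of the individual traces may diverge, so the normal-ordering constants must be regrouped before summing. I would first try to carry everything at the level of $\tr\bigl(T^{1/2}(\ldots)T^{1/2}\bigr)$ with finite-rank truncations of $W$ (finitely many $n$). In the truncated case $S-T$ is clearly trace class, and there I would prove the formula and then pass to the limit using the integral representation and dominated convergence in trace norm.
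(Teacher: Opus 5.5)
The paper does not prove this theorem; it imports it from \cite{MR4213757}. Your outline therefore has to stand on its own. Its formal mechanism is right: after conjugation the number-conserving parts add up to $X^*TX+Y^*TY+\frac12 S^{-1/2}WS^{-1/2}=S$, the symmetrized pair terms cancel, and the vacuum constant is $\norm{T^{1/2}Y}_2^2+\frac14\tr(S^{-1/2}WS^{-1/2})$.

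\textbf{The upgrade from a form identity to \eqref{diag} is circular.} You invoke as a ``fact'' that $\sU$ maps $\dom(\dGb(T))$ into $\dom(\dGb(S))$. Nothing available gives this; it follows from \eqref{diag} itself. Comparability of $\dom(\dGb(T))$ with $\dom(\dGb(S))$ says nothing about how $\sU$ acts on domains. A non-circular version runs as follows. Let $D_S$ be the finite-particle vectors built from $\dom(S^2)$, which is a core for $\dGb(S)$. Show $\sU^*D_S\subset\dom(\dGb(T)^{1/2})$. Prove the mixed identity $\inner{H\Phi}{\sU^*\Psi}=\inner{\Phi}{\sU^*(\dGb(S)+E)\Psi}$ for $\Phi$ in a core of $\dGb(T)$ and $\Psi\in D_S$. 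Then Theorem~\ref{sa} ($H$ is essentially self-adjoint on that core) gives $\sU^*\Psi\in\dom(H)$, hence $\sU H\sU^*\supset(\dGb(S)+E)|_{D_S}$, and equality follows by maximality.

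\textbf{The finiteness of your constants is not justified.} Both the step above and the finiteness of your $\dGb(T)$ normal-ordering constant need $T^{1/2}Y=\frac12\ovl{(S-T)S^{-1/2}}\in\cS_2$. Indeed $\norm{A(f)\sU^*\Omega}=\norm{Y^*Jf}$, so $\langle\sU^*\Omega,\dGb(T)\sU^*\Omega\rangle=\norm{T^{1/2}Y}_2^2$. This does not follow from $Y\in\cS_2$ and (B3), (B4), as you assert, because $S^{-1/2}$ is unbounded when $\inf\sigma(T)=0$. It needs a separate estimate from the resolvent representation that exploits $W=\sum_n\lambda_n\ketbra{T^{1/2}g_n}{T^{1/2}g_n}$. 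For example, show $\ovl{(S-T)T^{-1/2}}\in\cS_2$ and combine it with boundedness of $\ovl{T^{1/2}S^{-1/2}}$.

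\textbf{Your resolvent formula in Step 3 has the wrong weight.} As written, $\frac1\pi\int_0^\infty t^{-1/2}[(T^2+t)^{-1}-(S^2+t)^{-1}]\,dt=T^{-1}-S^{-1}$, not $S-T$. The matching Cauchy--Schwarz bound would also need $g_n\in\dom(T^{-1})$, which is not assumed. With the correct weight $t^{1/2}$ one has $\int_0^\infty t^{1/2}\norm{(T^2+t)^{-1}T^{1/2}g}^2dt=\frac\pi2\norm{g}^2$. Likewise $\int_0^\infty t^{1/2}\norm{(S^2+t)^{-1}T^{1/2}g}^2dt=\frac\pi2\norm{S^{-1/2}T^{1/2}g}^2\le\frac\pi2\vep^{-1/2}\norm{g}^2$ by (B5). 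With these bounds the trace-class claim goes through; this is the proof of Theorem~\ref{main1} in the variable $t^2$.

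\textbf{Truncating $W$ does not address the real difficulty in identifying the constant.} That difficulty is a cyclicity step $\tr(S^{\pm1/2}AS^{\mp1/2})=\tr A$ with $A\coloneqq\ovl{S-T}$ and unbounded $S^{\pm1/2}$. Keeping finitely many $\lambda_n$ leaves $T$ just as unbounded. What works is the identity $A^2+S^2-T^2=SA+AS$ on $\dom(T^2)$, so the constant equals $\frac14\tr\,\ovl{S^{-1/2}(SA+AS)S^{-1/2}}$. Compressing by $P_\eta\coloneqq E_S([\eta,\eta^{-1}])$ makes all factors bounded, so cyclicity legitimately gives $2\tr(AP_\eta)$. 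Letting $\eta\to0$ yields $\frac12\tr A$.
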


Since the Fock vacuum $\Omega=(1,0,0,\cdots)$ is the unique ground state of $\dGb(S)$,
 the formula \eqref{diag} implies that 
the Hamiltonian $H$ has a unique ground state
\begin{equation}
  \Phig \coloneqq  \sU^* \Omega.  \label{norgs}
\end{equation}
Using the explicit expression of $\sU$ in terms of $X$ and $Y$ in 
\eqref{defXY} (see also \eqref{defU*} in Appendix \ref{AppenC}),
 the normalized ground state can be written as
\begin{equation}
  \Phig = \det(\one - K_1^*K_1)^{1/4} e^{-\frac{1}{2}\Delta^*(K_1)} \Omega, 
  \label{ngs}
\end{equation}
where $K_1$ is defined in equation \eqref{def:Ks} in Appendix \ref{AppenC},
and $\Delta^*(K)$ denotes the two-particle creation operator associated with a Hilbert--Schmidt 
operator $K$ (see Appendix \ref{AppenC} for details).
Thus, the ground state $\Phig$ has the structure of a squeezed state.

When considering a parameter-dependent Hamiltonian, it is natural to investigate the analyticity 
of the ground state with respect to the parameter.
From the viewpoint of justifying perturbative calculations, 
it is appropriate to consider the following expression, obtained by removing the normalization 
constant from \eqref{ngs}:
\begin{equation}
\til{\Phi}_\mathrm{g}
 \coloneqq  e^{-\frac{1}{2}\Delta^*(K_1)} \Omega 
 = \sum_{n=0}^\infty \left(-\frac{1}{2}\right)^n \frac{1}{n!} (\Delta^*(K_1))^n \Omega.   \label{phitil}
\end{equation}
Indeed, standard formal perturbation theory typically addresses the state 
$\Phigt$, satisfying $\langle \Omega, \Phigt\rangle = 1$, 
rather than the normalized ground state $\Phig$.

Finally, we state a key proposition used to prove the holomorphy of the ground state.
By condition (B6) and \eqref{defXY}, $X$ and $Y$ commute with $J$, and hence
\begin{equation}
   K_1= YX^{-1}.
\end{equation}
To analyze $K_1$, we introduce a simpler operator $K_4$.

\begin{Proposition}\label{prop:K1_K4} 
Set
\begin{equation}
 K_4 \coloneqq   \frac{1}{2} \, \ovl{T^{-1/2}(S-T)T^{-1/2}}.
\end{equation}
Then
\begin{equation}
K_1 =  K_4( 1 + K_4)^{-1}.
\end{equation}  
\end{Proposition}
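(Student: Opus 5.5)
The plan is to prove the identity first on a dense set of vectors, with all unbounded factors written out, and then extend by boundedness. Introduce the closed operators
\begin{equation}
  A \coloneqq \ovl{T^{-1/2}S^{1/2}}, \qquad B \coloneqq \ovl{T^{1/2}S^{-1/2}},
\end{equation}
which are bounded by \cite[Lemma 5.2]{MR4213757}, so that $X=\frac12(A+B)$ and $Y=\frac12(A-B)$. Formally
\begin{equation}
  T^{-1/2}S T^{-1/2}\cdot T^{1/2}S^{-1/2} = T^{-1/2}S^{1/2}, \qquad T^{-1/2}ST^{-1/2} = \one + 2K_4 .
\end{equation}
This suggests the factorization $A=(\one+2K_4)B$. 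It would immediately give
\begin{equation}
  Y = K_4 B, \qquad X = (\one+K_4)B .
\end{equation}

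\textbf{Step 1: properties of $K_4$.} I first establish that $K_4$ is bounded, self-adjoint and satisfies $\one+K_4\geq \frac12$. Boundedness should come from the estimates in \cite{MR4213757} used to prove Theorem~\ref{bogo}. There, $T^{-1/2}(S-T)T^{-1/2}$ is controlled via the resolvent/integral representation of $S-T=(T^2+W)^{1/2}-(T^2)^{1/2}$ together with (B3)--(B5). For the lower bound, note that on a core $\one+2K_4 = T^{-1/2}ST^{-1/2}$, which is a nonnegative form. Hence
\begin{equation}
  \one+K_4 = \tfrac12(\one + T^{-1/2}ST^{-1/2}) \geq \tfrac12 ,
\end{equation}
and therefore $(\one+K_4)^{-1}$ is bounded.

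\textbf{Step 2: the factorization on a dense set.} Take vectors of the form $f=S^{1/2}h$, where $h$ ranges over a suitable dense set. A natural choice is $h\in S^{-1/2}T^{1/2}\,\dom(T^{-1/2})$, i.e.\ $f = S^{1/2}h$ with $S^{-1/2}f = h$ and $T^{1/2}h\in\dom(T^{-1/2})\cap\dom(S)$-type conditions. Alternatively, one can take $h$ in the range of spectral projections of $T$ and $S$ onto bounded, away-from-zero spectral sets, after checking density.

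For such $f$ we have $Bf=T^{1/2}h$ and $Af=T^{-1/2}S^{1/2}f = T^{-1/2}S h$. Writing
\begin{equation}
  T^{-1/2}Sh = T^{-1/2}ST^{-1/2}\,(T^{1/2}h) = (\one+2K_4)\,T^{1/2}h ,
\end{equation}
we obtain $Af=(\one+2K_4)Bf$. Since both sides are bounded operators applied to $f$, density of the chosen set gives $A=(\one+2K_4)B$ on all of $\sH$.

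\textbf{Step 3: conclusion.} From $X=(\one+K_4)B$ we get $B=(\one+K_4)^{-1}X$, so
\begin{equation}
  Y = K_4B = K_4(\one+K_4)^{-1}X .
\end{equation}
Since $X$ is invertible (this is implicit in $K_1=YX^{-1}$; in fact $X\geq \one$ from the formula for $\sU$), we conclude that $K_1=YX^{-1}=K_4(\one+K_4)^{-1}$. This route never needs $B^{-1}$.

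\textbf{Main obstacle.} The hard part is Step 2. I must choose a dense set on which $S$, $T^{\pm1/2}$ and $S^{\pm1/2}$ can all be applied in the required order, and justify $T^{-1/2}ST^{-1/2}(T^{1/2}h)=\ldots$ as an honest equality with the closure $K_4$. I would first try $f\in\dom(S)\cap \ran(S^{1/2}T^{1/2})$. If density of such sets is unclear, I would instead use quadratic forms: compare $\langle u, Af\rangle$ with $\langle u,(\one+2K_4)Bf\rangle$ for $u\in\dom(T^{-1/2})$ and $f\in\dom(S^{1/2})\cap\ran(S^{1/2})$, moving $T^{-1/2}$ onto $u$. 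This requires only form-domain conditions, which follow from (B2)--(B4), since $\dom(S)=\dom(T)$ because $W$ is relatively bounded.
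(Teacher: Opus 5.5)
Your proposal is correct and is essentially the paper's argument. The paper sets $a=\ovl{T^{-1/2}S^{1/2}}$ and $b=\ovl{T^{1/2}S^{-1/2}}$, writes $K_1=(a-b)(a+b)^{-1}=(ab^{-1}-1)(1+ab^{-1})^{-1}$, and uses $ab^{-1}=\ovl{T^{-1/2}ST^{-1/2}}=1+2K_4$, which is your factorization $A=(\one+2K_4)B$ expressed through $b^{-1}$; your Step 3 simply avoids inverting $b$.

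The paper gives no domain justification for this identity, and your quadratic-form fallback is the clean way to supply one. For $u\in\dom(T^{-1/2})$ and $f=S^{1/2}h$ with $h\in\dom(S^{1/2})$, both $\langle u,Af\rangle$ and $\langle u,(\one+2K_4)Bf\rangle$ reduce to $\langle S^{1/2}T^{-1/2}u,f\rangle$. This uses $\dom(S^{1/2})=\dom(T^{1/2})$, which comes from $\vep T^2\le S^2\le(1+\|\ovl{T^{-1}WT^{-1}}\|)T^2$ and the Heinz inequality, not from relative boundedness of $W$. Finally, $X$ is not self-adjoint, so ``$X\geq\one$'' should read $X^*X=\one+Y^*Y\geq\one$.
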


\begin{proof}
Set $a\coloneqq  \ovl{T^{-1/2}S^{1/2}}$ and $b \coloneqq  \ovl{T^{1/2}S^{-1/2}}$.
Then $X=2^{-1}(a+b)$, $Y=2^{-1}(a-b)$, and 
\begin{align}
  K_1 & = YX^{-1} = (a-b)(a+b)^{-1} = (a-b)b^{-1}(ab^{-1}+1)^{-1} \\
     & = (ab^{-1}-1)(1+ab^{-1})^{-1}.
\end{align}
Since $ab^{-1}-1 = \ovl{T^{-1/2}ST^{-1/2}}-1 = 2K_4$, we have
\begin{equation}
  K_1 = 2K_4(2+2K_4)^{-1} = K_4(1+K_4)^{-1}.
\end{equation}
\end{proof}

\section{Holomorphy of the Ground State Energy}\label{sect:GSE}
Let $\sH $ be a separable complex Hilbert space.
The symbol $\cB(\sH )$ denotes the Banach space consisting of everywhere defined bounded operators on $\sH $. 
Similarly, the symbol $\mathcal{S}_p(\sH )$ denotes the Banach space consisting of $p$-Schatten operators on $\sH $. 
For $A\in\mathcal{S}_p(\sH )$, $\|A\|_p$ denotes the Schatten $p$-norm of $A$.
In particular, it coincides with the trace norm for $p=1$ and with the Hilbert--Schmidt norm for $p=2$.

Recall that a (possibly unbounded) linear operator $A$ acting on $\sH $ is called
\begin{itemize}
  \item \textit{accretive} if 
    \begin{equation}
      \Re \langle f,Af\rangle\geq 0,\qquad \forall f\in \dom(A),
    \end{equation}
  \item \textit{m-accretive} if  $A$ is accretive, closed, and $A+\lambda$ is bijective 
    for some positive constant $\lambda>0$.
\end{itemize}
Note that an everywhere defined bounded operator is m-accretive if and only if it is accretive.

To state our main results, we introduce the following conditions:
\begin{itemize}
 \item[(C1)] $T$ is an injective non-negative self-adjoint operator acting on $\sH $.
 \item[(C2)] $W:U\to\mathcal{S}_1(\sH )$ is a holomorphic function defined on a non-empty open subset $U$ of $\CC$.
 \item[(C3)] For any $z\in U$, the range of $W(z)T^{-1}$ is contained in the domain of $T^{-1}$.
 \item[(C4)] For any $z\in U$, the operator $T^{-1}W(z)T^{-1}$ is bounded, and its closure is of trace class.
 \item[(C5)] The map
  \begin{equation}
      U\to\mathcal{S}_1(\sH ), \qquad z\mapsto\ovl{T^{-1}W(z)T^{-1}}
  \end{equation}
    is holomorphic.
  \item[(C6)] For any $z\in U$, the operator inequality $\Re( 1 +\ovl{T^{-1}W(z)T^{-1}}) >0$ holds.
\end{itemize}

We note that condition (C6) is equivalent to saying that, for any $z\in U$,
 there exists a constant $\delta(z)>0$ such that the operator 
$1+\ovl{T^{-1}W(z)T^{-1}}-\delta(z)$ is accretive.
\begin{Proposition}\label{prop:m-acc}
Assume conditions (C1)--(C6). Then the operator
\begin{equation}   T^2+W(z)   \end{equation}
is m-accretive for any $z\in U$.
\end{Proposition}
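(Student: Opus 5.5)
Fix $z\in U$ and write $M\coloneqq \ovl{T^{-1}W(z)T^{-1}}$, which by (C4) is trace class and in particular bounded. The idea is to factor $T^2+W(z)$ as $T(1+M)T$ on $\dom(T^2)$ and read off every property from this factorization.

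\textbf{Step 1: the factorization.} For $f\in\dom(T^2)$ put $g\coloneqq Tf\in\dom(T)$. By (C3), $W(z)T^{-1}g$ lies in $\dom(T^{-1})$, and $T^{-1}W(z)T^{-1}g=Mg$. Hence $Mg\in\ran(T^{-1})=\dom(T)$, and $TMg=W(z)T^{-1}g=W(z)f$. Combining this with $T^2f=Tg$ gives
\begin{equation}
 (T^2+W(z))f = T(1+M)Tf,\qquad f\in\dom(T^2).
\end{equation}
I will take $\dom(T^2)$ as the domain of $T^2+W(z)$; this is the natural choice, since $W(z)$ is bounded.

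\textbf{Step 2: accretivity and closedness.} For $f\in\dom(T^2)$ we have $\langle f,(T^2+W(z))f\rangle=\langle Tf,(1+M)Tf\rangle$. By (C6) there is $\delta=\delta(z)>0$ with $\Re\langle g,(1+M)g\rangle\ge\delta\norm{g}^2$, so $\Re\langle f,(T^2+W(z))f\rangle\ge\delta\norm{Tf}^2\ge0$. Closedness is immediate: $T^2$ is closed and $W(z)$ is bounded, so their sum on $\dom(T^2)$ is closed.

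\textbf{Step 3: range condition.} This is the main obstacle, because $T$ need not have a bounded inverse. I would first try to avoid the issue entirely. By (C5)/(C6) and Lax--Milgram, $1+M$ is bounded and coercive in real part, hence boundedly invertible. Then $T^2+W(z)=T(1+M)T$ is the operator associated with the closed sectorial form $\mathfrak t[f]=\langle Tf,(1+M)Tf\rangle$ on $\dom(T)$. This form is bounded relative to $\norm{Tf}^2$ and has real part at least $\delta\norm{Tf}^2$, so it is closed and sectorial. The associated m-sectorial operator $A$ is characterized by $f\in\dom(A)$ iff $f\in\dom(T)$ and $(1+M)Tf\in\dom(T^*)=\dom(T)$, with $Af=T(1+M)Tf$. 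Since $1+M$ is invertible, I need to show $\dom(A)=\dom(T^2)$. The inclusion $\dom(T^2)\subset\dom(A)$ is Step 1. For the converse, if $h\coloneqq(1+M)Tf\in\dom(T)$, then $Tf=h-MTf$. Here $MTf=T^{-1}W(z)T^{-1}Tf=T^{-1}W(z)f$, which lies in $\dom(T)$ by (C3) applied to $W(z)f=W(z)T^{-1}(Tf)$. Hence $Tf\in\dom(T)$, i.e.\ $f\in\dom(T^2)$. By Kato's first representation theorem, $A$ is m-sectorial and therefore m-accretive, so $T^2+W(z)=A$ is m-accretive.

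\textbf{Fallback for Step 3.} If the form argument needs adjusting, I would instead show that $T^2+W(z)+\lambda$ is bijective for large $\lambda>0$ directly. Writing $T^2+W(z)+\lambda=(1+W(z)(T^2+\lambda)^{-1})(T^2+\lambda)$, it suffices to have $\norm{W(z)(T^2+\lambda)^{-1}}<1$. One checks that $W(z)(T^2+\lambda)^{-1}=TMT(T^2+\lambda)^{-1}$ on suitable vectors, and the small-$\lambda$-independent estimate would then come from compactness of $W(z)$ together with $(T^2+\lambda)^{-1}\to0$ strongly as $\lambda\to\infty$, since the product of a compact operator with a strongly null bounded sequence tends to $0$ in norm. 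With this route no factorization subtlety arises: for $\lambda$ large the Neumann series gives bijectivity.
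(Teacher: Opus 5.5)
Your proof is correct, but your main route differs from the paper's. The paper proves accretivity exactly as in your Step 2. It then treats the range condition as the easy part: since $T^2\geq 0$, $\norm{(T^2+\lambda)^{-1}}\leq\lambda^{-1}$, so $\norm{W(z)(T^2+\lambda)^{-1}}\leq\norm{W(z)}\lambda^{-1}<1$ for large $\lambda$. The factorization $T^2+W(z)+\lambda=(1+W(z)(T^2+\lambda)^{-1})(T^2+\lambda)$ plus a Neumann series then puts $-\lambda$ in the resolvent set. So the unboundedness of $T^{-1}$, which you flagged as the main obstacle, never enters; only the boundedness of $W(z)$ matters.

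Your fallback is exactly this argument, but the compactness detour is unnecessary. The principle you quote also has the factors in the wrong order: for compact $K$ and a strongly null bounded family $B_\lambda$, it is $B_\lambda K$, not $KB_\lambda$, that tends to $0$ in norm. It survives here only because $(T^2+\lambda)^{-1}$ is self-adjoint, and in any case $(T^2+\lambda)^{-1}$ tends to $0$ in norm outright.

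Your primary argument instead identifies $T^2+W(z)$ with the operator associated with the densely defined, closed, sectorial form $\inner{Tf}{(1+M)Tg}$ on $\dom(T)$, and applies Kato's first representation theorem. The domain identification $\{f\in\dom(T) : (1+M)Tf\in\dom(T)\}=\dom(T^2)$ via (C3) is correct; the invertibility of $1+M$ you mention is never used.

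What your route buys is a slightly stronger conclusion: $T^2+W(z)$ is m-sectorial with vertex $0$ and semi-angle at most $\arctan(\norm{1+M}/\delta(z))$. It also makes the factorization $T(1+M)T$ explicit, in the spirit of the factorization used in Lemma~\ref{main1 lemma3}. The price is heavier form machinery and an extra domain argument. The paper's proof, by contrast, is elementary and uses (C3) only inside the accretivity estimate.
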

\begin{proof}
Let $z\in U$.
By condition (C6), we have
\begin{align}
 \Re \inner{u}{(T^2+W(z))u} 
 & = \Re \inner{Tu}{( \one + T^{-1}W(z)T^{-1}-\delta(z))Tu} + \delta(z)\norm{Tu}^2 \\
 & \geq \delta(z) \norm{Tu}^2,
\end{align}
for all $u \in \dom(T^2)$. This means that $T^2+W(z)$ is an accretive operator.

Since $W(z)$ is bounded, $T^2+W(z)$ is closed. 
Thus it is enough to show that $-\lambda$ is in the resolvent set of $T^2+W(z)$ for some $\lambda>0$.
We consider the operator
\begin{equation}
  T^2 + W(z) -(-\lambda) = ( \one + W(z) (T^2+\lambda)^{-1})(T^2+\lambda).
\end{equation}
Since $T^2$ is non-negative and self-adjoint, $T^2+\lambda$ is a bijection from $\dom(T^2)$ onto $\sH$.
For sufficiently large $\lambda>0$, we have
\begin{equation}
  \norm{W(z)(T^2+\lambda)^{-1}} \leq \norm{W(z)}\lambda^{-1} < 1  .
\end{equation}
Thus, $\one + W(z)(T^2+\lambda)^{-1}$ is a bijection on $\sH$.
Hence, $-\lambda$ is in the resolvent set of $T^2+W(z)$, and the operator $T^2+W(z)$ is m-accretive.
\end{proof}

Since $T^2+W(z)$ is m-accretive, we can consider its square root:
\begin{equation}
   S(z)\coloneqq  \sqrt{T^2+W(z)},\qquad z\in U.
\end{equation}
See \cite[Theorem 3.35, p.~281]{MR1335452} for the definition and general properties 
of the square root of an m-accretive operator.
$S(z)$ is the unique m-accretive operator such that $S(z)^2 = T^2 + W(z)$.
It satisfies
\begin{align}
 S(z) u & = \frac{1}{\pi} \int_0^\infty \lambda^{-1/2}(T^2+W(z)+\lambda)^{-1}(T^2+W(z)) u \, d\lambda\\
        & = \frac{2}{\pi} \int_0^\infty (T^2+ W(z) + t^2)^{-1}(T^2+W(z)) u \, dt,  \label{int_rep_S(z)}
\end{align}
for $u \in \dom(T^2)$.

The following is the main result of this section:
\begin{Theorem}\label{main1}
Assume conditions (C1)--(C6). Then the operator 
\begin{equation}
   (S(z)-T)|_{\dom(T^2)}
\end{equation}
is bounded, and its closure is of trace class for any $z\in U$.
Moreover, the maps
\begin{equation}
   U \to \mathcal{S}_1(\sH ), \qquad z \mapsto \ovl{(S(z)-T)|_{\dom(T^2)}}
\end{equation}
and
\begin{equation}
   U\to\CC,\qquad z\mapsto \tr\Big(\ovl{(S(z)-T)|_{\dom(T^2)}}\Big)
\end{equation}
are holomorphic.
\end{Theorem}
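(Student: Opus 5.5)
The approach is to use the integral representation \eqref{int_rep_S(z)}, subtract the analogous representation of $T$, and control the difference in trace norm through a second resolvent identity. Write $M(z)\coloneqq \ovl{T^{-1}W(z)T^{-1}}\in\mathcal{S}_1(\sH)$, so that formally $T^2+W(z)=T(1+M(z))T$. For $u\in\dom(T^2)$ I would apply \eqref{int_rep_S(z)} to both $S(z)$ and $T=\sqrt{T^2}$ and subtract:
\begin{equation}
(S(z)-T)u=\frac{2}{\pi}\int_0^\infty\Big[(T^2+W(z)+t^2)^{-1}(T^2+W(z))-(T^2+t^2)^{-1}T^2\Big]u\,dt .
\end{equation}
Writing $A(T^2+t^2... )$ terms as $1-t^2(\cdot+t^2)^{-1}$, the integrand becomes $t^2\big[(T^2+t^2)^{-1}-(T^2+W(z)+t^2)^{-1}\big]u$. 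The second resolvent identity turns this into
\begin{equation}
t^2(T^2+W(z)+t^2)^{-1}W(z)(T^2+t^2)^{-1}u .
\end{equation}

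Next I would insert $T$'s to expose $M(z)$. Using (C3), $W(z)T^{-1}$ maps into $\dom(T^{-1})$, so $W(z)=T M(z) T$ on $\dom(T)$. The integrand is then
\begin{equation}
F_z(t)\coloneqq t\,(T^2+W(z)+t^2)^{-1}T\;M(z)\;tT(T^2+t^2)^{-1}.
\end{equation}
The right factor $R_t\coloneqq tT(T^2+t^2)^{-1}$ is self-adjoint with norm at most $1/2$. For the left factor I would estimate $L_t(z)\coloneqq t(T^2+W(z)+t^2)^{-1}T$ using the accretivity bound from the proof of Proposition~\ref{prop:m-acc}, namely $\Re\langle v,(T^2+W(z))v\rangle\ge\delta(z)\|Tv\|^2$. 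Taking $v=(T^2+W(z)+t^2)^{-1}w$ gives $\delta\|Tv\|^2+t^2\|v\|^2\le|\langle v,w\rangle|$, which yields $\|T(\cdot+t^2)^{-1}\|\lesssim \delta^{-1/2}t^{-1}$. The adjoint version then controls $t(\cdot+t^2)^{-1}T$ by $C(\delta)$ uniformly in $t$. So $\|F_z(t)\|_1\le C\|M(z)\|_1$ pointwise, but this is not integrable in $t$.

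\textbf{This integrability is the main obstacle.} To gain decay, I would use $\|F_z(t)\|_1\le\|L_t\|\,\|M(z)\|_1\,\|R_t\|$ together with the facts that $R_t\to0$ strongly as $t\to\infty$ or $t\to0$, and that $M(z)$ is trace class. The first thing I would try is to split $M=M^{1/2}$-type factorization $M=|M^*|^{1/2}U|M|^{1/2}$ and prove $\int_0^\infty\|R_t|M|^{1/2}\|_2^2\,\frac{dt}{t}<\infty$. Indeed
\begin{equation}
\int_0^\infty \frac{t^2T^2}{(T^2+t^2)^2}\frac{dt}{t}=\tfrac12,
\end{equation}
and the same calculation holds for $\int\|L_t^*|M^*|^{1/2}\|_2^2\,dt/t$ with the accretive bound replacing spectral calculus. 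This would need a quadratic-form estimate $\int_0^\infty\|L_t(z)^*w\|^2\,dt/t\le C(\delta)\|w\|^2$ (a square-function estimate for m-accretive operators, in the spirit of Appendix~\ref{AppenA}/\ref{AppenD}), which is where I would expect difficulty. Granting it, Cauchy--Schwarz in $t$ gives $\int_0^\infty\|F_z(t)\|_1\,dt/t\cdot t\ldots<\infty$ after writing $dt=t\cdot dt/t$ and absorbing the extra $t$ into the factor normalizations. This shows that the Bochner integral converges in $\mathcal{S}_1$ and agrees with $(S(z)-T)u$ on $\dom(T^2)$. Hence the restriction is bounded with trace-class closure.

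For holomorphy, $z\mapsto F_z(t)$ is holomorphic in $\mathcal{S}_1$ for each $t$. This follows from (C5), from the resolvent $(T^2+W(z)+t^2)^{-1}$ being holomorphic in $z$ (via (C2) and a Neumann series), and from $L_t(z)$ being holomorphic after writing it as $t\,T^{-1}(1+M(z)+t^2T^{-2})^{-1}$-type expressions. Moreover, the bounds above are locally uniform in $z$, since $\delta(z)$ can be chosen locally bounded below by continuity of $M$ and $\|M(z)\|_1$ is locally bounded. I would then invoke Mattner's theorem on differentiation under the integral sign (Appendix~\ref{appenE}), with the locally uniform integrable majorant, to conclude that $z\mapsto\ovl{(S(z)-T)|_{\dom(T^2)}}\in\mathcal{S}_1$ is holomorphic. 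Holomorphy of the trace follows because $\tr$ is a continuous linear functional on $\mathcal{S}_1$.
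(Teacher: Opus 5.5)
\textbf{There is a genuine gap in the step you flag yourself: integrability of $\|F_z(t)\|_1$ on $t\in[1,\infty)$.}

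Your square-function estimates are taken with respect to $dt/t$. Even granting them, they give at best $\int_0^\infty\|F_z(t)\|_1\,dt/t\le C(\delta)\|M(z)\|_1$. That covers $(0,1]$, but there your pointwise bound $\|F_z(t)\|_1\le\tfrac12\delta^{-1/2}\|M(z)\|_1$ already suffices. The representation you derived needs the measure $dt=t\cdot dt/t$, and the extra factor $t$ cannot be ``absorbed into the factor normalizations.''

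In fact, no bound on $\int_0^\infty\|F_z(t)\|_1\,dt$ in terms of $\|M(z)\|_1$ and $\delta$ alone can hold. Take $W=z\,|Tu\rangle\langle Tu|$ with $z>0$, $\|u\|=1$, $u\in\dom(T)$. Then $M=z|u\rangle\langle u|$, one may take $\delta=1$, $F_z(t)\ge0$, and Sherman--Morrison gives
\[
\|F_z(t)\|_1=\frac{z\,t^2\|(T^2+t^2)^{-1}Tu\|^2}{1+z\|(T^2+t^2)^{-1/2}Tu\|^2}\ge\frac{z}{1+z}\,t^2\|(T^2+t^2)^{-1}Tu\|^2 .
\]
The $dt$-integral of the right-hand side is $\frac{z}{1+z}\frac{\pi}{4}\langle u,Tu\rangle$, which is unbounded as $u$ varies while $\|M\|_1=z$ stays fixed. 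So the large-$t$ regime genuinely needs information that $M(z)$ does not carry.

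\textbf{The missing idea is to use (C2) directly for large $t$.} For $t\ge1$, do not insert the $T$'s at all. Estimate
\[
\big\|t^2(T^2+W(z)+t^2)^{-1}W(z)(T^2+t^2)^{-1}\big\|_1\le t^2\cdot t^{-2}\cdot\|W(z)\|_1\cdot t^{-2},
\]
using the bound $\|(T^2+W(z)+t^2)^{-1}\|\le t^{-2}$ that your own accretivity computation already yields. This is integrable on $[1,\infty)$, and it is locally uniform in $z$ because $W$ is continuous into $\mathcal{S}_1$.

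Splitting at $t=1$, with your $M$-form on $(0,1]$ and the $W$-form on $[1,\infty)$, is exactly the paper's proof. The square-function machinery then becomes unnecessary.

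A smaller point: Mattner's theorem is scalar-valued. Apply it to $\varphi(F_z(t))$ for $\varphi\in\mathcal{S}_1^*$, and conclude holomorphy in $\mathcal{S}_1$ via Dunford's theorem.
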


In the following, we prove Theorem \ref{main1}.
To simplify notation, we introduce
\begin{equation}
   R_t \coloneqq  (T^2+t^2)^{-1},\qquad t>0.
\end{equation}

\begin{Lemma}\label{main1 lemma1}
Assume conditions (C1)--(C6).
Let $z\in U$ be arbitrary, and let $\delta(z)\in(0,1]$ be a constant 
for which $1+\ovl{T^{-1}W(z)T^{-1}}-\delta(z)$ is accretive.
Then for any $t>0$, the operator $1+R_t^{1/2}W(z)R_t^{1/2}$ is bijective with
\begin{equation}
  \|(1+R_t^{1/2}W(z)R_t^{1/2})^{-1}\|\leq  \delta(z)^{-1}.
\end{equation}
\end{Lemma}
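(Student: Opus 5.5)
Write $M\coloneqq \ovl{T^{-1}W(z)T^{-1}}$ and $A_t\coloneqq R_t^{1/2}W(z)R_t^{1/2}$. The plan is to show that $1+A_t$ is uniformly accretive with constant $\delta(z)$, i.e.
\begin{equation}
 \Re\inner{u}{(1+A_t)u}\ \ge\ \delta(z)\norm{u}^2,\qquad u\in\sH .
\end{equation}
Since $A_t$ is bounded (indeed trace class), this lower bound yields $\norm{(1+A_t)u}\ge\delta(z)\norm{u}$, so $1+A_t$ is injective with closed range. The same bound holds for the adjoint $1+A_t^*$, because $\Re\inner{u}{(1+A_t^*)u}=\Re\inner{u}{(1+A_t)u}$. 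Hence the range of $1+A_t$ is dense, so $1+A_t$ is bijective and $\norm{(1+A_t)^{-1}}\le\delta(z)^{-1}$.

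The first step is to rewrite $A_t$ through $M$. By (C3), $W(z)T^{-1}$ maps into $\dom(T^{-1})$, so $T^{-1}W(z)T^{-1}$ is defined on $\ran(T)$. Set $C_t\coloneqq TR_t^{1/2}=T(T^2+t^2)^{-1/2}$. This is a bounded self-adjoint operator with $0\le C_t\le 1$, and its range lies in $\ran(T)$. Then $R_t^{1/2}=T^{-1}C_t$ holds on $\sH$, and since $R_t^{1/2}$ is self-adjoint and commutes with $T$, we also have $R_t^{1/2}=C_tT^{-1}$ on $\ran(T)$. For $u\in\sH$ this gives
\begin{equation}
 A_tu=R_t^{1/2}W(z)T^{-1}C_tu = C_t\,T^{-1}W(z)T^{-1}C_tu = C_tMC_tu ,
\end{equation}
where the middle equality uses $W(z)T^{-1}C_tu\in\dom(T^{-1})$, i.e. that this vector lies in $\ran(T)$, which is exactly (C3). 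So $A_t=C_tMC_t$.

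The second step computes, with $v=C_tu$,
\begin{equation}
 \Re\inner{u}{(1+A_t)u}=\norm{u}^2-\norm{v}^2+\Re\inner{v}{(1+M)v}\ \ge\ \norm{u}^2-\norm{v}^2+\delta(z)\norm{v}^2 .
\end{equation}
Since $\delta(z)\le1$ and $\norm{v}\le\norm{u}$, the right-hand side is at least $\delta(z)\norm{u}^2$. This is where the assumption $\delta(z)\in(0,1]$ is used.

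The main obstacle is the domain bookkeeping in the first step. One must justify $R_t^{1/2}=C_tT^{-1}$ on $\ran(T)$ and check that $W(z)T^{-1}C_tu$ lies in $\dom(T^{-1})$. Both follow from functional calculus for $T$ together with (C3), and the injectivity of $T$ from (C1) is what makes $T^{-1}$ well defined.
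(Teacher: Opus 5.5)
Your proposal is correct and follows essentially the paper's route. The paper also uses the factorization $1+R_t^{1/2}W(z)R_t^{1/2}=1-T^2R_t+TR_t^{1/2}\bigl(1+\overline{T^{-1}W(z)T^{-1}}\bigr)TR_t^{1/2}$. It derives the same bound $\Re\langle f,(1+R_t^{1/2}W(z)R_t^{1/2})f\rangle\ge\delta(z)\|f\|^2$, working through the spectral measure of $T$ where you use $\|C_tu\|\le\|u\|$.

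The differences are minor. For bijectivity and the norm bound, the paper invokes the resolvent estimate for m-accretive operators, whereas you give a direct closed-range/adjoint argument. You also spell out, via (C3), the domain bookkeeping that the paper leaves implicit.
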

\begin{proof}
We first show that the operator 
\begin{equation}
  A(z)\coloneqq  1+R_t^{1/2}W(z)R_t^{1/2}-\delta(z)
\end{equation}
is accretive.
Let $f\in\sH $ be arbitrary, and let $T=\int_0^\infty\lambda\, d E_T(\lambda)$ be the spectral resolution of $T$.
Note that
\begin{equation}
   A(z) + \delta(z) = 1+R_t^{1/2}W(z)R_t^{1/2} = 1-T^2R_t+TR_t^{1/2} \big(1+\ovl{T^{-1}W(z)T^{-1}}\big) TR_t^{1/2}.
\end{equation}
Hence, we have
\begin{align}
 & \Re \left\langle f,\big(1+R_t^{1/2}W(z)R_t^{1/2}\big) f\right\rangle
  \geq  \left\langle f,\left(1-T^2R_t+\delta(z) T^2R_t\right)f\right\rangle\\
 &\qquad=\int_0^\infty\frac{t^2+\delta(z)\lambda^2}{\lambda^2+t^2}\, d \|E_T(\lambda)f\|^2
    \geq \int_0^\infty\frac{\delta(z)t^2+\delta(z)\lambda^2}{\lambda^2+t^2}\, d \|E_T(\lambda)f\|^2\\
  &\qquad=\delta(z)\|f\|^2.
\end{align}
Thus $A(z)$ is accretive.
	
Since $A(z)$ is an everywhere defined bounded operator, $A(z)$ is m-accretive as well.
It follows from the general theory of m-accretive operators that $-\delta(z)$ is 
in the resolvent set of $A(z)$ with $ \left\|(A(z)+\delta(z))^{-1}\right\|\leq \delta(z)^{-1}$
(see, e.g., \cite[Proposition 3.20]{MR2953553}).
This completes the proof.
\end{proof}

\begin{Lemma}\label{main1 lemma2}
Assume conditions (C1)--(C6).
Let $K$ be a non-empty compact subset of $\CC$ contained in $U$.
Then there exists a positive constant $\delta\in(0,1]$ such that for all $z\in K$ and $t>0$,
 the operator $1+R_t^{1/2}W(z)R_t^{1/2}$ is bijective with
\begin{equation}
   \|(1+R_t^{1/2}W(z)R_t^{1/2})^{-1}\| \leq  \delta^{-1}.
\end{equation}
\end{Lemma}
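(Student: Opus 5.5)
By Lemma \ref{main1 lemma1}, it suffices to find a single $\delta\in(0,1]$ such that $1+\ovl{T^{-1}W(z)T^{-1}}-\delta$ is accretive for every $z\in K$. The bijectivity and the bound $\delta^{-1}$ then follow for all $z\in K$ and $t>0$ at once. The uniformity in $t$ is already built into Lemma \ref{main1 lemma1}, so the task reduces to making the accretivity constant uniform over the compact set $K$.

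Set $M(z)\coloneqq \ovl{T^{-1}W(z)T^{-1}}$. By (C5), $z\mapsto M(z)$ is holomorphic into $\mathcal{S}_1(\sH)$, hence continuous in trace norm and therefore in operator norm. For each $z_0\in K$, condition (C6) gives $\delta(z_0)>0$ with $\Re(1+M(z_0))\geq \delta(z_0)$; we may shrink $\delta(z_0)$ so that it lies in $(0,1]$. For any $f\in\sH$ we then have
\begin{equation}
\Re\langle f,(1+M(z))f\rangle \geq \delta(z_0)\|f\|^2 - \|M(z)-M(z_0)\|\,\|f\|^2 .
\end{equation}
By continuity there is an open disc $D_{z_0}\subset U$ around $z_0$ on which $\|M(z)-M(z_0)\|\leq \delta(z_0)/2$. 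On $D_{z_0}$, the operator $1+M(z)-\delta(z_0)/2$ is therefore accretive.

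The discs $\{D_{z_0}\}_{z_0\in K}$ cover $K$. By compactness, finitely many of them, centred at $z_1,\dots,z_N$, suffice. Put $\delta\coloneqq \min_j \delta(z_j)/2\in(0,1]$. Every $z\in K$ lies in some $D_{z_j}$, so $1+M(z)-\delta$ is accretive, since lowering the constant preserves accretivity. Applying Lemma \ref{main1 lemma1} with this $\delta$ in place of $\delta(z)$ gives the claim.

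The only real point is this uniformity step. Pointwise strict positivity of the real part has to be upgraded to a uniform lower bound, and that rests on operator-norm continuity of $M$, which comes from (C5). Lemma \ref{main1 lemma1} was stated for an arbitrary admissible constant $\delta(z)\in(0,1]$, so it applies verbatim with the uniform $\delta$.
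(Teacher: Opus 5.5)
Your proof is correct and is essentially the paper's argument. The paper proves that the lower bound $E_0(z)=\inf_{\|f\|=1}\langle f,\Re(1+\ovl{T^{-1}W(z)T^{-1}})f\rangle$ satisfies $|E_0(z_1)-E_0(z_2)|\le\|A(z_1)-A(z_2)\|$, so $E_0$ is continuous by (C5), and then takes its minimum over $K$ before applying Lemma \ref{main1 lemma1}; your finite-subcover argument with local operator-norm perturbation is the same compactness-plus-continuity reasoning in a slightly different form.
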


\begin{proof}
For each $z\in U$, we set $A(z) \coloneqq  \Re (1+\ovl{T^{-1}W(z)T^{-1}})$ and 
\begin{equation}
   E_0(z)\coloneqq \inf\{\langle f,A(z)f\rangle\mid f\in\sH ,\ \|f\|=1\},\qquad z\in U.
\end{equation}
By condition (C6), each $E_0(z)$ is positive.
A standard argument shows that
\begin{equation}
   |E_0(z_1)-E_0(z_2)| \leq \|A(z_1)-A(z_2)\|, \qquad \forall z_1,z_2\in U.
\end{equation}
 This, together with condition (C5), implies that $E_0(z)$ is continuous on $U$.
Define
\begin{equation}
   \delta_0 \coloneqq  \min_{z\in K} E_0(z) > 0,  \qquad  \delta\coloneqq \min\{\delta_0,1\} \in (0,1].
\end{equation}
Then $1+\ovl{T^{-1}W(z)T^{-1}}-\delta$ is accretive for all $z\in K$.
This, together with Lemma \ref{main1 lemma1}, completes the proof.
\end{proof}

\begin{Lemma}\label{main1 lemma3}
Assume conditions (C1)--(C6).
For any $z\in U$ and $t>0$, we have
\begin{equation}
  (S(z)^2+t^2)^{-1} - (T^2+t^2)^{-1} = -R_t^{1/2}\Big(1+R_t^{1/2} W(z) R_t^{1/2}\Big)^{-1} R_t^{1/2} W(z) R_t.
\end{equation}
\end{Lemma}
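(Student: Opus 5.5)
The plan is to verify the identity directly by a second resolvent type argument. Write $A \coloneqq T^2+t^2$, so $R_t=A^{-1}$, and $S(z)^2+t^2 = T^2+W(z)+t^2$ on $\dom(T^2)$. Since $S(z)^2=T^2+W(z)$ is m-accretive (Proposition \ref{prop:m-acc}), $-t^2$ lies in its resolvent set, so $(S(z)^2+t^2)^{-1}$ exists as a bounded operator mapping $\sH$ onto $\dom(T^2)$. First factor
\begin{equation}
  T^2+W(z)+t^2 = R_t^{-1/2}\bigl(1+R_t^{1/2}W(z)R_t^{1/2}\bigr)R_t^{-1/2}
\end{equation}
on $\dom(T^2)$. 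This holds because $R_t^{-1/2}=(T^2+t^2)^{1/2}$ maps $\dom(T^2)$ onto $\dom(T)$, and for $u\in\dom(T^2)$ we have $R_t^{-1/2}R_t^{1/2}W(z)u=W(z)u$.

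By Lemma \ref{main1 lemma1}, $M\coloneqq 1+R_t^{1/2}W(z)R_t^{1/2}$ is boundedly invertible. Hence the right-hand side has the inverse $R_t^{1/2}M^{-1}R_t^{1/2}$. To check this, note that $R_t^{1/2}M^{-1}R_t^{1/2}$ maps $\sH$ into $\dom(R_t^{-1/2})$, and $M^{-1}R_t^{1/2}f\in\sH$. We need the image to lie in $\dom(T^2)$ in order to apply $T^2+W(z)+t^2$. Set $v=R_t^{1/2}M^{-1}R_t^{1/2}f$ and $w=M^{-1}R_t^{1/2}f$. Then $w+R_t^{1/2}W(z)R_t^{1/2}w = R_t^{1/2}f$, so $w\in\ran(R_t^{1/2})=\dom(T)$. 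Consequently $v=R_t^{1/2}w\in\dom(T^2)$, and a direct computation gives $(T^2+W(z)+t^2)v=f$. By uniqueness of the inverse,
\begin{equation}
  (S(z)^2+t^2)^{-1}=R_t^{1/2}M^{-1}R_t^{1/2}.
\end{equation}

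Now subtract $R_t=R_t^{1/2}R_t^{1/2}$:
\begin{equation}
  (S(z)^2+t^2)^{-1}-R_t=R_t^{1/2}\bigl(M^{-1}-1\bigr)R_t^{1/2}.
\end{equation}
Apply the algebraic identity $M^{-1}-1=-M^{-1}(M-1)=-M^{-1}R_t^{1/2}W(z)R_t^{1/2}$. Inserting this and combining $R_t^{1/2}R_t^{1/2}=R_t$ on the right yields
\begin{equation}
  -R_t^{1/2}M^{-1}R_t^{1/2}W(z)R_t,
\end{equation}
which is the claimed formula.

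The only delicate step is the domain bookkeeping in the inverse verification, namely showing that $R_t^{1/2}M^{-1}R_t^{1/2}$ maps into $\dom(T^2)$. The trick of rewriting $w$ via the equation $Mw=R_t^{1/2}f$ handles this. Everything else is bounded-operator algebra.
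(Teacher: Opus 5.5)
Your proof is correct and follows essentially the paper's route: the factorization $S(z)^2+t^2=(T^2+t^2)^{1/2}\bigl(1+R_t^{1/2}W(z)R_t^{1/2}\bigr)(T^2+t^2)^{1/2}$ together with Lemma \ref{main1 lemma1} gives $(S(z)^2+t^2)^{-1}=R_t^{1/2}M^{-1}R_t^{1/2}$, and your identity $M^{-1}-1=-M^{-1}(M-1)$ is the paper's second resolvent identity $(S(z)^2+t^2)^{-1}-R_t=-(S(z)^2+t^2)^{-1}W(z)R_t$ written in terms of $M$. Your explicit check that $R_t^{1/2}M^{-1}R_t^{1/2}$ maps into $\dom(T^2)$ fills in domain bookkeeping that the paper leaves implicit.
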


\begin{proof}
 Since
\begin{equation}
    S(z)^2+t^2=(T^2+t^2)^{1/2}\Big(1+R_t^{1/2}W(z)R_t^{1/2}\Big)(T^2+t^2)^{1/2},   
\end{equation}
Lemma \ref{main1 lemma1} yields that
\begin{equation}
  (S(z)^2+t^2)^{-1} = R_t^{1/2}\Big(1+R_t^{1/2}W(z)R_t^{1/2}\Big)^{-1} R_t^{1/2}.  \label{eq: RS(Z)} 
\end{equation}
Thus we get
\begin{align}
 (S(z)^2+t^2)^{-1}-(T^2+t^2)^{-1}
 & = -(S(z)^2+t^2)^{-1}W(z)(T^2+t^2)^{-1} \\
 & = -R_t^{1/2}\Big(1+R_t^{1/2}W(z)R_t^{1/2}\Big)^{-1} R_t^{1/2} W(z) R_t.
\end{align}
This completes the proof.
\end{proof}

\begin{Lemma}\label{main1 lemma4}
Assume conditions (C1)--(C6). Let $z\in U$ be arbitrary.
Then the range of $W'(z)T^{-1}$ is contained in the domain of $T^{-1}$, 
the operator $T^{-1}W'(z)T^{-1}$ is bounded, and its closure is of trace class with
\begin{equation}
   \frac{d}{dz}\,\ovl{T^{-1}W(z)T^{-1}}=\ovl{T^{-1}W'(z)T^{-1}}.
\end{equation}
In particular, the map
\begin{equation}
   U\to\mathcal{S}_1(\sH ),\qquad z\mapsto\ovl{T^{-1}W'(z)T^{-1}}
\end{equation}
is continuous in the trace norm.
\end{Lemma}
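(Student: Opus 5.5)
Set $V(z)\coloneqq\ovl{T^{-1}W(z)T^{-1}}\in\mathcal{S}_1(\sH)$. By (C5) the map $z\mapsto V(z)$ is holomorphic, so its derivative $V'(z)$ exists in $\mathcal{S}_1(\sH)$ and depends continuously on $z$ in trace norm. The plan is to identify $V'(z)$ with $\ovl{T^{-1}W'(z)T^{-1}}$; every assertion of the lemma then follows. The identification rests on one observation: on $\dom(T)$ we have $W(z)T^{-1}Tu=W(z)u$, and by (C3) this equals $T\,T^{-1}W(z)u$. Hence for $u,v\in\dom(T)$,
\begin{equation}
  \langle Tv, V(z)Tu\rangle = \langle v, W(z)u\rangle .
\end{equation}
This holds because $T^{-1}W(z)T^{-1}(Tu)=T^{-1}W(z)u$, and $\langle Tv,T^{-1}W(z)u\rangle=\langle v,W(z)u\rangle$ by self-adjointness of $T$ together with $W(z)u\in\ran T$, which is (C3) applied to $f=Tu$.

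Next I will differentiate this identity. Fix $u,v\in\dom(T)$. Both sides are holomorphic scalar functions of $z$. The left side has derivative $\langle Tv,V'(z)Tu\rangle$, because trace-norm convergence of difference quotients implies weak convergence. The right side has derivative $\langle v,W'(z)u\rangle$. Therefore
\begin{equation}
  \langle Tv,V'(z)Tu\rangle=\langle v,W'(z)u\rangle,\qquad u,v\in\dom(T).
\end{equation}

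From this I will read off the range and domain statements. Fix $f\in\sH$ and put $u=T^{-1}f$; this is allowed on $\ran T$, so I take $f\in\ran(T)$, which is dense because $T$ is injective. Then the map $v\mapsto\langle v,W'(z)T^{-1}f\rangle=\langle Tv,V'(z)f\rangle$ is a continuous functional of $Tv$. It follows that $W'(z)T^{-1}f\in\dom(T^{*-1})$ with $T^{-1}W'(z)T^{-1}f=V'(z)f$; more precisely, $W'(z)T^{-1}f$ lies in $\dom((T^{-1})^*)=\dom(T^{-1})$ because $T^{-1}$ is self-adjoint. So $T^{-1}W'(z)T^{-1}=V'(z)$ on $\dom(T^{-1})=\ran T$. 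That operator is bounded, and since $\ran T$ is dense, its closure is $V'(z)$, which is trace class. The remaining statements, the derivative formula and the trace-norm continuity of $z\mapsto\ovl{T^{-1}W'(z)T^{-1}}$, are exactly the holomorphy of $V$ given by (C5): a holomorphic Banach-valued function is $C^1$.

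The main obstacle is the functional-analytic step of concluding $W'(z)T^{-1}f\in\dom(T^{-1})$ from the weak identity. It must be arranged carefully through the adjoint characterisation of $\dom(T^{-1})$: a vector $g$ lies in $\dom(T^{-1})$ if and only if $w\mapsto\langle T^{-1}w,g\rangle$ is bounded on $\dom(T^{-1})$. Writing $w=Tv$ gives exactly the functional $v\mapsto\langle v,g\rangle$ controlled by $\|V'(z)f\|\,\|Tv\|$, so the step closes. A second point to check is that $W'(z)$ exists at all: by (C2) it exists in $\mathcal{S}_1$, hence as a bounded operator.
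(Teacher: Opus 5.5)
Your proposal is correct and follows essentially the same route as the paper. The paper writes the weak identity $\langle u, F(z)v\rangle = \langle T^{-1}u, W(z)T^{-1}v\rangle$ for $u,v\in\dom(T^{-1})$, which is your identity with $Tu,Tv$ renamed; it differentiates this using (C2) and (C5) and reads off $W'(z)T^{-1}v\in\dom(T^{-1})$ with $T^{-1}W'(z)T^{-1}v=F'(z)v$, and your adjoint-characterisation argument simply spells out the step the paper states in one line.
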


\begin{proof}
We set 
\begin{equation}
   F(z) \coloneqq  \ovl{T^{-1}W(z)T^{-1}},\qquad z\in U.
\end{equation}
For any $u,v\in\dom(T^{-1})$, we have
\begin{equation}
   \langle u, F(z)v\rangle = \langle T^{-1}u, W(z)T^{-1}v\rangle.
\end{equation}
  Differentiating both sides with respect to $z$, we get
\begin{equation}
   \langle u,F'(z)v\rangle = \langle T^{-1}u,W'(z)T^{-1}v\rangle.
\end{equation}
This implies that $W'(z)T^{-1}v\in\dom(T^{-1})$ and $T^{-1}W'(z)T^{-1}v = F'(z)v$.
Since $v\in\dom(T^{-1})$ is arbitrary and $F'(z) \in \cB(\sH)$, 
the operator $T^{-1}W'(z)T^{-1}$ is bounded, and its closure is equal to $F'(z)$.
This finishes the proof.
\end{proof}

\begin{proof}[Proof of Theorem \ref{main1}]
For $t>0$ and $z\in U$, we set 
\begin{equation}
 f(z,t) \coloneqq  (S(z)^2+t^2)^{-1} - (T^2+t^2)^{-1}.
\end{equation}
It then follows from \eqref{int_rep_S(z)} and the resolvent identity that
\begin{equation}
  (S(z)-T)u = -\frac{2}{\pi} \int_0^\infty f(z,t) u \, t^2 dt   \label{eq:int_repr_S-T}
\end{equation}
for all $u\in \dom(T^2)$.

Let $K \subset U$ be an arbitrary non-empty compact subset. 
By condition (C5) and Lemma \ref{main1 lemma4}, the maps $z \mapsto W(z)$ and
 $z \mapsto \ovl{T^{-1}W(z)T^{-1}}$ are $\cS_1(\sH)$-valued holomorphic functions, 
and hence continuous, on $U$. 
Thus we can define
\begin{equation}
  M_1 \coloneqq  \sup_{z \in K} \|W(z)\|_1 < \infty \quad \text{and} \quad 
  M_2 \coloneqq  \sup_{z \in K} \|\ovl{T^{-1}W(z)T^{-1}}\|_1 < \infty.
\end{equation}
For notational simplicity, we set 
\begin{equation}
  \til{W}_t(z) \coloneqq  R_t^{1/2} W(z) R_t^{1/2}.   \label{eq:W_tilde}
\end{equation}
If $t \geq 1$, it follows from Lemma \ref{main1 lemma2} and Lemma \ref{main1 lemma3} that
\begin{align}
  \| f(z,t) \|_1
  &\leq \|R_t^{1/2}\| \big\|\big(1+ \til{W}_t(z)\big)^{-1}\big\| \|R_t^{1/2}\| \|W(z)\|_1 \|R_t\| \\
  &\leq \frac{1}{t} \cdot \frac{1}{\delta} \cdot \frac{1}{t} \cdot M_1 \cdot \frac{1}{t^2} = \frac{M_1}{\delta t^4}
\end{align}
for all $z \in K$. Similarly, if $0 < t \leq 1$, we have
\begin{align}
  \| f(z,t) \|_1
  & = \big\|R_t^{1/2}\big(1+\til{W}_t(z) \big)^{-1}TR_t^{1/2} \cdot \ovl{T^{-1}W(z)T^{-1}} \cdot TR_t\big\|_1 \\
  &\leq \|R_t^{1/2}\| \big\|\big(1+\til{W}_t(z) \big)^{-1}\big\| \|TR_t^{1/2}\| \big\|\ovl{T^{-1}W(z)T^{-1}}\big\|_1 \|TR_t\| \\
  &\leq \frac{1}{t} \cdot \frac{1}{\delta} \cdot 1 \cdot M_2 \cdot \frac{1}{t} = \frac{M_2}{\delta t^2}
\end{align}
for all $z \in K$. Combining these two cases, we obtain the uniform bound over $K$:
\begin{equation}
  \sup_{z \in K} \int_0^\infty \| f(z,t) \|_1 \, t^2 dt 
  \leq \int_1^\infty \frac{M_1}{\delta t^2} \, dt + \int_0^1 \frac{M_2}{\delta} \, dt 
   = \frac{M_1+M_2}{\delta} < \infty. \label{eq:uniform_bound}
\end{equation}
This integrability implies that for each $z \in U$, the $\cS_1(\sH)$-valued Bochner integral
\begin{equation}
  A(z) \coloneqq  -\frac{2}{\pi} \int_0^\infty f(z,t) \, t^2 dt
\end{equation}
converges absolutely in $\cS_1(\sH)$. 
In view of \eqref{eq:int_repr_S-T}, the operator $S(z)-T$ coincides with $A(z)$ on the dense domain $\dom(T^2)$.
This proves that the restriction $(S(z)-T)|_{\dom(T^2)}$ is bounded, and its extension 
$\ovl{(S(z)-T)|_{\dom(T^2)}} = A(z)$ belongs to $\cS_1(\sH)$.

To establish the holomorphy of the map $z \mapsto A(z)$, we employ Mattner's theorem (Theorem \ref{thm:comp-diff}). 
By Lemma~\ref{main1 lemma2}, Lemma~\ref{main1 lemma3} and condition (C2), 
the integrand 
\begin{equation}
 f(z,t) = - R_t^{1/2}\big(1+R_t^{1/2} W(z) R_t^{1/2}\big)^{-1} R_t^{1/2} W(z) R_t
\end{equation}
is an $\cS_1(\sH)$-valued holomorphic function in $z \in U$ for each $t > 0$.

Let $\varphi$ be an arbitrary bounded linear functional on $\cS_1(\sH)$. 
Then, the function $z \mapsto \varphi(f(z,t))$ is holomorphic on $U$. 
Furthermore, the uniform bound \eqref{eq:uniform_bound} yields
\begin{equation}
  \sup_{z \in K} \int_0^\infty |\varphi(f(z,t))| \, t^2 dt 
  \leq \|\varphi\| \sup_{z \in K} \int_0^\infty \|f(z,t)\|_1 \, t^2 dt < \infty.
\end{equation}
Applying Theorem \ref{thm:comp-diff} to this function,
we conclude that the integrated function
\begin{equation}
  \varphi(A(z)) = -\frac{2}{\pi} \int_0^\infty \varphi(f(z,t)) \, t^2 dt
\end{equation}
is holomorphic on $U$. 
Since $\varphi \in \cS_1(\sH)^*$ was arbitrary, Dunford's theorem ensures that the map 
 $z \mapsto A(z)$ is an $\cS_1(\sH)$-valued holomorphic function on $U$.

Finally, since the trace $\tr : \cS_1(\sH) \to \CC$ is a bounded linear functional,
 the holomorphy of the map $z \mapsto \tr(A(z))$ follows immediately.
This completes the proof.
\end{proof}

\begin{Corollary}\label{main1 cor1}
 Assume conditions (C1)--(C6).
Let $U_1$ be the set of all $z\in U$ satisfying $\|\ovl{T^{-1}W(z)T^{-1}}\|<1$.
Then $U_1$ is an open subset of $\CC$, and we have
\begin{equation}
  \tr\Big( \ovl{[S(z)-T]|_{\dom(T^2)}} \Big)
  = \frac{2}{\pi} \sum_{m=1}^\infty (-1)^{m-1} \int_0^\infty \tr \big(R_t(W(z)R_t)^m\big) \, t^2 dt,
   \qquad \forall z\in U_1.
\end{equation}
\end{Corollary}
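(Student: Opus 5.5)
The plan is to redo the proof of Theorem \ref{main1} with the resolvent $(1+\til W_t(z))^{-1}$ expanded as a Neumann series, and then to swap sum, integral and trace using a bound that is geometric in $m$.

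\emph{Openness of $U_1$.} By condition (C5) the map $z\mapsto F(z)\coloneqq \ovl{T^{-1}W(z)T^{-1}}$ is continuous into $\cS_1(\sH)$. Since $\|\cdot\|\le\|\cdot\|_1$, it is also continuous into $\cB(\sH)$. Hence $z\mapsto \|F(z)\|$ is continuous on $U$, and $U_1=\{z\in U:\|F(z)\|<1\}$ is open in the open set $U$.

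\emph{Neumann series for the integrand.} Fix $z\in U_1$ and set $q\coloneqq\|F(z)\|<1$. For $t>0$ we have $\|TR_t^{1/2}\|\le 1$. Using the factorization already used in the proof of Theorem \ref{main1},
\begin{equation}
\til W_t(z)=R_t^{1/2}W(z)R_t^{1/2}=TR_t^{1/2}\,F(z)\,TR_t^{1/2},
\end{equation}
so $\|\til W_t(z)\|\le q<1$. We expand $(1+\til W_t(z))^{-1}=\sum_{k\ge0}(-\til W_t(z))^k$ in operator norm and insert this into Lemma \ref{main1 lemma3}. The identity $R_t^{1/2}\til W_t(z)^{k}R_t^{1/2}W(z)R_t=R_t(W(z)R_t)^{k+1}$ then gives
\begin{equation}
f(z,t)=\sum_{m=1}^\infty(-1)^m R_t\big(W(z)R_t\big)^m .
\end{equation}

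\emph{Trace-norm domination.} This step is the main point. Write each term as $R_t(WR_t)^m=R_t^{1/2}\til W_t^{\,m-1}R_t^{1/2}WR_t$, where $W=W(z)$ and $\til W_t=\til W_t(z)$, and estimate it exactly as in the proof of Theorem \ref{main1}, with the factor $q^{m-1}$ replacing $\delta^{-1}$.
\begin{itemize}
\item For $t\ge1$: $\|R_t(WR_t)^m\|_1\le q^{m-1}\|W(z)\|_1\,t^{-4}$.
\item For $0<t\le1$: rewrite $R_t^{1/2}WR_t=TR_t^{1/2}F(z)TR_t$, which gives $\|R_t(WR_t)^m\|_1\le q^{m-1}\|F(z)\|_1\,t^{-2}$.
\end{itemize}
Consequently
\begin{equation}
\sum_{m\ge1}\int_0^\infty\|R_t(WR_t)^m\|_1\,t^2dt\le (1-q)^{-1}\big(\|W(z)\|_1+\|F(z)\|_1\big)<\infty .
\end{equation}
By Fubini's theorem (dominated convergence for $\cS_1$-valued Bochner integrals), the series may be integrated term by term in $\cS_1(\sH)$:
\begin{equation}
\ovl{(S(z)-T)|_{\dom(T^2)}}=-\frac{2}{\pi}\int_0^\infty f(z,t)\,t^2dt=\frac{2}{\pi}\sum_{m=1}^\infty(-1)^{m-1}\int_0^\infty R_t(W(z)R_t)^m\,t^2dt .
\end{equation}

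\emph{Taking the trace.} The trace is a bounded linear functional on $\cS_1(\sH)$, so it commutes with both the convergent series and the Bochner integrals. Applying it to the last display yields the claimed formula.
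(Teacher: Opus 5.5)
Your proposal is correct and takes essentially the same route as the paper. Both arguments use a Neumann expansion of $(1+\til{W}_t(z))^{-1}$ justified by $\|\til{W}_t(z)\|\le\|\ovl{T^{-1}W(z)T^{-1}}\|<1$, then a term-wise bound that is geometric in $m$ and integrable against $t^2\,dt$, then Fubini. The one difference is where the interchange happens: you swap sum and integral in $\cS_1(\sH)$ before applying the trace, which also gives the operator-level expansion of $\ovl{(S(z)-T)|_{\dom(T^2)}}$. The paper instead takes the trace first and swaps at the scalar level, using $|\tr(R_t(W(z)R_t)^m)|\le\|\ovl{T^{-1}W(z)T^{-1}}\|^{m-1}\min\{t^{-2}\|W(z)\|_1,\|\ovl{T^{-1}W(z)T^{-1}}\|_1\}\,t^{-2}$.
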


\begin{proof}
Since the function $\big\|\ovl{T^{-1}W(z)T^{-1}}\big\|$ is continuous in $z\in U$, 
the set $U_1$ is open in $\CC$.
 Next, combining what we showed in the proof of Theorem \ref{main1} 
with Lemma \ref{main1 lemma3}, we obtain
\begin{align}
  \tr\Big( \ovl{ (S(z)-T)|_{\dom(T^2)} } \Big)
  & = -\frac{2}{\pi}\int_0^\infty \tr\left[(S(z)^2+t^2)^{-1}-(T^2+t^2)^{-1}\right] \, t^2dt \\
  & = \frac{2}{\pi}\int_0^\infty \tr
      \Big[ R_t^{1/2}\big(1+\til{W}_t(z) \big)^{-1}R_t^{1/2}W(z)R_t \Big]
      \, t^2dt. \label{eq:integ_repr_tr}
\end{align}
For any $z\in U_1$ and $t>0$, it holds that
\begin{equation}
   \|\til{W}_t(z) \| \leq \big\| \ovl{T^{-1}W(z)T^{-1}} \big\| <1,  \label{main1cor1eq}
\end{equation}
and thus we get
\begin{equation}
 \big(1+ \til{W}_t(z) \big)^{-1} = \sum_{m=0}^\infty\big(-\til{W}_t(z) \big)^m
\end{equation}
in the operator norm. Since $W(z)$ is of trace class, we have
\begin{align}
& \tr\Big(\ovl{[S(z)-T]|_{\dom(T^2)}}\Big)  \\
& = \frac{2}{\pi} \int_0^\infty \sum_{m=0}^\infty(-1)^{m} \tr
    \left[ R_t^{1/2} \Big(R_t^{1/2}W(z)R_t^{1/2}\Big)^{m} R_t^{1/2} W(z) R_t\right] \, t^2 dt  \\ 
& = \frac{2}{\pi}\int_0^\infty\sum_{m=0}^\infty(-1)^{m} \tr\left[R_t\Big(W(z)R_t\Big)^{m+1}\right] \, t^2 dt \\
& = \frac{2}{\pi}\int_0^\infty\sum_{m=1}^\infty(-1)^{m-1} \tr\Big(R_t(W(z)R_t)^m\Big) \, t^2 dt.
\end{align}

To complete the proof, it suffices to interchange the integral and the infinite sum above.
For this, let us estimate the integrand. Since
\begin{equation}
\norm{\til{W}_t(z)}_1 \leq  \min \{ t^{-2} \norm{W(z)}_1, \norm{\ovl{T^{-1}W(z)T^{-1}}}_1 \},  
\end{equation}
we have
\begin{align}
& \left|\tr\big(R_t(W(z)R_t)^m\big)\right|
 = \big|\tr \big( R_t^{1/2} \big(\til{W}_t(z) \big)^m R_t^{1/2} \big)\big|  \\
& \leq \norm{R_t^{1/2}} \norm{\til{W}_t(z)}_1 \norm{\til{W}_t(z)}^{m-1} \norm{R_t^{1/2}} \\
& \leq \big\| \ovl{T^{-1}W(z)T^{-1}} \big\|^{m-1} 
      \min \{ t^{-2}\norm{W(z)}_1, \norm{ \ovl{T^{-1}W(z)T^{-1}} }_1 \} \, t^{-2},
\end{align}
for $t>0$ and $m=1,2,\cdots$.
Thus, by \eqref{main1cor1eq}, we have
\begin{equation}
   \int_0^\infty\sum_{m=1}^\infty \left|\tr\big( R_t (W(z)R_t)^m \big) \right| \, t^2 dt < \infty,
\end{equation}
which allows us to interchange the sum and integral.
This completes the proof.
\end{proof}

\section{Holomorphy of the Ground State}\label{sect:GS}

The proof of the holomorphy of the ground state is more involved than that of the ground state energy.

Consider the unnormalized ground state \eqref{phitil} of $H$.
Here, $K_1$ is defined by \eqref{def:Ks} in Appendix \ref{AppenC} 
using $X$ and $Y$ as defined in \eqref{defXY}.
We consider a situation where the Hamiltonian depends on a coupling constant $\alpha \in \RR$.
In this case, $K_1$ depends on $\alpha$, so we denote it by $K_1(\alpha)$.
First, we extend $K_1(\alpha)$ analytically to an operator-valued holomorphic function
$K_1(z)$ defined on an open subset of $\CC$.

Next, we show that $K_1(z)$ satisfies the conditions of Theorem \ref{gs:Holo} in Appendix \ref{AppenC}.
 This establishes the holomorphy of the ground state.

In what follows, similarly to Section \ref{sect:GSE}, we will establish holomorphy in a general 
framework without explicitly introducing $\alpha$. 
The application to specific models will be discussed in Section \ref{sect:example}.

\subsection{\texorpdfstring{Definition of $K_1(z)$}{Definition of K1(z)}}
Following Proposition \ref{prop:K1_K4}, we introduce an operator $K_4(z)$:
\begin{Theorem}\label{thm:K4_1}
 Assume conditions (C1)--(C6). 
For each $z \in U$, the operator
\begin{equation}
   K_4(z) \coloneqq  \frac{1}{2} \ovl{T^{-1/2}(S(z) - T)T^{-1/2}},
\end{equation}
is a bounded operator on $\sH$
and is of trace class.
In particular, $K_4(z)$ is a Hilbert--Schmidt operator.
\end{Theorem}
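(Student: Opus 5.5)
We mimic the proof of Theorem \ref{main1}, now sandwiching the integral representation \eqref{eq:int_repr_S-T} between factors $T^{-1/2}$. Fix $z\in U$ and recall $f(z,t)=(S(z)^2+t^2)^{-1}-R_t$. Lemma \ref{main1 lemma3} gives
\begin{equation}
  f(z,t) = -R_t^{1/2}\big(1+\til{W}_t(z)\big)^{-1}R_t^{1/2}W(z)R_t .
\end{equation}
For $u,v\in\dom(T^{-1/2})$ with $T^{-1/2}v\in\dom(T^2)$ (a dense set), \eqref{eq:int_repr_S-T} yields
\begin{equation}
 \langle u, T^{-1/2}(S(z)-T)T^{-1/2}v\rangle = -\frac{2}{\pi}\int_0^\infty \langle T^{-1/2}u, f(z,t)T^{-1/2}v\rangle\, t^2dt .
\end{equation}
The goal is therefore to show that the operator
\begin{equation}
 G(z,t)\coloneqq \ovl{T^{-1/2}f(z,t)T^{-1/2}}
\end{equation}
is trace class, with $\int_0^\infty\|G(z,t)\|_1\,t^2dt<\infty$. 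Once this is known, we set $K_4(z)=-\frac{1}{\pi}\int_0^\infty G(z,t)\,t^2dt$ as an $\cS_1(\sH)$-valued Bochner integral. The identity above then shows that $\frac12 T^{-1/2}(S(z)-T)T^{-1/2}$ agrees with this trace-class operator on a dense set, so it is bounded and its closure is trace class. Since $\|\cdot\|_2\le\|\cdot\|_1$, the Hilbert--Schmidt claim follows immediately.

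To bound $G(z,t)$, we commute the outer factors $T^{-1/2}$ through $R_t^{1/2}$ and $R_t$ (functions of $T$) and write $W(z)=T\,F(z)\,T$ with $F(z)=\ovl{T^{-1}W(z)T^{-1}}$ (valid on the appropriate domains by (C3)--(C4)). This gives
\begin{equation}
 G(z,t) = -\,T^{-1/2}R_t^{1/2}\,\big(1+\til{W}_t(z)\big)^{-1}\,TR_t^{1/2}\,F(z)\,T^{1/2}R_t .
\end{equation}
By Lemma \ref{main1 lemma1}, the middle inverse has norm at most $\delta(z)^{-1}$. The remaining norms are controlled by the spectral theorem:
\begin{equation}
 \|TR_t^{1/2}\|\le1,\qquad \|T^{1/2}R_t\|\le C t^{-3/2},\qquad \|T^{-1/2}R_t^{1/2}\|=\sup_{\lambda}\lambda^{-1/2}(\lambda^2+t^2)^{-1/2}.
\end{equation}

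\textbf{Main obstacle.} The last norm is infinite when $\inf\sigma(T)=0$, so the factor $T^{-1/2}$ on the left cannot be estimated crudely. The plan is to rebalance the powers of $T$ by using instead the factorization
\begin{equation}
 R_t^{1/2}\big(1+\til W_t\big)^{-1}R_t^{1/2} = (T^2+W(z)+t^2)^{-1},
\end{equation}
together with a symmetric form of the second resolvent identity:
\begin{equation}
 f(z,t) = -R_t W(z) R_t + R_t W(z)(S(z)^2+t^2)^{-1}W(z)R_t .
\end{equation}
In the first term we write $T^{-1/2}R_tW(z)R_tT^{-1/2}=T^{1/2}R_t\,F(z)\,T^{3/2}R_t$. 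Its trace norm is at most $\|F(z)\|_1$ times a scalar that behaves like $t^{-2}$ for small $t$ and like $t^{-3}$ for large $t$; against the weight $t^2$ this is integrable near $0$, but for large $t$ we switch to the bound $\|W(z)\|_1\,\|T^{-1/2}R_t\|^2$, and $\|T^{-1/2}R_t\|\le\sup_\lambda \lambda^{-1/2}(\lambda^2+t^2)^{-1}$ is again finite only after using $W=TFT$ on one side. In the second term both outer factors become $T^{1/2}R_t\,T$, and the middle factor $T(S^2+t^2)^{-1}T=TR_t^{1/2}(1+\til W_t)^{-1}R_t^{1/2}T$ is bounded by $\delta^{-1}$. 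This term is then bounded by $\|F\|_1\|F\|\,\|T^{3/2}R_t\|^2\delta^{-1}\sim t^{-1}$, which I would further combine with the $\|W\|_1$ bounds for large $t$. Making these mixed small-$t$/large-$t$ estimates produce an integrable majorant is the step I expect to require the most care, possibly via interpolation $\|T^{-1/2}g\|\le\|g\|^{1/2}\|T^{-1}g\|^{1/2}$ between the $W$ and $F$ bounds.
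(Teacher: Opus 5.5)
You reach the same decomposition as the paper: the symmetric second resolvent identity, $W(z)=TF(z)T$ with $F(z)=\ovl{T^{-1}W(z)T^{-1}}$, and the middle factor $Q_t=T(S(z)^2+t^2)^{-1}T$ with $\|Q_t\|\le\delta(z)^{-1}$. However, the step you flag as the main obstacle is a genuine gap, and the route you sketch cannot close it.

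\textbf{The powers of $T$ are off, and this hides the divergence.} In fact $T^{-1/2}R_tWR_tT^{-1/2}=T^{1/2}R_tF\,T^{1/2}R_t$, not $T^{1/2}R_tF\,T^{3/2}R_t$. In the second term the two inner factors $T$ are absorbed into $Q_t$, leaving $T^{1/2}R_tF\,Q_t\,F\,T^{1/2}R_t$. So both terms are governed by $\|T^{1/2}R_t\|^2$. When $\sigma(T)=[0,\infty)$, scaling gives $t^2\|T^{1/2}R_t\|^2=c\,t^{-1}$, which diverges logarithmically both at $t\to0$ and at $t\to\infty$.

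\textbf{Mixing in $\|W\|_1$ does not save the small-$t$ end.} A one-sided use of $W=TFT$ still leaves the unbounded factor $T^{-1/2}R_t$ on the other side. Any rewriting with bounded outer factors therefore has the form $T^{s_1}R_t\,(T^{-1/2-s_1}WT^{-1/2-s_2})\,T^{s_2}R_t$ with $s_1,s_2\ge0$. The hypotheses control at most $s_1=s_2=1/2$, and $t^2\|T^{s_1}R_t\|\,\|T^{s_2}R_t\|\asymp t^{s_1+s_2-2}$ is never integrable at $0$ in that range. Interpolation only helps at $t=\infty$.

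\textbf{No pointwise-in-$t$ bound can work near $0$.} Take $F=|u\rangle\langle u|$ with $\|u\|=1$ and $u$ spectrally supported in $[t_0/2,t_0]$. Then $\|F\|_1=1$ and $\|W\|_1=\|Tu\|^2\le t_0^2$, yet $t_0^2\,\|T^{1/2}R_{t_0}FT^{1/2}R_{t_0}\|_1=t_0^2\|T^{1/2}R_{t_0}u\|^2\ge t_0^{-1}/8$. So no majorant built from $\|F\|_1$, $\|W\|_1$ and operator norms of functions of $T$ at fixed $t$ is integrable near $t=0$.

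\textbf{The missing idea: integrate in $t$ before taking norms, vector by vector.} The key fact is the scale-free identity
\begin{equation*}
\int_0^\infty\lambda t^2(\lambda^2+t^2)^{-2}\,dt=\frac{\pi}{4}\quad(\lambda>0),\qquad\text{i.e.}\qquad \int_0^\infty\|T^{1/2}R_tu\|^2\,t^2dt=\frac{\pi}{4}\|u\|^2 .
\end{equation*}
For the first term:
\begin{itemize}
\item Write $F=\sum_n\mu_n|u_n\rangle\langle v_n|$ (singular value decomposition).
\item Bound $\|T^{1/2}R_tFT^{1/2}R_t\|_1\le\sum_n\mu_n\|T^{1/2}R_tu_n\|\,\|T^{1/2}R_tv_n\|$.
\item Apply Cauchy--Schwarz in $t$ to get $\int_0^\infty\|T^{1/2}R_tFT^{1/2}R_t\|_1\,t^2dt\le\frac{\pi}{4}\|F\|_1$. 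This is Theorem \ref{thm:trace_est1}.
\end{itemize}
For the second term:
\begin{itemize}
\item Bound $\|T^{1/2}R_tFQ_tFT^{1/2}R_t\|_1\le\|T^{1/2}R_tF\|_1\,\delta(z)^{-1}\,\|FT^{1/2}R_t\|$.
\item Apply Cauchy--Schwarz in $t$, using $\int_0^\infty\|T^{1/2}R_tF\|_1^2\,t^2dt\le\frac{\pi}{4}\|F\|_1^2$ (and the same for $F^*$).
\item This gives the bound $\frac{\pi}{4}\delta(z)^{-1}\|F\|_1^2$. These are Theorems \ref{thm:trace_est2} and \ref{thm:trace_est3}.
\end{itemize}
What this captures, and every pointwise estimate loses, is that a single fixed trace-class $F$ cannot concentrate at all spectral scales at once. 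With these two bounds in place, the rest of your plan goes through as in the paper: the weak identity on a dense set, boundedness, identification of the closure with a trace-class Bochner integral, and $\|\cdot\|_2\le\|\cdot\|_1$.
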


\begin{proof}
Recall that $S(z)$ admits the integral representation \eqref{int_rep_S(z)} under the assumed conditions. 
We fix $z \in U$ and, for simplicity, omit the argument $z$ by writing $S = S(z)$, $W = W(z)$, 
and $\til{W}_t = \til{W}_t(z) \coloneqq  R_t^{1/2}W(z)R_t^{1/2}$. 
As shown in the proof of Theorem~\ref{main1}, the resolvent formula yields
\begin{align}
(S - T)u 
& = \frac{2}{\pi} \int_0^\infty R_t^{1/2} \til{W}_t (1 + \til{W}_t)^{-1} R_t^{1/2} u \, t^2 dt \\
& = \frac{2}{\pi} \int_0^\infty R_t^{1/2} 
\left\{ \til{W}_t - \til{W}_t (1 + \til{W}_t)^{-1} \til{W}_t \right\} R_t^{1/2} u \, t^2 dt,
\end{align}
for all $u \in \dom(T^2)$.

Consequently, for any $u\in \dom(T^{3/2})\cap \dom(T^{-1/2})$ and $v\in \dom(T^{-1/2})$,
a direct computation gives
\begin{align}
&  \inner{T^{-1/2}v}{(S-T)T^{-1/2}u} \\
& = \frac{2}{\pi} \int_0^\infty \inner{T^{-1/2}v}{R_t^{1/2} \til{W}_t  R_t^{1/2} T^{-1/2} u} \, t^2 dt \\
& \quad  -\frac{2}{\pi} \int_0^\infty 
  \inner{T^{-1/2}v}{R_t^{1/2} \til{W}_t( 1 + \til{W}_t)^{-1}\til{W}_t  R_t^{1/2} T^{-1/2} u} \, t^2 dt \\
& = \frac{2}{\pi} \int_0^\infty \inner{v}{ T^{1/2} R_t \ovl{T^{-1} W T^{-1}} T^{1/2} R_tu} \, t^2 dt \\
& \quad  -\frac{2}{\pi} \int_0^\infty 
  \inner{v}{T^{1/2}R_t  \ovl{T^{-1}WT^{-1}}  Q_t  \ovl{T^{-1}WT^{-1}}  T^{1/2}R_t u} \, t^2 dt,
\end{align}
where
\begin{equation}
  Q_t \coloneqq  TR_t^{1/2} (1 + \til{W}_t)^{-1} TR_t^{1/2}.
\end{equation}
We now define the operators
\begin{align}
  I_1 & \coloneqq  \frac{2}{\pi} \int_0^\infty  T^{1/2} R_t \ovl{T^{-1} W T^{-1}} T^{1/2} R_t  \, t^2 dt, \label{def:I_1}\\
  I_2 & \coloneqq  \frac{2}{\pi} \int_0^\infty  T^{1/2} R_t  \ovl{T^{-1}WT^{-1}}  Q_t  \ovl{T^{-1}WT^{-1}}  T^{1/2}R_t \, t^2 dt. \label{def:I_2}
\end{align}
By Theorem~\ref{thm:trace_est1}, $ I_1 $ is a trace class operator defined on the whole space $\sH$.
For $I_2$, combining Lemma~\ref{main1 lemma1} with the inequality $\norm{TR_t^{1/2}} \leq 1$ yields
\begin{equation}
   \| Q_t \| \leq \delta(z)^{-1}, \qquad t > 0.
\end{equation}
This bound, together with condition (C4), allows us to apply Theorem \ref{thm:trace_est3}, 
which implies that $ I_2 $ is also a trace class operator defined on $\sH$. 

Therefore, for all $ u \in \dom(T^{3/2}) \cap \dom(T^{-1/2}) $ and $ v \in \dom(T^{-1/2}) $, we obtain
\begin{equation}
   \inner{T^{-1/2}v}{(S - T)T^{-1/2}u} = \inner{v}{(I_1 - I_2)u}.
\end{equation}
This identity shows that $(S - T)T^{-1/2}u \in \dom(T^{-1/2})$ and $ T^{-1/2}(S - T)T^{-1/2}u = (I_1 - I_2)u$
on the dense domain.
Since $I_1 - I_2$ is a trace class operator defined on the whole space $\sH$, 
$T^{-1/2}(S - T)T^{-1/2}$ is a densely defined bounded operator, and its closure $\ovl{T^{-1/2}(S - T)T^{-1/2}} = I_1 - I_2$ is of trace class.
\end{proof}

The main result of this subsection is as follows.
\begin{Theorem}\label{hs of K1}
Assume conditions (C1)--(C6). 
Then, for $z \in U$, the operator $1 + K_4(z)$ has a bounded inverse, and 
$K_1(z) \coloneqq  K_4(z)(1 + K_4(z))^{-1}$ is a trace class operator. 
In particular, $K_1(z)$ is Hilbert--Schmidt.
\end{Theorem}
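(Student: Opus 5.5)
The plan is to show that $1+K_4(z)$ has real part bounded below by $\tfrac12$. It is then invertible with $\|(1+K_4(z))^{-1}\|\le 2$, and $K_1(z)=K_4(z)(1+K_4(z))^{-1}$ is trace class because it is the product of the trace class operator $K_4(z)$ (Theorem \ref{thm:K4_1}) with a bounded operator. Formally $1+2K_4(z)=T^{-1/2}S(z)T^{-1/2}$, and $S(z)$ is m-accretive. So $1+2K_4(z)$ should be accretive, which gives
\begin{equation}
\Re\bigl(1+K_4(z)\bigr)=\tfrac12+\tfrac12\Re\bigl(1+2K_4(z)\bigr)\geq \tfrac12 .
\end{equation}

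The key step is to make the formal identity rigorous on a dense set. Fix $z\in U$ and take $u\in D\coloneqq \dom(T^{3/2})\cap\dom(T^{-1/2})$, the same dense set used in the proof of Theorem \ref{thm:K4_1}.
\begin{itemize}
\item Then $T^{-1/2}u\in\dom(T^2)$, and $\dom(T^2)=\dom(S(z)^2)\subset\dom(S(z))$.
\item That proof also showed $(S(z)-T)T^{-1/2}u\in\dom(T^{-1/2})$ and $T^{-1/2}(S(z)-T)T^{-1/2}u=2K_4(z)u$.
\item Moreover $T\,T^{-1/2}u=T^{1/2}u\in\dom(T^{-1/2})$.
\end{itemize}
Hence
\begin{equation}
\langle u,(1+2K_4(z))u\rangle=\langle u,T^{-1/2}S(z)T^{-1/2}u\rangle=\langle T^{-1/2}u,\,S(z)T^{-1/2}u\rangle,
\end{equation}
where the last equality moves the self-adjoint $T^{-1/2}$ across. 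Its real part is nonnegative because $S(z)$ is accretive. Since $1+2K_4(z)$ is bounded and $D$ is dense, continuity gives $\Re\langle u,(1+2K_4(z))u\rangle\ge 0$ for all $u\in\sH$.

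It follows that $\Re\langle u,(1+K_4(z))u\rangle\ge\tfrac12\|u\|^2$. This yields $\|(1+K_4(z))u\|\ge\tfrac12\|u\|$, so $1+K_4(z)$ is injective with closed range. The same bound holds for the adjoint, since $\Re\langle u,(1+K_4(z))^*u\rangle=\Re\langle u,(1+K_4(z))u\rangle$, so the range is also dense. Alternatively, one can cite that a bounded accretive operator shifted by $\tfrac12$ is invertible, as in Lemma \ref{main1 lemma1}, or use that $K_4(z)$ is compact so $1+K_4(z)$ is Fredholm of index zero. Either way $1+K_4(z)$ is bijective with $\|(1+K_4(z))^{-1}\|\le 2$. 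Then $\|K_1(z)\|_1\le 2\|K_4(z)\|_1$, and $K_1(z)$ is trace class, hence Hilbert--Schmidt.

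The main obstacle is the domain bookkeeping in the second paragraph. One must check that the vectors used in Theorem \ref{thm:K4_1} really lie in the domain of $S(z)$ after applying $T^{-1/2}$, and that the quadratic-form identity holds there. This rests on $\dom(S(z))\supset\dom(S(z)^2)=\dom(T^2)$, which holds because $W(z)$ is bounded, together with the identification of $\overline{T^{-1/2}(S-T)T^{-1/2}}$ obtained in that proof.
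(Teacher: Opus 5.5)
Your proof is correct and follows essentially the same route as the paper. Both arguments use the accretivity of $S(z)$, carried over to $\ovl{T^{-1/2}S(z)T^{-1/2}} = 1+2K_4(z)$, and conclude that $1+K_4(z) = \tfrac12\bigl(1+\ovl{T^{-1/2}S(z)T^{-1/2}}\bigr)$ is invertible. You additionally carry out the domain check on $\dom(T^{3/2})\cap\dom(T^{-1/2})$ that the paper leaves implicit, and you record the explicit bound $\|(1+K_4(z))^{-1}\|\le 2$.
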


\begin{proof}
Let $z \in U$. Note that $K_4(z)$ can be written as follows:
\begin{equation}
  K_4(z) = \frac{1}{2}\ovl{T^{-1/2}S(z)T^{-1/2}} - \frac{1}{2}.
\end{equation}
Since $S(z)$ is accretive (as mentioned just above \eqref{int_rep_S(z)}),
$\ovl{T^{-1/2}S(z)T^{-1/2}}$ is m-accretive. 
In particular, $-1$ is in the resolvent set of $\ovl{T^{-1/2}S(z)T^{-1/2}}$.
Therefore, $K_4(z) + 1$ has a bounded inverse.
Since $K_4(z)$ is trace class by Theorem \ref{thm:K4_1}, it follows that $K_1(z)$ is also trace class.
\end{proof}

\subsection{\texorpdfstring{Holomorphy of $K_1(z)$}{Holomorphy of K1(z)}}

\begin{Theorem}\label{thm:holo_K1}
Assume conditions (C1)--(C6). Then the map
\begin{equation}
  U \to \cS_1(\sH), \qquad z \mapsto K_1(z)
\end{equation}
is holomorphic.
\end{Theorem}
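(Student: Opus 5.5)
The plan is to reduce the holomorphy of $K_1(z)=K_4(z)(1+K_4(z))^{-1}$ to that of $K_4(z)$ in $\cS_1(\sH)$. Suppose $z\mapsto K_4(z)$ is $\cS_1(\sH)$-holomorphic. Then it is in particular $\cB(\sH)$-holomorphic. By Theorem \ref{hs of K1}, $1+K_4(z)$ is invertible for each $z$, and inversion is holomorphic on the open set of invertible elements of $\cB(\sH)$. Hence $z\mapsto(1+K_4(z))^{-1}$ is $\cB(\sH)$-holomorphic. The product of an $\cS_1$-holomorphic map and a $\cB$-holomorphic map is $\cS_1$-holomorphic, because $\|AB\|_1\le\|A\|_1\|B\|$ and the usual product rule for difference quotients then applies. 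This gives the theorem.

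For the holomorphy of $K_4(z)$ I will use the representation from the proof of Theorem \ref{thm:K4_1}, namely $2K_4(z)=I_1(z)-I_2(z)$. Set $F(z)\coloneqq\ovl{T^{-1}W(z)T^{-1}}$. The first term is $I_1(z)=\frac{2}{\pi}\int_0^\infty T^{1/2}R_tF(z)T^{1/2}R_t\,t^2dt$. This is the image of $F(z)$ under a fixed linear map $\cS_1\to\cS_1$, and by Theorem \ref{thm:trace_est1} that map is bounded. So $I_1$ is holomorphic by (C5).

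The second term $I_2(z)$ has integrand $T^{1/2}R_tF(z)Q_t(z)F(z)T^{1/2}R_t\,t^2$, where $Q_t(z)=TR_t^{1/2}(1+\til W_t(z))^{-1}TR_t^{1/2}$. For each fixed $t$ this integrand is $\cS_1$-holomorphic in $z$. The reason is that $\til W_t(z)=R_t^{1/2}W(z)R_t^{1/2}$ is holomorphic by (C2), $1+\til W_t(z)$ is invertible by Lemma \ref{main1 lemma1}, and inversion is holomorphic. I then argue exactly as in the proof of Theorem \ref{main1}. Fix a compact $K\subset U$. Lemma \ref{main1 lemma2} gives $\sup_{z\in K,t>0}\|Q_t(z)\|\le\delta^{-1}$, and $\sup_{z\in K}\|F(z)\|_1<\infty$ by continuity. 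Feeding these into the estimate of Theorem \ref{thm:trace_est3} yields a $z$-independent integrable majorant for the trace norm of the integrand over $K$. Then for every $\varphi\in\cS_1(\sH)^*$, Mattner's theorem (Theorem \ref{thm:comp-diff}) makes $z\mapsto\varphi(I_2(z))$ holomorphic, and Dunford's theorem gives $\cS_1$-holomorphy of $I_2$.

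The main obstacle is obtaining that majorant uniformly in $z$. Theorem \ref{thm:trace_est3} is presumably stated for a fixed bounded family $Q_t$ and trace-class outer factors. I have to check that its bound depends only on $\sup_t\|Q_t\|$ and on $\|F\|_1,\|F\|$, so that it becomes uniform on $K$ once the constants are replaced by their suprema over $K$. If instead the bound is only a pointwise-in-$t$ Bochner estimate with an implicit majorant, I will redo it with $\|F(z)\|_1\le M_2$ and $\|Q_t(z)\|\le\delta^{-1}$, and bound $\|T^{1/2}R_t\|^2 t^2\le$ an integrable function of $t$. Should that fail, I would use the fallback of splitting $F=F_1F_2$ into Hilbert--Schmidt factors. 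A further point to verify is that the weak identity $\langle v,2K_4(z)u\rangle=\langle v,(I_1-I_2)u\rangle$ holds for every $z$, so that the two sides agree as functions of $z$; this is already given by Theorem \ref{thm:K4_1}.
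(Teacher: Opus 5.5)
Your proposal is correct and is essentially the paper's proof: $2K_4=I_1-I_2$, then $\cS_1$-holomorphy of $I_2$ from the $z$-uniform bound $\frac{\pi}{4}\delta^{-1}\sup_{z\in K}\|\ovl{T^{-1}W(z)T^{-1}}\|_1^2$ given by Theorem~\ref{thm:trace_est3} and Lemma~\ref{main1 lemma2}, combined with Mattner's and Dunford's theorems, and finally $K_1=K_4(1+K_4)^{-1}$ via holomorphic inversion in $\cB(\sH)$. The only deviation is that you obtain $I_1$ as the image of $\ovl{T^{-1}W(z)T^{-1}}$ under a fixed bounded linear map on $\cS_1(\sH)$ (norm at most $1/2$ by Theorem~\ref{thm:trace_est1}), which is slightly slicker than the paper's Mattner argument; note also that Theorem~\ref{thm:trace_est3} bounds the integral itself, which is all Mattner's theorem needs, so your pointwise fallback is never required (and would fail in general, since $\|T^{1/2}R_t\|^2t^2$ is of order $t^{-1}$).
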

\begin{proof}
Let $I_1(z)$ and $I_2(z)$ be the trace class operators defined in \eqref{def:I_1} and \eqref{def:I_2}, 
where we make the $z$-dependence explicit.
Recall from Theorem~\ref{thm:K4_1} that $2K_4(z) = I_1(z) - I_2(z)$.

We first prove the holomorphy of $I_1(z)$. 
By Lemma~\ref{main1 lemma4}, the integrand of $I_1(z)$ is an $\cS_1(\sH)$-valued holomorphic function in $z \in U$ for each $t > 0$. 
Let $\varphi$ be an arbitrary bounded linear functional on $\cS_1(\sH)$. 
Then, for any non-empty compact subset $K \subset U$, we have
\begin{align}
  &\sup_{z \in K} \int_0^\infty \big| \varphi\big(T^{1/2} R_t \ovl{T^{-1} W(z) T^{-1}} T^{1/2} R_t\big) \big| \, t^2 dt \\
  &\quad \leq \|\varphi\| \sup_{z \in K} \int_0^\infty \big\| T^{1/2} R_t \ovl{T^{-1} W(z) T^{-1}} T^{1/2} R_t \big\|_1 \, t^2 dt.
\end{align}
Applying Theorem~\ref{thm:trace_est1} and condition (C5), we obtain a uniform bound
\begin{equation}
  \sup_{z \in K} \int_0^\infty \big| \varphi\big(T^{1/2} R_t \ovl{T^{-1} W(z) T^{-1}} T^{1/2} R_t\big) \big| \, t^2 dt 
  \leq \frac{\pi}{4} \|\varphi\| \sup_{z \in K} \big\| \ovl{T^{-1} W(z) T^{-1}} \big\|_1 < \infty.
\end{equation}
Thus, Theorem~\ref{thm:comp-diff} implies that $\varphi(I_1(z))$ is holomorphic on $U$. 
Since $\varphi \in \cS_1(\sH)^*$ was arbitrary, Dunford's theorem ensures that $I_1(z)$ is 
an $\cS_1(\sH)$-valued holomorphic function on $U$.

Similarly, using Lemma~\ref{main1 lemma2} and Theorem~\ref{thm:trace_est3}, one can establish the holomorphy of $I_2(z)$. 
Consequently, $K_4(z)$ is an $\cS_1(\sH)$-valued holomorphic function on $U$. 
Since $z \mapsto (1 + K_4(z))^{-1}$ is $\cB(\sH)$-holomorphic, $K_1(z) = K_4(z)(1 + K_4(z))^{-1}$ is also holomorphic in $\cS_1(\sH)$ on $U$, which completes the proof.
\end{proof}

\subsection{\texorpdfstring{$\norm{K_1(z)}<1$}{||K1(z)|| < 1}}
\begin{Lemma}\label{Lem:K1norm}
Assume conditions (C1)--(C6). 
Then, $\norm{K_1(z)} < 1$ for all $z\in U$.
\end{Lemma}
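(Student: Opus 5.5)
Set $A \coloneqq \ovl{T^{-1/2}S(z)T^{-1/2}}$, so that $1+2K_4(z)=A$ and $1+K_4(z)=\tfrac12(1+A)$. The claim then becomes a statement about a Cayley transform:
\begin{equation}
  K_1(z) = K_4(z)(1+K_4(z))^{-1} = (A-1)(A+1)^{-1}.
\end{equation}
The standard identity
\begin{equation}
  \|(A+1)u\|^2 - \|(A-1)u\|^2 = 4\Re\inner{u}{Au}
\end{equation}
shows that $\|(A-1)(A+1)^{-1}\| \le 1$ whenever $A$ is accretive. Equality in the norm is then excluded either by a uniform coercivity bound $\Re\inner{u}{Au}\ge c\|u\|^2$, or by compactness. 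The plan is therefore: first establish the accretivity of $A$, then upgrade the bound $\le 1$ to a strict inequality.

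\textbf{Step 1 (accretivity of $A$).} Note that $A = 1+2K_4(z)$ is bounded by Theorem~\ref{thm:K4_1}. For $u$ in the dense set on which $T^{-1/2}u\in\dom(S(z))$ (for instance $u$ with $T^{-1/2}u\in\dom(T^2)$, as in the proof of Theorem~\ref{thm:K4_1}), we have
\begin{equation}
  \inner{u}{Au}=\inner{T^{-1/2}u}{S(z)T^{-1/2}u},
\end{equation}
whose real part is $\ge 0$ because $S(z)$ is m-accretive. By boundedness and density this extends to all $u$, so $\Re A\ge 0$. Consequently $A+1$ is boundedly invertible with $\|(A+1)u\|\ge\|u\|$, which matches Theorem~\ref{hs of K1}. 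For $v=(A+1)u$ the identity gives
\begin{equation}
  \|K_1(z)v\|^2=\|v\|^2-4\Re\inner{u}{Au}\le\|v\|^2,
\end{equation}
and hence $\|K_1(z)\|\le 1$.

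\textbf{Step 2 (strictness).} This is the main obstacle, since $\Re A$ need not be bounded below uniformly. The intended route uses compactness instead: $K_1(z)$ is trace class by Theorem~\ref{hs of K1}, hence compact. If $\|K_1(z)\|=1$, then $1$ is the largest singular value of $K_1(z)$ and is attained, i.e.\ there is $v\ne0$ with $\|K_1(z)v\|=\|v\|$. With $u=(A+1)^{-1}v\ne0$, Step~1 then forces $\Re\inner{u}{Au}=0$. It therefore suffices to show
\begin{equation}
  \Re\inner{u}{Au}>0 \qquad \text{for all } u\ne 0.
\end{equation}

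To prove this strict positivity I would use the integral representation of $S(z)$. The computation in Lemma~\ref{main1 lemma1} shows that $\Re(S(z)^2+t^2)\ge\delta(z)(T^2+t^2)$ in the form sense. The first attempt is to turn this into strict positivity of $\Re S(z)$ via the formula
\begin{equation}
  \Re\inner{w}{S(z)w}=\frac{2}{\pi}\int_0^\infty \Re\inner{w}{(1-t^2(S(z)^2+t^2)^{-1})w}\,dt,
\end{equation}
combined with an accretivity estimate for the resolvent $(S(z)^2+t^2)^{-1}$.

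A simpler alternative, which I would try first, avoids this: if $\Re\inner{w}{S(z)w}=0$ for some $w\ne0$, then $w$ lies in the kernel of the nonnegative real part of the m-accretive operator $S(z)$, which should make $S(z)w$ purely imaginary along $w$. One then contradicts the estimate
\begin{equation}
  \Re\inner{w}{S(z)^2w}\ge\delta(z)\|Tw\|^2>0,
\end{equation}
using injectivity of $T$. If that shortcut proves too delicate (domain issues for $w=T^{-1/2}u$), I would fall back on the resolvent-integral bound above.
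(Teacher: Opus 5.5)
Your reduction is correct as far as it goes. The formula $K_1(z)=(A-1)(A+1)^{-1}$, the identity $\|(A+1)u\|^2-\|(A-1)u\|^2=4\Re\langle u,Au\rangle$, the accretivity of $A$ in Step 1, and the compactness step (a compact operator of norm $1$ attains its norm) are all fine. The gap is that the whole content of the lemma is the strict positivity you leave open, and neither route you sketch proves it.

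The shortcut does work for $0\neq w\in\dom(T^2)$: from $\Re\langle w,S(z)w\rangle=0$ one gets $S(z)^*w=-S(z)w$, hence $\Re\langle w,S(z)^2w\rangle=-\|S(z)w\|^2\le 0$. But this only gives $\Re\langle u,Au\rangle>0$ on a dense core of $A$. Strict positivity on a core does not survive passage to limits. The vector $u=(A+1)^{-1}v$ produced by compactness is an arbitrary element of $\sH$, with no reason to satisfy $T^{-1/2}u\in\dom(T^2)$.

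The resolvent route, as you set it up, starts from too weak an inequality. The bound $\Re(S(z)^2+t^2)\ge\delta(z)(T^2+t^2)$ only yields $\Re(S(z)^2+t^2)^{-1}\le\delta(z)^{-1}(T^2+t^2)^{-1}$. Inserting this into $\frac{2}{\pi}\int_0^\infty\Re\langle w,(1-t^2(S(z)^2+t^2)^{-1})w\rangle\,dt$ leaves the integrand $1-\delta(z)^{-1}t^2/(\lambda^2+t^2)$. That tends to $1-\delta(z)^{-1}<0$ as $t\to\infty$ when $\delta(z)<1$, so no lower bound results.

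Your remark that $\Re A$ need not be bounded below is also wrong here. Since $\Re A-1=K_4(z)+K_4(z)^*$ is compact, $\ker(\Re A)=\{0\}$ already forces $\inf\sigma(\Re A)>0$. So the compactness detour cannot avoid a quantitative estimate.

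The missing ingredient is the one the paper uses: the Heinz-type inequality, Theorem \ref{thm:g-Heinz1} with $p=1/2$. Starting from $\Re\langle w,(T^2+W(z))w\rangle\ge\delta(z)\|Tw\|^2$ (proof of Proposition \ref{prop:m-acc}), it gives $\Re\langle w,S(z)w\rangle\ge\delta(z)^{1/2}\langle w,Tw\rangle$ for $w\in\dom(T^2)$.

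Your resolvent route becomes exactly this argument after two changes:
\begin{itemize}
\item Use the sharper bound $\Re(S(z)^2+t^2)\ge\delta(z)T^2+t^2$.
\item Use that a lower bound on the real part gives an upper bound on the real part of the inverse. Write $S(z)^2+t^2=(B_0+t^2)^{1/2}(1+iC)(B_0+t^2)^{1/2}$ with $B_0=T^2+\Re W(z)$, and use $\Re(1+iC)^{-1}=(1+C^2)^{-1}\le1$, as in Appendix \ref{AppenA}. Then operator monotonicity of the inverse gives $\Re(S(z)^2+t^2)^{-1}\le(\delta(z)T^2+t^2)^{-1}$.
\end{itemize}
Integrating this gives the Heinz-type bound. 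Taking $w=T^{-1/2}u$ on the core and extending by continuity gives $\Re A\ge\delta(z)^{1/2}$. Your own identity then yields $\|K_1(z)\|^2\le1-4\delta(z)^{1/2}\|A+1\|^{-2}<1$ directly, which is the paper's bound, and compactness is not needed.
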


\begin{proof}
As shown in the proof of Proposition \ref{prop:m-acc}, we have
\begin{equation}
  \delta(z) \inner{u}{T^2 u} \leq  \Re\! \inner{u}{(T^2+W(z))u}, \qquad u\in\dom(T^2).
\end{equation}
From this, by applying Theorem \ref{thm:g-Heinz1} with $ A = \delta(z) T^2, B_0 = T^2 + \Re W(z)$,
 and $B_1 = \Im W(z)$, we obtain
\begin{equation}
  \delta(z)^{1/2} \inner{u}{Tu}
  \leq \Re \inner{u}{(T^2 + W(z))^{1/2}u},
  \qquad u \in \dom(T^2).  \label{ineq:240119}
\end{equation}
For any $u\in\sH$, we set $u_n\coloneqq  E_T((n^{-1},n))u~(n=1,2,\cdots)$, where 
$E_T$ denotes the spectral measure of $T$.
Then, $u_n\to u~(n\to\infty)$ and
\begin{equation}
  2\inner{u}{K_4(z)u} 
  = 2\lim_{n\to\infty} \inner{u_n}{K_4(z)u_n}
  = \lim_{n\to\infty} \inner{T^{-1/2}u_n}{S(z)T^{-1/2} u_n} - \norm{u}^2
\end{equation}
hold.
Applying \eqref{ineq:240119} to $T^{-1/2}u_n$ yields
\begin{equation}
  2\Re \inner{u_n}{K_4(z)u_n} \geq  \big(\delta(z)^{1/2}- 1\big)\norm{u_n}^2.
\end{equation}
Taking $n\to\infty$, we obtain
\begin{equation}
  2 \Re \inner{u}{K_4(z)u} \geq 
    \big(\delta(z)^{1/2}- 1\big)\norm{u}^2.
\end{equation}
Therefore, we have
\begin{align}
  \norm{(1+K_4(z))u}^2 
& = \norm{u}^2 + 2\Re \inner{u}{K_4(z)u} + \norm{K_4(z)u}^2 \\
& \geq \norm{K_4(z)u}^2 + \delta(z)^{1/2} \norm{u}^2.
\end{align}
Since $1+K_4(z)$ is surjective, for any $v\in\sH$, we have
\begin{equation}
  \norm{K_4(z)(1+K_4(z))^{-1}v}^2 \leq \norm{v}^2 - \delta(z)^{1/2} \norm{(1+K_4(z))^{-1}v}^2.
\end{equation}
By noting that 
\begin{equation}
 1 \leq \norm{1+K_4(z)}^2 \norm{(1+K_4(z))^{-1}v}^2, \qquad (\norm{v}=1),
\end{equation}
we have
\begin{equation}
  \norm{K_1(z)}^2 
  = \norm{K_4(z)(1+K_4(z))^{-1}}^2
  \leq 1- \delta(z)^{1/2} \norm{1+K_4(z)}^{-2} <1.
\end{equation}
\end{proof}

\subsection{Analytic Continuation of the Ground State}
The ground state \eqref{phitil} is extended via the substitution $ K_1 \to K_1(z) $ as follows:
\begin{align}
  \til{\Phi}_\mathrm{g}(z)
  & \coloneqq  e^{-\frac{1}{2}\Delta^*(K_1(z))} \Omega\\
  & = \sum_{n=0}^\infty \left(-\frac{1}{2}\right)^n \frac{1}{n!} (\Delta^*(K_1(z)))^n \Omega.   \label{extGS}
\end{align}
The right-hand side converges in the Fock space $\Fb(\sH)$ and is holomorphic in $z \in U$.
\begin{Theorem} \label{main2}
  Assume conditions (C1)--(C6). 
Then, for any $z \in U$, the right-hand side of \eqref{extGS} converges absolutely 
in the norm of the Fock space $\Fb(\sH)$,
  and the map
 \begin{equation}
   U \to \Fb(\sH), \qquad z\mapsto \Phigt(z)
 \end{equation}
 is holomorphic.
\end{Theorem}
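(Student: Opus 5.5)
The plan is to reduce Theorem~\ref{main2} to the abstract holomorphy criterion for squeezed states, Theorem~\ref{gs:Holo} in Appendix~\ref{AppenC}. That theorem concerns a Hilbert--Schmidt-valued holomorphic family $K(z)$ with $\|K(z)\|<1$ locally uniformly. We have already established the following:
\begin{itemize}
\item by Theorem~\ref{hs of K1} and Theorem~\ref{thm:holo_K1}, the map $z\mapsto K_1(z)$ is holomorphic into $\cS_1(\sH)$, hence into $\cS_2(\sH)$, because $\|\cdot\|_2\le\|\cdot\|_1$;
\item by Lemma~\ref{Lem:K1norm}, $\|K_1(z)\|<1$ for every $z\in U$.
\end{itemize}
So the work consists of three things: controlling the norm of each term $(\Delta^*(K_1(z)))^n\Omega$, showing absolute convergence of the series, and getting holomorphy from uniform convergence on compacts.

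\textbf{Norm estimate.} First I would use the squeezed-state norm formulas in Appendix~\ref{AppenC}. For a Hilbert--Schmidt $K$ (not necessarily symmetric with respect to $J$; the symmetric part enters $\Delta^*$), the $n$-th term $\frac{1}{n!}2^{-n}(\Delta^*(K))^n\Omega$ lies in the $2n$-particle space. Its squared norm is bounded by the coefficient of $x^n$ in $\det(1-x\,K^*K)^{-1/2}$. This gives
\begin{equation}
\sum_{n}\Big\|\tfrac{1}{n!}2^{-n}(\Delta^*(K))^n\Omega\Big\|^2\le \det(1-K^*K)^{-1/2}<\infty
\end{equation}
as soon as $\|K\|<1$. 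For absolute convergence, as opposed to square-summability of orthogonal terms, I would note that the terms live in mutually orthogonal $n$-particle sectors. Absolute convergence in norm then follows from a geometric bound: I would aim for
\begin{equation}
\Big\|\tfrac{1}{n!}2^{-n}(\Delta^*(K))^n\Omega\Big\|\le C\,n^{a}\|K\|^{n}\,e^{c\|K\|_2^2}
\end{equation}
or a similar estimate. The bound should come from the eigenvalue expansion of $\det(1-xK^*K)^{-1/2}$, whose coefficients are dominated by $\binom{n-1/2}{n}\|K\|^{2(n-m)}(\|K\|_2^2)^m/m!$. Since $\|K\|<1$, the sum $\sum_n\|\cdot\|$ converges.

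\textbf{Holomorphy.} Each term $z\mapsto(\Delta^*(K_1(z)))^n\Omega$ is a polynomial of degree $n$ in $K_1(z)$, taking values in $\otimes_{\mathrm s}^{2n}\sH$. It is continuous multilinear in $\cS_2$, so it is holomorphic, since $K_1$ is $\cS_2$-holomorphic. On a compact $K\subset U$, the map $z\mapsto\|K_1(z)\|$ is continuous, so $\sup_K\|K_1(z)\|=r<1$, and $\sup_K\|K_1(z)\|_2<\infty$. The bound above then gives uniform absolute convergence on $K$. A locally uniform limit of holomorphic Hilbert-valued functions is holomorphic; alternatively, I can invoke the criterion of Appendix~\ref{AppenB}, weak holomorphy plus local boundedness.

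\textbf{Main obstacle.} The delicate step is the uniform geometric bound for the $n$-particle terms, since $K_1(z)$ is not self-adjoint for complex $z$. I would first symmetrize, replacing $K$ by its $J$-symmetric part, which has the same $\Delta^*$ and Hilbert--Schmidt norm no larger. I would then control $\|(\Delta^*(K))^n\Omega\|^2$ through a singular value (Takagi) decomposition $K=\sum_j s_j\ket{e_j}\bra{Je_j}$, which reduces the norm computation to products of one-mode squeezings. This uses only $\|K\|<1$. If that decomposition is awkward, I would fall back on Theorem~\ref{gs:Holo} as stated, which presumably packages exactly this estimate.
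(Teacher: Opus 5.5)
Your proposal is correct and follows the paper's proof, which is exactly this reduction: Theorem~\ref{hs of K1} and Theorem~\ref{thm:holo_K1} make $z\mapsto K_1(z)$ holomorphic into the Hilbert--Schmidt class, Lemma~\ref{Lem:K1norm} gives $\|K_1(z)\|<1$, and Theorem~\ref{thm:squeezed-state} and Theorem~\ref{gs:Holo} then give convergence and holomorphy. Your optional self-contained route also works (uniform geometric decay of the $2n$-particle terms plus the Weierstrass $M$-test, in place of the second-derivative criterion of Appendix~\ref{AppenB}); however, $C n^{a}\|K\|^{n}e^{c\|K\|_2^2}$ is not uniform in $K$, since the polynomial prefactor grows with the multiplicity of the top singular value, and the estimate to use instead is $\|\frac{1}{2^n n!}(\Delta^*(K))^n\Omega\|^2\le F(\nu)\,\nu^{-n}$ for $1<\nu<\|K\|^{-2}$, read off from the generating function $F$ in the proof of Theorem~\ref{thm:GS-prop2}.
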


\begin{proof}
From Theorem \ref{hs of K1}, $K_1(z)$ is Hilbert--Schmidt, and from Lemma \ref{Lem:K1norm}, 
we have $\norm{K_1(z)} < 1$. Therefore, by Theorem \ref{thm:squeezed-state}, the series \eqref{extGS}
converges absolutely. By Theorem \ref{gs:Holo}, $\Phigt(z)$ is holomorphic.
\end{proof}

\begin{Corollary}
Assume conditions (C1)--(C6).
If $0\in U$ and $K_1(0)=0$, then $\Phigt(\alpha)$ has a convergent Maclaurin series:
\begin{equation}
  \Phigt(\alpha) 
 = \Omega + \sum_{m=1}^\infty \frac{\alpha^m}{m!} 
   \bigg(\frac{d^m \Phigt(\alpha)}{d\alpha^m}\bigg|_{\alpha=0}\bigg),
\qquad (|\alpha|< \mathrm{dist}(0,U^\mathrm{c})).
\end{equation}
\end{Corollary}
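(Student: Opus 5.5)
By Theorem \ref{main2}, the map $z\mapsto\Phigt(z)$ is a holomorphic $\Fb(\sH)$-valued function on the open set $U$. The proof is then the vector-valued Cauchy--Taylor theorem, plus a computation of the constant term.

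First, fix $r$ with $0<r<\mathrm{dist}(0,U^\mathrm{c})$. Then the closed disc $\ovl{D(0,r)}$ is contained in $U$. For a Banach-space-valued holomorphic function, the Cauchy integral formula holds on circles inside the domain; this follows, for example, by applying bounded linear functionals and using Hahn--Banach to recover the vector identity. Writing $|\alpha|<r$ and expanding the Cauchy kernel $(\zeta-\alpha)^{-1}$ as a geometric series that converges uniformly on $|\zeta|=r$, we obtain
\begin{equation}
\Phigt(\alpha)=\sum_{m=0}^\infty \frac{\alpha^m}{m!}\,\frac{d^m\Phigt}{d\alpha^m}\bigg|_{\alpha=0}.
\end{equation}
The series converges absolutely in the Fock norm. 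The Cauchy estimates give the bound $\|\Phigt^{(m)}(0)\|/m!\le r^{-m}\sup_{|\zeta|=r}\|\Phigt(\zeta)\|$, and the supremum is finite because $\Phigt$ is continuous. Since $r<\mathrm{dist}(0,U^\mathrm{c})$ was arbitrary, the expansion is valid for every $|\alpha|<\mathrm{dist}(0,U^\mathrm{c})$.

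Second, we identify the zeroth coefficient. Since $K_1(0)=0$, the two-particle creation operator $\Delta^*(K_1(0))=\Delta^*(0)$ is the zero operator, because $\Delta^*(K)$ is linear in $K$. Every term with $n\ge1$ in \eqref{extGS} therefore vanishes, and $\Phigt(0)=\Omega$. This gives the stated formula.

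Nothing here is a genuine obstacle. The only point to check is that Theorem \ref{main2} gives holomorphy on all of $U$, so that the disc of radius $\mathrm{dist}(0,U^\mathrm{c})$ lies inside the domain of holomorphy; it does.
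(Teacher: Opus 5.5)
Your proposal is correct and matches the argument the paper intends. The paper gives this corollary without a separate proof, as an immediate consequence of Theorem~\ref{main2} (holomorphy of $z\mapsto\Phigt(z)$ on all of $U$) together with the Taylor expansion of a Banach-space-valued holomorphic function on the largest disc about $0$ contained in $U$, and your use of $\Delta^*(K_1(0))=\Delta^*(0)=0$ to get $\Phigt(0)=\Omega$ is the intended role of the hypothesis $K_1(0)=0$.
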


\section{Examples}\label{sect:example}
\subsection{The single pair interaction model}
Let $T$ be an injective non-negative self-adjoint operator acting on a separable complex Hilbert space $\sH$.
The Hamiltonian for the single pair interaction model is given by
\begin{equation}
  H(\lambda) \coloneqq  \dGb(T) + \frac{\lambda}{2} \PhiS(g)^2,
\end{equation}
where $g$ is a non-zero vector in $\dom(T^{1/2})\cap \dom(T^{-1/2})$ and $\lambda\in\RR$ is a coupling constant.
Note that $H(\lambda)$ acts in $\Fb(\sH)$.
This model is a special case of the one described in Section \ref{sec:Definition_of_the_Model}.
For the diagonalization and spectral analysis of this model, see \cite{AF, Gamet2026} and \cite[Section 7.1]{MR4213757}.
We investigate the holomorphy of the ground state and the ground state energy of this model.
Furthermore, we can precisely determine the radii of convergence of the Maclaurin series for the ground state and its energy.

Since the following discussion is based on \cite[Section 7.1]{MR4213757}, we review the necessary setup.
We fix a conjugation $J$ satisfying $JTJ = T$ and $Jg = g$ (such a $J$ exists by \cite[Lemma 7.1]{MR4213757}).  
The operator $S(\lambda)$ is defined by
\begin{equation}  
  S(\lambda) \coloneqq  \big( T^2 + \lambda \ketbra{T^{1/2}g}{T^{1/2}g} \big)^{1/2}.  
\end{equation}

\begin{Theorem}[{\cite[Theorem 7.2]{MR4213757}}]\label{diagonalization of SPIM}
Suppose $1+\lambda\|T^{-1/2}g\|^2>0$. 
Then, $H(\lambda)$ is self-adjoint, and essentially self-adjoint on any core of $\dGb(T)$.
Furthermore, the unitary operator $\sU=\sU(\lambda)$ on $\Fb(\sH)$ that satisfies equation \eqref{defU*} yields
\begin{equation}
  \sU H(\lambda) \sU^* = \dGb( S(\lambda) ) + E(\lambda), \qquad 
  E(\lambda) = \frac{1}{2}\tr\left(\, \ovl{S(\lambda)-T}\, \right).  
\end{equation}
\end{Theorem}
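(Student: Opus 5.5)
The statement is a special case of Theorems \ref{sa} and \ref{bogo}, so the plan is to verify conditions (B1)--(B6) for the single-term choice $\lambda_1=\lambda$, $g_1=g$, and $\lambda_n=0$, $g_n=0$ for $n\geq 2$. Then $H(\lambda)$ coincides with \eqref{HAMIL}, and $W=\lambda\ketbra{T^{1/2}g}{T^{1/2}g}$ in \eqref{defW}, so $S$ from Section \ref{sec:Definition_of_the_Model} is exactly $S(\lambda)$.

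The conditions are checked one by one. (B1) is the standing hypothesis on $T$. (B2) holds because $g\in\dom(T^{1/2})\cap\dom(T^{-1/2})$ and $\lambda\in\RR$. (B3) and (B4) are finite sums, namely $|\lambda|\norm{T^{-1/2}g}^2<\infty$ and $|\lambda|\norm{T^{1/2}g}^2<\infty$. (B6) follows from the conjugation $J$ with $JTJ=T$, $Jg=g$ supplied by \cite[Lemma 7.1]{MR4213757}.

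The only step needing an argument is (B5), i.e.\ that $\one+\lambda\ketbra{h}{h}\geq\vep$ for some $\vep>0$, where $h\coloneqq T^{-1/2}g$. This operator acts as the identity on $\{h\}^\perp$ and multiplies $h$ by $1+\lambda\norm{h}^2$. Hence its spectrum is contained in $\{1,\,1+\lambda\norm{h}^2\}$, and we may take
\[
\vep=\min\{1,\,1+\lambda\norm{h}^2\},
\]
which is positive exactly under the hypothesis $1+\lambda\norm{T^{-1/2}g}^2>0$.

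With (B1)--(B6) in hand, Theorem \ref{sa} gives self-adjointness on $\dom(\dGb(T))$ and essential self-adjointness on every core of $\dGb(T)$. Theorem \ref{bogo} then gives the diagonalization formula with $E(\lambda)=\frac12\tr(\ovl{S(\lambda)-T})$, and the unitary $\sU$ is the one built from $X,Y$ via \eqref{defU*}. No step is a genuine obstacle; the positivity check for (B5) is the only point where the hypothesis enters.
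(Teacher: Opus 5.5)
Your proposal is correct and follows the same route as the cited source. This paper gives no proof of its own; it quotes \cite[Theorem 7.2]{MR4213757}, which is the one-term case ($\lambda_1=\lambda$, $g_1=g$, all other terms zero) of Theorems \ref{sa} and \ref{bogo}. There, (B5) follows from the rank-one bound $\one+\lambda\ketbra{T^{-1/2}g}{T^{-1/2}g}\geq\min\{1,1+\lambda\norm{T^{-1/2}g}^2\}$, and the only nontrivial input, the conjugation $J$ with $JTJ=T$ and $Jg=g$ required for (B6), comes from \cite[Lemma 7.1]{MR4213757} exactly as you have it.
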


To discuss holomorphy, we define
\begin{equation}
   W(z) \coloneqq  z |T^{1/2}g\rangle \langle T^{1/2}g|,    \qquad z\in\CC.
\end{equation}
The pair $(T,W)$ satisfies conditions (C1)--(C5) with $U=\CC$.
Since the condition (C6) does not hold with $U=\CC$, we set
\begin{equation}
   U \coloneqq  \{ z\in \CC \mid \Re(z)>-\|T^{-1/2}g\|^{-2} \}. \label{def of Uspm}
\end{equation}

\begin{Proposition}
Let $z$ be a complex number. 
Then $\Re( 1 +\ovl{T^{-1}W(z)T^{-1}}) >0$ holds if and only if $z\in U$.
That is, the pair $(T,W)$ satisfies conditions (C1)--(C6) with $U$ defined by \eqref{def of Uspm}.
In particular, $S(z)$ can be defined for all $z \in U$.
\end{Proposition}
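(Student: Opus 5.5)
The plan is to compute $\ovl{T^{-1}W(z)T^{-1}}$ explicitly and reduce the operator inequality to a scalar condition. Set $h\coloneqq T^{-1/2}g\in\sH$, which is nonzero because $g\neq 0$ and $T^{-1/2}$ is injective. For $u,v\in\dom(T^{-1})$ we have
\begin{equation}
  \inner{v}{T^{-1}W(z)T^{-1}u} = z\,\inner{T^{-1}v}{T^{1/2}g}\inner{T^{1/2}g}{T^{-1}u} = z\,\inner{v}{h}\inner{h}{u}.
\end{equation}
Here we use that $T^{1/2}g\in\dom(T^{-1})$ with $T^{-1}T^{1/2}g=T^{-1/2}g$, which follows from $g\in\dom(T^{-1/2})$. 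Hence $\ovl{T^{-1}W(z)T^{-1}} = z\ketbra{h}{h}$, a rank-one operator that is entire in $z$ with respect to the trace norm. This also verifies (C3)--(C5) along the way.

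Next I take real parts. Since $\ketbra{h}{h}$ is self-adjoint, we get $\Re(1+z\ketbra{h}{h}) = 1+(\Re z)\ketbra{h}{h}$. Its spectrum is $\{1,\,1+\Re z\,\norm{h}^2\}$, with the value $1$ omitted if $\dim\sH=1$. To see this, decompose $\sH=\CC h\oplus\{h\}^\perp$. On $\{h\}^\perp$ the operator acts as the identity, and on $h$ it acts as multiplication by $1+\Re z\,\norm{h}^2$. Therefore $\Re(1+\ovl{T^{-1}W(z)T^{-1}})>0$, in the sense of being bounded below by a positive constant as in (C6), holds if and only if $1+\Re z\,\norm{T^{-1/2}g}^2>0$. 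This is exactly $\Re z>-\norm{T^{-1/2}g}^{-2}$, i.e.\ $z\in U$.

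For the converse direction I would test on $f=h/\norm{h}$. This gives $\inner{f}{\Re(\cdots)f}=1+\Re z\norm{h}^2$, which is $\le 0$ when $z\notin U$, so (C6) fails.

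Finally, $U$ is an open half-plane and hence a non-empty open set. The remaining conditions (C1)--(C5) were already noted to hold on $\CC$, so they hold on $U$ as well. Proposition~\ref{prop:m-acc} then makes $T^2+W(z)$ m-accretive for every $z\in U$, so $S(z)$ is defined.

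I expect no real obstacle here. The only point needing care is justifying the closure identity via the domain fact $T^{1/2}g\in\dom(T^{-1})$.
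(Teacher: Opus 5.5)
Your proof is correct and follows essentially the same route as the paper. Both identify $\ovl{T^{-1}W(z)T^{-1}}=z\ketbra{T^{-1/2}g}{T^{-1/2}g}$, split $\sH=\CC T^{-1/2}g\oplus(T^{-1/2}g)^\perp$ to reduce (C6) to $1+\Re(z)\norm{T^{-1/2}g}^2>0$, and get necessity by testing on $T^{-1/2}g/\norm{T^{-1/2}g}$; your check that $T^{1/2}g\in\dom(T^{-1})$ with $T^{-1}T^{1/2}g=T^{-1/2}g$ also supplies the verification of (C3)--(C5), which the paper only asserts.
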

\begin{proof}
Note that the condition $\Re( 1 +\ovl{T^{-1}W(z)T^{-1}}) >0$ is equivalent to 
the condition that the operator $1+\ovl{T^{-1}W(z)T^{-1}}-\delta$ is accretive for 
some positive constant $\delta>0$.
 
Suppose that $1+\ovl{T^{-1}W(z)T^{-1}}-\delta$ is accretive for some $\delta>0$.
Then for any $f\in\sH$, it holds that
\begin{equation}
  \delta \|f\|^2
  \leq \Re \biginner{f}{\big(1+\ovl{T^{-1}W(z)T^{-1}}\big)f}
  = \|f\|^2 + \Re(z)|\langle T^{-1/2}g,f\rangle|^2.
\end{equation}
By taking the supremum over all unit vectors $f\in\sH$, we get
\begin{equation}
   1+\Re(z)\|T^{-1/2}g\|^2\geq \delta>0,
\end{equation}
and hence $z\in U$.
 Conversely, let $z$ be an arbitrary complex number in $U$. 
We set $\delta_0 \coloneqq  1+\Re(z)\|T^{-1/2}g\|^{2}>0$ and $\delta \coloneqq  \min\{\delta_0,1\}\in (0,1]$.
Take any $f\in\sH$, and write it as $f=\alpha T^{-1/2}g+h$ for some $\alpha\in\CC$ and $h\in(\CC T^{-1/2}g)^\perp$.
 Then it follows that
\begin{align}
 & \Re \biginner{f}{\big(1+\ovl{T^{-1}W(z)T^{-1}}\big)f}
   = \|f\|^2+\Re(z)|\langle T^{-1/2}g,f\rangle|^2\\
 & =|\alpha|^2\|T^{-1/2}g\|^2+\|h\|^2+\Re(z)|\alpha|^2\|T^{-1/2}g\|^4\\
 & =|\alpha|^2\|T^{-1/2}g\|^2(1+\Re(z)\|T^{-1/2}g\|^2)+\|h\|^2\\
 & \geq  |\alpha|^2\|T^{-1/2}g\|^2\delta+\|h\|^2\delta=\delta\|f\|^2,
\end{align}
whence $1+\ovl{T^{-1}W(z)T^{-1}}-\delta$ is accretive.
\end{proof}

Based on the above proposition and the discussions in Section \ref{sect:GSE}, 
the ground state energy $E(\lambda)$ can be extended to a holomorphic function on $U$. 
Moreover, by the discussions in Section \ref{sect:GS}, $K_1(\lambda)$ can be 
analytically continued to a holomorphic function on $U$, and therefore, 
the ground state $\Phigt(\lambda)$ can also be extended to a holomorphic function on $U$.
Note that $\Phigt(z)$ is defined by \eqref{extGS} and 
\begin{equation}
  E(z) \coloneqq  \frac{1}{2} \tr \ovl{(S(z)-T)|_{\dom(T^2)}}.  \label{single pair model eq}
\end{equation}

The holomorphy of the ground state and the ground state energy can be stated as follows.
\begin{Theorem}\label{thm:SPIM_GS_conv_rad}
The extended ground state $\Phigt(z)$ and ground state energy $E(z)$ are
holomorphic on $U$ defined by \eqref{def of Uspm}.
Moreover, the radius of convergence of the Maclaurin series of $E(\lambda)$ 
is equal to $\norm{T^{-1/2}g}^{-2}$.
\end{Theorem}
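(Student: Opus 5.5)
Holomorphy on $U$ is immediate from the general theory. By the preceding proposition, the pair $(T,W)$ with $W(z)=z\ketbra{T^{1/2}g}{T^{1/2}g}$ satisfies (C1)--(C6) on $U$. Theorem~\ref{main1} therefore gives holomorphy of $E(z)=\frac12\tr\ovl{(S(z)-T)|_{\dom(T^2)}}$ on $U$, and Theorem~\ref{main2} gives holomorphy of $\Phigt(z)$ on $U$. The distance from $0$ to $U^{\mathrm c}$ is $\|T^{-1/2}g\|^{-2}$. Hence the Maclaurin series of $E$ has radius of convergence at least $r_0\coloneqq\|T^{-1/2}g\|^{-2}$. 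It remains to show the radius is not larger, i.e.\ that $E$ does not extend holomorphically to a disc $|z|<r$ with $r>r_0$.

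For the upper bound I would derive an explicit formula for $E(\lambda)$ on the real interval $(-r_0,\infty)$ and show it has a singularity at $\lambda=-r_0$. Use Corollary~\ref{main1 cor1}, or directly the integral representation \eqref{eq:integ_repr_tr}. Since $W$ has rank one, the Sherman--Morrison formula gives
\begin{equation}
 \tr\big[(S(\lambda)^2+t^2)^{-1}-R_t\big] = -\frac{\lambda\,\inner{T^{1/2}g}{R_t^2T^{1/2}g}}{1+\lambda\,\inner{T^{1/2}g}{R_tT^{1/2}g}} = \frac{d}{dt^2}\log\big(1+\lambda F(t)\big),
\end{equation}
where $F(t)\coloneqq\inner{g}{T(T^2+t^2)^{-1}g}$. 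Thus
\begin{equation}
 E(\lambda)=-\frac{1}{2\pi}\int_0^\infty t\,\frac{d}{dt}\log\big(1+\lambda F(t)\big)\,dt .
\end{equation}
Integrating by parts, which needs a boundary-term check using $F(t)=O(t^{-2})$, turns this into
\begin{equation}
 E(\lambda)=\frac{1}{2\pi}\int_0^\infty \log\big(1+\lambda F(t)\big)\,dt .
\end{equation}
The function $F$ is decreasing, with $F(0^+)=\|T^{-1/2}g\|^2=1/r_0$ by monotone convergence.

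Suppose the radius exceeded $r_0$. Then the Maclaurin series would converge at $\lambda=-r_0$, and its sum would be a continuous extension from the real side. I would derive a contradiction through derivatives. Differentiate under the integral:
\begin{equation}
 E'(\lambda)=\frac{1}{2\pi}\int_0^\infty \frac{F(t)}{1+\lambda F(t)}\,dt .
\end{equation}
As $\lambda\downarrow -r_0$, the denominator $1-F(t)/F(0)$ tends to zero at $t=0$. If this integral blows up, monotone convergence finishes the argument, since an extension holomorphic near $-r_0$ would have a bounded derivative there. Blow-up requires $F(0)-F(t)=O(t)$ near $0$, which is not guaranteed for general $T$.

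This generality is the main obstacle. An alternative that avoids it: all Maclaurin coefficients of $E$, namely the $m$-th term $\frac{(-1)^{m-1}}{2\pi m}\int_0^\infty F(t)^m\,dt\,\lambda^m$, have the same sign pattern in $(-\lambda)$. Indeed, on the negative axis every term is negative. By Pringsheim's theorem, a power series with all coefficients of $-\lambda$ of one sign has a singularity on the negative real axis at its radius. So it suffices to show the coefficients grow like $r_0^{m}$ up to subexponential factors, i.e.
\begin{equation}
 \Big(\int_0^\infty F(t)^m\,dt\Big)^{1/m}\to F(0)=1/r_0 .
\end{equation}
This is the $L^m\to L^\infty$ norm limit on $(0,\infty)$: $F$ is continuous and integrable (since $F(t)\le \|T^{1/2}g\|^2t^{-2}$ for large $t$), with $\sup F=F(0^+)$. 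Cauchy--Hadamard then gives radius exactly $\|T^{-1/2}g\|^{-2}$. I would present this second route as the proof, since it needs no regularity of $F$ at $0$.
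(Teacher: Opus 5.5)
Your second route is correct and is essentially the paper's argument. Both take the Maclaurin coefficients from the expansion in Corollary~\ref{main1 cor1} and use only that $F(t)=\|R_t^{1/2}T^{1/2}g\|^2$ increases to $\|T^{-1/2}g\|^2$ as $t\downarrow 0$. The paper bounds the $m$-th coefficient below by an integral over $[0,t_0]$ and contradicts absolute convergence at $(1+\varepsilon)\|T^{-1/2}g\|^{-2}$. You instead integrate by parts to get $\frac{(-1)^{m-1}}{2\pi m}\int_0^\infty F(t)^m\,dt$ and apply $\|F\|_{L^m}\to\|F\|_{L^\infty}$ together with Cauchy--Hadamard. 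The Pringsheim remark and the abandoned derivative route are unnecessary, and ``grow like $r_0^{m}$'' should read $r_0^{-m}$, but neither affects the proof.
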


\begin{proof}
By Theorems \ref{main1} and \ref{main2}, we conclude that $E(z)$ and $\Phigt(z)$ are holomorphic on $U$.
Finally, let us compute the radius of convergence of the Maclaurin series of \eqref{single pair model eq}.
Since the function \eqref{single pair model eq} is holomorphic on $\{z\in\CC \mid \Re(z)>-\|T^{-1/2}g\|^{-2}\}$, 
the radius is greater than or equal to $\|T^{-1/2}g\|^{-2}$.

Suppose that the radius is strictly greater than $\|T^{-1/2}g\|^{-2}$.
Take a sufficiently small $\vep>0$ such that $\|T^{-1/2}g\|^{-2}(1+\vep)$ is smaller than the radius.
By Corollary \ref{main1 cor1}, we have
\begin{align}
& \frac{1}{2} \tr\Big(\ovl{[S(z)-T]|_{\dom(T^2)}}\Big)
  = \frac{1}{\pi} \sum_{m=1}^\infty(-1)^{m-1}\int_0^\infty \tr \Big(R_t(W(z)R_t)^m\Big) \, t^2 dt \\
& = \sum_{m=1}^\infty z^m\cdot\frac{(-1)^{m-1}}{\pi}\int_0^\infty\|R_t^{1/2}T^{1/2}g\|^{2m-2}\|R_tT^{1/2}g\|^2 \, t^2 dt \label{spm taylor ex}
\end{align}
for all sufficiently small $z$.
Since this is the Maclaurin series of the function \eqref{single pair model eq}, 
the right-hand side of \eqref{spm taylor ex} absolutely converges for all $z$ within the radius of convergence.
In particular, it converges absolutely when $z=\|T^{-1/2}g\|^{-2}(1+\vep)$.

Since $\|R_t^{1/2}T^{1/2}g\|$ converges to $\|T^{-1/2}g\|$ as $t\to+0$, there exists a constant $t_0>0$ such that
\begin{equation}
   \frac{(1+\vep)\|R_{t_0}^{1/2}T^{1/2}g\|^2}{\|T^{-1/2}g\|^2}>1.
\end{equation}
Then it holds that
\begin{align}
 \infty 
& > \sum_{m=1}^\infty \frac{(1+\vep)^m}{\|T^{-1/2}g\|^{2m}} \int_0^\infty \|R_t^{1/2}T^{1/2}g\|^{2m-2} \|R_tT^{1/2}g\|^2 \, t^2dt\\
&\geq \sum_{m=1}^\infty \frac{(1+\vep)^m}{\|T^{-1/2}g\|^{2m}}\int_0^{t_0}\|R_t^{1/2}T^{1/2}g\|^{2m-2}\|R_tT^{1/2}g\|^2\, t^2dt   \\
&\geq \sum_{m=1}^\infty \frac{(1+\vep)^m}{\|T^{-1/2}g\|^{2m}}\int_0^{t_0}\|R_{t_0}^{1/2}T^{1/2}g\|^{2m-2}\|R_tT^{1/2}g\|^2\, t^2dt\\
& = \frac{1}{\|R_{t_0}^{1/2}T^{1/2}g\|^2}\int_0^{t_0}\|R_tT^{1/2}g\|^2\, t^2dt 
   \cdot \sum_{m=1}^\infty \left(\frac{(1+\vep)\|R_{t_0}^{1/2}T^{1/2}g\|^2}{\|T^{-1/2}g\|^{2}}\right)^m\\
&=\infty,
\end{align}
which is a contradiction.
Hence, the radius of convergence is precisely $\|T^{-1/2}g\|^{-2}$.
\end{proof}

\begin{Theorem}\label{thm:conv_rad_gs_single}
The radius of convergence of the Maclaurin series of the ground state $\Phigt(\lambda)$ is $\|T^{-1/2}g\|^{-2}$.
\end{Theorem}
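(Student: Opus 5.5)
The lower bound on the radius is immediate. By Theorem \ref{thm:SPIM_GS_conv_rad}, $\Phigt(z)$ is holomorphic on $U$ from \eqref{def of Uspm}, and $U$ contains the open disk of radius $\rho\coloneqq\|T^{-1/2}g\|^{-2}$ about $0$. The Cauchy--Hadamard theory for Banach-space-valued holomorphic functions then gives a radius of convergence of at least $\rho$.

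For the upper bound I argue by contradiction. Suppose the Maclaurin series of $\Phigt$ converges on a disk of radius $r>\rho$. Its sum is then continuous at $z=-\rho$, so $\|\Phigt(\lambda)\|$ stays bounded as $\lambda\downarrow-\rho$ along the real axis. For real $\lambda\in(-\rho,0]$ the series agrees with the physical unnormalized ground state. Comparing with \eqref{ngs}, and using that $\Phig$ is normalized, gives
\begin{equation}
  \|\Phigt(\lambda)\|^2=\det(1-K_1(\lambda)^*K_1(\lambda))^{-1/2}\ \ge\ (1-\|K_1(\lambda)\|^2)^{-1/2}.
\end{equation}
So it suffices to show $\|K_1(\lambda)\|\to1$ as $\lambda\downarrow-\rho$, which contradicts the boundedness.

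For real $\lambda$, $S(\lambda)$ is self-adjoint and positive, so $K_4(\lambda)$ is self-adjoint. Then $K_1(\lambda)=K_4(\lambda)(1+K_4(\lambda))^{-1}$ is self-adjoint with spectrum $\{k/(1+k)\}$, and $k/(1+k)\to-1$ exactly when $k\to-\tfrac12$. Since $1+K_4(\lambda)=\tfrac12\bigl(1+\ovl{T^{-1/2}S(\lambda)T^{-1/2}}\bigr)$, it is enough to show
\begin{equation}
  \inf_{\|u\|=1}\langle T^{-1/2}u,S(\lambda)T^{-1/2}u\rangle\to0 .
\end{equation}
Writing $v=T^{-1/2}u$, I will use
\begin{equation}
  \langle v,Sv\rangle\le\|Sv\|\,\|v\|=\bigl(\|Tv\|^2+\lambda|\langle T^{1/2}g,v\rangle|^2\bigr)^{1/2}\|v\|,
\end{equation}
and then need the quotient of this by $\|T^{1/2}v\|^2$ to become small.

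The natural test vectors are $v_n=E_T((n^{-1},n))T^{-3/2}g$. Set $a_n=\|T^{-1/2}E_T((n^{-1},n))g\|^2\uparrow\rho^{-1}$. Then $\|Tv_n\|^2=a_n$ and $\langle T^{1/2}g,v_n\rangle=a_n$, so $\|S(\lambda)v_n\|^2=a_n(1+\lambda a_n)$. This tends to $a_n(1-\rho a_n)$ as $\lambda\to-\rho$, and that quantity is small for large $n$.

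The main obstacle is the normalization. I must control the ratio $\|v_n\|\,\|S v_n\|/\|T^{1/2}v_n\|^2$, and this needs more care: $\|v_n\|=\|T^{-3/2}E_n g\|$ may blow up, and by Cauchy--Schwarz
\begin{equation}
  \|T^{1/2}v_n\|^2=\|T^{-1}E_ng\|^2\le\|T^{-3/2}E_ng\|\,\|T^{-1/2}E_ng\| ,
\end{equation}
which bounds the denominator from the wrong side, so the desired estimate cannot come from it. I would therefore first try a diagonal choice. For fixed $n$, send $\lambda\to-\rho$ and use norm-continuity of $K_4(\lambda)$ in $\lambda$ (Theorem \ref{thm:holo_K1}). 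Then choose $n$ large, and show that the lower spectral edge of $\ovl{T^{-1/2}S(-\rho)T^{-1/2}}$ is $0$. If the direct bound is insufficient, I would instead use the explicit rank-one resolvent formula from Lemma \ref{main1 lemma3}. This gives $\langle T^{-1/2}g,K_4(\lambda)T^{-1/2}g\rangle$ as a scalar integral in $t$, like \eqref{spm taylor ex}, and I would check directly that the normalized value tends to $-\tfrac12$ as $\lambda\downarrow-\rho$.
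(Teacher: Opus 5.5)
Your lower bound is fine, and so is the reduction ``a radius $r>\rho$ forces $\|\Phigt(\lambda)\|$ to stay bounded as $\lambda\downarrow-\rho$''. The gap is the step you declare sufficient, $\|K_1(\lambda)\|\to1$: it is false in general.

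For real $\lambda=-\mu\in(-\rho,0]$, \eqref{eq:int_repr_S-T} together with the rank-one form of Lemma~\ref{main1 lemma3} gives $\overline{T^{-1/2}S(\lambda)T^{-1/2}}=1-N_\mu$, where
\[
N_\mu=\frac{2\mu}{\pi}\int_0^\infty\frac{t^2}{1-\mu c(t)}\,|R_tg\rangle\langle R_tg|\,dt,\qquad c(t)\coloneqq\|R_t^{1/2}T^{1/2}g\|^2 .
\]
Hence $K_1(\lambda)=-N_\mu(2-N_\mu)^{-1}$ and $\|K_1(\lambda)\|=\|N_\mu\|/(2-\|N_\mu\|)$. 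Since $1-\rho c(t)=\rho t^2D(t)$ with $D(t)\coloneqq\|R_t^{1/2}T^{-1/2}g\|^2$, one has $\mu t^2/(1-\mu c(t))\le 1/D(t)$. The Cauchy--Schwarz bound $|\langle R_tg,u\rangle|^2\le D(t)\|R_t^{1/2}T^{1/2}u\|^2$ then gives $\|N_\mu\|\le1$, and this bound is attained (at $\mu=\rho$) only by $u$ parallel to $T^{-1}g$.

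So your claim holds when $g\in\dom(T^{-1})$: take $u=T^{-1}g$ and use monotone convergence. It can fail otherwise. Take $\sH=L^2((0,1),\omega^\beta d\omega)$, $T=\omega$, $g=1$, $0<\beta<1$, so that $\rho=\beta$ and $g\notin\dom(T^{-1})$. A Schur test with weight $\omega^{(1+\beta)/2}$ gives $\|N_\mu\|\le 2-\csc^2\frac{\pi(1+\beta)}{4}$ for every $\mu<\rho$, i.e.\ $\|K_1(\lambda)\|\le\sin\frac{\pi\beta}{2}<1$ on all of $(-\rho,0]$.

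This has three consequences for your plan. The normalization problem you hit with the vectors $v_n$ is not technical. The lower spectral edge you hope to show is $0$ actually stays $\ge\cot^2\frac{\pi(1+\beta)}{4}>0$ (and Theorem~\ref{thm:holo_K1} only gives continuity on $U$, which excludes $-\rho$). Finally, the fallback test vector $T^{-1/2}g$ fails even for infrared-regular $g$, because it is not parallel to $T^{-1}g$ unless $g$ is an eigenvector of $T$.

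More fundamentally, a radius equal to $\rho$ only says that $-\rho$ is a singular point, not that $\Phigt$ blows up there. Compare $E(\lambda)$: its radius is exactly $\rho$ by Theorem~\ref{thm:SPIM_GS_conv_rad}, yet $|E(\lambda)|\le\frac1\pi\int_0^\infty\|R_tT^{1/2}g\|^2D(t)^{-1}\,dt<\infty$ uniformly on $(-\rho,0]$. Whether $\|\Phigt(\lambda)\|^2=\det(1-K_1(\lambda)^2)^{-1/2}$ diverges depends on finer infrared information (for instance whether $\|K_1(\lambda)\|_2\to\infty$), which your argument does not control.

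The missing idea is to transfer the known singularity of $E$ instead of proving a boundary blow-up. For real $\lambda\in(-\rho,\rho)$ one has $\langle\Omega,\Phigt(\lambda)\rangle=1$ and $H(\lambda)\Omega=\frac{\lambda}{2}\PhiS(g)^2\Omega$, hence
\[
E(\lambda)=\langle\Omega,H(\lambda)\Phigt(\lambda)\rangle=\frac{\lambda}{2}\langle\PhiS(g)^2\Omega,\Phigt(\lambda)\rangle .
\]
The right-hand side is $\lambda$ times a fixed bounded linear functional of $\Phigt(\lambda)$. A radius $r>\rho$ for $\Phigt$ would therefore give radius at least $r$ for $E$, contradicting Theorem~\ref{thm:SPIM_GS_conv_rad}. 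This is the paper's proof, and it needs no knowledge of the behaviour at $-\rho$.
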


\begin{proof}
Suppose that the radius of convergence of the Maclaurin series for 
$\Phigt(\lambda)$ is $r>\|T^{-1/2}g\|^{-2}$.
For any $\lambda > -\|T^{-1/2}g\|^{-2}$, we have
\begin{equation}
  \langle\Omega,H(\lambda)\Phigt(\lambda)\rangle
 = E(\lambda)\langle\Omega,\Phigt(\lambda)\rangle 
 = E(\lambda),
\end{equation}
by our normalization. On the other hand,
\begin{equation}
 \langle\Omega,H(\lambda)\Phigt(\lambda)\rangle
 = \langle H(\lambda)\Omega,\Phigt(\lambda)\rangle 
 = \frac{\lambda}{2}\langle\Phi_\mathrm{S}(g)^2\Omega,\Phigt(\lambda)\rangle.	
\end{equation}
The right-hand side extends holomorphically to $\{z \in \CC \mid |z|<r\}$. 
Hence, so does $E(\lambda)$, which contradicts Theorem \ref{thm:SPIM_GS_conv_rad}.
\end{proof}

\begin{Remark}
The relationship between the region $U$ and the radius of convergence is illustrated in Figure \ref{fig:region_U_spm}.
\end{Remark}

\begin{figure}[htbp]
    \centering
    \begin{tikzpicture}[>=stealth, scale=1.5]
        \def\invNorm{1.5}  
        \def\Xrange{3.0}   
        \def\Yrange{2.0}  

        \fill[cyan!10] (-\invNorm, -\Yrange) rectangle (\Xrange, \Yrange);
        
        \draw[thick, orange!70, dashed] (0,0) circle (\invNorm);

        \draw[->] (-\Xrange-0.1, 0) -- (\Xrange+0.1, 0) node[right] {$\Re z$};
        \draw[->] (0, -\Yrange-0.1) -- (0, \Yrange+0.1) node[above] {$\Im z$};
        
        \node[below right] at (0,0) {$0$};

        \draw[dashed, thick, blue!60] (-\invNorm, -\Yrange) -- (-\invNorm, \Yrange);

        \filldraw (-\invNorm, 0) circle (1pt);
        \node[above left, xshift=2pt] at (-\invNorm, 0) {$-\|T^{-1/2}g\|^{-2}$};
        
        \node[black, font=\Large] at (2.0, 1.5) {$U$};
        
    \end{tikzpicture}
    \caption{The region $U$ for the single pair interaction model. 
      The shaded area represents the half-plane $\Re(z) > -\|T^{-1/2}g\|^{-2}$, 
      and the dashed circle indicates the convergence disk with radius $\|T^{-1/2}g\|^{-2}$.}
    \label{fig:region_U_spm}
\end{figure}

\subsection{\texorpdfstring{The Pauli--Fierz model with $x^2$-potentials in the dipole approximation}{The Pauli--Fierz model with x\texttwosuperior-potentials (dipole approx.)}}\label{sec:DPF}

We apply the general theory of holomorphy developed in Sections \ref{sect:GSE} and \ref{sect:GS} 
to the Pauli--Fierz model in the dipole approximation.

We consider a system in which an electron, bound to the origin by a spring, interacts with the 
quantized radiation field.
We prove that the ground state and its energy are holomorphic 
with respect to the charge in a complex strip-shaped domain that includes the real axis.
Consequently, the radius of convergence of the perturbative expansion around the origin 
is at least half the width of this strip.

Furthermore, when the electromagnetic interaction is cut off at the Compton wavelength
(used in the derivation of the Lamb shift), 
it is shown that the ground state and its energy remain holomorphic over a domain that 
includes the value of the elementary charge. 
We consider both cases: with and without mass renormalization, and compare the corresponding radii of convergence.
In either case, the perturbation expansion is justified for the physically relevant value of the charge.

In this section, we adopt the SI unit system. The symbols $\hbar$, $c$, $e$, $\ep_0$, and $m$
 will represent the reduced Planck constant, the speed of light, the elementary charge, the vacuum permittivity,
 and the physical mass of the electron, respectively.

\subsubsection{Definition of the Model}
We consider a system consisting of a nonrelativistic electron in a $d$-dimensional space
with $d\geq 2$, connected to springs with spring constants $(\kappa_1,\cdots,\kappa_d) \in \RR_{>0}^d$ 
at the origin, interacting with a quantized radiation field in the dipole approximation.

The Hilbert space for the system is $L^2(\RR_\bx^d)\tensor \Fb(L^2(\RR_\bk^d\times \ZZ_{d-1}))$,
where $\ZZ_{d-1}\coloneqq  \{1,2,\cdots,d-1\}$.
The Hamiltonian is defined by
\begin{align} \label{hamil:dpf}
  H(q) \coloneqq  \frac{1}{2m_0} \left(\bp\tensor \one - q\one\tensor \bA(\boldsymbol{0})\right)^2 
          + \one \tensor \dGb(\hbar c|\bk|) +  \frac{1}{2} \sum_{j=1}^d \kappa_j x_j^2,
\end{align}
where $m_0>0$ is the ``mass'' of the electron (which may refer to either 
the observed mass $m$ or the bare mass), $\bp=-i\hbar \nabla_\bx$ is 
the momentum of the electron, and $\bA(\boldsymbol{0})$ is the quantized
vector potential at the origin.
The constant $q$ is physically the electron charge $-e$, but it is treated as a parameter here.

Next, we define $\bA(\boldsymbol{0})$.
We fix polarization vectors $\be^{(\lambda)}:\RR^d\to \RR^d$, $\lambda\in\ZZ_{d-1}$, such that
$\be^{(\lambda)}(\bk)$ is measurable and
$\{\bk/|\bk|, \be^{(1)}(\bk), \cdots, \be^{(d-1)}(\bk)\}$
forms an orthonormal basis of $\RR^d$ for almost every $\bk\in\RR^d$.
For a function $\hat\rho(\bk)$, we define the coupling function 
$\boldsymbol{g}=(g_1,\cdots,g_d)$ as follows:
\begin{align}
  g_j(\bk,\lambda) = \sqrt{\frac{\hbar}{\ep_0 c}} e^{(\lambda)}_j(\bk) |\bk|^{-1/2}\hat\rho(\bk), 
  \qquad (\bk,\lambda) \in \RR^d\times \ZZ_{d-1}.   \label{eq:def_gj}
\end{align}
Then the quantized vector potential at the origin is given by
\begin{align}
 \bA(\boldsymbol{0}) \coloneqq   \PhiS(\boldsymbol{g}) 
                     \coloneqq  (\PhiS(g_1),\cdots , \PhiS(g_d)).
\end{align}
The function $\hat{\rho}$ is referred to as the ultraviolet cutoff.

In the following, we assume that $\hat\rho$ satisfies the following conditions:
\begin{itemize}
\item[(PF1)] $\hat{\rho}(\bk)$ is real-valued and spherically symmetric.
\item[(PF2)] $\hat{\rho}, ~ |\bk|^{-1} \hat{\rho} \in L^2(\RR^d)\setminus\{0\}$.
\end{itemize}

The most important example of an ultraviolet cutoff is the sharp cutoff at the Compton 
wave number $k_\mathrm{c} \coloneqq  mc/\hbar$, given by  
\begin{align}  
  \hat{\rho}(\bk) = (2\pi)^{-d/2} \chi(|\bk| < k_\mathrm{c}), \label{std uv}  
\end{align}
under which the Lamb shift was derived in \cite{PhysRev.72.339}.

In the Pauli--Fierz model, it is known that the electromagnetic interaction increases the mass of the electron.
This phenomenon is known as mass renormalization.
Therefore, the mass $m_0$ of the electron appearing in the Hamiltonian must be the value 
obtained by subtracting this increase, which is referred to as the bare mass $m_\mathrm{bare}$.
We will consider both the case where $m_0$ is the bare mass $m_\mathrm{bare}$ and the case where $m_0$
is the observed mass $m$ without taking mass renormalization into account.

The relationship between $m$ and $m_\mathrm{bare}$ is
\begin{align}
 m = m_\mathrm{bare} + \delta m.
\end{align}
In the dipole approximation, the mass renormalization term $\delta m$ is exactly given by
\begin{align}
\delta m \coloneqq  \norm{(\hbar c |\bk|)^{-1/2} eg_1}^2 
= \frac{e^2}{\ep_0 c^2} \frac{d-1}{d} \norm{|\bk|^{-1} \hat\rho}^2_{L^2(\RR^d)},  
\end{align}
as established in \cite{MR700181, MR1877237} (see also \cite[Theorem 7.6]{MR4213757}).
Therefore, the bare mass must be defined as
\begin{align}
  m_\mathrm{bare}
  & \coloneqq  m - \delta m \\
  & = m - \frac{e^2}{\ep_0 c^2}\frac{d-1}{d} \norm{|\bk|^{-1}\hat\rho}^2_{L^2(\RR^d)}, \label{def:mbare}
\end{align}
where the right-hand side is required to be positive.
When adopting the ultraviolet cutoff \eqref{std uv} for $ d=3 $, we have
\begin{align}
  \delta m = \frac{4\hbar}{3\pi c} k_\mathrm{c} \alpha = \frac{4\alpha}{3\pi} m \approx \frac{1}{322.9}m,
  \label{mbare}
\end{align}
which implies that
\begin{align}
  m_\mathrm{bare} = \left(1-\frac{4\alpha}{3\pi} \right) m, \label{eq:bare_mass_relation}
\end{align}
where $\alpha = (4\pi \ep_0)^{-1}e^2/\hbar c \approx 1/137$ is the fine-structure constant.
Since $\delta m$ arises from electromagnetic interactions, it depends on the charge $q$. 
However, we will treat it as a constant by using the value obtained with $q = -e$.

We note that this model is independent of the choice of polarization vectors. 
In other words, the Hamiltonians obtained using different polarization vectors are 
unitarily equivalent (see \cite[Appendix A]{MR3223476}, where it should be noted 
that the assumption $\hat\rho(\bk) = \hat\rho(-\bk)$ is implicitly made).

\subsubsection{Holomorphy of the Ground State and Ground State Energy}
We will transform the Hamiltonian \eqref{hamil:dpf} into the bosonic quadratic Hamiltonian \eqref{HAMIL} and analyze it. 
The procedure is described in detail in \cite[Section 7.3]{MR4213757}, but since there are slight differences in notation and unit systems compared to the current work, we will reiterate the necessary parts.

Let $\{a_j^*, a_j\}_{j=1}^d$ be the standard creation and annihilation operators on $L^2(\RR_\bx^d)$, that is,
\begin{align}
  x_j = \sqrt{\frac{\hbar}{2m_0\ome_j}}\ovl{(a_j + a_j^*)}, \qquad 
  p_j = \sqrt{\frac{m_0\hbar \ome_j}{2}}(\ovl{ia_j^* -i a_j}),
\end{align}
where $\ome_j \coloneqq  \sqrt{\kappa_j/m_0}$.
There exists a natural unitary operator from $L^2(\RR_\bx^d)$ onto $\Fb(\CC^d)$ such that 
\begin{align}
  & \sum_{j=1}^d \hbar \ome_j \left(a_j^* a_j + \frac{1}{2}\right) 
    \sim \dGb(\hbar\ome) + \sum_{j=1}^d \frac{\hbar \ome_j}{2}, \\
  & x_j \sim \sqrt{\frac{\hbar}{m_0\ome_j}} \PhiS(v_j), \qquad 
    p_j \sim \sqrt{m_0\hbar \ome_j}\PhiS( iv_j),
\end{align}
where $\ome = \diag(\ome_1,\cdots, \ome_d)$ is a diagonal matrix, and $\{v_j\}_{j=1}^d$ is the 
standard basis of $\CC^d$.
Under this unitary operator, $H(q)$ can be identified with 
\begin{align}
  \dGb(\hbar\ome)\tensor \one + \sum_{j=1}^d \frac{\hbar \ome_j}{2}
 + \one \tensor \dGb(\hbar c|\bk|)
 - \frac{q}{m_0}\sum_{j=1}^d \sqrt{m_0\hbar \ome_j} \PhiS(iv_j)\tensor \PhiS(g_j)\\
 + \frac{q^2}{2m_0}\sum_{j=1}^d \one\tensor \PhiS(g_j)^2.
\end{align}
By the unitary transformation $\Gamma(i\one) \tensor \one$, the term $\PhiS(iv)$ becomes $-\PhiS(v)$.
 Therefore, the Hamiltonian can be identified with
\begin{align}
  \dGb(\hbar\ome) \tensor \one + \sum_{j=1}^d \frac{\hbar \ome_j}{2} + \one \tensor \dGb(\hbar c|\bk|) 
  + \frac{q}{m_0} \sum_{j=1}^d \sqrt{m_0 \hbar \ome_j} \PhiS(v_j) \tensor \PhiS(g_j) \\ 
  + \frac{q^2}{2m_0} \sum_{j=1}^d \one \tensor \PhiS(g_j)^2.
\end{align}
Furthermore, by using the natural isomorphism
\begin{align}
 \Fb(\CC^d)\tensor \Fb(L^2(\RR^d\times\ZZ_{d-1})) \cong \Fb(\CC^d\oplus L^2(\RR^d\times \ZZ_{d-1})) =: \sF,
\end{align}
the Hamiltonian $H(q)$ can be expressed in the form of the bosonic quadratic Hamiltonian on $\sF$:
\begin{align}
  \til{H}(q) \coloneqq  \dGb(T) + \frac{1}{2}\sum_{j=1}^d \PhiS(0,q\til{g}_j)^2
  + \frac{1}{4} \sum_{j=1}^d (\hbar\ome_j)^{1/2} \big[ \PhiS(v_j,q\til{g}_j)^2 - \PhiS(v_j,-q\til{g}_j)^2\big]
  + E_0,
\end{align}
where 
\begin{align}
   T \coloneqq  \hbar \ome \oplus \hbar c |\bk|, \qquad 
   \til{g}_j \coloneqq  \frac{g_j}{\sqrt{m_0}}, \qquad 
   E_0 \coloneqq   \sum_{j=1}^d \frac{\hbar \ome_j}{2}.
\end{align}
$H(q)$ and $\til{H}(q)$ are equivalent under a unitary transformation that does not depend on $q$.
Therefore, to study the holomorphy of the ground state and ground state energy of $H(q)$ with respect to $q$, it suffices to study those of $\tilde{H}(q)$.

Corresponding to \eqref{defW}, we define
\begin{align}
  W(z) \coloneqq  \sum_{j=1}^d
  \begin{pmatrix}
    0 & z \ketbra{\hbar \ome_j v_j}{(\hbar c|\bk|)^{1/2}\til{g}_j} \\
   z \ketbra{(\hbar c |\bk|)^{1/2}\til{g}_j}{\hbar\ome_j v_j} & z^2 \ketbra{(\hbar c |\bk|)^{1/2}\til{g}_j}{(\hbar c |\bk|)^{1/2}\til{g}_j}
  \end{pmatrix}.
\end{align}
The operator $S(q)$ is defined by  
\begin{align}  
  S(q) \coloneqq  \big( T^2 + W(q) \big)^{1/2}.  
\end{align}

\begin{Theorem}[{\cite[Theorem 7.4]{MR4213757}}]\label{diagonalization of HODPF}
For any $q\in\RR$, $H(q)$ is self-adjoint, and essentially self-adjoint on any core of $\dGb(T)$.
Furthermore, the unitary operator $\sU=\sU(q)$ on $\Fb(\sH)$ satisfying equation \eqref{defU*} yields
\begin{equation}
   \sU H(q) \sU^* = \dGb\left( S(q)\right) + E(q), \qquad 
   E(q) = \frac{1}{2}\tr\left(\,\ovl{S(q)-T}\,\right) + E_0.
\end{equation}
\end{Theorem}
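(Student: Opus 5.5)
The statement is cited from \cite[Theorem 7.4]{MR4213757}, so the plan is to reduce it to Theorems \ref{sa} and \ref{bogo} by checking (B1)--(B6) for the quadratic Hamiltonian $\til{H}(q)$. Since $H(q)$ and $\til{H}(q)$ are unitarily equivalent by a $q$-independent unitary, the conclusions transfer to $H(q)$. The constant $E_0$ only adds to the energy.

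\textbf{Putting $\til H(q)$ in the form \eqref{HAMIL}.} The interaction in $\til H(q)$ consists of $d$ terms $\PhiS(0,q\til g_j)^2$ with coefficient $\lambda=1$, together with the pairs $\PhiS(v_j,\pm q\til g_j)^2$ with coefficients $\pm\tfrac12(\hbar\ome_j)^{1/2}$. This is a finite family, so (B3) and (B4) reduce to membership. Membership holds because $T^{\pm1/2}v_j=(\hbar\ome_j)^{\pm1/2}v_j$, and $(\hbar c|\bk|)^{\pm1/2}\til g_j\in L^2$ by (PF2) and \eqref{eq:def_gj}. Indeed $|\bk|^{-1/2}g_j\sim|\bk|^{-1}\hat\rho$ and $|\bk|^{1/2}g_j\sim\hat\rho$. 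This gives (B2)--(B4).

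\textbf{Remaining conditions.} (B1) is clear, since $T=\hbar\ome\oplus\hbar c|\bk|$ is positive and injective. For (B6), take $J$ to be complex conjugation on $\CC^d\oplus L^2(\RR^d\times\ZZ_{d-1})$. It commutes with $T$ and fixes $v_j$ and the real functions $\til g_j$ (PF1). Expanding $\sum_n\lambda_n\ketbra{T^{1/2}g_n}{T^{1/2}g_n}$ and cancelling the $\pm$ pairs reproduces exactly $W(q)$, so $S$ in Theorem \ref{bogo} is $S(q)$.

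\textbf{Main obstacle: (B5).} The key step is the operator inequality (B5) for every real $q$. Here the operator in (B5) is $1+\ovl{T^{-1}W(q)T^{-1}}$, which should be shown to be $\ge\vep>0$. I would compute its quadratic form. For $u=(a,f)$ with $a\in\CC^d$, set $b_j=\langle(\hbar c|\bk|)^{-1/2}\til g_j,f\rangle$. The form is
\begin{equation}
\|a\|^2+\|f\|^2+\sum_j\big(2q\,(\hbar\ome_j)^{0}\Re(\bar a_j b_j)\cdot c_j+q^2|b_j|^2\big),
\end{equation}
with suitable constants $c_j$, and this should equal $\|f\|^2+\sum_j|a_j+q b_j c_j'|^2$ up to a completed square. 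The rigorous alternative is to use the structure $\sum_j\|a_j + \cdots\|^2\ge0$ and argue that the form $\ge\vep$ by a compactness or finite-rank argument. The perturbation is finite rank, so if the form is $\ge0$ and the kernel is trivial, the essential spectrum stays at $1$. It then suffices to exclude eigenvalue $0$, which follows from the strict positivity contributed by the $\|f\|^2$ component together with the completed square. Alternatively, one can appeal to the physical form $\frac{1}{2m_0}(\bp-q\bA)^2\ge0$ to get lower boundedness and positivity. After that, Theorems \ref{sa} and \ref{bogo} give self-adjointness on $\dom(\dGb(T))$, essential self-adjointness on cores, and the diagonalization with $E(q)=\tfrac12\tr(\ovl{S(q)-T})+E_0$.
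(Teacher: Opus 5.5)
Your proposal is correct and follows essentially the paper's route. The paper's proof simply invokes \cite[Theorem 7.4]{MR4213757}, i.e.\ checking (B1)--(B6) for $\til{H}(q)$ and applying Theorems \ref{sa} and \ref{bogo}, after observing, exactly as you do, that pointwise complex conjugation is an admissible $J$ because $\hat\rho$ and the polarization vectors are real.

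The step you flagged as the main obstacle, (B5), is in fact immediate. All your constants equal $1$, since $\frac{1}{2}(\hbar\ome_j)^{1/2}\cdot 2(\hbar\ome_j)^{-1/2}=1$, so the form in (B5) is exactly $\|f\|^2+\sum_j|a_j+qb_j|^2$. Combined with $\sum_j|b_j|^2\le C\|f\|^2$, this gives an explicit $\vep(q)>0$ directly, so no spectral argument is needed. Your fallback appeal to $(\bp-q\bA)^2\ge 0$ would not by itself give this one-particle bound.
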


\begin{proof}
Since $\hat\rho(\bk)$ is real-valued, the conjugation operator $J$ on $L^2(\RR^d \times \ZZ_{d-1})$ 
defined by  
\begin{align}  
    (Jf)(\bk,\lambda) \coloneqq  \ovl{f(\bk,\lambda)}, \qquad f \in L^2(\RR^d \times \ZZ_{d-1}),
                         ~(\bk,\lambda) \in \RR^d\times \ZZ_{d-1}
\end{align}
allows us to apply the proof in \cite[Theorem 7.4]{MR4213757} without modification.
\end{proof}

The pair $(T,W)$ satisfies conditions (C1)--(C5) with $U=\CC$.
Since condition (C6) may not hold for $U=\CC$, we set
\begin{align}
  U \coloneqq  \{ z \in\CC \mid |\Im z| < Q \},   \label{def of U_pf}
\end{align}
where $Q$ is a quantity with the dimension of charge, given by
\begin{align}
Q &\coloneqq  \norm{(\hbar c|\bk|)^{-1/2} \til{g}_1}^{-1} \\
  & = \left( \frac{d}{d-1}\ep_0 c^2 m_0 \right)^{1/2} \norm{|\bk|^{-1} \hat\rho}_{L^2(\RR^d)}^{-1}.  \label{def of Q}
\end{align}
We note that 
\begin{align}
  \frac{e^2}{Q^2} = \frac{\delta m}{m_0}.
\end{align}

\begin{Proposition}
Let $z$ be a complex number. 
Then $\Re( 1 +\ovl{T^{-1}W(z)T^{-1}}) >0$ holds if and only if $z\in U$.
That is, the pair $(T,W)$ satisfies conditions (C1)--(C6) with $U$ defined by \eqref{def of U_pf}.
In particular, $S(z)$ can be defined for all $z \in U$.
\end{Proposition}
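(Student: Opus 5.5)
The plan is to compute $\ovl{T^{-1}W(z)T^{-1}}$ explicitly and show that it acts nontrivially only on a finite-dimensional subspace, which reduces (C6) to positivity of a $2\times 2$ matrix. Set $h_j \coloneqq (\hbar c|\bk|)^{-1/2}\til g_j\in L^2(\RR^d\times\ZZ_{d-1})$; this is well defined by (PF2). Since $T=\hbar\ome\oplus\hbar c|\bk|$ acts diagonally, $T^{-1}$ sends $\hbar\ome_j v_j$ to $v_j$ and $(\hbar c|\bk|)^{1/2}\til g_j$ to $h_j$. I therefore expect
\begin{equation}
  F(z)\coloneqq \ovl{T^{-1}W(z)T^{-1}}
  = \sum_{j=1}^d \Big( z\ketbra{v_j}{0\oplus h_j} + z\ketbra{0\oplus h_j}{v_j} + z^2\ketbra{0\oplus h_j}{0\oplus h_j}\Big),
\end{equation}
where $v_j$ stands for $v_j\oplus 0$.

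The key structural step is to show that the $h_j$ are mutually orthogonal and all have norm $Q^{-1}$. Up to a positive constant,
\begin{equation}
  \inner{h_i}{h_j} \propto \int_{\RR^d}\sum_{\lambda} e^{(\lambda)}_i(\bk)e^{(\lambda)}_j(\bk)\,|\bk|^{-2}|\hat\rho(\bk)|^2\,d\bk ,
\end{equation}
and completeness of the frame gives $\sum_\lambda e^{(\lambda)}_ie^{(\lambda)}_j=\delta_{ij}-k_ik_j/|\bk|^2$. By the spherical symmetry in (PF1), the integral of $k_ik_j/|\bk|^2$ against a radial weight equals $\delta_{ij}/d$ times the integral of the weight. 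Hence $\inner{h_i}{h_j}=\delta_{ij}\norm{h_1}^2=\delta_{ij}Q^{-2}$ by \eqref{def of Q}. With $u_j\coloneqq 0\oplus Qh_j$, the family $\{v_1,\dots,v_d,u_1,\dots,u_d\}$ is orthonormal. The $d$ planes $\mathrm{span}\{v_j,u_j\}$ reduce $F(z)$, and $F(z)=0$ on the orthogonal complement of their sum.

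On each plane, in the basis $(v_j,u_j)$, the operator $1+F(z)$ is the matrix $\begin{pmatrix}1 & z/Q\\ z/Q & 1+z^2/Q^2\end{pmatrix}$. Taking the Hermitian part and writing $a=\Re z/Q$, $b=\Im z/Q$, so that $\Re(z^2)/Q^2=a^2-b^2$, gives
\begin{equation}
  \Re(1+F(z))\big|_{\mathrm{span}\{v_j,u_j\}}=\begin{pmatrix}1 & a\\ a & 1+a^2-b^2\end{pmatrix}.
\end{equation}
This matrix has determinant $1-b^2$ and top-left entry $1$. It is therefore positive definite if and only if $|b|<1$, that is, $|\Im z|<Q$.

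To conclude, if $z\in U$ the smallest eigenvalue $\mu(z)$ of this $2\times2$ matrix is positive and independent of $j$. Hence $\Re(1+F(z))\ge\min\{\mu(z),1\}>0$, because on the complement $\Re(1+F(z))$ is the identity; this is (C6). Conversely, if $|\Im z|\ge Q$ the determinant is $\le0$, so some nonzero vector in the plane for $j=1$ gives $\Re\inner{f}{(1+F(z))f}\le 0$, and (C6) fails. Conditions (C1)--(C5) with $U=\CC$ were already noted; they persist on the open set $U$, and $S(z)$ is then defined by Proposition \ref{prop:m-acc}. The only step I expect to need care is the orthogonality computation for the $h_j$, together with checking the domain claims behind the closure $F(z)$. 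The domain claims are straightforward because every vector involved lies in the appropriate domains by (PF2).
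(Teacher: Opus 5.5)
Your proposal is correct and follows essentially the same route as the paper: you compute $\ovl{T^{-1}W(z)T^{-1}}$ explicitly, use the polarization completeness relation and the spherical symmetry of $\hat\rho$ to get $\inner{\tau_i}{\tau_j}=\delta_{ij}\norm{\tau_1}^2=\delta_{ij}Q^{-2}$, and reduce (C6) to a $2\times 2$ block on each of the mutually orthogonal planes $\mathrm{span}\{(v_j,0),(0,\tau_j)\}$. The only cosmetic difference is that the paper studies the lowest eigenvalue $\lambda_-(z)$ of $F+F^*$, whereas you use the determinant $1-(\Im z)^2/Q^2$ of the Hermitian part; this makes the paper's ``simple calculation'' explicit and also treats the orthogonal complement cleanly.
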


\begin{proof}
Set $\tau_j \coloneqq  (\hbar c|\bk|)^{-1/2}\til{g}_j$. Then
\begin{align}
  \ovl{T^{-1}W(z)T^{-1}} = 
  \sum_{j=1}^d 
  \begin{pmatrix}
    0 & z \ketbra{v_j}{\tau_j} \\
   z \ketbra{\tau_j}{v_j} & z^2 \ketbra{\tau_j}{\tau_j}
  \end{pmatrix}.
\end{align}
Consider the operator
\begin{align}
     M(z) \coloneqq  \ovl{T^{-1}W(z)T^{-1}} + (\ovl{T^{-1}W(z)T^{-1}})^*,
    \qquad z\in\CC.
\end{align}
It is enough to show that $M(z)>-2$ if and only if $z\in U$.

From the properties of the polarization vectors and the spherical symmetry of $\hat{\rho}$, it follows that  
\begin{align}  
    \inner{\tau_i}{\tau_j} = \norm{\tau_1}^2 \delta_{ij}.  
\end{align}  
Therefore,  
\begin{align}  
    V_j \coloneqq  \mathrm{Span} \{ (v_j,0), (0,\tau_j) \}, \qquad j=1,\cdots,d  
\end{align}  
are mutually orthogonal subspaces.
Moreover, $M(z)$ is reduced by each $V_j$, and the reduced part is given by  
\begin{align}  
    M_j(z) \coloneqq  \begin{pmatrix}  
        0 & (z+\bar{z}) \ketbra{v_j}{\tau_j} \\  
        (z+\bar{z}) \ketbra{\tau_j}{v_j} & (z^2+\bar{z}^2) \ketbra{\tau_j}{\tau_j}  
    \end{pmatrix}.  
\end{align}
The matrix representation of $M_j(z)$ with respect to the basis $(v_j, 0)$ and $(0, \tau_j / \norm{\tau_j})$ is  
\begin{align}  
    \til{M}_j(z) \coloneqq   
    \begin{pmatrix}  
        0 & (z+\bar{z}) \norm{\tau_1} \\  
        (z+\bar{z}) \norm{\tau_1} & (z^2+\bar{z}^2) \norm{\tau_1}^2  
    \end{pmatrix}.  
\end{align}  
The eigenvalues of this matrix are  
\begin{align}  
    \lambda_\pm(z) = \frac{(z^2+\bar{z}^2)\norm{\tau_1}^2 \pm \sqrt{(z^2+\bar{z}^2)^2\norm{\tau_1}^4 
  + 4(z+\bar{z})^2\norm{\tau_1}^2}}{2}.  
\end{align}  
Therefore,  
\begin{align}  
    \inf \sigma(M(z)) = \lambda_-(z).  
\end{align}  
A simple calculation shows that $\lambda_-(z) > -2$ if and only if
\begin{align}  
    |\Im z| < \norm{\tau_1}^{-1} = Q.  
\end{align}
\end{proof}

Based on the above proposition and the discussions in Section~\ref{sect:GSE}, 
the ground state energy $E(q)$ can be extended to a holomorphic function on $U$. 
Moreover, by the discussions in Section \ref{sect:GS}, $K_1(q)$ can be analytically continued to 
a holomorphic function on $U$, and therefore, the ground state $\Phigt(q)$ can also 
be extended to a holomorphic function on $U$.
Note that $\Phigt(z)$ is defined in \eqref{extGS} and 
\begin{align}
  E(z) \coloneqq  \frac{1}{2} \tr \ovl{(S(z)-T)|_{\dom(T^2)}} + E_0.
\end{align}

Consequently, we obtain the following theorem:
\begin{Theorem}\label{thm:dpf}  
The ground state energy $E(q)$ and the ground state $\til{\Phi}_\mathrm{g}(q)$ of $H(q)$ can be
extended to holomorphic functions on the complex domain $U$.  
In particular, the Maclaurin series of $E(q)$ and $\til{\Phi}_\mathrm{g}(q)$ converge absolutely for $|q| < Q$.  
\end{Theorem}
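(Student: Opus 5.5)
The plan is to combine the preceding proposition with the general theorems of Sections~\ref{sect:GSE} and~\ref{sect:GS}, and then add one elementary geometric fact about the strip $U$.

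First, I verify that $(T,W)$ satisfies (C1)--(C5) on all of $\CC$.
\begin{itemize}
\item (C1) holds because $T=\hbar\ome\oplus\hbar c|\bk|$ is injective and non-negative.
\item $W(z)$ is a finite sum of rank-one operators with polynomial coefficients $z$ and $z^2$, so it is an entire $\cS_1$-valued function. The needed vectors $(\hbar c|\bk|)^{1/2}\til g_j$ lie in $L^2$ because (PF2) gives $\hat\rho\in L^2$.
\item For (C3)--(C5), the rank-one pieces sandwiched by $T^{-1}$ become $\ketbra{v_j}{\tau_j}$, $\ketbra{\tau_j}{v_j}$ and $\ketbra{\tau_j}{\tau_j}$ with $\tau_j=(\hbar c|\bk|)^{-1/2}\til g_j$. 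These are bounded, finite rank, and polynomial in $z$, because $|\bk|^{-1}\hat\rho\in L^2$.
\end{itemize}
The range condition (C3) follows because each vector in the range of $W(z)T^{-1}$ lies in $\dom(T^{-1})$. The only nonzero vectors involved are $\hbar\ome_j v_j$ and $(\hbar c|\bk|)^{1/2}\til g_j$, and $T^{-1}$ maps them to $v_j$ and $\tau_j\in L^2$.

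By the preceding proposition, (C6) holds exactly on $U=\{|\Im z|<Q\}$. Therefore Theorem~\ref{main1} gives that $z\mapsto \frac12\tr\ovl{(S(z)-T)|_{\dom(T^2)}}+E_0=E(z)$ is holomorphic on $U$. Theorem~\ref{main2} gives that $\Phigt(z)$ is an $\Fb$-valued holomorphic function on $U$.

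Next, I check that these holomorphic functions actually extend the physical quantities. For real $q$, $W(q)$ coincides with the operator $W$ of \eqref{defW} built from the quadratic form of $\til H(q)$. By Theorem~\ref{diagonalization of HODPF}, $E(q)$ agrees with the true ground state energy. Also, $K_1(q)=K_4(q)(1+K_4(q))^{-1}$ agrees with the $K_1$ of \eqref{def:Ks} by Proposition~\ref{prop:K1_K4}, so $\Phigt(q)$ is the unnormalized ground state of $\til H(q)$, and hence of $H(q)$ via the $q$-independent unitary. For complex $z$, $E$ and $\Phigt$ are therefore holomorphic extensions of the real-parameter objects. The $q$-independence of the unitary equivalence matters here: it ensures the identification commutes with analytic continuation.

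Finally, the strip $U$ contains the open disc $\{|z|<Q\}$, since $|\Im z|\le|z|$. A holomorphic function (scalar or Banach-space valued) on a disc about $0$ equals its Maclaurin series there, with absolute convergence by the Cauchy estimates. Applying this to $E$ and to $\Phigt$ yields absolute convergence for $|q|<Q$.

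I expect the main obstacle to be the careful verification of (C3) and (C4) for the block operator, since $T^{-1}$ is unbounded near $\bk=0$. I would handle this by writing $T^{-1}W(z)T^{-1}$ explicitly on $\dom(T^{-1})$ and noting that it reduces to the finite-rank expression already displayed in the preceding proof.
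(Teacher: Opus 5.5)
Your proposal is correct and follows the paper's route. You check (C1)--(C5) on $\CC$, obtain (C6) on the strip from the preceding proposition, apply Theorems~\ref{main1} and~\ref{main2}, and use that the disc $|q|<Q$ lies in $U$; the extra details you supply (the explicit check of (C3)--(C4) and the identification with the real-$q$ objects via the $q$-independent unitary) are ones the paper leaves implicit.
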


\begin{Example}\label{example:Q2}
Consider a physical situation where $d=3$ and the ultraviolet cutoff is given by \eqref{std uv}.
In the case where $m_0 = m_\mathrm{bare} = (1 - \frac{4\alpha}{3\pi})m$, we have
\begin{align}
  Q = \left(\frac{3\pi - 4\alpha}{4\alpha}\right)^{1/2} e \approx 17.941 e,   \label{eq:Q_d3_bare}
\end{align}
(see below for details of the calculation).
Therefore, $-e \in U$, and the Maclaurin series of the ground state and ground state energy of
 $H(-e)$ with respect to the elementary charge $e$ converges absolutely.

In the case where $m_0=m$, we have
\begin{align}
  Q = \left(\frac{3\pi}{4\alpha}\right)^{1/2}e \approx  17.969 e,
\end{align}
so the convergence disk becomes slightly larger.

Figure \ref{fig:region_U_x2} illustrates the relationship between the region $U$, the disk of radius $Q$,
 and the electron charge $-e$. In either case, the electron charge is contained within this disk.

\begin{figure}[htbp]
    \centering
    \begin{tikzpicture}[>=stealth, scale=1.5]
        \def\Q{1.5}       
        \def\Xrange{3.5}  
        \def\minusE{-0.084} 

        \fill[cyan!10] (-\Xrange, -\Q) rectangle (\Xrange, \Q);
        
        \draw[thick, orange!70, dashed] (0,0) circle (\Q);

        \draw[->] (-\Xrange-0.1, 0) -- (\Xrange+0.1, 0) node[right] {$\text{Re } z$};
        \draw[->] (0, -\Q-0.7) -- (0, \Q+0.7) node[above] {$\text{Im } z$};
        
        \node[below right, xshift=2pt] at (0,0) {$0$};

        \draw[dashed, thick, blue!60] (-\Xrange, \Q) -- (\Xrange, \Q);
        \draw[dashed, thick, blue!60] (-\Xrange, -\Q) -- (\Xrange, -\Q);

        \filldraw (0, \Q) circle (1pt);
        \node[above right] at (0, \Q) {$iQ$};
        
        \filldraw (0, -\Q) circle (1pt);
        \node[below left] at (0, -\Q) {$-iQ$};

        \node[black, font=\large] at (\Xrange-0.5, 0.6) {$U$};
        
        \filldraw[orange!80] (\minusE, 0) circle (1.5pt);
        \node[above left, black, xshift=-1pt, yshift=1pt] at (\minusE, 0) {$-e$};

    \end{tikzpicture}
    \caption{The region $U$ and the point $-e$ in the physical situation described in Example \ref{example:Q2}, where $Q \approx 17.9e$. 
      The dashed circle represents the disk $|z| < Q$.}
    \label{fig:region_U_x2}
\end{figure}

We describe the details of the calculation for $Q=\norm{\tau_1}^{-1}$.
Recall that
\begin{align}
  \tau_1(\bk,\lambda)
  = (\hbar c |\bk|)^{-1/2} \frac{g_1(\bk,\lambda)}{\sqrt{m_0}} 
  = \left( \frac{1}{m_0 \ep_0 c^2}\right)^{1/2} e_1^{(\lambda)}(\bk) |\bk|^{-1}\hat\rho(\bk).
\end{align}
Since $\bk/|\bk|, \be^{(1)}(\bk), \cdots, \be^{(d-1)}(\bk)$ form an orthonormal basis
and $\hat\rho$ is spherically symmetric, we obtain
\begin{align}
  \norm{\tau_1}^2
 = \frac{1}{m_0 \ep_0c^2} \sum_{\lambda=1}^{d-1} \int_{\RR^d} e_1^{(\lambda)}(\bk)^2 |\bk|^{-1}|\hat\rho(\bk)|^2 \, d\bk
 = \frac{1}{m_0 \ep_0c^2}\left(1-\frac{1}{d}\right) \norm{|\bk|^{-1}\hat\rho}^2_{L^2(\RR^d)}.
\end{align}
For the case $d=3$ with the ultraviolet cutoff given by \eqref{std uv}, we have
\begin{align}
  \norm{|\bk|^{-1}\hat\rho}^2_{L^2(\RR^3)} 
 = \int_{\RR^3} (2\pi)^{-3} |\bk|^{-2}\chi(|\bk|<k_\mathrm{c}) \, d\bk
 = \frac{1}{2\pi^2} k_\mathrm{c}.
\end{align}
Thus
\begin{align}
  \norm{\tau_1}^2 
  = \frac{1}{3\pi^2} \frac{1}{m_0\ep_0c^2} k_\mathrm{c}
  =  \frac{4}{3\pi} \frac{m}{m_0}\frac{e^2}{4\pi\ep_0c\hbar } \frac{1}{e^2}
  =  \frac{4}{3\pi} \frac{m}{m_0} \alpha \frac{1}{e^2}.
\end{align}
In the case where $m_0=m_\mathrm{bare}=m-\delta m$, it follows that 
$\norm{\tau_1}^2 = 4\alpha (3\pi-4\alpha)^{-1} e^{-2}$
which leads to 
\begin{align}
  Q = \norm{\tau_1}^{-1} = \left( \frac{3\pi - 4\alpha}{4\alpha}\right)^{1/2} e.
\end{align}
On the other hand, in the case where $m_0=m$, since $\norm{\tau_1}^2 = (4\alpha/3\pi)e^{-2}$,
we obtain
\begin{align}
  Q = \left( \frac{3\pi}{4\alpha}\right)^{1/2} e. 
\end{align}
\end{Example}

The above theorem shows that the radii of convergence of the Maclaurin series 
for the ground state and its energy are greater than or equal to $ Q $.
We conclude this subsection by noting that $Q$ is likely not the actual value of the radii of convergence:
\begin{Conjecture}\label{conj:x2}
The radii of convergence of the Maclaurin series for the ground state and its energy are finite values strictly greater than $Q$.
\end{Conjecture}

\subsection{The translation-invariant Pauli--Fierz model}\label{sec:TDPF}
We apply the general theory of holomorphy developed in Sections \ref{sect:GSE} and \ref{sect:GS} 
to the translation-invariant Pauli--Fierz model in the dipole approximation. 

\subsubsection{Hamiltonian and fiber decomposition}
We consider a free electron interacting with a quantized radiation field in the dipole approximation.
The Hamiltonian of this system is obtained from \eqref{hamil:dpf} by setting $(\kappa_1,\cdots,\kappa_d) = (0,\cdots,0)$:
\begin{align} \label{hamil:tdpf}
  H(q)
  \coloneqq  \frac{1}{2m_0}
     \left(\bp \tensor \one - q\,\one \tensor \bA(\boldsymbol{0})\right)^{2}
     + \one \tensor \dGb(\hbar c |\bk|).
\end{align}
Let $\sF_d:L^2(\RR_\bx^d)\to L^2(\RR_\bP^d)$ be the Fourier transform defined by
\begin{align}
  (\sF_d f)(\bP) \coloneqq  \frac{1}{(2\pi\hbar)^{d/2}} \int_{\RR^d} f(\bx)e^{-i\bP\cdot\bx/\hbar} \, d\bx, \qquad 
  f\in L^2(\RR_\bx^d), ~ \bP \in \RR^d.
\end{align}
We use the natural isomorphism
\begin{align}
  L^2(\RR_\bP^d)\tensor \Fb(L^2(\RR_\bk^d\times \ZZ_{d-1})) \cong
  \int_{\RR^d}^\oplus \Fb(L^2(\RR_\bk^d\times \ZZ_{d-1})) \, d\bP.
\end{align}
Then we obtain the operator identity
\begin{align}
  (\sF_d\tensor \one) H(q) (\sF_d\tensor \one)^* = \int_{\RR^d}^\oplus H(q,\bP) \, d\bP,
\end{align}
where 
\begin{align}
  H(q,\bP) 
  & \coloneqq  \frac{1}{2m_0} \left(\bP - q \bA(\boldsymbol{0})\right)^2 + \dGb(\hbar c|\bk|) \\
  & = \dGb(\hbar c|\bk|) + \frac{q^2}{2m_0} \sum_{j=1}^d \PhiS(g_j)^2
      - \frac{q}{m_0} \sum_{j=1}^d P_j \PhiS(g_j)+\frac{\bP^2}{2m_0}.
\end{align}
The first two terms
\begin{align}
  H(q,\boldsymbol{0}) = \dGb(\hbar c|\bk|) + \frac{q^2}{2m_0}\sum_{j=1}^d \PhiS(g_j)^2
\end{align}
have the form of a bosonic quadratic Hamiltonian and can be diagonalized by a Bogoliubov transformation.

Corresponding to \eqref{HAMIL} and \eqref{defW}, we define
\begin{align}
     T & \coloneqq  \hbar c|\bk|, \\
  W(z) & \coloneqq  \frac{z^2}{m_0} \sum_{j=1}^d \ketbra{T^{1/2}g_j}{T^{1/2}g_j}, \qquad z\in\CC.
\end{align}
We set 
\begin{align}
  S(q) \coloneqq  \sqrt{T^2 + W(q)}, \qquad q\in\RR.
\end{align}
The following theorem is known.
\begin{Theorem}[{\cite[Section 7.4]{MR4213757}}]\label{diagonalization of TDPF}
  For any $q\in\RR$, $H(q,\boldsymbol{0})$ is self-adjoint and essentially 
self-adjoint on any core of $\dGb(\hbar c|\bk|)$.
Furthermore, the unitary operator $\sU=\sU(q)$ on $\Fb(\sH)$ that satisfies the equation \eqref{defU*} yields
\begin{align}
  \sU H(q,\boldsymbol{0}) \sU^* 
   = \dGb\left( S(q)\right) + E(q,\boldsymbol{0}), \qquad 
   E(q,\boldsymbol{0}) = \frac{1}{2} \tr \big(\ovl{S(q)-T}\big).
 \label{eq:diag_H(q,0)}
\end{align}
\end{Theorem}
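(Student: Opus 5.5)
The statement is a special case of Theorems \ref{sa} and \ref{bogo}, in the same way as Theorem \ref{diagonalization of HODPF}. I would therefore check conditions (B1)--(B6) for the data
\begin{align}
  \sH = L^2(\RR^d\times\ZZ_{d-1}), \qquad T=\hbar c|\bk|, \qquad \lambda_j = \frac{q^2}{m_0}, \qquad g_j \ (j=1,\cdots,d),
\end{align}
with $\lambda_n=0$ for $n>d$. With these choices $H(q,\boldsymbol{0})=\dGb(T)+\frac12\sum_j\lambda_j\PhiS(g_j)^2$, and the operator $W$ of \eqref{defW} is exactly $W(q)$.

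The routine conditions go as follows.
\begin{itemize}
  \item (B1) holds because multiplication by $\hbar c|\bk|$ is non-negative, self-adjoint and injective, since $\{\bk=0\}$ is a null set.
  \item (B2)--(B4) reduce to $T^{\pm1/2}g_j\in L^2$. By \eqref{eq:def_gj}, $T^{1/2}g_j$ is proportional to $e_j^{(\lambda)}\hat\rho$, and $T^{-1/2}g_j$ is proportional to $e_j^{(\lambda)}|\bk|^{-1}\hat\rho$. Both lie in $L^2$ by (PF2) and $|e_j^{(\lambda)}|\le1$. The sums are finite because only $d$ terms appear.
  \item (B5) holds because all $\lambda_j=q^2/m_0\ge0$ for real $q$. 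So $\one+\sum_j\lambda_j\ketbra{T^{-1/2}g_j}{T^{-1/2}g_j}\ge\one$, and we may take $\vep=1$. This is where real $q$ matters; there is no restriction like the one in the single pair model.
  \item (B6): take $J$ to be complex conjugation $(Jf)(\bk,\lambda)=\ovl{f(\bk,\lambda)}$. It commutes with multiplication by the real function $\hbar c|\bk|$. It fixes $g_j$ because $\hat\rho$ is real by (PF1) and the polarization vectors are real.
\end{itemize}

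Theorem \ref{sa} then gives self-adjointness on $\dom(\dGb(\hbar c|\bk|))$ and essential self-adjointness on any core. Theorem \ref{bogo} gives $\sU H(q,\boldsymbol{0})\sU^*=\dGb(S(q))+\frac12\tr(\ovl{S(q)-T})$, with $\sU$ built from $X,Y$ as in \eqref{defU*}.

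No step is a genuine obstacle. The point to get right is matching normalizations: the coefficient $q^2/(2m_0)$ must be written as $\frac12\lambda_j$ with $\lambda_j=q^2/m_0$, so that the resulting $W$ coincides with $W(q)$ as defined above. One should also note that $d\ge2$ ensures the polarization frame exists, so the $g_j$ are well defined.
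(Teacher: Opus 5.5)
Your proposal is correct and follows essentially the paper's route. The paper gives no separate argument here and simply cites \cite[Section 7.4]{MR4213757}; as in its proof of Theorem~\ref{diagonalization of HODPF}, the only model-specific input is that complex conjugation commutes with $T=\hbar c|\bk|$ and fixes the $g_j$ (because $\hat\rho$ and the polarization vectors are real), after which Theorems~\ref{sa} and~\ref{bogo} apply. Your explicit check of (B1)--(B6) is exactly that reduction, with (B5) automatic since $\lambda_j=q^2/m_0\ge 0$ for real $q$, and with $g_n=0$ taken for $n>d$.
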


By the theorem above, the (unnormalized) ground state $\til{\Phi}_{\mathrm{g}}(q,\boldsymbol{0})$
of $H(q,\boldsymbol{0})$ exists and is unique.  

\subsubsection{\texorpdfstring{The case $\bP=\boldsymbol{0}$}{The case P=0}}
We discuss the holomorphy in $q$ of $\til{\Phi}_{\mathrm{g}}(q,\boldsymbol{0})$ and $E(q,\boldsymbol{0})$.
The pair $(T,W)$ satisfies conditions (C1)--(C5) with $U = \CC$.
We now introduce the domain
\begin{align} \label{def of UTDPF}
  U \coloneqq  \{\, z \in \CC \mid \Re(z^{2}) > -Q^{2} \,\},
\end{align}
where $Q$ is defined in \eqref{def of Q}.
This is a domain bounded by the hyperbola $\Re(z^{2}) = -Q^{2}$ (see Figure \ref{fig:U_TDPF}).

\begin{Proposition}\label{prop:tdpf0cs}
Let $z$ be a complex number. 
Then $\Re( 1 +\ovl{T^{-1}W(z)T^{-1}}) >0$ holds if and only if $z\in U$.
That is, the pair $(T,W)$ satisfies conditions (C1)--(C6) with $U$ defined by \eqref{def of UTDPF}.
In particular, $S(z)$ can be defined for all $z \in U$.
\end{Proposition}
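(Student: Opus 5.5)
We compute $\ovl{T^{-1}W(z)T^{-1}}$ explicitly and reduce the accretivity condition to a scalar inequality, following the argument for the single pair interaction model. With $T=\hbar c|\bk|$ we get
\begin{align}
  \ovl{T^{-1}W(z)T^{-1}} = z^2 \sum_{j=1}^d \ketbra{\tau_j}{\tau_j},
  \qquad \tau_j \coloneqq \frac{1}{\sqrt{m_0}}\, T^{-1/2} g_j = (\hbar c|\bk|)^{-1/2}\til g_j .
\end{align}
Each $\tau_j$ lies in $L^2$ by (PF2), so this is a finite-rank operator. As in the proof for the harmonic case, the orthonormality of the polarization vectors and the spherical symmetry of $\hat\rho$ give $\inner{\tau_i}{\tau_j} = \norm{\tau_1}^2\delta_{ij} = Q^{-2}\delta_{ij}$. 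Hence $P\coloneqq Q^2\sum_j \ketbra{\tau_j}{\tau_j}$ is the orthogonal projection onto $V\coloneqq \mathrm{Span}\{\tau_1,\dots,\tau_d\}$, and
\begin{align}
  \ovl{T^{-1}W(z)T^{-1}} = \frac{z^2}{Q^2} P .
\end{align}

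It follows that
\begin{align}
  \Re\big(1+\ovl{T^{-1}W(z)T^{-1}}\big) = 1 + \frac{\Re(z^2)}{Q^2} P ,
\end{align}
and
\begin{align}
  \Re\biginner{f}{\big(1+\ovl{T^{-1}W(z)T^{-1}}\big)f}
  = \norm{(1-P)f}^2 + \Big(1+\frac{\Re(z^2)}{Q^2}\Big)\norm{Pf}^2 .
\end{align}

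For the direction ``$\Rightarrow$'': if $1+\ovl{T^{-1}W(z)T^{-1}}-\delta$ is accretive for some $\delta>0$, take $f=\tau_1/\norm{\tau_1}$. This gives $1+\Re(z^2)/Q^2\ge\delta>0$, i.e.\ $z\in U$.

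For the converse: if $z\in U$, put $\delta\coloneqq\min\{1,\,1+\Re(z^2)/Q^2\}>0$. The displayed identity then yields $\Re\inner{f}{(1+\ovl{T^{-1}W(z)T^{-1}})f}\ge\delta\norm{f}^2$, which is (C6).

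Conditions (C1)--(C5) hold on all of $\CC$, hence on $U$:
\begin{itemize}
\item $T>0$.
\item $W(z)$ is a polynomial in $z$ with finite-rank coefficients, since $T^{1/2}g_j\in L^2$ by (PF2).
\item $\operatorname{Ran}W(z)T^{-1}\subset\mathrm{Span}\{T^{1/2}g_j\}\subset\dom(T^{-1})$, because $T^{-1}T^{1/2}g_j=T^{-1/2}g_j\in L^2$.
\item The map $z\mapsto z^2Q^{-2}P$ is a trace-class-valued polynomial.
\end{itemize}
Finally, $U$ is open, being the preimage of an open half-line under the continuous map $z\mapsto\Re(z^2)$. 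Proposition~\ref{prop:m-acc} then gives that $S(z)$ is defined for all $z\in U$.

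The only point needing care is the orthogonality relation $\inner{\tau_i}{\tau_j}=\norm{\tau_1}^2\delta_{ij}$. It follows from $\sum_\lambda e^{(\lambda)}_i(\bk)e^{(\lambda)}_j(\bk)=\delta_{ij}-k_ik_j/|\bk|^2$ together with the rotational symmetry of $|\bk|^{-2}|\hat\rho|^2$, which makes $\int k_ik_j|\bk|^{-4}|\hat\rho|^2\,d\bk = d^{-1}\delta_{ij}\norm{|\bk|^{-1}\hat\rho}^2$. This is the same computation already carried out in the harmonic-potential case, so it can be cited directly.
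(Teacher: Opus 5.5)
Your proof is correct and follows essentially the same route as the paper. Both write $\overline{T^{-1}W(z)T^{-1}}$ as $z^2/m_0$ times a sum of rank-one operators built from the mutually orthogonal, equal-norm vectors $T^{-1/2}g_j$, which reduces (C6) to the scalar condition $1+\Re(z^2)Q^{-2}>0$; your projection formulation $z^2Q^{-2}P$ and your explicit checks of (C1)--(C5) and of the orthogonality relation fill in details the paper states without proof.
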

\begin{proof}
First, we note that
\begin{align}
  \ovl{T^{-1} W(z) T^{-1}}
  = \frac{z^{2}}{m_0} \sum_{j=1}^{d} \ketbra{T^{-1/2} g_j}{T^{-1/2} g_j},
  \qquad z \in \CC.
\end{align}
Since $\hat{\rho}$ is spherically symmetric, the properties of the polarization vectors
imply that the vectors $T^{-1/2} g_j$ form an orthogonal system, and
\begin{align}
  \inner{T^{-1/2} g_j}{T^{-1/2} g_\ell}
  = \delta_{j,\ell}\, \norm{T^{-1/2} g_j}^{2}.
\end{align}
Therefore, the condition
\begin{equation}
  \Re \big( 1 + \ovl{T^{-1} W(z) T^{-1}} \big) > 0
\end{equation}
is equivalent to
\begin{align}
  \frac{\Re(z^{2})}{m_0} \ketbra{T^{-1/2} g_j}{T^{-1/2} g_j} > -1,
  \qquad j = 1,\cdots,d.
\end{align}
This is of course equivalent to
\begin{align}
  \Re(z^{2})
    > - m_0 \norm{T^{-1/2} g_j}^{-2}
    = -Q^{2},
  \qquad j = 1,\cdots,d.
\end{align}
Thus, condition \textnormal{(C6)} holds.
\end{proof}

Based on the above proposition and the discussions in Section \ref{sect:GSE}, 
the ground state energy $E(q,\boldsymbol{0})$ can be extended to a holomorphic function on $U$. 
Moreover, by the discussions in Section \ref{sect:GS}, $K_1(q)$ can be analytically continued to a
holomorphic function on $U$, and therefore, the ground state $\Phigt(q,\boldsymbol{0})$ 
can also be extended to a holomorphic function on $U$.
Note that $\Phigt(z,\boldsymbol{0})$ is defined in \eqref{extGS} and 
\begin{align}
  E(z,\boldsymbol{0}) \coloneqq  \frac{1}{2} \tr \ovl{(S(z)-T)|_{\dom(T^2)}}.
\end{align}
Thus, we obtain the following theorem:
\begin{Theorem}\label{thm:tdpf}  
The ground state energy $E(q,\boldsymbol{0})$ and the ground state $\til{\Phi}_\mathrm{g}(q,\boldsymbol{0})$ 
of $H(q,\boldsymbol{0})$ can be extended as holomorphic functions over the complex domain $U$.
In particular, the Maclaurin series of $E(q,\boldsymbol{0})$ and $\til{\Phi}_\mathrm{g}(q,\boldsymbol{0})$ 
converge absolutely for $|q| < Q$.  
\end{Theorem}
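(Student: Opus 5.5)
The plan is to deduce Theorem \ref{thm:tdpf} directly from the general results of Sections \ref{sect:GSE} and \ref{sect:GS}, in the same way as Theorems \ref{thm:SPIM_GS_conv_rad} and \ref{thm:dpf}. Proposition \ref{prop:tdpf0cs} shows that the pair $(T,W)$, with $T=\hbar c|\bk|$ and $W(z)=\frac{z^2}{m_0}\sum_j\ketbra{T^{1/2}g_j}{T^{1/2}g_j}$, satisfies (C1)--(C6) on $U=\{\Re(z^2)>-Q^2\}$.

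First I would record (C1)--(C5) explicitly. $T$ is injective and non-negative. $W(z)$ is a finite sum of rank-one operators, since $T^{1/2}g_j\in L^2$ by (PF2). It depends polynomially on $z$, so it is $\cS_1$-holomorphic on all of $\CC$. Moreover, $\ovl{T^{-1}W(z)T^{-1}}=\frac{z^2}{m_0}\sum_j\ketbra{T^{-1/2}g_j}{T^{-1/2}g_j}$ with $T^{-1/2}g_j\in L^2$, again by (PF2), because $|\bk|^{-1}\hat\rho\in L^2$. This gives (C3)--(C5).

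Next I apply Theorem \ref{main1}. It gives that $z\mapsto\frac12\tr\ovl{(S(z)-T)|_{\dom(T^2)}}=E(z,\boldsymbol 0)$ is holomorphic on $U$. I then need to check that $E(z,\boldsymbol 0)$ agrees with the physical $E(q,\boldsymbol 0)$ of Theorem \ref{diagonalization of TDPF} for real $q$. For real $q$, $T^2+W(q)\ge0$ is self-adjoint, and the m-accretive square root coincides with the usual self-adjoint square root, so $S(z)|_{z=q}=S(q)$. The closure of $S(q)-T$ from $\dom(T^2)$ is the same trace class operator as $\ovl{S(q)-T}$, since both are bounded extensions from a common dense set.

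Similarly, Theorem \ref{main2} yields holomorphy of $\Phigt(z,\boldsymbol 0)=e^{-\frac12\Delta^*(K_1(z))}\Omega$ on $U$. For real $q$ I must check that $K_1(q)$ defined via $K_4(q)(1+K_4(q))^{-1}$ equals the operator $YX^{-1}$ from the Bogoliubov construction. This is Proposition \ref{prop:K1_K4}, so the extension restricts to the physical ground state.

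Finally, for the Maclaurin statement I observe that the disk $\{|z|<Q\}$ lies in $U$. If $|z|<Q$, then $\Re(z^2)\ge-|z|^2>-Q^2$. Since $U$ contains the disk about $0$ of radius $Q$, Cauchy's theorem for Banach-space-valued holomorphic functions gives absolute convergence of both Taylor series for $|q|<Q$.

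The only mildly delicate point is this identification of the analytic continuation with the physical objects on the real axis. It relies on uniqueness of m-accretive square roots and on the closure argument, and everything else is a direct citation.
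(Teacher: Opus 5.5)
Your proposal is correct and takes the same route as the paper: verify (C1)--(C6) on $U$ via Proposition \ref{prop:tdpf0cs}, invoke Theorems \ref{main1} and \ref{main2}, and note that $\{|z|<Q\}\subset U$. You also check two things the paper leaves implicit, and both checks are sound: that for real $q$ the m-accretive square root is the self-adjoint $S(q)$, and that $K_4(q)(1+K_4(q))^{-1}$ is the Bogoliubov $K_1(q)$ by Proposition \ref{prop:K1_K4}.
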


\begin{Remark}
Under the same physical conditions as in Example \ref{example:Q2} (i.e., $d=3$, with the ultraviolet cutoff taken as 
in \eqref{std uv} and $m_0=m-\delta m$), the region $U$ is as illustrated in Figure \ref{fig:U_TDPF}.
As will be shown below, $Q$ gives the actual radius of convergence for the ground state and the ground state energy.
Note that this is not expected to hold in the case of the $x^2$-potential (see Conjecture \ref{conj:x2}).
\end{Remark}

\begin{figure}[htbp]
    \centering
    \begin{tikzpicture}[>=stealth, scale=1.5]
        \def\Q{1.5}       
        \def\Xrange{3.5}  
        \def\Yrange{2.5} 
        \def\minusE{-0.084} 
        \def\offset{0.2}

        \begin{scope}
            \clip (-\Xrange-\offset, -\Yrange-\offset) rectangle (\Xrange+\offset, \Yrange+\offset);

            \fill[cyan!10] (-\Xrange-1, -\Yrange-1) rectangle (\Xrange+1, \Yrange+1);
            
            \fill[white] plot[domain=-\Xrange-1:\Xrange+1, samples=100] (\x, {sqrt(\x*\x + \Q*\Q)}) -- (\Xrange+1, \Yrange+5) -- (-\Xrange-1, \Yrange+5) -- cycle;
            \fill[white] plot[domain=-\Xrange-1:\Xrange+1, samples=100] (\x, {-sqrt(\x*\x + \Q*\Q)}) -- (\Xrange+1, -\Yrange-5) -- (-\Xrange-1, -\Yrange-5) -- cycle;

            \draw[thick, orange!70, dashed] (0,0) circle (\Q);

            \draw[dashed, thick, blue!60, domain=-\Xrange-1:\Xrange+1, samples=100] 
                plot (\x, {sqrt(\x*\x + \Q*\Q)});
            \draw[dashed, thick, blue!60, domain=-\Xrange-1:\Xrange+1, samples=100] 
                plot (\x, {-sqrt(\x*\x + \Q*\Q)});
        \end{scope}

        \draw[->] (-\Xrange-\offset, 0) -- (\Xrange+\offset, 0) node[right] {$\Re z$};
        \draw[->] (0, -\Yrange-\offset) -- (0, \Yrange+\offset) node[above] {$\Im z$};
        \node[below right, xshift=2pt] at (0,0) {$0$};

        \filldraw[orange!80] (\minusE, 0) circle (1.5pt);
        \filldraw (0, \Q) circle (1pt);
        \node[above right] at (0, \Q) {$iQ$};
        \filldraw (0, -\Q) circle (1pt);
        \node[below left] at (0, -\Q) {$-iQ$};
        
        \node[black, font=\large] at (\Xrange-0.7, 1.2) {$U$};
        \node[above left, black, xshift=-1pt, yshift=1pt] at (\minusE, 0) {$-e$};
    \end{tikzpicture}
\caption{The region $U$ defined by $\Re(z^2) > -Q^2$ for the translation-invariant Pauli--Fierz model. 
Here $Q \approx 17.9e$ under physical conditions similar to those in Example \ref{example:Q2}.}
\label{fig:U_TDPF}
\end{figure}

From the general argument developed in Section~\ref{sect:GSE}, we can obtain only the above result.
However, we note that the ground state energy actually admits an analytic continuation to a larger domain,
 as stated in the following theorem.
\begin{Theorem}\label{thm:20261229}
The ground state energy $E(q,\boldsymbol{0})$ can be holomorphically extended to 
\begin{align}
  \CC \setminus \{\, bi \mid b \in \RR,\ |b| \geq Q \}.    \label{def:hiroiU}
\end{align}
\end{Theorem}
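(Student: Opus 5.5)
The idea is to see $E(q,\boldsymbol{0})$ as a function of $w=q^2$ alone, find an explicit scalar integral formula for it, and continue that formula in $w$ to $\CC\setminus(-\infty,-Q^2]$. Composing with $z\mapsto z^2$ then gives the statement. This works because $z^2\in(-\infty,-Q^2]$ exactly when $z=bi$ with $|b|\ge Q$.

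\emph{Step 1 (reduction to decoupled rank-one problems).} Set $h_j\coloneqq m_0^{-1/2}T^{1/2}g_j$, so that $W=w\sum_j\ketbra{h_j}{h_j}$. By (PF1) and the polarization identity $\sum_\lambda e^{(\lambda)}_j e^{(\lambda)}_\ell=\delta_{j\ell}-k_jk_\ell/|\bk|^2$, I expect $\inner{\varphi(T)h_j}{\psi(T)h_\ell}=0$ for $j\neq\ell$ and all bounded Borel $\varphi,\psi$. The reason is that these inner products are integrals of a radial function against $\delta_{j\ell}-k_jk_\ell/|\bk|^2$; this is the same computation as in Proposition~\ref{prop:tdpf0cs}, now with radial weights inserted. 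In particular $\inner{h_j}{R_t(W R_t)^{m-1}h_j}$ splits over $j$. The $d$ summands are equal, and for each one $\norm{T^{-1}h_j}^2=Q^{-2}$. So in the expansion of Corollary~\ref{main1 cor1}, $\tr(R_t(WR_t)^m)=d\,w^m a(t)^{m-1}b(t)$, with $a(t)\coloneqq\norm{R_t^{1/2}h_1}^2$ and $b(t)\coloneqq\norm{R_th_1}^2$, exactly as in \eqref{spm taylor ex}. Summing the geometric series for small $|w|$ gives
\begin{equation}
 E = \frac{d}{\pi}\int_0^\infty \frac{w\, b(t)}{1+w\,a(t)}\, t^2\,dt ,
\end{equation}
valid for $|w|$ small.

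\emph{Step 2 (continuation in $w$).} The function $a$ is continuous and decreasing on $(0,\infty)$, with values in $(0,A]$, where $A\coloneqq\norm{T^{-1}h_1}^2=Q^{-2}$. Let $K\subset\CC\setminus(-\infty,-Q^2]$ be compact. The map $(w,s)\mapsto 1+ws$ is continuous on the compact set $K\times[0,A]$ and vanishes only when $w=-1/s\le -Q^2$, so $|1+w\,a(t)|\ge c_K>0$ uniformly there. I would take the integrable majorant $b(t)t^2\le\min\{t^{-2}\norm{h_1}^2,\ \norm{T^{-1}h_1}^2\}$, which is the same split at $t=1$ used in the proof of Theorem~\ref{main1}. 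Mattner's theorem (Theorem~\ref{thm:comp-diff}) then shows that the integral is holomorphic on $\CC\setminus(-\infty,-Q^2]$. This integral agrees with $w\mapsto E$ near $w=0$, and $E$ is holomorphic in $w$ on $\Re w>-Q^2$ by Theorem~\ref{main1} applied to the linear family $W(w)$. By the identity theorem, the integral therefore gives the required continuation.

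\emph{Step 3.} Composing with $z\mapsto z^2$ yields a holomorphic function on \eqref{def:hiroiU}. It coincides with $E(z,\boldsymbol{0})$ on $U$, so it is an extension of $E(q,\boldsymbol{0})$.

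The main obstacle is the exact orthogonality in Step 1: it must hold with arbitrary functions of $T$ inserted, not only for $T^{-1/2}g_j$. Without it the trace would not collapse to a scalar geometric series, and one would instead need a $d\times d$ Krein-type determinant. Should orthogonality fail, I would fall back on $\tr\bigl[(1+w\,G(t))^{-1}w\,B(t)\bigr]$, where $G(t)$ and $B(t)$ are the Gram matrices $\bigl(\inner{h_j}{R_th_\ell}\bigr)$ and $\bigl(\inner{h_j}{R_t^2h_\ell}\bigr)$, and I would locate the zeros of $\det(1+w\,G(t))$ using the fact that $0\le G(t)\le G(0^+)$.
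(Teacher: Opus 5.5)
Your proposal is correct and follows essentially the paper's route: orthogonality of the $d$ rank-one pieces reduces $E(q,\boldsymbol{0})$ to the scalar integral of Lemma~\ref{lem:E(q,0)_int}, and holomorphy then comes from bounding the denominator $1+z^2c_t$ (your $1+w\,a(t)$, with $0<a(t)\le Q^{-2}$) away from zero by compactness, plus dominated convergence or Mattner. The differences are only technical: you get the formula for small $|q|$ from the Neumann series of Corollary~\ref{main1 cor1} and extend it to all real $q$ via Theorem~\ref{main1} and the identity theorem, and you continue in $w=q^2$ before composing with $z\mapsto z^2$, whereas the paper inverts $1+\tilde{W}_t(q)$ exactly for every real $q$ and continues directly in $z$.
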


To prove this theorem, we use an integral representation of the ground state energy.
This representation was first obtained in \cite[Lemma~4.10]{MR1877237}.
While additional assumptions on $\hat\rho$ were originally imposed there (see also \cite[Theorem 3.32]{EB}),
one can readily check that the same representation holds under our present general setting.

\begin{Lemma}\label{lem:E(q,0)_int}
The ground state energy $E(q,\boldsymbol{0})$ is given by the following integral representation:
\begin{align}
  E(q, \boldsymbol{0}) = \frac{d}{\pi} \int_0^\infty
\frac{
\frac{d-1}{d}\frac{q^2\hbar^2}{m_0\ep_0}
 \bignorm{\frac{\hat\rho}{(\hbar c|\bk|)^2+t^2}}_{L^2(\RR^d)}^{2}
}{
1+\frac{d-1}{d}\frac{q^2\hbar^2}{m_0\ep_0}
 \bignorm{\frac{\hat\rho}{\sqrt{(\hbar c|\bk|)^2+t^2}} }_{L^2(\RR^d)}^{2}} \, t^2 dt .
\label{eq:E(q,0)_int}
\end{align}
\end{Lemma}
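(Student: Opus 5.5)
We start from the trace representation \eqref{eq:integ_repr_tr} established in the proof of Corollary \ref{main1 cor1}, specialized to $W(q)=\frac{q^2}{m_0}\sum_{j=1}^d\ketbra{T^{1/2}g_j}{T^{1/2}g_j}$ with $q\in\RR$:
\begin{equation}
  E(q,\boldsymbol{0}) = \frac{1}{\pi}\int_0^\infty \tr\Big[R_t^{1/2}(1+\til W_t)^{-1}R_t^{1/2}W R_t\Big]\,t^2dt .
\end{equation}
The plan is to compute the integrand in closed form, using the fact that $W$ has rank at most $d$.

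\textbf{Orthogonality reduction.} The key structural input, already used in Proposition \ref{prop:tdpf0cs}, is that spherical symmetry of $\hat\rho$ together with the orthonormality of $\{\bk/|\bk|,\be^{(\lambda)}(\bk)\}$ gives
\begin{equation}
  \inner{F(T)g_j}{F(T)g_\ell}=\delta_{j\ell}\norm{F(T)g_1}^2
\end{equation}
for every bounded Borel (multiplication) function $F$ of $|\bk|$. Indeed, the cross terms reduce to $\int \sum_\lambda e^{(\lambda)}_j e^{(\lambda)}_\ell |F|^2 |\bk|^{-1}\hat\rho^2\,d\bk=\int(\delta_{j\ell}-k_jk_\ell/|\bk|^2)(\cdots)\,d\bk$, and the angular average of $k_jk_\ell/|\bk|^2$ is $\delta_{j\ell}/d$. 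This gives the factor $\frac{d-1}{d}$. Applying this with $F=(T^2+t^2)^{-1/2}T^{1/2}$ shows that the vectors $h_j\coloneqq R_t^{1/2}T^{1/2}g_j$ are mutually orthogonal with a common norm $\norm{h_1}$. Hence
\begin{equation}
  \til W_t=\frac{q^2}{m_0}\sum_j\ketbra{h_j}{h_j}
\end{equation}
acts as multiplication by $\frac{q^2}{m_0}\norm{h_1}^2$ on $\mathrm{span}\{h_j\}$ and as $0$ on its complement. Therefore $(1+\til W_t)^{-1}h_j=(1+\frac{q^2}{m_0}\norm{h_1}^2)^{-1}h_j$.

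\textbf{Trace computation.} Using cyclicity,
\begin{equation}
  \tr[R_t^{1/2}(1+\til W_t)^{-1}R_t^{1/2}WR_t]=\frac{q^2}{m_0}\sum_j\inner{R_tT^{1/2}g_j}{R_t^{1/2}(1+\til W_t)^{-1}h_j},
\end{equation}
and the right-hand side equals
\begin{equation}
  \frac{d\,\frac{q^2}{m_0}\norm{R_tT^{1/2}g_1}^2}{1+\frac{q^2}{m_0}\norm{R_t^{1/2}T^{1/2}g_1}^2}.
\end{equation}
It remains to insert \eqref{eq:def_gj}. With $T=\hbar c|\bk|$ we have $T^{1/2}g_1=\sqrt{\hbar/\ep_0c}\,(\hbar c)^{1/2}e^{(\lambda)}_1\hat\rho$. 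The prefactor is then $\hbar^2/\ep_0$, and the polarization sum produces the factor $\frac{d-1}{d}$. This gives
\begin{equation}
  \norm{R_tT^{1/2}g_1}^2=\frac{d-1}{d}\frac{\hbar^2}{\ep_0}\bignorm{\hat\rho/((\hbar c|\bk|)^2+t^2)}^2,
\end{equation}
and likewise $\norm{R_t^{1/2}T^{1/2}g_1}^2$ with the square root in the denominator. Combining these with the factor $\frac1\pi$ yields \eqref{eq:E(q,0)_int}.

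\textbf{Main obstacle.} The only delicate point is the legitimacy of \eqref{eq:integ_repr_tr} and the interchange of trace and integral for every real $q$, not just small $q$. This is already covered: Proposition \ref{prop:tdpf0cs} puts all of $\RR$ inside $U$, and the proof of Theorem \ref{main1} gives absolute $\cS_1$-convergence of the Bochner integral, so the trace passes inside. One should also check that $E(q,\boldsymbol{0})$ from Theorem \ref{diagonalization of TDPF} coincides with $\frac12\tr\ovl{(S(q)-T)|_{\dom(T^2)}}$. This holds because both closures extend the same operator on the core $\dom(T^2)$.
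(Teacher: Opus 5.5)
Your proposal is correct and follows essentially the same route as the paper. It starts from \eqref{eq:integ_repr_tr}, uses the orthogonality and common norm of the vectors $R_t^{1/2}T^{1/2}g_j$ (from spherical symmetry and the polarization completeness relation) to invert $1+\til W_t$ explicitly, and evaluates the resulting norms, where the polarization sum gives the factor $\frac{d-1}{d}$. The only difference is cosmetic: the paper writes the inverse as $1-\sum_j (1+\norm{\tau_j(t)}^2)^{-1}\ketbra{\tau_j(t)}{\tau_j(t)}$ rather than letting it act on the eigenvectors $h_j$.
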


\begin{proof}
From \eqref{eq:diag_H(q,0)} and \eqref{eq:integ_repr_tr}, the ground state energy can be written as
\begin{align}
  E(q,\boldsymbol{0})
 = \frac{1}{\pi}\int_0^\infty \tr 
   \left[ R_t^{1/2} \big(1+\til{W}_t(q) \big)^{-1} R_t^{1/2}W(q)R_t \right]
   t^2\, d t.
\end{align}
Here, 
\begin{align}
  \til{W}_t(q) \coloneqq  R_t^{1/2}W(q)R_t^{1/2} 
             = \frac{q^2}{m_0} \sum_{j=1}^d \ketbra{R_t^{1/2}T^{1/2}g_j}{R_t^{1/2}T^{1/2}g_j}.
\end{align}
For notational simplicity, we set
\begin{align}
  \tau_j(t) \coloneqq  \frac{q}{ \sqrt{m_0} } R_t^{1/2}T^{1/2}g_j.
\end{align}
By the properties of the polarization vectors and the rotational symmetry of $\hat\rho$,
the family $\{ \tau_j(t) \}_{j=1}^d$ forms a mutually orthogonal system.
Consequently,
\begin{align}
  \big(1+\til{W}_t(q) \big)^{-1}
  = 1-\sum_{j=1}^d \frac{1}{ 1 + \norm{\tau_j(t)}^2}  \ketbra{ \tau_j(t) }{ \tau_j(t) }.
\end{align}
From this, we have
\begin{align}
&  \tr \left[ R_t^{1/2} \big(1+\til{W}_t(q) \big)^{-1} R_t^{1/2}W(q)R_t \right] \\
&  =   \tr \bigg[ R_t^{1/2} \bigg( 1-\sum_{j=1}^d \frac{1}{ 1 + \norm{\tau_j(t)}^2}
     \ketbra{ \tau_j(t) }{ \tau_j(t) } \bigg) 
     \sum_{\ell=1}^d \ketbra{ \tau_\ell(t)}{ \tau_\ell(t)}  R_t^{1/2} \bigg] \\
& = \sum_{\ell=1}^d \inner{R_t^{1/2}\tau_\ell(t)}{R_t^{1/2}\tau_\ell(t)}
    -\sum_{j=1}^d \frac{1}{1+\norm{\tau_j(t)}^2}\sum_{\ell=1}^d \inner{\tau_j(t)}{\tau_\ell(t)}
     \inner{R_t^{1/2}\tau_\ell(t)}{R_t^{1/2}\tau_j(t)}\\
& = \sum_{j=1}^d \norm{R_t^{1/2}\tau_j(t)}^2
    -\sum_{j=1}^d \frac{\norm{\tau_j(t)}^2}{1+\norm{\tau_j(t)}^2} 
     \norm{R_t^{1/2}\tau_j(t)}^2\\
& = \sum_{j=1}^d\frac{\norm{R_t^{1/2}\tau_j(t)}^2}{1+\norm{\tau_j(t)}^2}.
\end{align}
Since
\begin{align}
  \norm{\tau_j(t)}^2 
  & = \frac{q^2}{m_0} \norm{((\hbar c|\bk|)^2+t^2)^{-1/2} (\hbar c|\bk|)^{1/2} g_j}^2\\
  & = \frac{q^2}{m_0} \frac{\hbar}{\ep_0 c} \sum_{\lambda=1}^{d-1} \int_{\RR^d} 
      \frac{\hbar c|\bk| }{(\hbar c|\bk|)^2+t^2}  e_j^{(\lambda)}(\bk)^2 |\bk|^{-1} |\hat\rho(\bk)|^2 d\bk\\
  & =  \frac{d-1}{d}  \frac{q^2\hbar^2}{m_0\ep_0} \int_{\RR^d} 
      \frac{ |\hat\rho(\bk)|^2 }{(\hbar c|\bk|)^2+t^2}  d\bk, \\
 \norm{R_t^{1/2}\tau_j(t)}^2
  & =  \frac{d-1}{d}  \frac{q^2\hbar^2}{m_0\ep_0} \int_{\RR^d} 
      \frac{ |\hat\rho(\bk)|^2 }{((\hbar c|\bk|)^2+t^2)^2}  d\bk,
\end{align}
we obtain \eqref{eq:E(q,0)_int}.
\end{proof}

\begin{proof}[Proof of Theorem \ref{thm:20261229}]
We use Lemma \ref{lem:E(q,0)_int}.
Let $z_0=a+bi\,(a,b\in\RR)$ be a point in the domain \eqref{def:hiroiU}.
We show that the integral in \eqref{eq:E(q,0)_int} (with $q$ replaced by $z$) converges
 locally uniformly in a neighborhood of $z_0$.
To this end, we set
\begin{align}
  c_t\coloneqq  \frac{d-1}{d}\frac{\hbar^2}{m_0\ep_0}
  \bigg\| \frac{\hat\rho}{\sqrt{(\hbar c|\bk|)^2+t^2}} \bigg\|_{L^2(\RR^d)}^2. \label{def:c_t}
\end{align}
Then $0 \leq c_t \leq c_0 = Q^{-2}$ holds for all $t \geq 0$.
Define
\begin{align}
  \delta\coloneqq  \mathrm{dist}(z_0,\{bi \mid b\in\RR, |b|\geq Q\})>0.
\end{align}
Since $1 + z^2 c_t = 0$ implies $z \in \{bi \mid b \in \RR, |b| = c_t^{-1/2} \geq Q\}$, all zero points of the denominator lie on the
 set $\{\, bi \mid |b| \geq Q \}$.
Thus, by the compactness of the closed disk $\{z \in \CC \mid |z-z_0| \leq \delta/2\}$, there exists $\vep>0$ such that
\begin{align}
  |1+ z^2 c_t| \geq \vep \qquad (t>0)
\end{align}
for all $z$ satisfying $|z-z_0|\leq \delta/2$.

Consequently, the integrand in \eqref{eq:E(q,0)_int} is dominated uniformly in $z \in \{z\in\CC\mid |z-z_0| \leq \delta/2\,\}$ by
\begin{align}
  \frac{|z|^2}{\vep} \frac{d-1}{d}\frac{\hbar^2}{m_0\ep_0}
  \left\| \frac{\hat\rho}{(\hbar c|\bk|)^2+t^2} \right\|_{L^2(\RR^d)}^{2} t^2.
\end{align}
Integrating this bound with respect to $t$ yields
\begin{align}
  \int_0^\infty \left\| \frac{\hat\rho}{(\hbar c|\bk|)^2+t^2} \right\|_{L^2(\RR^d)}^{2} t^2 dt
  = \frac{\pi}{4\hbar c} \left\| |\bk|^{-1/2} \hat\rho \right\|_{L^2(\RR^d)}^2 < \infty.
\end{align}
By Lebesgue's dominated convergence theorem, $E(z,\boldsymbol{0})$ defines a holomorphic function on the domain \eqref{def:hiroiU}.
\end{proof}

\begin{Theorem}
The radii of convergence of the Maclaurin series of the ground state $\Phigt(q,\boldsymbol{0})$ 
and its energy $E(q,\boldsymbol{0})$ are $Q$.
\end{Theorem}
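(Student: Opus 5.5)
The plan is to prove two inequalities. The lower bound is immediate. By Theorem~\ref{thm:tdpf}, $E(z,\boldsymbol{0})$ and $\Phigt(z,\boldsymbol{0})$ are holomorphic on $U=\{\Re(z^2)>-Q^2\}$. This set contains the open disk $|z|<Q$, so both radii of convergence are at least $Q$. The real work is the upper bound, which I will prove first for the energy and then transfer to the ground state, following the pattern of Theorems~\ref{thm:SPIM_GS_conv_rad} and~\ref{thm:conv_rad_gs_single}.

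\emph{Energy.} I will use Lemma~\ref{lem:E(q,0)_int}, writing the integrand as $q^2a_t/(1+q^2c_t)\,t^2$. Here $c_t$ is as in \eqref{def:c_t}, and
\begin{equation}
  a_t \coloneqq \frac{d-1}{d}\frac{\hbar^2}{m_0\ep_0}\bignorm{\frac{\hat\rho}{(\hbar c|\bk|)^2+t^2}}^2_{L^2(\RR^d)}\geq 0 .
\end{equation}

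\begin{enumerate}
\item For $|q|<Q$ we have $|q^2c_t|\leq |q|^2Q^{-2}<1$. Expanding geometrically gives
\begin{equation}
  E(q,\boldsymbol{0})=\frac{d}{\pi}\sum_{m=1}^\infty (-1)^{m-1}q^{2m}\int_0^\infty a_t c_t^{m-1}\,t^2dt .
\end{equation}
\item The exchange of sum and integral is justified because the terms are dominated by $\sum_m (|q|^2/Q^2)^{m-1}|q|^2a_t t^2$. The function $a_t t^2$ is integrable, by the computation in the proof of Theorem~\ref{thm:20261229}.
\item This series is therefore the Maclaurin series, and only even powers appear.
\item Suppose the radius is $r>Q$. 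Take $q_1=Q(1+\vep)^{1/2}<r$. The Maclaurin series then converges absolutely at $q_1$.
\item Since $c_t\uparrow c_0=Q^{-2}$ as $t\downarrow 0$ (monotone convergence, using (PF2)), we can pick $t_0>0$ with $(1+\vep)Q^2c_{t_0}>1$.
\item Because $c_t$ is decreasing in $t$, we have $c_t\geq c_{t_0}$ on $(0,t_0)$. Hence the $m$-th term at $q_1$ is bounded below by
\begin{equation}
  \big((1+\vep)Q^2c_{t_0}\big)^{m-1}\,Q^2(1+\vep)\int_0^{t_0}a_t t^2dt .
\end{equation}
\item The integral $\int_0^{t_0}a_t t^2\,dt$ is strictly positive, since $\hat\rho\neq0$. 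So the terms do not tend to zero, which contradicts absolute convergence. Therefore the radius of $E(q,\boldsymbol{0})$ equals $Q$.
\end{enumerate}

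\emph{Ground state.} Suppose the Maclaurin series of $\Phigt(q,\boldsymbol{0})$ had radius $r>Q$. For real $q$, the vacuum satisfies $\Omega\in\dom(\dGb(T))$ with $\dGb(T)\Omega=0$, and $\Phigt$ is normalized so that $\inner{\Omega}{\Phigt}=1$. Therefore
\begin{equation}
  E(q,\boldsymbol{0})=\inner{\Omega}{H(q,\boldsymbol{0})\Phigt(q,\boldsymbol{0})}=\frac{q^2}{2m_0}\sum_{j=1}^d\inner{\PhiS(g_j)^2\Omega}{\Phigt(q,\boldsymbol{0})} .
\end{equation}
The right-hand side extends holomorphically to the disk $|q|<r$. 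This would give the energy a radius larger than $Q$, contradicting the previous step.

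I expect the main obstacle to be the rigorous identification in steps 1--3 of the termwise-expanded integral as the Maclaurin series. This can be handled by the uniform domination noted in step 2, or alternatively by Corollary~\ref{main1 cor1}. The monotonicity of $c_t$ in step 6 is elementary, so I do not expect difficulty there.
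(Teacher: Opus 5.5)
Your proof is correct and follows the same route as the paper. The lower bound comes from holomorphy on $U\supset\{|z|<Q\}$. The upper bound for the energy comes from the termwise Maclaurin expansion of Lemma~\ref{lem:E(q,0)_int}, together with the bound $c_t\geq c_{t_0}$ on $(0,t_0)$, which forces divergence beyond $Q$. The transfer to $\Phigt(q,\boldsymbol{0})$ uses $E(q,\boldsymbol{0})=\frac{q^2}{2m_0}\sum_j\inner{\PhiS(g_j)^2\Omega}{\Phigt(q,\boldsymbol{0})}$, as in Theorem~\ref{thm:conv_rad_gs_single}.

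The differences are only cosmetic. The paper evaluates the series at $q=i(1+\vep)Q$, where it becomes a divergent series of positive terms, whereas you use absolute convergence at a real point; the absolute values of the terms coincide. Your step-6 lower bound also drops the positive factor $d/\pi$, which does not affect the argument.
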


\begin{proof}
We first show that the radius of convergence of the Maclaurin series for $E(q,\boldsymbol{0})$ is $Q$.
By using Lemma \ref{lem:E(q,0)_int}, the Maclaurin series of $E(q,\boldsymbol{0})$ can be written,
using $c_t$ defined in \eqref{def:c_t}, as
\begin{align}
  \sum_{\ell=0}^\infty (-1)^\ell q^{2\ell+2} 
   \bigg\{ \frac{d}{\pi} \int_0^\infty
   \frac{d-1}{d} \frac{\hbar^2}{m_0\ep_0}
  \Bignorm{\frac{\hat\rho}{(\hbar c|\bk|)^2+t^2}}_{L^2(\RR^d)}^{2}
  c_t^\ell \, t^2dt \bigg\}.  \label{eq:EPF_div0}
\end{align}
For any $\vep>0$, set $q=i(1+\vep)Q$.
Then $(-1)$ times the above series is equal to
\begin{equation}
  \sum_{\ell =0}^\infty (Q(1+\vep))^{2\ell +2} 
  \frac{d}{\pi} \int_0^\infty
  \frac{d-1}{d}\frac{\hbar^2}{m_0\ep_0}
  \Bignorm{\frac{\hat\rho}{(\hbar c|\bk|)^2+t^2}}_{L^2(\RR^d)}^{2} c_t^\ell 
  \, t^2 dt.   \label{eq:EPF_div1}
\end{equation}
It is enough to show that this diverges to $+\infty$.

Since $\lim_{t\to +0} c_t = c_0 = Q^{-2}$, there exists $t_0>0$ such that
\begin{equation}
  (1+\vep)^2 c_{t_0}> Q^{-2}.
\end{equation}
Since $t \mapsto c_t$ is monotonically decreasing, we have $c_t \geq c_{t_0}$ for all $t \in [0, t_0]$.
Then
\begin{align}
  \eqref{eq:EPF_div1} 
& \geq \sum_{\ell =0}^\infty (Q(1+\vep))^{2\ell +2} \frac{d}{\pi} \int_0^{t_0}
  \frac{d-1}{d}\frac{\hbar^2}{m_0\ep_0}
  \biggnorm{\frac{\hat\rho}{(\hbar c|\bk|)^2+t^2}}_{L^2(\RR^d)}^{2}
  c_t^\ell  \, t^2 dt  \\
& \geq \sum_{\ell =0}^\infty (Q(1+\vep))^{2\ell +2} \frac{d}{\pi} \int_0^{t_0}
\frac{d-1}{d}\frac{\hbar^2}{m_0\ep_0}
 \biggnorm{\frac{\hat\rho}{(\hbar c|\bk|)^2+t^2}}_{L^2(\RR^d)}^{2}
 c_{t_0}^\ell     \,t^2dt   \\
& \geq   \sum_{\ell =0}^\infty (Q(1+\vep))^{2\ell +2} \frac{d}{\pi} \int_0^{t_0}
\frac{d-1}{d}\frac{\hbar^2}{m_0\ep_0}
 \biggnorm{\frac{\hat\rho}{(\hbar c|\bk|)^2+t^2}}_{L^2(\RR^d)}^{2}
 \left(\frac{Q^{-2}}{(1+\vep)^2}\right)^\ell     \,t^2dt   \\
& \geq   \sum_{\ell =0}^\infty (Q(1+\vep))^{2} \frac{d}{\pi} \int_0^{t_0}
\frac{d-1}{d}\frac{\hbar^2}{m_0\ep_0}
 \biggnorm{\frac{\hat\rho}{(\hbar c|\bk|)^2+t^2}}_{L^2(\RR^d)}^{2} \,t^2dt =+\infty.
\end{align}
Hence the radius of convergence of the Maclaurin series of 
$E(q,\boldsymbol{0})$ is equal to $Q$.

In the same way as in the proof of Theorem \ref{thm:conv_rad_gs_single}, 
one can show that the Maclaurin series of 
$\til{\Phi}_\mathrm{g}(q,\boldsymbol{0})$ has a radius of convergence $Q$.
\end{proof}

\subsubsection{\texorpdfstring{The case $\bP\neq\boldsymbol{0}$}{The case P \textne 0}}
We discuss the holomorphy in $q$ of $\til{\Phi}_{\mathrm{g}}(q,\bP)$ and $E(q,\bP)$ for $\bP\neq\boldsymbol{0}$.  
By the unitary operator $\sU$ from Theorem \ref{diagonalization of TDPF}, 
the Hamiltonian is transformed as
\begin{equation}
 \sU H(q,\bP) \sU^* 
   = \dGb(S(q)) - \frac{q}{m_0} 
      \sum_{j=1}^d \PhiS(S(q)^{-1/2} T^{1/2} P_jg_j) + E(q,\boldsymbol{0})+\frac{\bP^2}{2m_0},  \label{eq:H(q,bP)}
\end{equation}
which is of van Hove--Miyatake type, whose properties have been studied in detail 
using Weyl operators (see \cite[Chapter 13]{AraiBook2}).
Moreover, from \cite[Theorem 7.6]{MR4213757}, the lowest energy $E(q,\bP)$ is expressed as 
\begin{equation}
  E(q,\bP) = \frac{\bP^2}{2m_0(1+q^2Q^{-2})} + E(q,\boldsymbol{0}).   \label{eq:E(q,bP)}
\end{equation}
Hence, by Theorem \ref{thm:20261229}, the following statement holds immediately.

\begin{Theorem}
Let $\bP\in\RR^d\setminus \{\boldsymbol{0}\}$. 
The lowest energy $E(q,\bP)$ can be extended as a holomorphic function on the complex domain
\begin{equation}
  \CC \setminus \{ bi \mid b\in\RR, \; |b|\geq Q \}.
\end{equation}
\end{Theorem}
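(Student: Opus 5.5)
The plan is to use the explicit formula \eqref{eq:E(q,bP)},
\begin{equation}
  E(q,\bP) = \frac{\bP^2}{2m_0(1+q^2Q^{-2})} + E(q,\boldsymbol{0}),
\end{equation}
and to handle its two summands separately on the domain $D \coloneqq \CC\setminus\{bi \mid b\in\RR,\ |b|\geq Q\}$.

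For the second summand, Theorem~\ref{thm:20261229} already states that $E(q,\boldsymbol{0})$ extends holomorphically to $D$. For the first summand, I will replace $q$ by $z$ and consider $z\mapsto \bP^2/(2m_0(1+z^2Q^{-2}))$. This is a rational function of $z$. Its only singularities are the zeros of $1+z^2Q^{-2}$, namely $z=\pm iQ$, and both of these lie in the removed set $\{bi\mid |b|\geq Q\}$. Hence it is holomorphic on $D$.

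The sum of the two extensions is then a holomorphic function on $D$ that agrees with $E(q,\bP)$ for real $q$. The set $D$ is open and connected, since it is the plane with two rays on the imaginary axis removed and it contains $\RR$. So this extension is the holomorphic continuation asserted in the statement, and it is unique by the identity theorem.

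The only step needing care is that \eqref{eq:E(q,bP)} is stated for real $q$ and is being used as the definition of the continuation. This is legitimate because both terms are real-analytic in $q\in\RR$ and each extends holomorphically to $D$. No substantive obstacle remains beyond what Theorem~\ref{thm:20261229} already provides.
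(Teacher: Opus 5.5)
Your proposal is correct and is exactly the paper's argument: the paper writes $E(q,\bP)$ via the formula $\frac{\bP^2}{2m_0(1+q^2Q^{-2})}+E(q,\boldsymbol{0})$ and concludes immediately from Theorem~\ref{thm:20261229}, since the first term is holomorphic away from $z=\pm iQ$, which lie in the removed set. Your added remarks on the connectedness of the domain and uniqueness via the identity theorem are accurate refinements of the same one-line proof.
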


\begin{Theorem}\label{thm:conv_rad_PF_P}
Let $\bP\in\RR^d \setminus \{\boldsymbol{0}\} $. 
The radius of convergence of the Maclaurin series of $E(q,\bP)$ is $Q$.
\end{Theorem}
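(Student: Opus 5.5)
We plan to use the decomposition \eqref{eq:E(q,bP)}, namely
\begin{equation}
  E(q,\bP) = \frac{\bP^2}{2m_0}\,\frac{Q^2}{Q^2+q^2} + E(q,\boldsymbol{0}).
\end{equation}
Both terms are holomorphic on the domain of Theorem \ref{thm:20261229}, and this domain contains the open disk $\{|q|<Q\}$. Hence the radius of convergence of the Maclaurin series of $E(q,\bP)$ is at least $Q$. The task is therefore to show that it is not strictly larger than $Q$.

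\textbf{Step 1: the first term.} The first term $f(q)\coloneqq \frac{\bP^2}{2m_0}\frac{Q^2}{Q^2+q^2}$ is a rational function with simple poles at $q=\pm iQ$. Since $\bP\neq\boldsymbol{0}$, the residue there is nonzero, and its Maclaurin series has radius exactly $Q$. Explicitly, the series is $\frac{\bP^2}{2m_0}\sum_{\ell\ge0}(-1)^\ell (q/Q)^{2\ell}$.

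\textbf{Step 2: ruling out cancellation.} Suppose the series of $E(q,\bP)$ had radius $r>Q$. The point is that a cancellation between the two terms cannot occur, because both series are alternating in sign in the same way after substituting $q=iy$.

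Take $q=iy$ with $Q<y<r$. From \eqref{eq:EPF_div0}, the coefficient of $q^{2\ell+2}$ in $E(q,\boldsymbol{0})$ is $(-1)^\ell a_\ell$ with $a_\ell\ge0$. After substituting $q=iy$, the $\ell$-th term becomes $(-1)^\ell (i y)^{2\ell+2} a_\ell = -y^{2\ell+2}a_\ell$. Similarly, the terms of the series of $f(iy)$ are $\frac{\bP^2}{2m_0}(y/Q)^{2\ell}$.

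Since $E(q,\bP)-E(q,\boldsymbol 0)=f(q)$ and both Maclaurin expansions are formal sums, the coefficient of $q^{2\ell+2}$ in $E(q,\bP)$ equals $(-1)^\ell a_\ell + (-1)^{\ell+1}\frac{\bP^2}{2m_0}Q^{-2\ell-2}$. Both contributions carry the sign $(-1)^\ell$ with respect to $-1$; that is, at $q=iy$ every term of the series for $-E(iy,\bP)+\text{const}$ equals $y^{2\ell+2}a_\ell+\frac{\bP^2}{2m_0}(y/Q)^{2\ell+2}$, up to the constant term. All terms are nonnegative, and the second summand already gives a divergent geometric series because $y>Q$. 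This contradicts absolute convergence at $|q|=y<r$. Hence the radius equals $Q$.

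\textbf{Main obstacle.} The main obstacle is the careful sign bookkeeping in Step 2. One must check that the sign pattern of the coefficients of $E(q,\boldsymbol0)$ given by \eqref{eq:EPF_div0} agrees with that of $-\frac{\bP^2}{2m_0}\sum(-1)^{\ell+1}(q/Q)^{2\ell+2}$, so that no cancellation occurs.

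If that bookkeeping turned out to fail, a simpler fallback avoids it entirely. We already know $E(q,\boldsymbol 0)$ has radius exactly $Q$ but is holomorphic near $\pm iQ$ only off the cut. A cleaner alternative is to note that $E(iy,\boldsymbol 0)$ stays bounded as $y\uparrow Q$: the integrand in \eqref{eq:E(q,0)_int} has denominator $1-y^2c_t\ge 1-c_t/c_0\ge0$, so one needs only that the integral remains finite, or at least that it is $o\bigl((Q-y)^{-1}\bigr)$. Meanwhile $f(iy)\sim \frac{\bP^2}{4m_0}\frac{Q}{Q-y}$ blows up like a pole. A holomorphic extension of $E(\cdot,\bP)$ to a disk of radius $r>Q$ would make $E(iy,\bP)$ bounded near $y=Q$, which contradicts this growth comparison. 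We would use whichever route is cleaner, and we expect the sign argument to suffice.
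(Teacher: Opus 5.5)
\textbf{There is a genuine gap: your main argument rests on a sign error, and your fallback leaves its key estimate unproved.}

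Your own intermediate computation shows the problem with Step 2. At $q=iy$ the terms coming from $E(iy,\boldsymbol 0)$ are $-a_\ell y^{2\ell+2}\le 0$, while those coming from $f(iy)$ are $+\frac{\bP^2}{2m_0}(y/Q)^{2\ell}\ge 0$. Equivalently, the coefficient of $q^{2\ell+2}$ in $E(q,\bP)$ is $(-1)^\ell\bigl(a_\ell-\frac{\bP^2}{2m_0}Q^{-2\ell-2}\bigr)$. So the two contributions have \emph{opposite} signs, and the terms of $-E(iy,\bP)$ are $a_\ell y^{2\ell+2}-\frac{\bP^2}{2m_0}(y/Q)^{2\ell+2}$, not sums of nonnegative quantities.

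Positivity therefore cannot exclude cancellation. Cancellation is a real a priori possibility, because $E(\cdot,\boldsymbol 0)$ is itself singular at $\pm iQ$. What you need is a quantitative statement that this singularity is strictly weaker than the pole of $f$. The paper gets it from the coefficients. Set $\beta_t\coloneqq\frac{d-1}{d}\frac{\hbar^2}{m_0\ep_0}\|\hat\rho/((\hbar c|\bk|)^2+t^2)\|^2_{L^2(\RR^d)}$, so that $a_\ell=\frac{d}{\pi}\int_0^\infty\beta_t c_t^\ell\,t^2dt$. Then $a_\ell Q^{2\ell+2}\to 0$ by dominated convergence, since $c_tQ^2<1$ for $t>0$ and $\beta_t t^2$ is integrable. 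Hence $Q^{2\ell+2}\,\bigl|a_\ell-\frac{\bP^2}{2m_0}Q^{-2\ell-2}\bigr|\to\frac{\bP^2}{2m_0}\neq 0$, and the terms of the series cannot tend to $0$ at $q=i(1+\vep)Q$.

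Your fallback, comparing $f(iy)\sim\frac{\bP^2}{4m_0}\frac{Q}{Q-y}$ with $E(iy,\boldsymbol 0)$ as $y\uparrow Q$, is a legitimate alternative route, but as written it has the same gap. The bound $1-y^2c_t\ge 1-c_t/c_0\ge 0$ only fixes the sign of $E(iy,\boldsymbol 0)$. It says nothing about its size, because this lower bound vanishes at $t=0$ when $y=Q$. You need the numerator to vanish at least as fast, and it does:
\begin{itemize}
\item From $((\hbar c|\bk|)^2+t^2)^{-2}\le(\hbar c|\bk|)^{-2}((\hbar c|\bk|)^2+t^2)^{-1}$ you get $t^2\beta_t\le c_0-c_t$.
\item Hence $\frac{y^2\beta_t t^2}{1-y^2c_t}\le\frac{\beta_t t^2}{c_0-c_t}\le 1$ for all $t>0$ and $0\le y\le Q$.
\item For $t\ge 1$ you also have $1-y^2c_t\ge 1-Q^2c_1>0$, and $\beta_t t^2$ is integrable.
\end{itemize}
Together these give $\sup_{0\le y<Q}|E(iy,\boldsymbol 0)|<\infty$, and then your contradiction goes through. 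Alternatively, $a_\ell Q^{2\ell+2}\to 0$ gives $E(iy,\boldsymbol 0)=o((Q-y)^{-1})$ directly.

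With this estimate supplied, your boundary-behaviour argument is a valid proof. The paper's coefficient argument is shorter and needs only dominated convergence.
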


\begin{proof}
By using Lemma \ref{lem:E(q,0)_int}, the Maclaurin series of $E(q,\bP)$ can be written,
using $c_t$ defined in \eqref{def:c_t}, as
\begin{align}
&  \frac{\bP^2}{2m_0} + \sum_{\ell=0}^\infty (-1)^\ell \left(\frac{q}{Q}\right)^{2\ell+2}
   \bigg\{ - \frac{\bP^2}{2m_0}  \\
& + \frac{d}{\pi} \int_0^\infty
    \frac{d-1}{d}\frac{\hbar^2}{m_0\ep_0}
    \Bignorm{\frac{\hat\rho}{(\hbar c|\bk|)^2+t^2}}_{L^2(\RR^d)}^2
    c_t^\ell Q^{2\ell+2} \,t^2dt \bigg\}.  \label{eq:EPF_div2}
\end{align}
For the series \eqref{eq:EPF_div2} to converge at $q=i(1+\vep)Q$,
it is necessary that
\begin{align}
  \lim_{\ell \to\infty}
   \frac{d}{\pi} \int_0^\infty
   \frac{d-1}{d}\frac{\hbar^2}{m_0\ep_0}
  \Bignorm{\frac{\hat\rho}{(\hbar c|\bk|)^2+t^2}}_{L^2(\RR^d)}^{2}
  c_t^\ell Q^{2\ell +2} \,t^2dt  = \frac{\bP^2}{2m_0} \neq 0.  \label{eq:EPF_div3}
\end{align}
However, for every $t>0$, we have $(c_t Q^2)^\ell \to 0$ as $\ell\to\infty$.
By the dominated convergence theorem, the left-hand side converges to $0$.
Therefore \eqref{eq:EPF_div3} cannot hold.
\end{proof}

As shown in \cite[Theorem 7.6]{MR4213757}, 
 the necessary and sufficient condition for the Hamiltonian $H(q,\bP) \; (\bP\neq \boldsymbol{0})$
 to have a ground state is the infrared regularity condition
\begin{equation}
 |\bk|^{-3/2}\hat\rho\in L^2(\RR^d),
\end{equation}
equivalently $g_j\in\dom(T^{-1})$ for $j=1,\cdots,d$, under which 
$H(q,\bP)$ has a unique ground state
\begin{align}
  \sU^* \exp \Big[ 
  -i \frac{q}{m_0}  \PhiS\big(i S(q)^{-3/2} T^{1/2} \bP \cdot \boldsymbol{g}\big)
  \Big] \Omega.
\end{align}
Note also that $\dom(S(q)^{-3/2})=\dom(T^{-3/2})$ holds\,(\cite[Lemma B1]{MR4213757}).

Recalling that $\sU^* A(f)\sU = B(f)$, the above expression can be written as
\begin{align}
& \exp \Big[ 
  -i \frac{q}{m_0}  \sU^* \PhiS\big(i S(q)^{-3/2} T^{1/2} \bP \cdot \boldsymbol{g} \big) \sU\Big]
   \sU^* \Omega \\
&  = \exp \Big[ 
  -i \frac{q}{m_0} \frac{1}{\sqrt{2}} \ovl{\big( B(i S(q)^{-3/2} T^{1/2} \bP \cdot \boldsymbol{g})+
   B^*(i S(q)^{-3/2} T^{1/2} \bP \cdot \boldsymbol{g})  \big)} \Big]
   \sU^* \Omega\\
& = \exp \Big[ 
  -i \frac{q}{m_0} 
   \PhiS\big( T^{-1/2}S(q)^{1/2} i S(q)^{-3/2} T^{1/2} \bP \cdot \boldsymbol{g}     \big) \Big]
   \sU^* \Omega \\
& = \exp \Big[ 
    -i \frac{q}{m_0} 
    \PhiS\big( iT^{-1/2} S(q)^{-1} T^{1/2} \bP \cdot \boldsymbol{g}     \big) \Big]
    \sU^* \Omega.  \label{GSDPF001}
\end{align}
We set
\begin{align}
 G(q) \coloneqq  \frac{1}{\sqrt{2}} \frac{q}{m_0} T^{-1/2} S(q)^{-1} T^{1/2} \bP \cdot \boldsymbol{g}.
\end{align}
Recalling that $\Phigt(q,\boldsymbol{0})$ is obtained from $\sU^* \Omega$
by removing the normalization constant, the corresponding ground state
of $H(q,\bP)$ is
\begin{equation}
  c(q) \exp\big(-i\sqrt{2}\PhiS(iG(q))\big) \Phigt(q,\boldsymbol{0}),
\end{equation}
where the constant $c(q)$ will be fixed so that its inner product with $\Omega$ equals $1$
for the perturbation expansion.

Using the properties of the coherent vector (\cite[Theorem 5.26]{AraiBook2}) and 
Theorem \ref{thm:1_2_inner} with 
$\Phigt(q,\boldsymbol{0}) = e^{-\frac{1}{2}\Delta^*(K_1(q))}\Omega$,
we have
\begin{align}
  \biginner{\Omega}{\exp(-i\sqrt{2}\PhiS(iG(q))) \Phigt(q,\boldsymbol{0})}
& = \biginner{\exp(i\sqrt{2}\PhiS(iG(q))) \Omega}{\Phigt(q,\boldsymbol{0})} \\
& = e^{-\norm{G(q)}^2/2} \biginner{ \exp(-A^*(G(q))) \Omega}{\Phigt(q,\boldsymbol{0})} \\
& = e^{-\norm{G(q)}^2/2} e^{-\inner{G(q)}{K_1(q)G(q)}/2},
\end{align}
where we used the fact that $JG(q)=G(q)$. 
Then, the vacuum-normalized ground state of $H(q,\bP)$ is
\begin{align}
  \Phigt(q,\bP) 
  &\coloneqq  e^{\norm{G(q)}^2/2} e^{\inner{G(q)}{K_1(q)G(q)}/2} 
      \exp\big( A^*(G(q))-A(G(q)) \big) e^{-\frac{1}{2}\Delta^*(K_1(q))} \Omega.
\end{align}

\begin{Theorem}\label{thm:dpf_holo}
Suppose that $\bP\neq \boldsymbol{0}$ and $|\bk|^{-3/2}\hat\rho\in L^2(\RR^d)$.
 Then the ground state $\til{\Phi}_\mathrm{g}(q,\bP)$ of $H(q,\bP)$ can be extended
 as a holomorphic function on the complex domain $U$ defined by \eqref{def of UTDPF}.
In particular, the Maclaurin series for $\til{\Phi}_\mathrm{g}(q,\bP)$ converges absolutely for $|q| < Q$.
\end{Theorem}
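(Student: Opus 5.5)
We write the vacuum-normalized ground state as a product of three factors,
\begin{equation}
 \Phigt(q,\bP) = e^{\norm{G(q)}^2/2 + \inner{G(q)}{K_1(q)G(q)}/2}\, e^{A^*(G(q))} e^{-A(G(q))}\, e^{-\frac{1}{2}\Delta^*(K_1(q))}\Omega .
\end{equation}
This uses the Weyl-operator normal ordering $\exp(A^*(G)-A(G)) = e^{-\norm{G}^2/2}e^{A^*(G)}e^{-A(G)}$. In the analytic continuation every occurrence of $\norm{G}^2=\inner{G}{G}$ must be replaced by the bilinear expression $\inner{JG(\bar z)}{G(z)}$, and likewise for $\inner{G}{K_1G}$. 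Since $JG(q)=G(q)$ for real $q$, the natural continuation is to replace $\inner{G(q)}{\cdot}$ by the bilinear pairing with $G(z)$, which is linear in $z$.

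The plan has the following steps.

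\textbf{Step 1: holomorphy of $G$.} We show that $z\mapsto G(z)\in\sH$ is holomorphic on $U$. We write $T^{-1/2}S(z)^{-1}T^{1/2}g_j = \ovl{T^{-1/2}S(z)^{-1}T^{1/2}}\,T^{-1}g_j$. The relation $\ovl{T^{-1/2}S(z)T^{-1/2}} = 1+2K_4(z)$, combined with Theorems \ref{thm:K4_1}, \ref{hs of K1} and \ref{thm:holo_K1}, shows that $1+2K_4(z)$ is $\cB(\sH)$-holomorphic with bounded inverse. Formally, $T^{1/2}S(z)^{-1}T^{1/2} = (1+2K_4(z))^{-1}$. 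We therefore aim to show
\begin{equation}
 T^{-1/2}S(z)^{-1}T^{1/2}\, T^{1/2}h = T^{-1/2}\cdot T^{1/2}(1+2K_4(z))^{-1}\cdot\ldots ,
\end{equation}
more concretely $T^{-1/2}S(z)^{-1}T^{1/2}g_j = T^{-1}\cdot T^{1/2}(1+2K_4)^{-1}T^{1/2}\cdots$. This is the delicate point. Instead we use the infrared condition $g_j\in\dom(T^{-1})$ and write
\begin{equation}
 T^{-1/2}S^{-1}T^{1/2}g_j = \big(T^{-1/2}S^{-1}T^{-1/2}\big)\,T g_j .
\end{equation}
This requires $Tg_j$ to lie in the domain, which fails in general. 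The preferable route is
\begin{equation}
 T^{-1/2}S^{-1}T^{1/2} = T^{-1}\,\big(T^{1/2}S^{-1}T^{1/2}\big) = T^{-1}(1+2K_4)^{-1},
\end{equation}
so that $T^{-1/2}S^{-1}T^{1/2}g_j = T^{-1}g_j - T^{-1}\,2K_4(1+2K_4)^{-1}g_j$. We then need $T^{-1}K_4$ to be bounded. For this we use the representation $2K_4 = I_1-I_2$ and the extra factor $T^{1/2}R_t$ on the left of the integrands: the operator $T^{-1}\cdot T^{1/2}R_t$ gives an integrand of norm $O(t^{-2})\cdot t^2\cdot(\ldots)$, which we control with the trace estimates (Theorems \ref{thm:trace_est1} and \ref{thm:trace_est3}), in the same Mattner/Dunford manner as in Theorem \ref{thm:holo_K1}. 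I expect this integrability near $t=0$ to be the main obstacle. If it fails, the fallback is to obtain $\dom(S(z)^{-1})=\dom(T^{-1})$ analytically in $z$ by a resolvent argument, using $S(z)^{-1} = \frac{2}{\pi}\int_0^\infty (S(z)^2+t^2)^{-1}dt$ together with Lemma \ref{main1 lemma3}.

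\textbf{Step 2: the scalar prefactor.} With $G$ holomorphic, the prefactor $\exp\big(\tfrac12\inner{JG(\bar z)}{G(z)}+\tfrac12\inner{JG(\bar z)}{K_1(z)G(z)}\big)$ is an entire function of holomorphic quantities, and hence holomorphic on $U$.

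\textbf{Step 3: the vector factor.} Set $\Psi(z) = e^{-\frac12\Delta^*(K_1(z))}\Omega$, which is holomorphic on $U$ by Theorem \ref{main2}. We apply $e^{-A(G)}$ to $\Psi$. Here $A(G)$ is antilinear in $G$, so we use $A(JG(\bar z))$ in the continuation. Since $A(f)\Delta^*(K)^n\Omega$ is computable via the commutator $[A(f),\Delta^*(K)] = 2A^*(K J f)$, up to convention, we obtain
\begin{equation}
 e^{-A(f)}e^{-\frac12\Delta^*(K)}\Omega = e^{\frac12\inner{\cdots}{}}\,e^{A^*(Kf)}\,e^{-\frac12\Delta^*(K)}\Omega .
\end{equation}
This leaves the form $c\,e^{A^*(h(z))}\Psi(z)$ with $h(z)$ holomorphic in $\sH$. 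Finally, we show that $(f,K)\mapsto e^{A^*(f)}e^{-\frac12\Delta^*(K)}\Omega$ converges absolutely and is holomorphic, uniformly on compacts where $\norm{K}\le r<1$ and $\norm{f}$ is bounded. We expand in $n$-particle components and bound $\norm{A^*(f)^m\Delta^*(K)^n\Omega}$ using the number estimates for squeezed states in Appendix \ref{AppenC}, namely $\norm{A^*(f)^m\Phi}\le \norm{f}^m\norm{(N+m)^{m/2}\Phi}$ together with the exponential decay in particle number of squeezed states with $\norm{K}<1$ (Lemma \ref{Lem:K1norm}). Holomorphy then follows from the criterion of Appendix \ref{AppenB}: weak holomorphy on a dense set together with local boundedness.

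Combining the three factors gives holomorphy on $U$. The statement about the Maclaurin series then follows since the disk $|q|<Q$ is contained in $U$.
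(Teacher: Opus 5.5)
\textbf{The gap is Step 1.} You never establish that $G$ continues holomorphically to $U$, and Steps 2 and 3 both depend on it. Your preferred route needs $(1+2K_4(z))^{-1}g_j\in\dom(T^{-1})$, holomorphically in $z$. You try to get this from boundedness of $T^{-1}K_4(z)$, using $2K_4=I_1-I_2$ and the Appendix~D trace estimates. That mechanism fails under the sole infrared assumption $g_j\in\dom(T^{-1})$. Already for the first term one has, formally,
\[
T^{-1}I_1(z)=\frac{2}{\pi}\,\frac{z^2}{m_0}\sum_{j=1}^d\int_0^\infty \ketbra{R_tT^{-1}g_j}{R_tg_j}\,t^2\,dt .
\]
Because both families are orthogonal, the trace norm and the operator norm of the integrand are both of size $\norm{R_tT^{-1}g_j}\,\norm{R_tg_j}\,t^2$. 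In general this is only $O(t^{-1})$ as $t\to 0$. The Cauchy--Schwarz argument of Theorem \ref{thm:trace_est1} would require $g_j\in\dom(T^{-3/2})$. For a slowly decaying infrared tail, e.g.\ $\hat\rho(\bk)^2\sim(\log(1/|\bk|))^{-1}(\log\log(1/|\bk|))^{-2}$ near $\bk=0$ in $d=3$, one still has $|\bk|^{-3/2}\hat\rho\in L^2$, yet the $t$-integral genuinely diverges. So the Bochner/Mattner/Dunford scheme of Theorem \ref{thm:holo_K1} is not available here. Your fallback via $S(z)^{-1}=\frac{2}{\pi}\int_0^\infty (S(z)^2+t^2)^{-1}dt$ hits the same non-absolutely-convergent $t\to0$ behaviour (already $\int_0\norm{R_tg_j}\,dt$ can diverge) and is left as a sketch.

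\textbf{The missing idea} is the one behind Lemma \ref{lem:G_conti}: do not invert $1+2K_4$ at all. Write $S^{-1}=S\,S^{-2}$ and use $S(z)^2=T(1+F(z))T$, where $F(z)\coloneqq\ovl{T^{-1}W(z)T^{-1}}=\frac{z^2}{m_0}\sum_j\ketbra{T^{-1/2}g_j}{T^{-1/2}g_j}$. Then, formally,
\[
T^{-1/2}S^{-1}T^{1/2}=\bigl(\ovl{T^{-1/2}ST^{-1/2}}\bigr)\,T^{-1/2}(1+F)^{-1}T^{-1/2}.
\]
The $T^{-1/2}g_j$ are orthogonal with $\norm{T^{-1/2}g_j}^2=m_0Q^{-2}$, so $(1+F(z))^{-1}$ is an explicit finite-rank perturbation of $1$, and one obtains
\[
G(z)=\frac{z}{\sqrt{2}\,m_0}\,\bigl(1+2K_4(z)\bigr)\Bigl(T^{-1}-\frac{z^2}{m_0(1+z^2Q^{-2})}\sum_{j=1}^d\ketbra{T^{-1}g_j}{T^{-1}g_j}\Bigr)\bP\cdot\boldsymbol{g},
\]
with $1+z^2Q^{-2}\neq0$ on $U$ because its real part is positive there. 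The hypothesis $g_j\in\dom(T^{-1})$ is used exactly once: it makes the bracket applied to $\bP\cdot\boldsymbol g$ an explicit holomorphic vector. The factor $1+2K_4(z)$ is $\cB(\sH)$-holomorphic by the proof of Theorem \ref{thm:holo_K1}. No boundedness of $T^{-1}K_4$ is needed.

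\textbf{Conjugation slip.} Since $JS(z)J=S(\bar z)$, one has $JG(z)=G(\bar z)$. Hence your $\inner{JG(\bar z)}{G(z)}=\norm{G(z)}^2$ and $A(JG(\bar z))=A(G(z))$ are not holomorphic. The correct continuations are $\inner{G(\bar z)}{\cdot}$ and $A(G(\bar z))$, as in \eqref{eq:tilphigzP}.

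\textbf{Step 3 as an alternative route.} With the slip fixed, your normal-ordering Step 3 is a legitimate alternative to the paper's argument. The paper keeps $\exp(A^*(G(z))-A(G(\bar z)))$ intact, proves each $\Psi_n(z)$ holomorphic via Lemma \ref{lem:holo_AN} and holomorphy of $(\Nb+1)^n\Psi_0(z)$, and sums using Theorem \ref{thm:analytic_vectors}. Done carefully at real $q$, your route even shows that all scalar factors cancel:
\[
\Phigt(q,\bP)=e^{A^*((1+K_1(q))G(q))}e^{-\frac12\Delta^*(K_1(q))}\Omega .
\]
Your display carries a spurious factor $e^{\norm{G}^2/2}$. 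After this simplification, only holomorphy of $z\mapsto e^{A^*(h(z))}\Psi(z)$ with $h=(1+K_1)G$ remains, which your number-decay estimates can handle. All of this, however, presupposes Step 1.
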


\begin{Example}
We consider the same physical situation as in Example \ref{example:Q2}.
Taking mass renormalization into account, we regard $m_0$ as the bare mass.
That is, $m_0 = m_\mathrm{bare}$ is determined so as to satisfy the relation
\eqref{def:mbare} between $m_\mathrm{bare}$ and the observed electron mass $m$.
Moreover, we set $d=3$ and define the coupling function $\hat\rho$ by
\eqref{std uv}.
In this case, $m_0 = m_\mathrm{bare}$ is given by \eqref{eq:bare_mass_relation}.
Consequently, $Q$ is given by \eqref{eq:Q_d3_bare}, as in Example \ref{example:Q2}.
Since $e < Q$ holds, it follows from Theorem \ref{thm:conv_rad_PF_P} that
the Maclaurin series for $E(q,\bP)$ converges at $q=-e$.

In order to study the holomorphy of the ground state in the case $\bP \neq \boldsymbol{0}$,
it is necessary to impose an infrared regularity condition.
Let $k_\mathrm{min}>0$ and define
\begin{align}
  \hat\rho(\bk) \coloneqq  (2\pi)^{-3/2}\chi(k_\mathrm{min}<|\bk|<k_\mathrm{c}),
\end{align}
where we assume $k_\mathrm{min}<k_\mathrm{c}$.
In this case, from \eqref{def:mbare}, the corresponding bare mass is given by
\begin{align}
  m_\mathrm{bare}
  = m\left( 1 - \frac{4\alpha}{3\pi} \left( 1-\frac{k_\mathrm{min}}{k_\mathrm{c}} \right) \right).
\end{align}
Moreover, from the definition \eqref{def of Q}, we obtain
\begin{align}
  Q = \bigg\{ 
      \frac{3\pi}{4\alpha} \bigg( 1-\frac{k_\mathrm{min}}{k_\mathrm{c}} \bigg)^{-1}  -1 
      \bigg\}^{1/2} e.
\end{align}
Since $Q$ is increasing in $k_\mathrm{min}>0$, it follows that
\begin{align}
  Q > \bigg(\frac{3\pi}{4\alpha}-1\bigg)^{1/2} e > 17.941\,e.
\end{align}
Therefore, it follows from Theorem \ref{thm:dpf_holo} that, 
for any infrared cutoff $k_\mathrm{min}$, the Maclaurin series of the ground state
 $\Phigt(q,\bP)$ of $H(q,\bP)$
converges at $q=-e$.
\end{Example}

\medskip
We now turn to the proof of Theorem \ref{thm:dpf_holo}.
To this end, we prepare several lemmas.

\begin{Lemma}\label{lem:G_conti}
Assume that $|\bk|^{-3/2}\hat\rho\in L^2(\RR^d)$.
Then $G(q)$ can be analytically continued to $U$.
\end{Lemma}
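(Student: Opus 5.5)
We need $G(z)=\frac{1}{\sqrt2}\frac{z}{m_0}T^{-1/2}S(z)^{-1}T^{1/2}\bP\cdot\boldsymbol g$ to be a well-defined element of $\sH=L^2(\RR^d\times\ZZ_{d-1})$ for every $z\in U$ (with $U$ as in \eqref{def of UTDPF}), and the map $z\mapsto G(z)$ to be holomorphic. Since $z/m_0$ is entire, it suffices to treat $h(z)\coloneqq T^{-1/2}S(z)^{-1}T^{1/2}f$, where $f\coloneqq\bP\cdot\boldsymbol g$. The condition $|\bk|^{-3/2}\hat\rho\in L^2$ gives $f\in\dom(T^{-1})$. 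The plan is to write $S(z)^{-1}$ as a resolvent integral, mirroring \eqref{int_rep_S(z)}:
\begin{equation}
  S(z)^{-1} = \frac{2}{\pi}\int_0^\infty (S(z)^2+t^2)^{-1}\,dt,
\end{equation}
which is valid for the inverse square root of the m-accretive operator $T^2+W(z)$; this operator is invertible because $\Re\inner{u}{(T^2+W(z))u}\ge\delta(z)\norm{Tu}^2$.

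Next, substitute \eqref{eq: RS(Z)}, namely $(S(z)^2+t^2)^{-1}=R_t^{1/2}(1+\til W_t(z))^{-1}R_t^{1/2}$. Sandwiching and commuting powers of $T$ through $R_t$ gives the formal expression
\begin{equation}
  h(z)=\frac{2}{\pi}\int_0^\infty T^{-1/2}R_t^{1/2}\big(1+\til W_t(z)\big)^{-1}R_t^{1/2}T^{1/2}f\,dt .
\end{equation}
Write $T^{1/2}f=T^{3/2}(T^{-1}f)$, so that $R_t^{1/2}T^{1/2}f=TR_t^{1/2}\cdot T^{1/2}(T^{-1}f)$. The left factor $T^{-1/2}R_t^{1/2}$ is the problematic one near $t=0$.

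The cleaner route is to split off the free part using the resolvent identity of Lemma~\ref{main1 lemma3}. The free part is $\frac{2}{\pi}\int_0^\infty T^{-1/2}R_tT^{1/2}f\,dt=T^{-1}f$, which lies in $\sH$ by the infrared assumption. The remainder is
\begin{equation}
  -\frac{2}{\pi}\int_0^\infty T^{-1/2}R_t^{1/2}\big(1+\til W_t\big)^{-1}TR_t^{1/2}\;\ovl{T^{-1}WT^{-1}}\;T R_t\,T^{1/2}f\,dt .
\end{equation}
Here $\ovl{T^{-1}W(z)T^{-1}}=\frac{z^2}{m_0}\sum_j\ketbra{T^{-1/2}g_j}{T^{-1/2}g_j}$ is rank $d$. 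The scalars $\inner{T^{-1/2}g_j}{TR_tT^{1/2}f}=\inner{g_j}{TR_tf}$ are bounded by $t^{-1}\norm{g_j}\norm{f}$ for large $t$ and by $\norm{T^{-1/2}g_j}\norm{T^{1/2}f}$ for small $t$. Meanwhile $\norm{(1+\til W_t)^{-1}}\le\delta^{-1}$ by Lemma~\ref{main1 lemma2}, and $\norm{TR_t^{1/2}}\le1$.

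This leaves the vectors $T^{-1/2}R_t^{1/2}u_t$, where $u_t=(1+\til W_t)^{-1}TR_t^{1/2}T^{-1/2}g_j$. The key observation is that, by the explicit rank-$d$ structure (orthogonality of the $\tau_j(t)$, as in the proof of Lemma~\ref{lem:E(q,0)_int}),
\begin{equation}
  (1+\til W_t)^{-1}=1-\sum_j\frac{\ketbra{\tau_j(t)}{\tau_j(t)}}{1+\norm{\tau_j(t)}^2},
\end{equation}
with $\tau_j(t)=\frac{z}{\sqrt{m_0}}R_t^{1/2}T^{1/2}g_j$. Applying this formula reduces everything to vectors of the form $T^{-1/2}R_tT^{1/2}g_j$ and $T^{-1/2}R_t^{1/2}TR_t^{1/2}T^{-1/2}g_j=R_tT^{1/2}\cdot T^{-1/2}\cdots$, multiplied by scalar coefficients $1/(1+z^2c_t)$. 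The quantity $1+z^2c_t$ is nonzero and bounded away from $0$ uniformly for $z$ in compact subsets of $U$: on $U$ we have $\Re(z^2)>-Q^2$, so $\Re(1+z^2c_t)\ge 1-\max(0,-\Re z^2)\,c_0>0$. This lets us compute $h(z)$ in closed form, giving the combination
\begin{equation}
  T^{-1}f-\sum_j \alpha_j(z)\,T^{-1}g_j ,
\end{equation}
where $\alpha_j(z)$ are scalar $t$-integrals. Indeed, applying $(1+z^2\frac{1}{m_0}\sum_j\ketbra{T^{-1/2}g_j}{T^{-1/2}g_j})$-type Sherman--Morrison formulas to $S(z)^{-1}$ directly is an option if the integrals become awkward. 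Each $T^{-1}g_j\in\sH$ by the infrared assumption.

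Holomorphy then follows as in Theorem~\ref{thm:holo_K1}. The integrands are holomorphic in $z$ and bounded on compacts by an integrable function of $t$, since the scalar integrals are dominated by $\norm{R_tT^{1/2}g_j}^2\,t^0$-type quantities that are integrable by $g_j\in\dom(T^{-1})\cap\dom(T^{1/2})$. Theorem~\ref{thm:comp-diff}, combined with Dunford's theorem (or Appendix~\ref{AppenB}), then gives holomorphy. Finally, agreement with the real-$q$ definition holds because the formulas coincide for real $q$. The main obstacle is the $t\to0$ integrability of the remainder terms, and specifically controlling $T^{-1/2}$ on the left there. The infrared condition $g_j\in\dom(T^{-1})$ is exactly what resolves it, so I expect to spend most of the care on verifying that these bounds are uniform on compacts.
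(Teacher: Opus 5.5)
Your route (resolvent integral for $S(z)^{-1}$, splitting off the free part, explicit rank-$d$ inversion of $1+\til W_t(z)$) differs from the paper's and can be salvaged. As written, however, both steps your holomorphy argument rests on are wrong.

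First, the claimed closed form $T^{-1}f-\sum_j\alpha_j(z)T^{-1}g_j$ with scalar $\alpha_j(z)$ is false. Carry out your own computation with $\phi_j\coloneqq R_t^{1/2}T^{1/2}g_j$; these are mutually orthogonal with $\|\phi_j\|^2=m_0c_t$. You get $\til W_t(z)\phi_j=z^2c_t\phi_j$ and $\langle g_j,TR_tf\rangle=P_jm_0c_t$. So the free part and the remainder combine into
\[
  T^{-1/2}\big(S(z)^2+t^2\big)^{-1}T^{1/2}f=\frac{R_tf}{1+z^2c_t},
  \qquad
  h(z)=\frac{2}{\pi}\int_0^\infty\frac{R_tf}{1+z^2c_t}\,dt=F_z(T)f,
\]
where $F_z(s)\coloneqq\frac{2}{\pi}\int_0^\infty\frac{dt}{(s^2+t^2)(1+z^2c_t)}$. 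Since $c_t$ depends on $t$, the vector $R_tf$ cannot be pulled out of the integral. Indeed $sF_z(s)$ varies from $(1+z^2c_0)^{-1}$ as $s\to0$ to $1$ as $s\to\infty$. Note also that the Sherman--Morrison formula with $\tau_j(t)$ from Lemma~\ref{lem:E(q,0)_int} is only valid for real $q$; for complex $z$ the coefficient is $z^2/m_0$, not $|z|^2/m_0$.

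Second, and more seriously, the domination step fails. The integrand is vector-valued and its coefficient tends to $(1+z^2c_0)^{-1}\neq0$ as $t\to0$. A Bochner/Mattner argument as in Theorem~\ref{thm:holo_K1} would therefore need $\int_0^1\|R_tf\|\,dt<\infty$. Because $\inf\sigma(T)=0$, this is not implied by $|\bk|^{-3/2}\hat\rho\in L^2$ alone. For example, with $d=3$ and $\hat\rho(\bk)^2\propto\log^{-2}(1/|\bk|)$ near $\bk=0$, one finds $\|R_tf\|\gtrsim(t\log(1/t))^{-1}$. This is also why $S(z)^2$ is injective but not boundedly invertible, and why your formula for $S(z)^{-1}$ is only an improper strong integral.

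What the infrared condition does give is control after the $t$-integration is done inside the spectral representation of $T$. For compact $K\subset U$, set $\varepsilon_K\coloneqq\min_{z\in K}\min\{1,1+\Re(z^2)Q^{-2}\}>0$. Then
\[
|F_z(s)|\le\varepsilon_K^{-1}\,\frac{2}{\pi}\int_0^\infty\frac{dt}{s^2+t^2}=(\varepsilon_K s)^{-1},
\]
so $|(F_z(T)f)(\bk,\lambda)|\le\varepsilon_K^{-1}|(T^{-1}f)(\bk,\lambda)|$. Holomorphy then follows from the pointwise holomorphy of $z\mapsto F_z(s)$ and dominated convergence in $L^2$. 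With that repair you obtain the explicit continuation $G(z)=\frac{z}{\sqrt2\,m_0}F_z(T)\,\bP\cdot\boldsymbol g$.

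The paper avoids new $t$-integrals here altogether. It writes $S(q)^{-1}=S(q)S(q)^{-2}$ with $S(q)^{-2}=T^{-1}(\one+\ovl{T^{-1}W(q)T^{-1}})^{-1}T^{-1}$ and inverts the rank-$d$ operator explicitly, which is holomorphic away from $\pm iQ$. It then reuses the holomorphy of $\ovl{T^{-1/2}S(z)T^{-1/2}}=2K_4(z)+\one$ already established in Theorem~\ref{thm:holo_K1}.
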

\begin{proof}
Since $S(q)^{-2}=T^{-1} (\one+ \ovl{T^{-1}W(q)T^{-1}})^{-1} T^{-1} $, we can write $G(q)$ as
\begin{align}
 G(q)
 & = \frac{q}{\sqrt{2} m_0}
     T^{-1/2} S(q) T^{-1} (\one+ \ovl{T^{-1}W(q)T^{-1}})^{-1} 
     T^{-1/2} \bP \cdot \boldsymbol{g} \\
 & = \frac{q}{\sqrt{2} m_0} (T^{-1/2} S(q) T^{-1/2}) T^{-1/2}
     (\one+ \ovl{T^{-1}W(q)T^{-1}})^{-1} T^{-1/2} \bP \cdot \boldsymbol{g}.
\end{align}
By Proposition~\ref{prop:tdpf0cs} and the proof of Theorem \ref{thm:holo_K1},
$K_4(q)$ admits an analytic continuation to $U$.
Hence, $\ovl{T^{-1/2} S(q)T^{-1/2}} = 2K_4(q)+\one$ also extends to a holomorphic 
function on $U$. Moreover,
\begin{align}
 (\one+\ovl{T^{-1}W(q)T^{-1}})^{-1} 
    = \one - \frac{q^2}{m_0(1+q^2Q^{-2})}\sum_{j=1}^d \ketbra{T^{-1/2}g_j}{T^{-1/2}g_j}
\end{align}
extends to a holomorphic function except at $z=\pm iQ$.
Therefore, $G(q)$ can be analytically continued to $U$ as
\begin{align}
  G(z) 
 \coloneqq  \frac{z}{\sqrt{2} m_0} (2K_4(z)+\one) 
   \bigg( T^{-1} - \frac{z^2}{m_0(1+z^2Q^{-2})}\sum_{j=1}^d \ketbra{T^{-1}g_j}{T^{-1}g_j} \bigg)  \bP \cdot \boldsymbol{g}.
\end{align}
\end{proof}

\begin{Lemma}\label{lem:holo_constant}
Assume that $|\bk|^{-3/2}\hat\rho\in L^2(\RR^d)$.
Then $\inner{G(q)}{K_1(q)G(q)}$ extends to the holomorphic function 
$\inner{G(\bar{z})}{K_1(z)G(z)}$ on $U$.
\end{Lemma}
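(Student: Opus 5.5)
The plan is to prove two things separately. First, the map $z\mapsto\inner{G(\bar z)}{K_1(z)G(z)}$ is holomorphic on $U$. Second, for real $q$ it reduces to $\inner{G(q)}{K_1(q)G(q)}$.

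The second point is immediate. For $q\in\RR\cap U$ we have $\bar q=q$, and the continuation $G(z)$ built in Lemma \ref{lem:G_conti} agrees with $G(q)$ on the real axis. Likewise $K_1(z)$ from Theorem \ref{thm:holo_K1} agrees there with the physical $K_1(q)$. Since $U$ is connected and contains a real interval, the identity theorem then makes the extension unique.

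For holomorphy, I would write the inner product as a bilinear pairing. Let $J$ be the conjugation with $JTJ=T$ and $Jg_j=g_j$; pointwise complex conjugation works, since $\hat\rho$ is real and the $\be^{(\lambda)}$ are real. Then $J$ commutes with $T^{-1}$, with the rank-one operators $\ketbra{T^{-1}g_j}{T^{-1}g_j}$, and with $(\one+\ovl{T^{-1}W(\bar z)T^{-1}})^{-1}$ up to conjugating the scalar coefficients. We should check that $JK_4(\bar z)J=K_4(z)$: $W(\bar z)=JW(z)J$ by the explicit formula, the uniqueness of m-accretive square roots gives $JS(\bar z)J=S(z)$, and the integral representation confirms it. From the explicit formula in Lemma \ref{lem:G_conti} it then follows that $G(\bar z)=JG(z)$. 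Hence
\[
\inner{G(\bar z)}{K_1(z)G(z)}=\inner{JG(z)}{K_1(z)G(z)},
\]
and $(f,h)\mapsto\inner{Jf}{h}$ is a bounded \emph{bilinear} form. With $G:U\to\sH$ holomorphic (Lemma \ref{lem:G_conti}) and $K_1:U\to\cS_1(\sH)\subset\cB(\sH)$ holomorphic (Theorem \ref{thm:holo_K1}), the composite $z\mapsto\inner{JG(z)}{K_1(z)G(z)}$ is holomorphic. Complex differentiability follows from the product rule applied to a continuous bilinear form composed with holomorphic maps.

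The main point to verify carefully is that $G(z)$ is genuinely an $\sH$-valued holomorphic function. That requires $T^{-1}\bP\cdot\boldsymbol{g}\in\sH$ and $T^{-1}g_j\in\sH$, which is exactly the infrared assumption $|\bk|^{-3/2}\hat\rho\in L^2$. It also requires the scalar $z^2/(1+z^2Q^{-2})$ to be holomorphic on $U$; this holds because its poles $\pm iQ$ satisfy $\Re(z^2)=-Q^2$ and so lie outside $U$. The conjugation symmetry $G(\bar z)=JG(z)$ is then read off directly from that formula, using $JK_4(\bar z)J=K_4(z)$.
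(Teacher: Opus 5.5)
Your proof is correct, but it takes a detour that the paper does not need. The paper's proof simply combines two facts: $K_1$ is holomorphic on $U$ (Theorem \ref{thm:holo_K1} with Proposition \ref{prop:tdpf0cs}), and $G$ is holomorphic on $U$ (Lemma \ref{lem:G_conti}).

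The $\bar z$ in the first slot causes no trouble, for three reasons:
\begin{itemize}
\item $U$ is invariant under conjugation, since $\Re(\bar z^2)=\Re(z^2)$.
\item $z\mapsto G(\bar z)$ is antiholomorphic.
\item The inner product is conjugate-linear in its first argument.
\end{itemize}
Concretely, if $G(w)=\sum_n a_n(w-\bar z_0)^n$ near $\bar z_0$, then $\inner{G(\bar z)}{h}=\sum_n (z-z_0)^n\inner{a_n}{h}$. So $z\mapsto\inner{G(\bar z)}{K_1(z)G(z)}$ is holomorphic without any further structure.

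You instead turn the sesquilinear pairing into the bounded bilinear pairing $(f,h)\mapsto\inner{Jf}{h}$ by proving $G(\bar z)=JG(z)$. This costs you the extra symmetry $JS(z)J=S(\bar z)$, which you obtain from uniqueness of m-accretive square roots together with $JW(z)J=W(\bar z)$. You verify all of this correctly, including that $JS(z)J$ is again m-accretive and that $J$ commutes with $T^{-1/2}$ and fixes the $T^{-1}g_j$.

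What your route buys is structural information. It shows that the continued scalar is the $J$-transpose pairing $\inner{JG(z)}{K_1(z)G(z)}$, which is the natural complexification of the identity $JG(q)=G(q)$ the paper uses for real $q$. The paper's route is shorter and needs nothing beyond the conjugation symmetry of $U$ (which you also use implicitly, since $G(\bar z)$ must be defined). Your remarks on agreement with the physical quantities on $\RR\subset U$, the connectedness of $U$, and the location of the poles $\pm iQ$ outside $U$ are all correct.
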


\begin{proof}
By Proposition~\ref{prop:tdpf0cs} and Theorem~\ref{thm:holo_K1}, 
$K_1(q)$ extends to a holomorphic function on $U$.
Moreover, by Lemma~\ref{lem:G_conti}, $G(q)$ can also be analytically continued to 
$U$.
\end{proof}

We extend the ground state $\Phigt(q,\bP)$ to $z\in U$ by
\begin{align}
  \Phigt(z,\bP) 
  \coloneqq  e^{\inner{G(\bar{z})}{G(z)}/2} e^{\inner{G(\bar{z})}{K_1(z)G(z)}/2} 
    \exp\! \big(A^*(G(z))-A(G(\bar{z}))\big) e^{-\frac{1}{2}\Delta^*(K_1(z))} \Omega.
  \label{eq:tilphigzP}
\end{align}
Although the operator $\exp(A^*(G(z))-A(G(\bar{z})))$ is unbounded in general,
the above expression is well defined by Theorem~\ref{thm:analytic_vectors}.

For the present model, we have $T=JTJ$ and $Jg_j=g_j$, and hence
\begin{equation}
T^2+W(z)=(T^2+W(z))^\mathrm{T}.
\end{equation}
Using the integral representation, this implies $S(z)=S(z)^\mathrm{T}$,
and therefore $K_1(z)=K_1(z)^\mathrm{T}$.
Moreover, by Lemma~\ref{Lem:K1norm}, we have $\norm{K_1(z)}<1$ for all $z\in U$.

\begin{proof}[Proof of Theorem \ref{thm:dpf_holo}]
By Lemma \ref{lem:holo_constant}, the scalar factor in \eqref{eq:tilphigzP} 
is holomorphic. Hence, it suffices to prove that
\begin{align}
  \Psi(z) \coloneqq  \exp\!\big( A^*(G(z)) - A(G(\bar{z}))  \big)\,
             e^{-\frac{1}{2}\Delta^*(K_1(z))}\Omega
\end{align}
is holomorphic on $U$. First, we consider 
\begin{align}
  \Psi_n(z) \coloneqq  \big( A^*(G(z)) - A(G(\bar{z}))  \big)^n 
               e^{-\frac{1}{2}\Delta^*(K_1(z))}\Omega, 
  \qquad (n=0,1,2,\cdots),
\end{align}
and show that it is holomorphic. By Theorem \ref{main2},
$\Psi_0(z) = e^{-\frac{1}{2}\Delta^*(K_1(z))}\Omega$ is holomorphic.
If the bounded operator 
\begin{align}
  \big( A^*(G(z)) - A(G(\bar{z})) \big)^n (\Nb+1)^{-n},  \label{eq:AznN}
\end{align}
and the vector 
\begin{align}
  \Xi(z) \coloneqq  (\Nb+1)^n \Psi_0(z)
\end{align}
are holomorphic, then $\Psi_n(z)$ is holomorphic as well.
Here $\Xi(z)$ is well defined by Theorem~\ref{thm:GS-prop2}.

The function $\Psi_0(z)$ satisfies
\begin{align}
 \frac{d}{dz}\Psi_0(z)  &= -\frac{1}{2}\Delta^*(K_1'(z)) \Psi_0(z), \\
 \frac{d^2}{dz^2}\Psi_0(z) &= -\frac{1}{2}\Delta^*(K_1''(z)) \Psi_0(z)
     +\frac{1}{4}\big(\Delta^*(K_1'(z))\big)^2 \Psi_0(z).
\end{align}
By Lemma~\ref{lem:holo_AN}, the operator \eqref{eq:AznN} is holomorphic on $U$.

We now show that $\Xi(z)$ is holomorphic by applying Lemma~\ref{lem:criterion-holomorphy}.
Define
\begin{align}
  \Xi_1(z) &\coloneqq -\frac{1}{2}(\Nb+1)^n \Delta^*(K_1'(z)) \Psi_0(z), \\
  \Xi_2(z) &\coloneqq -\frac{1}{2}(\Nb+1)^n \Delta^*(K_1''(z)) \Psi_0(z)
     +\frac{1}{4}(\Nb+1)^n \big( \Delta^*(K_1'(z)) \big)^2 \Psi_0(z).
\end{align}
From the definition of $\Delta^*(\cdot)$, we have
\begin{align}
 & \dom\!\big((\Nb+1)^n\Delta^*(K_1'(z))\big) \subset \dom(\Nb^{\,n+1}), \\ 
 & \dom\!\big((\Nb+1)^n\Delta^*(K_1''(z))\big) \subset \dom(\Nb^{\,n+1}), \\
 & \dom\!\big(((\Nb+1)^n\Delta^*(K_1'(z)))^2\big) \subset \dom(\Nb^{\,n+2}),
\end{align}
and hence $\Xi_1(z)$ and $\Xi_2(z)$ are well-defined.

Note that, for any $\Phi \in \Ffin(\sH)$, the function $\inner{\Phi}{\Xi(z)}$ 
is holomorphic on $U$, and
\begin{align}
  \frac{d}{dz} \inner{\Phi}{\Xi(z)} = \inner{\Phi}{\Xi_1(z)}, \qquad 
  \frac{d^2}{dz^2} \inner{\Phi}{\Xi(z)} =  \inner{\Phi}{\Xi_2(z)}.
\end{align}
We show that $\norm{\Xi_2(z)}$ is locally bounded on $U$.
By \eqref{Delta prop2}, we obtain
\begin{align}
  \norm{\Xi_2(z)}
& \leq  \frac{1}{2}\norm{K_1''(z)}_2 \,
         \norm{(\Nb+2)(\Nb+3)^n \Psi_0(z)} \\
& \quad +\frac{1}{4}\norm{K_1'(z)}_2 \,
         \norm{(\Nb+2)\Delta^*(K_1'(z))(\Nb+5)^n \Psi_0(z)} \\
& \leq  \frac{1}{2}\norm{K_1''(z)}_2 \,
         \norm{(\Nb+2)(\Nb+3)^n \Psi_0(z)} \\
& \quad +\frac{1}{4}\norm{K_1'(z)}_2^2 \,
         \norm{(\Nb+2)(\Nb+4)(\Nb+5)^n \Psi_0(z)}.
\end{align}
By Theorem~\ref{thm:GS-prop2}, it follows that $\Xi_2(z)$ is bounded 
on compact subsets of $U$. 
Hence, by Lemma~\ref{lem:criterion-holomorphy}, 
$\Xi(z)$ and therefore $\Psi_n(z)$ are holomorphic in $U$.

Finally, it remains to show that
\begin{align}
  \sum_{n=0}^\infty \frac{\Psi_n(z)}{n!} = \Psi(z)
\end{align}
converges uniformly on compact subsets of $U$. 
Let $V \subset U$ be a nonempty compact set.
Since $\norm{K_1(z)}$, $\norm{G(\bar{z})}$, and $\norm{G(z)}$ are continuous on $V$, we can set
\begin{align}
  M \coloneqq  2 \sup_{z \in V} (1 - \norm{K_1(z)}^2)^{-1/2}
       \big(\norm{G(\bar{z})} + \norm{G(z)}\big) < \infty.
\end{align}
By Theorem~\ref{thm:analytic_vectors} with $\gamma \coloneqq  1 - K_1(z)K_1(z)^*$, we obtain
\begin{align}
  \norm{\Psi_n(z)}
& \leq 2^n \sqrt{n!}\,
       \big(\norm{\gamma^{-1/2}G(\bar{z})} + \norm{\gamma^{-1/2}G(z)}\big)^n \norm{\Psi_0(z)} \\
& \leq 2^n \sqrt{n!}\,
       \norm{(\one - K_1(z)K_1^*(z))^{-1/2}}^n
       \big(\norm{G(\bar{z})} + \norm{G(z)}\big)^n \norm{\Psi_0(z)} \\
& \leq 2^n \sqrt{n!}\,
       (1 - \norm{K_1(z)}^2)^{-n/2}
       \big(\norm{G(\bar{z})} + \norm{G(z)}\big)^n \norm{\Psi_0(z)} \\
& \leq \sqrt{n!}\, M^n \sup_{z \in V}\norm{\Psi_0(z)}.
\end{align}
Therefore, by the Weierstrass $M$-test, the series 
$\sum_{n=0}^\infty \Psi_n(z)/n!$ converges to $\Psi(z)$ uniformly on compact subsets of $U$.
This shows that $\Psi(z)$ is holomorphic on $U$.
\end{proof}

\begin{Theorem}
Suppose that $\bP\neq \boldsymbol{0}$ and $|\bk|^{-3/2}\hat\rho\in L^2(\RR^d)$.
The radius of convergence of the Maclaurin series for the ground state $\til{\Phi}_\mathrm{g}(q,\bP)$ is $Q$.
\end{Theorem}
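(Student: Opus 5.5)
The lower bound on the radius comes directly from Theorem \ref{thm:dpf_holo}: $\Phigt(z,\bP)$ is holomorphic on $U$, and $U$ contains the disk $|z|<Q$. So the Maclaurin series converges there and the radius is at least $Q$. For the upper bound I would argue by contradiction, exactly as in Theorem \ref{thm:conv_rad_gs_single}. Suppose the Maclaurin series of $\Phigt(q,\bP)$ has radius $r>Q$. I will then produce a holomorphic extension of $E(q,\bP)$ to $|z|<r$, which contradicts Theorem \ref{thm:conv_rad_PF_P}.

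The vehicle is the vacuum matrix element. For real $q$ the vector $\Phigt(q,\bP)$ is an eigenvector of $H(q,\bP)$ with eigenvalue $E(q,\bP)$, normalized so that $\inner{\Omega}{\Phigt(q,\bP)}=1$. The vacuum lies in $\dom(H(q,\bP))$ and $H(q,\bP)$ is self-adjoint, so
\begin{align}
E(q,\bP)=\inner{H(q,\bP)\Omega}{\Phigt(q,\bP)}.
\end{align}
From the explicit form of $H(q,\bP)$,
\begin{align}
H(q,\bP)\Omega=\frac{q^2}{2m_0}\sum_{j=1}^d\PhiS(g_j)^2\Omega-\frac{q}{m_0}\sum_{j=1}^d P_j\PhiS(g_j)\Omega+\frac{\bP^2}{2m_0}\Omega.
\end{align}
This is a polynomial in $q$ whose coefficients are fixed finite-particle vectors $\Phi_0,\Phi_1,\Phi_2$, independent of $q$. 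Write $\Phigt(q,\bP)=\sum_m q^m\Psi_m$ with the series converging in $\Fb$ for $|q|<r$. Then
\begin{align}
\sum_{k=0}^2 q^k\sum_m q^m\inner{\Phi_k}{\Psi_m}
\end{align}
is a convergent power series on $|q|<r$, and it agrees with $E(q,\bP)$ for real $q$ near $0$. Its coefficients are real, so there is no issue with antilinearity in $q$. By uniqueness of Maclaurin coefficients, the Maclaurin series of $E(q,\bP)$ then has radius at least $r>Q$, which contradicts Theorem \ref{thm:conv_rad_PF_P}. Hence the radius is exactly $Q$.

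The one step that needs care is the eigenvalue equation. I need $\Phigt(q,\bP)$ to be a genuine eigenvector for all real $q$ near $0$, and the identity $\inner{\Omega}{H\Phi}=\inner{H\Omega}{\Phi}$ to hold. The first point follows from the construction preceding \eqref{eq:tilphigzP}: for real $q$ that vector is a nonzero multiple of the ground state. The constant $c(q)$ is finite because $e^{-\norm{G}^2/2}e^{-\inner{G}{K_1G}/2}\neq0$. The second point is just self-adjointness together with $\Omega\in\Ffin(\sH)\subset\dom(\dGb(T))$, which is a core.
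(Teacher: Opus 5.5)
Your proposal is correct and follows essentially the paper's route. The lower bound comes from Theorem \ref{thm:dpf_holo}, and the upper bound transfers a hypothetical radius $r>Q$ for $\Phigt(q,\bP)$ to $E(q,\bP)=\inner{H(q,\bP)\Omega}{\Phigt(q,\bP)}$, exactly as in Theorem \ref{thm:conv_rad_gs_single}, contradicting Theorem \ref{thm:conv_rad_PF_P}. You also spell out the only new point relative to the single pair case, which the paper leaves implicit: $H(q,\bP)\Omega$ is now a quadratic polynomial in $q$ with fixed finite-particle coefficients, and this does not affect the argument.
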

\begin{proof}
It is shown in Theorem \ref{thm:conv_rad_PF_P} that the radius of convergence of the Maclaurin series 
for the ground state energy $E(q,\bP)$ is $Q$. 
Therefore, in the same way as in the proof of Theorem \ref{thm:conv_rad_gs_single}, 
one can show that the Maclaurin series of $\Phigt(q,\bP)$ also has a radius of convergence $Q$.
\end{proof}

\appendix
\section{Heinz type inequalities for m-accretive operators}\label{AppenA}
The Heinz inequality for bounded self-adjoint operators states that if $0\leq A\leq B$, then $A^p \leq B^p$ holds for $0<p<1$.
This can be generalized to the case where $B$ is m-accretive.
We note that for an m-accretive operator $B$ and $0<p<1$, its fractional power $B^p$ is defined as the 
closure of the operator given by
\begin{align}
  B^p \psi  & = \frac{\sin \pi p}{\pi}\int_0^\infty \lambda^{p-1} B(B+\lambda)^{-1} \psi  \, d\lambda \\
      & = \frac{\sin \pi p}{\pi}\int_0^\infty  \lambda^{p-1} 
          \left(1-\lambda(B+\lambda)^{-1}\right)  \psi \, d\lambda,  \qquad \psi\in \dom(B)\label{def:B^p}
\end{align}
(see \cite[p.~286, eq.~(3.53)]{MR1335452}, where the integral converges absolutely in the strong sense).

\begin{Theorem}\label{thm:g-Heinz1}
Let $\sH$ be a Hilbert space, $A$ be a non-negative self-adjoint operator, and 
$B$ be an m-accretive operator on $\sH$. 
Suppose that $B=B_0 + iB_1$ for some self-adjoint operators $B_0$ and $B_1$ 
such that $\dom(B_0)\subset \dom(B_1)$.
Assume that $\dom(B_0)\subset \dom(A)$ and
\begin{align}
  \inner{u}{Au} \leq \inner{u}{B_0u},  \qquad u \in \dom(B_0).
\end{align}
Then, for all $0<p<1$,
\begin{align}
  \inner{u}{A^pu} \leq \Re \inner{u}{B^pu},
  \qquad u \in \dom(B).  \label{gen:Heinz}
\end{align}
\end{Theorem}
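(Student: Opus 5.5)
Starting from the integral representation \eqref{def:B^p}, we have for $u\in\dom(B)$
\begin{align}
  \Re\inner{u}{B^pu} = \frac{\sin\pi p}{\pi}\int_0^\infty \lambda^{p-1}\Big(\norm{u}^2-\lambda\,\Re\inner{u}{(B+\lambda)^{-1}u}\Big)\,d\lambda,
\end{align}
and by the spectral theorem the same formula holds for $\inner{u}{A^pu}$ with $(A+\lambda)^{-1}$ in place of $(B+\lambda)^{-1}$, for $u\in\dom(A)$; note $\dom(B)=\dom(B_0)\subset\dom(A)$ by assumption. Since $\sin\pi p>0$ for $0<p<1$, it therefore suffices to establish the resolvent comparison
\begin{align}
  \Re\inner{u}{(B+\lambda)^{-1}u} \leq \inner{u}{(A+\lambda)^{-1}u},\qquad u\in\sH,\ \lambda>0, \label{plan:res}
\end{align}
and then integrate. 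The absolute convergence of both integrals is standard (the integrands are $O(\lambda^{p-1})$ near $0$ and $O(\lambda^{p-2})$ at infinity for $u$ in the domain), so the integrated inequality follows directly.

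To prove \eqref{plan:res}, I fix $\lambda>0$ and $u\in\sH$, and set $v\coloneqq(B+\lambda)^{-1}u\in\dom(B)=\dom(B_0)$, so that $u=(B+\lambda)v$. Then
\begin{align}
  \Re\inner{u}{v} = \Re\inner{(B+\lambda)v}{v} = \inner{v}{(B_0+\lambda)v},
\end{align}
since $\inner{v}{iB_1v}$ is purely imaginary because $B_1$ is self-adjoint and $v\in\dom(B_1)$. The right-hand side of \eqref{plan:res} is handled by the variational characterization of the inverse of the positive operator $A+\lambda$:
\begin{align}
  \inner{u}{(A+\lambda)^{-1}u} = \sup_{w\in\dom(A)}\Big(2\Re\inner{u}{w}-\inner{w}{(A+\lambda)w}\Big).
\end{align}
Taking $w=v\in\dom(A)$ gives
\begin{align}
  \inner{u}{(A+\lambda)^{-1}u} \geq 2\Re\inner{u}{v}-\inner{v}{(A+\lambda)v} \geq 2\inner{v}{(B_0+\lambda)v}-\inner{v}{(B_0+\lambda)v} = \Re\inner{u}{v},
\end{align}
where the middle step uses the hypothesis $\inner{v}{Av}\leq\inner{v}{B_0v}$. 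This is exactly \eqref{plan:res}.

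The main obstacle, to my mind, is the bookkeeping needed to make this rigorous. Two points require checking. First, the variational formula needs a proof for unbounded $A$: complete the square with $(A+\lambda)^{1/2}$, noting that $w$ ranges over $\dom(A)\subset\dom((A+\lambda)^{1/2})$. Second, I must confirm that $\dom(B)=\dom(B_0)$. The operator $B$ is really defined as $B_0+iB_1$ on $\dom(B_0)$, and m-accretivity guarantees that $(B+\lambda)^{-1}$ maps onto $\dom(B)$, so that $v$ is admissible. Finally, the resulting inequality holds for $u\in\dom(B)$, where both integral formulas are valid (the one for $A$ because $\dom(B)\subset\dom(A)$), which gives \eqref{gen:Heinz}.
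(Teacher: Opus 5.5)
Your proof is correct, but it takes a different route from the paper's.

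\textbf{The paper's route.} The paper splits the argument into two comparisons.
\begin{itemize}
\item It first cites the classical Heinz inequality for unbounded non-negative operators to get $\inner{u}{A^pu}\le\inner{u}{B_0^pu}$.
\item It then proves $\Re((B+\lambda)^{-1})\le(B_0+\lambda)^{-1}$. To do so it factors $B+\lambda=(B_0+\lambda)^{1/2}(1+iC)(B_0+\lambda)^{1/2}$ with $C=(B_0+\lambda)^{-1/2}B_1(B_0+\lambda)^{-1/2}$ bounded. It then uses $\Re((1+i\ovl{C})^{-1})=(1+\ovl{C}^2)^{-1}\le 1$.
\item Integrating the second comparison against $\lambda^{p-1}$ gives $\Re\inner{u}{B^pu}\ge\inner{u}{B_0^pu}$.
\end{itemize}

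\textbf{Your route.} You prove the single inequality $\Re\inner{u}{(B+\lambda)^{-1}u}\le\inner{u}{(A+\lambda)^{-1}u}$ in one step. You test the variational formula for $(A+\lambda)^{-1}$ at $w=(B+\lambda)^{-1}u$, which lies in $\dom(B_0)\subset\dom(A)$. You then integrate the two resolvent representations against each other.

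\textbf{What your route buys.} Your argument is self-contained: it uses only the spectral theorem and a completing-the-square estimate.
\begin{itemize}
\item You do not need the Heinz inequality as a black box. Integrating your resolvent comparison against $\lambda^{p-1}$ is exactly how the L\"owner--Heinz inequality for $0<p<1$ is classically derived, so you in effect reprove it.
\item You never need boundedness of $C$. In the paper that step relies on a relative-boundedness argument that is only sketched: $\dom(B_0)\subset\dom(B_1)$ implies that $|B_1|^{1/2}$ is $B_0^{1/2}$-bounded, which needs operator monotonicity of the square root or Heinz--Kato.
\item You use only that $\inner{v}{B_1v}$ is real for $v\in\dom(B_0)$, i.e.\ only symmetry of $B_1$.
\end{itemize}

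\textbf{What the paper's route buys.} It isolates the intermediate inequality $\Re\inner{u}{B^pu}\ge\inner{u}{B_0^pu}$. This shows explicitly that adding the skew part $iB_1$ can only increase the real part of the fractional power. That fact is of some independent interest, and your combined argument does not single it out.
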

\begin{proof}
 Since $B$ is closed and m-accretive, $B^p$ can be defined by \eqref{def:B^p}.
By using the Heinz inequality (e.g., \cite[Proposition 10.14]{MR2953553}), we have
\begin{align}
  \inner{u}{A^pu} \leq \inner{u}{(B_0)^pu}, \qquad u\in\dom(B_0). \label{ApB0p}
\end{align}
From the assumptions, we have $\dom(B_0)\subset \dom(B_1)$, and therefore, $|B_1|^{1/2}$ is
 $B_0^{1/2}$-bounded. Hence $C\coloneqq  (B_0+\lambda)^{-1/2}B_1 (B_0+\lambda)^{-1/2}$ is bounded.

For $\lambda>0$, we have the operator equality
\begin{align}
  B+\lambda    =  (B_0+\lambda)^{1/2}(1+ iC) (B_0+\lambda)^{1/2}.
\end{align}
Thus we have
\begin{align}
  (B+\lambda)^{-1} 
  = (B_0+\lambda)^{-1/2}(1+ i\bar{C})^{-1} (B_0+\lambda)^{-1/2}.
\end{align}
Since $\bar{C}$ is self-adjoint, we have
\begin{align}
  \Re( (1+i\bar{C})^{-1}) = (1+\bar{C}^2)^{-1} \leq 1,
\end{align}
and hence
\begin{align}
  \Re ((B+\lambda)^{-1}) \leq  (B_0+\lambda)^{-1}.
\end{align}
For $u\in\dom(B)=\dom(B_0)$, by \eqref{def:B^p}, we have
\begin{align}
  \Re \inner{u}{B^pu} 
& = \frac{\sin \pi p}{\pi}\int_0^\infty \lambda^{p-1} 
        \Re\inner{u}{\left(1-\lambda(B+\lambda)^{-1}\right)u} \, d\lambda \\
& \geq \frac{\sin \pi p}{\pi}\int_0^\infty \lambda^{p-1} 
        \inner{u}{\left(1-\lambda(B_0+\lambda)^{-1}\right)u} \, d\lambda \\
& = \inner{u}{(B_0)^pu}.
\end{align}
This inequality and \eqref{ApB0p} imply \eqref{gen:Heinz}.
\end{proof}

\section{Criterion for Holomorphy}\label{AppenB}
\begin{Lemma}\label{lem:criterion-holomorphy}
Let $\sH$ be a Hilbert space, and let $U \subset \CC$ be a nonempty open set.  
Consider a map $f:U \to \sH$.  
Suppose $D \subset \sH$ is dense, and that for every $\Phi \in D$ the function 
$z \mapsto \inner{\Phi}{f(z)}$ is holomorphic on $U$.  
Assume that for each $z \in U$ there exist vectors $\Xi_1(z), \Xi_2(z) \in \sH$ such that
\begin{align}
  \frac{d}{dz} \inner{\Phi}{f(z)} &= \inner{\Phi}{\Xi_1(z)}, \qquad
  \frac{d^2}{dz^2} \inner{\Phi}{f(z)} = \inner{\Phi}{\Xi_2(z)}
\end{align}
for all $\Phi \in D$.  
Suppose further that $\norm{\Xi_2(z)}$ is bounded on compact subsets of $U$. 
Then $f(z)$ is holomorphic, and $\tfrac{d}{dz} f(z) = \Xi_1(z)$.
\end{Lemma}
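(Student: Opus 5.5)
The proof strategy is to show that $f$ is locally Lipschitz, hence continuous, with the bound supplied by $\Xi_2$. From there, weak holomorphy on the dense set $D$ plus local boundedness upgrades to strong holomorphy.

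Fix $z_0\in U$ and a closed disk $\overline{B(z_0,2r)}\subset U$. Set $M:=\sup_{|z-z_0|\le 2r}\norm{\Xi_2(z)}<\infty$.

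\textbf{Step 1: $\Xi_1$ is locally bounded.}
For $\Phi\in D$ put $h_\Phi(z)=\inner{\Phi}{f(z)}$. This is holomorphic with $h_\Phi'=\inner{\Phi}{\Xi_1}$ and $h_\Phi''=\inner{\Phi}{\Xi_2}$. For $z,w$ in the disk $B(z_0,r)$, integrate along the segment $[w,z]$:
\begin{equation}
  \inner{\Phi}{\Xi_1(z)-\Xi_1(w)} = \int_{[w,z]} \inner{\Phi}{\Xi_2(\zeta)}\,d\zeta,
\end{equation}
so that
\begin{equation}
  |\inner{\Phi}{\Xi_1(z)-\Xi_1(w)}| \le M|z-w|\,\norm{\Phi}.
\end{equation}
Since $D$ is dense and $\Xi_1(z)-\Xi_1(w)\in\sH$, taking the supremum over unit-norm $\Phi\in D$ gives $\norm{\Xi_1(z)-\Xi_1(w)}\le M|z-w|$. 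In particular $\Xi_1$ is bounded on $B(z_0,r)$, say by $M_1$.

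\textbf{Step 2: $f$ is locally Lipschitz.}
The same argument applied to $h_\Phi'$ gives $\norm{f(z)-f(w)}\le M_1|z-w|$.

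\textbf{Step 3: strong differentiability.}
Use Taylor's formula with integral remainder:
\begin{equation}
  h_\Phi(z)-h_\Phi(w)-(z-w)h_\Phi'(w)=\int_0^1 (1-s)(z-w)^2 h_\Phi''(w+s(z-w))\,ds .
\end{equation}
This yields
\begin{equation}
  |\inner{\Phi}{f(z)-f(w)-(z-w)\Xi_1(w)}|\le \tfrac12 M|z-w|^2\norm{\Phi}.
\end{equation}
By density of $D$, the norm estimate
\begin{equation}
  \norm{f(z)-f(w)-(z-w)\Xi_1(w)}\le \tfrac12 M|z-w|^2
\end{equation}
follows. Dividing by $z-w$ and letting $z\to w$ shows that $f$ is strongly complex differentiable at every $w\in B(z_0,r)$ with $f'(w)=\Xi_1(w)$. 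Since $z_0$ was arbitrary, $f$ is holomorphic on $U$.

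\textbf{Main obstacle.}
The only delicate point is passing from estimates tested against $\Phi\in D$ to norm estimates. This works only because the vectors being estimated ($f(z)-f(w)-(z-w)\Xi_1(w)$, and differences of $\Xi_1$) are genuine elements of $\sH$, so their norm equals the supremum of $|\inner{\Phi}{\cdot}|$ over unit-norm $\Phi$ in the dense set $D$.

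A second point to watch: the segment integrals require $h_\Phi'$ to be a genuine antiderivative of $\inner{\Phi}{\Xi_2}$. This holds because $h_\Phi$ is holomorphic, so $\zeta\mapsto\inner{\Phi}{\Xi_2(\zeta)}=h_\Phi''(\zeta)$ is itself holomorphic and hence continuous. No measurability assumption on $\Xi_2$ is therefore needed.
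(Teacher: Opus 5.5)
Your proof is correct, and its decisive Step 3 is the paper's argument. The paper writes the same second-order remainder as the double integral $\int_0^1\int_0^1 \inner{\Phi}{\Xi_2(z+hst)}\,ht\,ds\,dt$, bounds it by $|h|\sup_{0\le s,t\le 1}\norm{\Xi_2(z+hst)}$ using density of $D$, and concludes $\tfrac{d}{dz}f(z)=\Xi_1(z)$; your Steps 1 and 2 (local Lipschitz bounds for $\Xi_1$ and $f$) are not needed for this and can be dropped.
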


\begin{proof}
For any $\Phi\in D$ and for all $h\in \CC$ with $0<|h|<(1/2)\mathrm{dist}(z,U^\mathrm{c})$, we have
\begin{align}
 \inner{\Phi}{ \frac{f(z+h)-f(z)}{h} - \Xi_1(z)}
 &= \int_0^1 \inner{\Phi}{\Xi_1(z+ht)-\Xi_1(z)} \, dt \\
 &= \int_0^1 \int_0^1\inner{\Phi}{\Xi_2(z+hst)}ht \, ds\,dt .
\end{align}
Since $\norm{\Xi_2(z)}$ is bounded on every compact neighborhood of $z \in U$, we obtain
\begin{align}
  \Bigl\| \frac{f(z+h)-f(z)}{h} - \Xi_1(z) \Bigr\|
&= \sup_{\Phi \in D, \norm{\Phi}=1 } \bigg|
     \int_0^1\int_0^1 \inner{\Phi}{\Xi_2(z+hst)}ht \, ds\,dt \bigg| \\
& \leq |h| \sup_{0\leq s,t\leq 1} \norm{\Xi_2(z+hst)} \\
& \to 0 \qquad (h\to 0).
\end{align}
This completes the proof.
\end{proof}

\section{Two-Particle Creation Operators and Bogoliubov Transformations}
\label{AppenC}
In the construction of Bogoliubov transformations,
creation and annihilation operators associated with two-particle states
play an essential role.
In particular, given a Hilbert--Schmidt operator,
one can associate a two-particle creation operator,
whose exponential naturally appears in the description of ground states
of bosonic quadratic Hamiltonians.
We refer to vectors obtained by applying such exponentials
to the vacuum vector $\Omega$ as squeezed states.

This appendix collects several general facts concerning two-particle creation operators.
Our main focus is on the analytic properties of squeezed states.
We also present an explicit construction of a Bogoliubov transformation
associated with a given squeezed state.
More precisely, we show that such a vector can always be realized as the vacuum
for a suitably constructed Bogoliubov transformation.
This provides a general and model-independent correspondence
between Hilbert--Schmidt operators and Bogoliubov transformations,
which may be of independent interest.
Additional results needed in the main text are also included for completeness.

Let $\cS_2(\sH)$ be the set of Hilbert--Schmidt operators on $\sH$.
The space $\cS_2(\sH)$ with the inner product $\inner{A}{B} \coloneqq  \tr(A^*B)$~($A,B\in\cS_2(\sH)$)
is a Hilbert space. 
The associated norm is $\norm{A}_2\coloneqq  \tr(A^*A)^{1/2}$.
For a fixed conjugation $J$ on $\sH$\, (i.e., an anti-linear isometric involution),
there exists a unique unitary operator $\Theta:\cS_2(\sH)\to \sH\tensor \sH$ such that 
\begin{align}
  \Theta (\ketbra{f_1}{Jf_2}) = f_1\tensor f_2,
 \qquad f_1,f_2 \in \sH.
\end{align}
We note that for $K\in \cS_2(\sH)$, the condition $K=JK^*J$ is equivalent 
to $\Theta(K)\in \stensor^2 \sH$.
In other words, any symmetric two-particle state can be identified with a Hilbert--Schmidt operator $K$ satisfying $K=K^\mathrm{T}$ with respect to the conjugation $J$.

We introduce the creation operators for two-particle states. 
For $K \in \cS_2(\sH)$, we define
\begin{align}
  & \dom(\Delta^*(K)) \coloneqq  \left\{ \Psi = (\Psi^{(n)})_{n=0}^\infty \, \bigg| \, \sum_{n\geq 2} n(n-1) \norm{S_n(\Theta(K)\tensor \Psi^{(n-2)})}^2 < \infty\right\} \\
 & (\Delta^*(K) \Psi)^{(n)} \coloneqq  
  \begin{cases}
    0  & \qquad n=0,1 \\
   \sqrt{n(n-1)}S_n(\Theta(K)\tensor \Psi^{(n-2)}), & \qquad n\geq 2,
  \end{cases}
\end{align}
where $S_n$ denotes the symmetrization operator.
The two-particle annihilation operator is defined by $\Delta(K)\coloneqq  (\Delta^*(K))^*$.
One can easily verify the following facts:
\begin{itemize}
\item $\Delta^*(K)$ is a densely defined closed operator on $\Fb(\sH)$.
\item The domain of the number operator $\Nb \coloneqq  \dGb(\one)$ satisfies
  \begin{align}
    \dom(\Nb) \subset \dom(\Delta(K)) \cap \dom(\Delta^*(K)).
  \end{align}
  Moreover, for every $\Psi\in\dom(\Nb)$, we have
  \begin{align}
    \norm{\Delta^*(K)\Psi} \leq \norm{K}_2  \norm{\sqrt{(\Nb+1)(\Nb+2)}\Psi}. \label{Delta prop2}
  \end{align}
\item For any $m\in \NN$ and $\Psi \in \dom(\Nb^m)$, we have $\Delta^*(K)\Psi \in \dom(\Nb^{m-1})$.
\item For $u,v\in \sH$, we have $\Delta^*(\ketbra{u}{Jv}) = A^*(u)A^*(v)$ on $\dom(\Nb)$.
\item For every $K \in \cS_2(\sH)$, we have $\Delta^*(K) = \Delta^*((K+K^\mathrm{T})/2)$, 
which shows that $\Delta^*(K)$ corresponds to the creation operator of the 
symmetric two-particle state $\Theta((K+K^\mathrm{T})/2) = S_2\Theta(K)$.
\end{itemize}

Next, we review some basic known results on Bogoliubov transformations.
The main references are \cite{Rui78,EB}.
Note that there are differences in notation between these references and this paper.

Let $\Ffin(\sH)$ be the subspace spanned by the Fock vacuum $\Omega$ and vectors
of the form
$A^*(f_1)\cdots A^*(f_n)\Omega$ $(f_j \in \sH, j=1,\cdots,n; n\in\NN)$.
For $L \in \cB(\sH)$, we can define
\begin{align}
  \wick{e^{-\dGb(L)}} \Psi \coloneqq  \slim_{N\to\infty}\sum_{n=0}^N \frac{1}{n!} \wick{(-\dGb(L))^n} \Psi
\end{align}
for $\Psi \in \Ffin(\sH)$, where the symbol $\wick{\text{--}}$ denotes the 
Wick ordering\,(see \cite{Rui78,EB}).

The set of pairs $(X,Y)\in \cB(\sH)\times \cB(\sH)$ of bounded operators that satisfy
\begin{align}
  X^*X - Y^*Y = 1, \qquad X^*JYJ - Y^*JXJ = 0, \\
  XX^*-JYY^*J = 1, \qquad XY^* - JYX^*J=0,     \label{symp_conditions}
\end{align}
is denoted by $\mathfrak{Sp}$.
For $(X,Y)\in \mathfrak{Sp}$, we set
\begin{align}
  K_1\coloneqq  JYX^{-1}J, \qquad K_2\coloneqq  1-(X^{-1})^*, \qquad 
  K_3\coloneqq  -JX^{-1}JY.  \label{def:Ks}
\end{align}
Then, it holds that 
\begin{align}
  K_j = K_j^\mathrm{T}, \qquad \norm{K_j}<1, \qquad (j=1,3).
\end{align}

In the following, we assume that $Y\in \cS_2(\sH)$. 
Then, $K_1, K_3 \in \cS_2(\sH)$. We set 
\begin{align}
 B(f) \coloneqq  \ovl{A(Xf)+A^*(JYf)},  \qquad f\in \sH.  \label{def:B(f)}
\end{align}
The operators $\{B(f),B^*(f) \mid f\in\sH\}$ provide a representation of the canonical commutation relations (CCR).
Solving \eqref{def:B(f)} for $A(f)$, we have
\begin{align}
  A(f) = \ovl{B(X^*f)-B^*(Y^*Jf)}.  \label{eq:A(f)B(f)}
\end{align}
It is known that there exists a unitary operator $\sU$ on $\Fb(\sH)$ such that 
\begin{align}
  \sU B(f) \sU^* = A(f),\qquad f\in \sH.
\end{align}
This operator $\sU$ is given by 
\begin{align}
  \sU^* = \det(1-K_1^*K_1)^{1/4} e^{-\frac{1}{2}\Delta^*(K_1)} 
      \wick{ e^{-\dGb(K_2)} } e^{-\frac{1}{2} \Delta(K_3^*)}   \label{defU*}
\end{align}
on $\Ffin(\sH)$, up to complex multiples of absolute value one\,(see \cite[Proposition 2.5]{EB} and \cite{Rui78}).
Since
\begin{align}
      \wick{ e^{-\dGb(K_2)} } e^{-\frac{1}{2} \Delta(K_3^*)}\Omega = \Omega,
\end{align}
one has
\begin{align}
  \sU^* \Omega = \det(1-K_1^*K_1)^{1/4} e^{-\frac{1}{2}\Delta^*(K_1)} \Omega.
\end{align}

\begin{Theorem}\label{thm:squeezed-state}
Let $K\in \cS_2(\sH)$ satisfy the condition
\begin{align}
  \norm{K}<1.
\end{align}
Then, 
\begin{align}
  e^{-\frac{1}{2}\Delta^*(K)}\Omega 
 \coloneqq  \slim_{N\to\infty} \, \sum_{n=0}^N \frac{(-1)^n}{2^nn!} (\Delta^*(K))^n \Omega \label{24011221300}
\end{align}
converges, and the following bound
\begin{align}
  \norm{e^{-\frac{1}{2}\Delta^*(K)}\Omega}^2
  \leq e^{\norm{K}_2 ^2/(2-2\norm{K}^2)} \label{240527}
\end{align}
holds.
\end{Theorem}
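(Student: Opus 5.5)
The plan is to exploit the orthogonality of the terms and reduce to independent single-mode squeezed states via a Takagi-type factorization of $K$.

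\textbf{Orthogonality and symmetrization.} The vector $(\Delta^*(K))^n\Omega$ lies in the $2n$-particle subspace $\tensor_\mathrm{s}^{2n}\sH$. Hence the terms of the series \eqref{24011221300} are mutually orthogonal, and
\begin{align}
  \Big\|\sum_{n=0}^N \frac{(-1)^n}{2^nn!}(\Delta^*(K))^n\Omega\Big\|^2=\sum_{n=0}^N \frac{\|(\Delta^*(K))^n\Omega\|^2}{4^n(n!)^2}.
\end{align}
So both strong convergence and the bound \eqref{240527} follow once I show that the right-hand side, with $N=\infty$, is at most $e^{\norm{K}_2^2/(2-2\norm{K}^2)}$.

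Since $\Delta^*(K)=\Delta^*(K_s)$ with $K_s\coloneqq (K+K^\mathrm{T})/2$, I may replace $K$ by $K_s$. Because $\|K^\mathrm{T}\|=\|K\|$ and $\|K^\mathrm{T}\|_2=\|K\|_2$, we have $\|K_s\|\le\|K\|<1$ and $\|K_s\|_2\le\|K\|_2$. The right-hand side of \eqref{240527} is increasing in both norms, so it suffices to treat $K=K^\mathrm{T}$.

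\textbf{Takagi factorization and decoupling.} For a $J$-symmetric compact operator I will use the Takagi (Autonne--Takagi) factorization
\begin{align}
  K=\sum_j s_j\ketbra{e_j}{Je_j},
\end{align}
where $\{e_j\}$ is orthonormal and $s_j\ge0$ are the singular values of $K$. Then $s_j\le\|K\|<1$ and $\sum_j s_j^2=\|K\|_2^2$. By the listed property $\Delta^*(\ketbra{u}{Jv})=A^*(u)A^*(v)$, this gives $\Delta^*(K)=\sum_j s_jA^*(e_j)^2$.

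I will first take a finite truncation $K^{(M)}=\sum_{j\le M}s_j\ketbra{e_j}{Je_j}$. The modes $e_j$ are orthogonal, so the creation operators commute and the Fock space factorizes over $\mathrm{span}\{e_j\}$. For this finite truncation the squeezed vector is therefore a tensor product of single-mode vectors $\sum_n (-s_j/2)^n (A^*(e_j)^2)^n\Omega/n!$.

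In one mode, $\|(A^*)^{2n}\Omega\|^2=(2n)!$. Hence the squared norm of each single-mode factor is
\begin{align}
  \sum_n \frac{(2n)!}{4^n(n!)^2}s_j^{2n}=(1-s_j^2)^{-1/2},
\end{align}
the central-binomial series, which converges since $s_j<1$. Therefore the $N=\infty$ sum for $K^{(M)}$ equals $\prod_{j\le M}(1-s_j^2)^{-1/2}$. Using $-\log(1-x)\le x/(1-x)$ together with $s_j\le\|K\|$, this product is bounded by
\begin{align}
  \exp\Big(\tfrac12\sum_j \tfrac{s_j^2}{1-s_j^2}\Big)\le e^{\|K\|_2^2/(2-2\|K\|^2)}.
\end{align}

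\textbf{Removing the truncation.} This is the step I expect to be the main obstacle. I need to show that $\|(\Delta^*(K^{(M)}))^n\Omega\|\to\|(\Delta^*(K))^n\Omega\|$ for each fixed $n$. This follows from \eqref{Delta prop2} applied iteratively on finite-particle vectors, since $K^{(M)}\to K$ in $\cS_2$ and $(\Delta^*(\cdot))^n\Omega$ is a polynomial in $K$ that is continuous in the Hilbert--Schmidt norm. Fatou's lemma for sums then gives
\begin{align}
  \sum_n \frac{\|(\Delta^*(K))^n\Omega\|^2}{4^n(n!)^2}\le \liminf_{M\to\infty}\prod_{j\le M}(1-s_j^2)^{-1/2}\le e^{\|K\|_2^2/(2-2\|K\|^2)}.
\end{align}
This yields convergence of \eqref{24011221300} and the bound \eqref{240527}.

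If invoking the Takagi factorization in infinite dimensions needs justification, I would derive it from the polar decomposition of $K$ together with $KJ=JK^*$ restricted to eigenspaces of $K^*K$. The corresponding eigenspaces are finite-dimensional, and on each of them the finite-dimensional Takagi theorem applies.
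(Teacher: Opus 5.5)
Your proof is correct. Its outer skeleton coincides with the paper's: reduce to $K=K^{\mathrm T}$ through $L=(K+K^{\mathrm T})/2$, express the squared norm as $\prod_j(1-\mu_j)^{-1/2}$ with $\mu_j=s_j^2$ the eigenvalues of $K^*K$, and bound it via $(1-\mu_j)^{-1}\le e^{\mu_j/(1-\mu_1)}$. You also make explicit that $\|L\|\le\|K\|$ is needed alongside $\|L\|_2\le\|K\|_2$, which the paper uses tacitly.

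The real difference is the middle step. The paper proves neither convergence nor the product formula; it imports $\|e^{-\frac12\Delta^*(K)}\Omega\|^2=\det(1-K^*K)^{-1/2}$ from \cite[eq.~(4.36)]{Rui78}. You derive it:
\begin{itemize}
\item orthogonality of the $2n$-particle terms reduces convergence to summability of $\|(\Delta^*(K))^n\Omega\|^2/(4^n(n!)^2)$;
\item the Takagi factorization decouples the modes of a finite truncation, so the central-binomial series gives $(1-s_j^2)^{-1/2}$ per mode;
\item \eqref{Delta prop2} together with Fatou's lemma removes the truncation.
\end{itemize}

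The paper's route is shorter and states the exact identity. Yours is self-contained and uses only the elementary properties of $\Delta^*$ listed in the appendix. It also loses nothing: the multinomial expansion shows $\|(\Delta^*(K^{(M)}))^n\Omega\|^2$ is nondecreasing in $M$. Monotone convergence therefore upgrades your Fatou inequality to the equality $\sum_n\|(\Delta^*(K))^n\Omega\|^2/(4^n(n!)^2)=\prod_j(1-s_j^2)^{-1/2}=\det(1-K^*K)^{-1/2}$.

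Your justification of the infinite-dimensional Takagi factorization also goes through. From $KJ=JK^*$, the antilinear operator $C=KJ$ satisfies $\langle u,Cv\rangle=\langle v,Cu\rangle$ and $C^2=KK^*$. Hence on each finite-dimensional eigenspace of $KK^*$ with eigenvalue $s^2>0$, the map $s^{-1}C$ is a conjugation. An orthonormal basis of its fixed-point subspace supplies the $e_j$ with $KJe_j=s\,e_j$.
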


\begin{proof}
First we consider the case of $K=K^\mathrm{T}$. 
By \cite[eq.~(4.36)]{Rui78}, it is proven that the series
\eqref{24011221300} converges, and 
\begin{align}
  \norm{e^{-\frac{1}{2}\Delta^*(K)}\Omega}^2 = \frac{1}{(\det(1-K^*K))^{1/2}}.  \label{norm of gs1}
\end{align}
holds (see also \cite{EB}), where we used the fact that $K=K^\mathrm{T}$. 
Let $\mu_1\geq \mu_2 \geq \cdots$ be the eigenvalues of $K^*K$.
Then we have $\mu_1 = \norm{K}^2 < 1$ and $\norm{K}_2 ^2= \sum_j \mu_j<\infty$ by assumption.
By using the inequality $(1-\mu_j)^{-1} \leq e^{\mu_j/(1-\mu_1)}$, we have the bound
\begin{align}
  (\det(1-K^*K))^{-1/2} = \Big(\prod_j (1-\mu_j)\Big)^{-1/2} \leq e^{\norm{K}_2^2/(2-2\norm{K}^2)}.
\end{align}

Next, we consider the case where $K$ is not necessarily equal to $K^\mathrm{T}$.
Let $L=(K+K^\mathrm{T})/2$. Then $L$ is Hilbert--Schmidt, $L=L^\mathrm{T}$, $\norm{L}<1$
and $\Delta^*(K) = \Delta^*(L)$. Thus the above first case implies that \eqref{24011221300} converges and 
\begin{align}
  \norm{e^{-\frac{1}{2}\Delta^*(K)}\Omega}^2
  \leq e^{\norm{L}_2^2/(2-2\norm{L}^2)}.
\end{align}
Since 
\begin{align}
  \norm{L}_2  
  \leq \frac{1}{2}(\norm{K}_2  +\norm{K^\mathrm{T}}_2 )
  = \norm{K}_2,
\end{align}
the bound \eqref{240527} holds.
\end{proof}

\begin{Theorem}\label{thm:GS-prop2}
Assume the same conditions as in Theorem \ref{thm:squeezed-state}.
Then, for any $k \in \NN$, the vector $e^{-\frac{1}{2}\Delta^*(K)} \Omega$ belongs to $ \dom(\Nb^k) $.
 Furthermore, for any $1 < \nu < \norm{K}^{-2}$, the following inequality holds:
 \begin{align}
\norm{ \Nb^k e^{-\frac{1}{2}\Delta^*(K)} \Omega}^2
\leq \left( \frac{4k}{e\log \nu}\right)^{2k} 
\exp\left( \frac{\nu}{2(1-\nu\norm{K}^2)} \norm{K}_2 ^2 \right). \label{ineq:GS-prop2}
 \end{align}
\end{Theorem}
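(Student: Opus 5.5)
The plan is to exploit that each term of the series \eqref{24011221300} lives in a single particle-number sector, and then to trade the polynomial weight $(2n)^{2k}$ for a geometric weight $\nu^n$ by rescaling $K$. Set $\Psi_n\coloneqq \frac{(-1)^n}{2^n n!}(\Delta^*(K))^n\Omega$. By the definition of $\Delta^*(K)$, the vector $\Psi_n$ lies in the $2n$-particle subspace $\tensor_\mathrm{s}^{2n}\sH$. Hence the $\Psi_n$ are mutually orthogonal and $\Nb\Psi_n=2n\Psi_n$. Writing $a_n\coloneqq\norm{\Psi_n}^2$, Theorem \ref{thm:squeezed-state} gives $\sum_n a_n=\norm{e^{-\frac12\Delta^*(K)}\Omega}^2<\infty$.

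\emph{Step 1 (scaling).} For $x>0$ with $x\norm{K}^2<1$, apply Theorem \ref{thm:squeezed-state} to $\sqrt{x}K$. This is allowed since $\sqrt{x}K\in\cS_2(\sH)$ and $\norm{\sqrt{x}K}<1$. Because $\Delta^*$ is linear in $K$, we have $(\Delta^*(\sqrt{x}K))^n\Omega=x^{n/2}(\Delta^*(K))^n\Omega$. Orthogonality of the sectors then gives
\begin{equation}
  \sum_{n=0}^\infty x^n a_n=\norm{e^{-\frac12\Delta^*(\sqrt{x}K)}\Omega}^2\leq \exp\Big(\frac{x\norm{K}_2^2}{2(1-x\norm{K}^2)}\Big).
\end{equation}
We take $x=\nu$ with $1<\nu<\norm{K}^{-2}$.

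\emph{Step 2 (domain and norm).} Since $\Nb^k$ is self-adjoint, hence closed, and the partial sums $\sum_{n\le N}\Psi_n$ converge to $e^{-\frac12\Delta^*(K)}\Omega$, it suffices to show that $\sum_n (2n)^{2k}a_n<\infty$. Orthogonality then yields both $e^{-\frac12\Delta^*(K)}\Omega\in\dom(\Nb^k)$ and $\norm{\Nb^k e^{-\frac12\Delta^*(K)}\Omega}^2=\sum_n(2n)^{2k}a_n$. To bound this sum, write $(2n)^{2k}a_n=\big[(2n)^{2k}\nu^{-n}\big]\,\nu^n a_n$. This gives
\begin{equation}
  \sum_n (2n)^{2k}a_n\leq \sup_{s\geq0}(2s)^{2k}e^{-s\log\nu}\cdot\exp\Big(\frac{\nu\norm{K}_2^2}{2(1-\nu\norm{K}^2)}\Big).
\end{equation}

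\emph{Step 3 (elementary maximization).} The function $s\mapsto 2k\log(2s)-s\log\nu$ has its maximum at $s=2k/\log\nu$. The maximal value of $(2s)^{2k}\nu^{-s}$ is therefore $(4k/\log\nu)^{2k}e^{-2k}=\big(\frac{4k}{e\log\nu}\big)^{2k}$, which is exactly the constant in \eqref{ineq:GS-prop2}.

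There is no serious obstacle in this argument. The only points that need care are the orthogonality and homogeneity facts used in Step 1, namely that $(\Delta^*(K))^n\Omega$ is a $2n$-particle vector and that it scales as $x^{n/2}$ under $K\mapsto\sqrt{x}K$. Both follow directly from the definition of $\Delta^*$. The reduction to $K=K^\mathrm{T}$ is already handled inside Theorem \ref{thm:squeezed-state}, so no symmetry assumption is needed here.
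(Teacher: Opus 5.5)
Your proof is correct and follows essentially the paper's route. You split the squeezed state into orthogonal $2m$-particle sectors, dominate $(2m)^{2k}$ by $\bigl(\frac{4k}{e\log\nu}\bigr)^{2k}\nu^m$, and bound the generating function $\sum_m a_m\nu^m$; the only difference is cosmetic: you identify that sum with $\norm{e^{-\frac{1}{2}\Delta^*(\sqrt{\nu}K)}\Omega}^2$ and invoke Theorem \ref{thm:squeezed-state} directly, which also absorbs the non-symmetric case that the paper handles separately via $L=(K+K^\mathrm{T})/2$.
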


\begin{proof}
First we consider the case of $K=K^\mathrm{T}$.
Since $(\Delta^*(K))^m\Omega$ is a $2m$-particle state, we have
\begin{align}
  \Nb^k e^{-\frac{1}{2}\Delta^*(K)} \Omega 
  = \sum_{m=0}^\infty \frac{(-1)^m}{2^mm!}(2m)^k (\Delta^*(K))^m\Omega
  \in \prod_{n=0}^\infty \left( \Tensor_\mathrm{s}^n \sH \right).
\end{align}
Therefore, 
\begin{align}
 \norm{ \Nb^k e^{-\frac{1}{2}\Delta^*(K)} \Omega}^2
 = \sum_{m=0}^\infty (2m)^{2k} a_m  \in [0,\infty],  \label{nkgs1}
\end{align}
where
\begin{align}
  a_m = \frac{ \norm{(\Delta^*(K))^m\Omega}^2}{ (2^m m!)^2 }.
\end{align}
Let $F(z) \coloneqq  \sum_{m=0}^\infty a_m z^m$. Then, similarly to 
\eqref{norm of gs1}, for $|z|<\norm{K}^{-2}$, we have
\begin{align}
   F(z) = \prod_{j=1}^\infty (1-z \mu_j)^{-1/2},
\end{align}
which converges absolutely.

On the other hand, by simple calculus computations, we can show that
\begin{align}
  (2m)^{2k} \leq C_{k,\nu} \nu^m, \qquad m=0,1,2,\cdots
\end{align}
where $C_{k,\nu} \coloneqq  (4k/(e\log\nu))^{2k}$. Therefore, 
\begin{align}
  \eqref{nkgs1} \leq C_{k,\nu} \sum_{m=0}^\infty a_m \nu^m = C_{k,\nu} F(\nu).
\end{align}
Using the inequality $(1-\nu\mu_j)^{-1}\leq \exp(\nu\mu_j/(1-\nu\mu_1))$ for the 
product representation of $F(\nu)$, we obtain
\begin{align}
  F(\nu) \leq \exp\left( \frac{\nu}{2(1-\nu\mu_1)}(\mu_1+\mu_2+\cdots)\right) 
             = \exp\left( \frac{\nu}{2(1-\nu\mu_1)} \norm{K}_2 ^2\right).
\end{align}
Thus, the inequality \eqref{ineq:GS-prop2} holds.

Next, we consider the case where $K$ is not necessarily equal to $K^\mathrm{T}$.
Let $L=(K+K^\mathrm{T})/2$. Then, for any $1 < \nu < \norm{K}^{-2}\leq \norm{L}^{-2}$, we have
 \begin{align}
\norm{ \Nb^k e^{-\frac{1}{2}\Delta^*(K)} \Omega}^2
& \leq C_{k,\nu} \exp\left( \frac{\nu}{2(1-\nu\norm{L}^2)} \norm{L}_2 ^2 \right) \\
& \leq  C_{k,\nu} \exp\left( \frac{\nu}{2(1-\nu\norm{K}^2)} \norm{K}_2 ^2 \right).
 \end{align}
This completes the proof.
\end{proof}

\begin{Theorem}\label{gs:Holo}
Let $U$ be an open set in $\CC$.
Assume that $K: U \to \cS_2(\sH)$ is holomorphic, and $\norm{K(z)}<1$ for all $z\in U$.
Then the map
\begin{align}
 U \to \Fb(\sH), \qquad 
    z \mapsto e^{-\frac{1}{2}\Delta^*(K(z))}\Omega
\end{align}
is holomorphic.
\end{Theorem}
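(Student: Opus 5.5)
The plan is to write $\Psi(z)\coloneqq e^{-\frac12\Delta^*(K(z))}\Omega=\sum_{m\ge0}\Psi_m(z)$, where
\begin{align}
  \Psi_m(z)\coloneqq \frac{(-1)^m}{2^m m!}(\Delta^*(K(z)))^m\Omega\in\Tensor_\mathrm{s}^{2m}\sH .
\end{align}
We show that each $\Psi_m$ is holomorphic and that the series converges uniformly on compact subsets of $U$. Uniform limits of holomorphic Banach-space-valued functions are holomorphic (for instance via Cauchy's integral formula applied to bounded functionals), so this gives the claim.

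\textbf{Step 1: each $\Psi_m$ is holomorphic.} Let $\Theta$ be the unitary identification of $\cS_2(\sH)$ with $\sH\tensor\sH$. Then $(\Delta^*(K))^m\Omega$ equals $\sqrt{(2m)!}\,S_{2m}\big(\Theta(K)^{\tensor m}\big)$. This follows inductively from the definition of $\Delta^*$, since $\Delta^*(K)$ applied to a $(2m-2)$-particle vector equals $\sqrt{2m(2m-1)}\,S_{2m}(\Theta(K)\tensor\cdot)$. The map $z\mapsto\Theta(K(z))$ is holomorphic into $\sH\tensor\sH$. The map $(\xi_1,\dots,\xi_m)\mapsto S_{2m}(\xi_1\tensor\cdots\tensor\xi_m)$ is bounded and multilinear. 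Hence $z\mapsto\Psi_m(z)$ is a polynomial expression in a holomorphic function and is therefore holomorphic into $\Fb(\sH)$.

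\textbf{Step 2: uniform bound on compact sets (the main step).} Fix a compact set $V\subset U$. By continuity, $r\coloneqq\sup_{z\in V}\norm{K(z)}<1$ and $M\coloneqq\sup_{z\in V}\norm{K(z)}_2<\infty$. The components $\Psi_m(z)$ lie in mutually orthogonal sectors, so
\begin{align}
  \sum_{m\ge0}\norm{\Psi_m(z)}^2=\norm{e^{-\frac12\Delta^*(K(z))}\Omega}^2 .
\end{align}
This alone controls the norm but not the uniform tail. For the tail I would use the trick from the proof of Theorem \ref{thm:GS-prop2}. Fix $1<\nu<r^{-2}$ and apply that argument (the function $F$ there, i.e.\ replace $K$ by $\sqrt{\nu}K$, first passing to $L=(K+K^\mathrm{T})/2$ as in Theorem \ref{thm:squeezed-state}). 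This gives
\begin{align}
  \sum_{m}\nu^m\norm{\Psi_m(z)}^2\le \exp\Big(\frac{\nu M^2}{2(1-\nu r^2)}\Big)\eqqcolon C
\end{align}
for all $z\in V$. Indeed $\norm{\sqrt{\nu}L(z)}\le\sqrt{\nu}\,r<1$, and Theorem \ref{thm:squeezed-state} applies to $\sqrt\nu K(z)$, because $(\Delta^*(\sqrt\nu K))^m\Omega=\nu^{m/2}(\Delta^*(K))^m\Omega$. Consequently $\norm{\Psi_m(z)}\le C^{1/2}\nu^{-m/2}$ uniformly on $V$.

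\textbf{Step 3: conclusion.} The bound in Step 2 is summable, so by the Weierstrass $M$-test $\sum_m\Psi_m(z)$ converges absolutely and uniformly on $V$. Combined with Step 1, the limit $z\mapsto e^{-\frac12\Delta^*(K(z))}\Omega$ is holomorphic on $U$.

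The only delicate point is that Theorem \ref{thm:squeezed-state}'s bound is uniform in $z$ only through $\norm{K}$ and $\norm{K}_2$. That is exactly what continuity on compact sets provides, together with the scaling by $\sqrt\nu$.
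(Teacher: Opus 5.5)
Your proof is correct, and it reaches holomorphy by a different mechanism than the paper.

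\textbf{The paper's route.} The paper never splits $\Psi(z)=e^{-\frac12\Delta^*(K(z))}\Omega$ into particle sectors. It applies the criterion of Lemma~\ref{lem:criterion-holomorphy}:
\begin{itemize}
\item the pairings $\langle\Phi,\Psi(z)\rangle$ with $\Phi\in\Ffin(\sH)$ are holomorphic;
\item the candidate derivatives are $\Xi_1(z)=-\frac12\Delta^*(K'(z))\Psi(z)$ and $\Xi_2(z)=-\frac12\Delta^*(K''(z))\Psi(z)+\frac14\Delta^*(K'(z))^2\Psi(z)$;
\item $\|\Xi_2(z)\|$ is bounded locally uniformly using $\|\Delta^*(K)\Psi\|\le\|K\|_2\|\sqrt{(\Nb+1)(\Nb+2)}\Psi\|$ and the number-operator estimates of Theorem~\ref{thm:GS-prop2}.
\end{itemize}

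\textbf{Common core.} Both arguments use the same device: evaluating the generating function $\sum_m\nu^m\|\Psi_m\|^2$ at some $\nu\in(1,\|K\|^{-2})$. You use it directly to get the geometric bound $\|\Psi_m(z)\|\le C^{1/2}\nu^{-m/2}$ uniformly on compacts. You obtain this correctly by applying Theorem~\ref{thm:squeezed-state} to $\sqrt\nu\,K(z)$, using $(\Delta^*(\sqrt\nu K))^m\Omega=\nu^{m/2}(\Delta^*(K))^m\Omega$ and sector orthogonality. The paper instead converts the same device into polynomial moment bounds for $\Nb$.

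\textbf{What each buys.} Your route is more elementary:
\begin{itemize}
\item it needs no domain bookkeeping for $\Delta^*$, no dense-set criterion, and no moment estimates;
\item it gives locally uniform convergence of a series of vector-valued polynomials in $\Theta(K(z))$, so everything can be done termwise. This includes $\Psi'(z)=-\frac12\Delta^*(K'(z))\Psi(z)$, since the two-particle creation operators commute, and locally uniform bounds on $(\Nb+1)^k\Psi(z)$, since $\sum_m(2m+1)^k\nu^{-m/2}<\infty$.
\end{itemize}
The paper's route produces the derivative formula immediately. It also sets up the template (criterion lemma plus number estimates) that it reuses for $(\Nb+1)^n\Psi_0(z)$ in the proof of Theorem~\ref{thm:dpf_holo}. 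Your geometric-decay estimate would in fact simplify that later step as well.
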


\begin{proof}
For each $z\in U$, we set $\Psi(z) \coloneqq  e^{-\frac{1}{2}\Delta^*(K(z))}\Omega$.
We show that $\Psi(z)$ is holomorphic by applying Lemma~\ref{lem:criterion-holomorphy}.

For each $n=1,2,\cdots$, the function $(\Delta^*(K(z)))^n\Omega$ is holomorphic on $U$, 
and 
\begin{align}
  \frac{d}{dz}(\Delta^*(K(z)))^n\Omega
  = n\Delta^*(K'(z)) (\Delta^*(K(z)))^{n-1}\Omega.
\end{align}
Define
\begin{align}
& \Xi_1(z) \coloneqq  -\frac{1}{2}\Delta^*(K'(z)) \Psi(z) \\
& \Xi_2(z) \coloneqq  -\frac{1}{2}\Delta^*(K''(z))  \Psi(z)
            +\frac{1}{4}\Delta^*(K'(z)) \Delta^*(K'(z)) \Psi(z).
\end{align}
These are well-defined. 

Indeed, by Theorem~\ref{thm:GS-prop2}, $\Psi(z)\in\dom(\Nb^2)$.
It follows that $\Delta^*(K'(z))\Psi(z)\in\dom(\Nb)$.
Since $\dom(\Nb)\subset\dom(\Delta(K''(z)))$,
the vector $\Delta^*(K'(z))\Psi(z)$ belongs to $\dom(\Delta(K''(z)))$,
and hence $\Xi_2(z)$ is well-defined.

Since the $n$-particle component of $\Psi(z)$ is holomorphic,
for any $\Phi\in\Ffin(\sH)$, $\inner{\Phi}{\Psi(z)}$ is holomorphic and 
\begin{align}
  \frac{d}{dz}\inner{\Phi}{\Psi(z)}  = \inner{\Phi}{\Xi_1(z)}, \qquad
  \frac{d^2}{dz^2}\inner{\Phi}{\Psi(z)}  = \inner{\Phi}{\Xi_2(z)}.
\end{align}
On the other hand, we have the bound
\begin{align}
  \norm{\Xi_2(z)} 
 & \leq \frac{1}{2} \norm{ \Delta^*(K''(z)) \Psi(z)} 
       + \frac{1}{4} \norm{ (\Delta^*(K'(z)))^2 \Psi(z)} \\
 & \leq \frac{1}{2} \norm{K''(z)}_2  \norm{ (\Nb+2) \Psi(z) } 
   + \frac{1}{4} \norm{K'(z)}_2^2\norm{ (\Nb+4)^2 \Psi(z)}.
\end{align}
This estimate, together with Theorem \ref{thm:GS-prop2}, implies that $\norm{\Xi_2(z)}$ is bounded on compact subsets of $U$. 
Therefore, by Lemma~\ref{lem:criterion-holomorphy}, $\Psi(z)$ is holomorphic.
\end{proof}

Next, for a given $K \in \cS_2(\sH)$ with $K=K^\mathrm{T}$ and $\norm{K}<1$, 
we show that there exists a family of creation and annihilation operators 
such that $\exp(-\tfrac{1}{2}\Delta^*(K))\Omega$ serves as the vacuum vector, 
that is, it is annihilated by the corresponding annihilation operators.
\begin{Proposition}\label{prop:C_CCR}
  Let $K\in \cB(\sH)$ satisfy $K=K^\mathrm{T}$ and $\norm{K}<1$, and set
  $\gamma \coloneqq  1-KK^*$. Then the pair
  $(\gamma^{-1/2}, K^*\gamma^{-1/2})$ belongs to $\mathfrak{Sp}$.
  In particular, if we define
  \begin{align}
    C(f) \coloneqq  \ovl{A(\gamma^{-1/2}f) + A^*(JK^*\gamma^{-1/2}f)},
    \qquad f\in \sH,   \label{def:C(f)}
  \end{align}
  then $\{C(f),C^*(f) \mid f\in \sH\}$ satisfies CCR on $\dom(\Nb)$.
\end{Proposition}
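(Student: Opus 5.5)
The plan is to verify the four relations in \eqref{symp_conditions} directly for $X\coloneqq \gamma^{-1/2}$ and $Y\coloneqq K^*\gamma^{-1/2}$. First, note the preliminaries. Since $\norm{K}<1$, $\gamma=1-KK^*$ is a bounded self-adjoint operator with $\gamma\geq 1-\norm{K}^2>0$. So $\gamma^{-1/2}$ is bounded, self-adjoint and positive. The symmetry $K=K^\mathrm{T}$ means $K=JK^*J$, hence also $K^*=JKJ$. It follows that
\begin{equation}
J\gamma J = 1-JKJ\,JK^*J = 1-K^*K \eqqcolon \til\gamma .
\end{equation}
Because $J$ is anti-unitary and $\gamma$ is self-adjoint, continuous functional calculus commutes with conjugation by $J$: approximate $x^{-1/2}$ uniformly on $\sigma(\gamma)$ by polynomials with \emph{real} coefficients. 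This gives $J\gamma^{-1/2}J=\til\gamma^{-1/2}$. We also need the intertwining relation $K^*\gamma=\til\gamma K^*$, which is immediate from $K^*(1-KK^*)=(1-K^*K)K^*$. Again by functional calculus, this yields $K^*f(\gamma)=f(\til\gamma)K^*$ for continuous $f$, and in particular $K^*\gamma^{-1/2}=\til\gamma^{-1/2}K^*$ and $\gamma^{-1/2}K=K\til\gamma^{-1/2}$.

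Next, check the four identities.
\begin{itemize}
\item[(i)] $X^*X-Y^*Y=\gamma^{-1/2}(1-KK^*)\gamma^{-1/2}=1$.
\item[(ii)] $XX^*-JYY^*J$. Here $YY^*=K^*\gamma^{-1}K=\til\gamma^{-1/2}K^*K\til\gamma^{-1/2}$, and conjugating by $J$ gives $JYY^*J=\gamma^{-1/2}KK^*\gamma^{-1/2}$. So $XX^*-JYY^*J=\gamma^{-1/2}(1-KK^*)\gamma^{-1/2}=1$.
\item[(iii)] $X^*JYJ-Y^*JXJ$. We have $JYJ=JK^*J\,J\gamma^{-1/2}J=K\til\gamma^{-1/2}=\gamma^{-1/2}K$ and $JXJ=\til\gamma^{-1/2}$. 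Hence $X^*JYJ=\gamma^{-1}K$, and $Y^*JXJ=\gamma^{-1/2}K\til\gamma^{-1/2}=\gamma^{-1}K$. The difference vanishes.
\item[(iv)] $XY^*-JYX^*J$. Here $XY^*=\gamma^{-1/2}\gamma^{-1/2}K=\gamma^{-1}K$, and $JYX^*J=(JYJ)(JXJ)=\gamma^{-1/2}K\til\gamma^{-1/2}=\gamma^{-1}K$. Again the difference vanishes.
\end{itemize}
So $(\gamma^{-1/2},K^*\gamma^{-1/2})\in\mathfrak{Sp}$.

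For the CCR statement, first check domains. Since $X,Y$ are bounded, $A(Xf)$ and $A^*(JYf)$ are defined on $\dom(\Nb)\subset\dom(\Nb^{1/2})$ and leave it invariant after one application, so the commutators make sense on $\dom(\Nb)$ (or on $\Ffin(\sH)$ and then extend). Then compute, using the standard CCR $[A(u),A^*(v)]=\inner{u}{v}$:
\begin{equation}
[C(f),C^*(g)]=\inner{Xf}{Xg}-\inner{JYg}{JYf}=\inner{f}{(X^*X-Y^*Y)g}=\inner{f}{g}.
\end{equation}
Here we used the anti-unitarity of $J$ and relation (i). Similarly,
\begin{equation}
[C(f),C(g)]=\inner{Xf}{JYg}-\inner{Xg}{JYf},
\end{equation}
and this vanishes by relation (iii), after rewriting both terms via $J$ as the bilinear form $\inner{f}{X^*JYJ\,\cdot}$ up to conjugations. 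The remaining relation $[C^*(f),C^*(g)]=0$ follows by taking adjoints.

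The main obstacle is the careful bookkeeping with the anti-linear $J$, in particular justifying $J\gamma^{-1/2}J=(1-K^*K)^{-1/2}$. The real-coefficient polynomial approximation argument handles this.
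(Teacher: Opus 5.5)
Your proof is correct. It differs from the paper's only in how the four relations are verified, and it additionally supplies the CCR check that the paper leaves implicit.

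\textbf{Paper's route.} The paper avoids functional calculus beyond $\gamma^{\pm1}$. It uses the antilinear identity $JK^*=KJ$, which is equivalent to $K=K^\mathrm{T}$. This gives $X^*JYJ=Y^*JXJ$ at once, because $\gamma^{-1/2}$ sits on the outside of both terms. For $XX^*-JYY^*J=1$ it uses $KJ\gamma=\gamma KJ$, hence $KJ\gamma^{-1}=\gamma^{-1}KJ$, which suffices since $YY^*=K^*\gamma^{-1}K$ involves only $\gamma^{-1}$. The fourth relation is then quoted as a standard consequence of the first three.

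\textbf{Your route.} You introduce $\tilde\gamma=1-K^*K$ and prove the square-root identities $J\gamma^{-1/2}J=\tilde\gamma^{-1/2}$ and $K^*\gamma^{-1/2}=\tilde\gamma^{-1/2}K^*$ by polynomial approximation. For the $J$-conjugation the coefficients must be real, and the approximation should be uniform on an interval such as $[1-\norm{K}^2,1]$ that contains both spectra. With these in hand, all four relations, including $XY^*=JYX^*J$, become one-line computations. This costs a little more operator-theoretic machinery, but the argument is self-contained and does not rely on the unproved ``standard fact''.

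\textbf{CCR part.} Your verification is correct, and you rightly identify $[C(f),C(g)]=0$ with exactly the relation $X^*JYJ=Y^*JXJ$. One wording point: $A^\sharp(u)$ does not leave $\dom(\Nb)$ invariant. What makes the commutators meaningful on $\dom(\Nb)$ is that $A^\sharp(u)$ maps $\dom(\Nb)$ into $\dom(\Nb^{1/2})$, which is contained in $\dom(A^\sharp(v))$.
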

\begin{proof}
Set $X\coloneqq \gamma^{-1/2}$ and $Y\coloneqq  K^*\gamma^{-1/2}$. We show that these satisfy
the conditions in \eqref{symp_conditions}.  
The first identity follows directly from the definition:
\begin{align}
  X^*X-Y^*Y
    = \gamma^{-1/2}\gamma^{-1/2}
      -\gamma^{-1/2}KK^*\gamma^{-1/2}
    = \one.
\end{align}
Next, from $K=K^\mathrm{T}$ we obtain $JK^*=KJ$. Hence
\begin{align}
  X^*JYJ-Y^*JXJ
    = \gamma^{-1/2}JK^*\gamma^{-1/2}J
      -\gamma^{-1/2}KJ\gamma^{-1/2}J
    = 0,
\end{align}
which gives the second identity. Furthermore,
\begin{equation}
  KJ\gamma = KJ(1-KK^*) = K(1-K^*K)J = \gamma KJ,
\end{equation}
so $KJ$ commutes with $\gamma$, and therefore also with $\gamma^{-1}$.
Using this, we obtain the third identity:
\begin{align}
  XX^*-JYY^*J
    &= \gamma^{-1}-JK^*\gamma^{-1}KJ \\
    &= \gamma^{-1}-JK^*KJ\gamma^{-1} \\
    &= \one.
\end{align}
The fourth identity follows from the first three, which is a standard fact.
\end{proof}

\begin{Remark}
Given $(X,Y)\in \mathfrak{Sp}$, a representation of the CCR
$\{B(f),B^*(f)\mid f\in\sH\}$ is defined by \eqref{def:B(f)}.
On the other hand, starting from $(X,Y)\in \mathfrak{Sp}$ we define
$K_1\coloneqq  JYX^{-1}J$, and setting $K=K_1$, we construct a representation of the CCR
$\{C(f),C^*(f)\mid f\in\sH\}$ by the method described above.
We now clarify the relationship between these two representations.

In this case we have
$\gamma=1-K_1K_1^*=(XX^*)^{-1}$, and hence $\gamma^{-1/2}=|X^*|$.
Therefore,
\begin{align}
  C(f) = \ovl{ A(|X^*|f) + A^*(K_1J|X^*|f) }.
\end{align}
Furthermore, taking the polar decomposition $X=|X^*|V$, we see that $X$
is invertible and hence $V$ is unitary. Using this we obtain
\begin{align}
  C(f) &= \ovl{A(XV^*f)+A^*(K_1JXV^*f)}
       = \ovl{A(XV^*f)+A^*(JYV^*f)} \\
       &= B(V^*f).
\end{align}
Thus, $\{C(f),C^*(f) \mid f\in\sH\}$ is naturally unitarily equivalent to
$\{B(f),B^*(f) \mid f\in\sH\}$.
\end{Remark}

\medskip

The next theorem shows that the vector $e^{-\frac{1}{2}\Delta^*(K)}\Omega$
serves as an (unnormalized) vacuum vector
for the representation $\{C(f),C^*(f)\mid f\in\sH\}$.
\begin{Theorem}\label{thm:C_EXP}
Let $K\in\cS_2(\sH)$ satisfy $K=K^\mathrm{T}$ and $\norm{K}<1$.
Let $C(f)$ be the operator defined by \eqref{def:C(f)}. Then
\begin{align}
  C(f)e^{-\frac{1}{2}\Delta^*(K)}\Omega = 0 \qquad (f\in\sH)
\end{align}
holds.
In particular,  there exists a unique regular Bogoliubov transformation
$\cU$ on $\Fb(\sH)$ such that
\begin{align}
  \cU C(f)\cU^* = A(f) \qquad (f\in\sH)
\end{align}
and 
\begin{align}
 \cU e^{-\frac{1}{2}\Delta^*(K)}\Omega = c\Omega 
 \qquad (c=\norm{e^{-\frac{1}{2}\Delta^*(K)}\Omega} ).
\end{align}
\end{Theorem}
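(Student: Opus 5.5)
The plan is to check $C(f)\Psi=0$, with $\Psi\coloneqq e^{-\frac12\Delta^*(K)}\Omega$, directly from the commutation relations between $A(h)$ and $\Delta^*(K)$. Then the existence and uniqueness statements follow from the general Bogoliubov theory recalled above.

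First, write $C(f)=A(\gamma^{-1/2}f)+A^*(JK^*\gamma^{-1/2}f)$ on $\dom(\Nb)$. This is legitimate because $\Psi\in\dom(\Nb^k)$ for every $k$ by Theorem \ref{thm:GS-prop2}, so $\Psi$ lies in the domain of the unclosed sum. Set $h\coloneqq\gamma^{-1/2}f$. It then suffices to show
\begin{equation}
A(h)\Psi = -A^*(JK^*h)\Psi ,\qquad h\in\sH,
\end{equation}
since $\gamma^{-1/2}$ is bijective.

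The key algebraic identity is $[A(h),\Delta^*(K)] = 2A^*(K^\flat h)$ on finite-particle vectors, where $K^\flat h$ is determined by $\Theta$. I will check it on rank-one $K=\ketbra{u}{Jv}$, where $\Delta^*(K)=A^*(u)A^*(v)$. In that case $[A(h),A^*(u)A^*(v)]=\inner{h}{u}A^*(v)+\inner{h}{v}A^*(u)$. Using $K=K^\mathrm{T}$, this equals $2A^*(JK^*h)$. The rank-one case also gives $JK^*h=\inner{h}{u}Jv\cdot$(conjugated), and I will track conjugations carefully in that step. The identity extends to general Hilbert--Schmidt $K$ by linearity and the bound \eqref{Delta prop2}. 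Since $[A(h),\Delta^*(K)]$ commutes with $\Delta^*(K)$, induction gives
\begin{equation}
A(h)(\Delta^*(K))^n\Omega = 2n\,A^*(JK^*h)(\Delta^*(K))^{n-1}\Omega .
\end{equation}
Multiplying by $(-1)^n/(2^nn!)$ and summing, the $n$-th term of $A(h)\Psi$ equals minus the $(n-1)$-th term of $A^*(JK^*h)\Psi$. The partial sums converge because $A(h)$ and $A^*(\cdot)$ are $\Nb^{1/2}$-bounded and the partial sums converge in $\dom(\Nb)$; this follows from the $\Nb^k$-bounds of Theorem \ref{thm:GS-prop2}, applied termwise since the components are orthogonal in particle number. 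Closedness of $A(h)$ then yields the identity.

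For the second part, Proposition \ref{prop:C_CCR} gives $(X,Y)=(\gamma^{-1/2},K^*\gamma^{-1/2})\in\mathfrak{Sp}$, and $Y$ is Hilbert--Schmidt because $K$ is. By the Shale-type result recalled above, there is a unitary $\cU$, unique up to a phase, with $\cU C(f)\cU^*=A(f)$, and $\cU^*\Omega$ is the unique (up to scalar) vector annihilated by all $C(f)$. This uniqueness is the standard irreducibility argument: a vector killed by all $A(f)$ is a multiple of $\Omega$. Hence $\cU\Psi=c\Omega$ with $|c|=\norm{\Psi}$, and fixing the phase of $\cU$ makes $c=\norm{\Psi}>0$ and removes the remaining ambiguity. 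Note $\Psi\neq0$ since $\inner{\Omega}{\Psi}=1$.

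The main obstacle will be the conjugation bookkeeping in the commutator: verifying that $[A(h),\Delta^*(K)]$ produces exactly $A^*(JK^*h)$ (rather than $A^*(KJh)$ or $A^*(K^*Jh)$) under the convention $\Theta(\ketbra{f_1}{Jf_2})=f_1\otimes f_2$ and the antilinearity of $A(\cdot)$, together with the use of $JK^*=KJ$.
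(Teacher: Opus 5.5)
Your proposal is correct and follows the paper's proof. You push $A(\gamma^{-1/2}f)$ through the exponential series using $[A(h),\Delta^*(K)]=2A^*(KJh)$; for a non-symmetric rank-one term the commutator is really $A^*(KJh)+A^*(JK^*h)$, and it is the symmetry $KJ=JK^*$ that merges the two. This gives $A(h)\Psi=-A^*(JK^*h)\Psi$, hence $C(f)\Psi=0$, and the general Bogoliubov (Shale) theory then supplies $\cU$. Beyond the paper, you also justify the interchange with the series via Theorem \ref{thm:GS-prop2} and closedness of $A(h)$, and you fix the phase of $\cU$ through uniqueness of the Fock vacuum; the paper leaves both points implicit.
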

\begin{proof}
From the commutation relation
$[A(\gamma^{-1/2}f),\Delta^*(K)] = 2A^*(KJ\gamma^{-1/2}f)$,
we obtain
\begin{align}
  A(\gamma^{-1/2}f) e^{-\frac{1}{2}\Delta^*(K)}\Omega
  = - e^{-\frac{1}{2}\Delta^*(K)} A^*(KJ\gamma^{-1/2}f) \Omega.
\end{align}
Therefore, $C(f)e^{-\frac{1}{2}\Delta^*(K)}\Omega = 0$ holds, where we used the 
identity $KJ=JK^*$.

Since $Y=K^*\gamma^{-1/2}$ is a Hilbert--Schmidt operator,
the existence and uniqueness of the Bogoliubov transformation $\cU$
follows from the well-known general theory (see, for example, \cite{Rui78}).
\end{proof}

As an application of Theorem~\ref{thm:C_EXP}, we obtain the following result.
\begin{Theorem}\label{thm:analytic_vectors}
Let $K\in \cS_2(\sH)$ satisfy $K=K^\mathrm{T}$ and $\norm{K}<1$.
Then, for any $f,g\in\sH$ and any $n=0,1,2,\cdots$, we have
\begin{align}
 \norm{ (A(f)+A^*(g))^n e^{-\frac{1}{2}\Delta^*(K)}\Omega} 
 \leq 2^n \sqrt{n!}\,
 (\norm{\gamma^{-1/2}f}+\norm{\gamma^{-1/2}g})^n\,
 \norm{e^{-\frac{1}{2}\Delta^*(K)}\Omega}.
\end{align}
In particular, the series
\begin{align}
\exp(A(f)+A^*(g)) e^{-\frac{1}{2}\Delta^*(K)}\Omega 
 \coloneqq  \slim_{N\to \infty} \sum_{n=0}^N \frac{1}{n!}
 (A(f)+A^*(g))^n e^{-\frac{1}{2}\Delta^*(K)}\Omega
\end{align}
converges absolutely.
\end{Theorem}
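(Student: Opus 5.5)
We use Theorem~\ref{thm:C_EXP}, which makes $\Psi_0\coloneqq e^{-\frac{1}{2}\Delta^*(K)}\Omega$ the vacuum of the representation $C(f)$. The plan is to rewrite $A(f)+A^*(g)$ in terms of $C$ and $C^*$, and then use the standard bound for monomials in creation and annihilation operators acting on a vacuum. With $X=\gamma^{-1/2}$ and $Y=K^*\gamma^{-1/2}$ in $\mathfrak{Sp}$ (Proposition~\ref{prop:C_CCR}), the inversion formula \eqref{eq:A(f)B(f)} gives
\begin{align}
  A(f) = \ovl{C(X^*f)-C^*(Y^*Jf)}, \qquad A^*(g) = \ovl{C^*(X^*g)-C(JY^*g)}
\end{align}
on $\dom(\Nb)$. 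Hence $A(f)+A^*(g)=C(u)+C^*(v)$ on a suitable domain, with $u=\gamma^{-1/2}f-JY^*g$ and $v=\gamma^{-1/2}g-Y^*Jf$.

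Next we bound $\norm{u}$ and $\norm{v}$. We have $\norm{Y^*}=\norm{\gamma^{-1/2}K}$. Since $KJ$ commutes with $\gamma$ (shown in the proof of Proposition~\ref{prop:C_CCR}), we get $\gamma^{-1/2}KJ=KJ\gamma^{-1/2}$. Therefore $\norm{Y^*Jf}=\norm{KJ\gamma^{-1/2}f}\le\norm{\gamma^{-1/2}f}$, using $\norm{K}<1$, and similarly $\norm{JY^*g}\le\norm{\gamma^{-1/2}g}$. This yields $\norm{u},\norm{v}\le a\coloneqq \norm{\gamma^{-1/2}f}+\norm{\gamma^{-1/2}g}$.

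Now we transport to the Fock representation. By Theorem~\ref{thm:C_EXP} there is a unitary $\cU$ with $\cU C(h)\cU^*=A(h)$ and $\cU\Psi_0=c\Omega$, where $c=\norm{\Psi_0}$. It follows that
\begin{align}
\norm{(A(f)+A^*(g))^n\Psi_0}=c\,\norm{(A(u)+A^*(v))^n\Omega}.
\end{align}
Expanding the $n$-fold product gives $2^n$ words in $A(u)$ and $A^*(v)$. Each word acting on $\Omega$ is bounded in norm by $a^n\sqrt{n!}$, since every step raises the particle number by at most one and $\norm{A^\#(h)\Phi}\le\norm{h}\sqrt{k+1}\,\norm{\Phi}$ on $k$-particle vectors. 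This gives the bound $2^n\sqrt{n!}\,a^n c$. Absolute convergence of the exponential series then follows because $\sum_n 2^n a^n/\sqrt{n!}<\infty$.

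The main obstacle is domain bookkeeping. We must ensure that $\Psi_0$ lies in the domain of all powers and that the identity $A(f)+A^*(g)=C(u)+C^*(v)$ persists through iterates. For this we use $\Psi_0\in\dom(\Nb^k)$ for all $k$ (Theorem~\ref{thm:GS-prop2}). We also use that $\cU$ maps $\dom(\Nb^k)$-type spaces for the $C$-number operator appropriately; alternatively, we avoid $\cU$ and argue directly, by induction on $n$ via the CCR for $C$, that $\Psi_0$ is a $C$-vacuum. The latter route only uses commutators on $\dom(\Nb^\infty)$.
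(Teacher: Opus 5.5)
Your overall route matches the paper's: you realize $e^{-\frac12\Delta^*(K)}\Omega$ as $c\,\cU^*\Omega$ via Theorem~\ref{thm:C_EXP}, rewrite $A(f)+A^*(g)$ as $C(u)+C^*(v)$ using \eqref{eq:A(f)B(f)} with $(X,Y)=(\gamma^{-1/2},K^*\gamma^{-1/2})$, count $2^n$ words on $\Omega$, and control $\norm{u},\norm{v}$ because $KJ$ commutes with $\gamma^{-1/2}$. Your word-counting bound is correct. The domain issue is also milder than you fear. The iterates stay in $\dom(\Nb^\infty)$ (Theorem~\ref{thm:GS-prop2}). On that space the identity $A(f)+A^*(g)=C(u)+C^*(v)$ holds by $\Nb^{1/2}$-boundedness, and $C(u)+C^*(v)=\cU^*(A(u)+A^*(v))\cU$. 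So you can simply iterate, and no mapping property of $\cU$ is needed.

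There is, however, one wrong step, in the formula for $A^*(g)$. Taking the adjoint of $A(g)=\ovl{C(X^*g)-C^*(Y^*Jg)}$ gives $A^*(g)=\ovl{C^*(X^*g)-C(Y^*Jg)}$. So the vector entering $C$ is $Y^*Jg=\gamma^{-1/2}KJg$, not $JY^*g=J\gamma^{-1/2}Kg$.

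These two vectors genuinely differ. From $JKJ=K^*$ and $J\gamma J=1-K^*K$ one gets $JY^*J=Y$, hence $JY^*g=YJg$. This agrees with $Y^*Jg$ only if $Y=Y^*$ (e.g.\ $K=K^*$). That condition fails for complex $z$, where $K_1(z)$ is complex symmetric but not self-adjoint, and this is exactly where the theorem is used (proof of Theorem~\ref{thm:dpf_holo}).

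Moreover, your estimate $\norm{JY^*g}\le\norm{\gamma^{-1/2}g}$ is false in general. One has $\norm{JY^*g}^2=\langle g,(1-K^*K)^{-1}K^*Kg\rangle$. Take $K=s\begin{pmatrix}1&i\\ i&-1\end{pmatrix}$ on $\CC^2$ with $J$ complex conjugation and $0<s<1/2$, so $K=K^\mathrm{T}$ and $\norm{K}=2s<1$. Let $g=(1,-i)$; then $g\in\ker K^*$ and $K^*Kg=4s^2g$. This gives $\norm{JY^*g}^2=\frac{4s^2}{1-4s^2}\norm{g}^2$, while $\norm{\gamma^{-1/2}g}=\norm{g}$, and the former is larger once $s^2>1/8$.

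With the correct vector the bound is immediate: $\norm{\gamma^{-1/2}KJg}=\norm{KJ\gamma^{-1/2}g}\le\norm{\gamma^{-1/2}g}$, exactly as for $f$. So take $u=\gamma^{-1/2}f-\gamma^{-1/2}KJg$ and $v=\gamma^{-1/2}g-\gamma^{-1/2}KJf$, which are the paper's $F$ and $G$. The rest of your argument then goes through unchanged.
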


\begin{proof}
By Theorem \ref{thm:C_EXP}, we have
$\cU e^{-\frac{1}{2}\Delta^*(K)}\Omega = c\Omega$.
Applying \eqref{eq:A(f)B(f)} with $(X,Y)=(\gamma^{-1/2},K^*\gamma^{-1/2})$
(see Proposition \ref{prop:C_CCR}) and \eqref{def:C(f)}, we have 
\begin{align}
A(f) = \overline{C(\gamma^{-1/2}f)-C^*(\gamma^{-1/2}KJf)}, \qquad 
A^*(g) = \overline{C^*(\gamma^{-1/2}g)-C(\gamma^{-1/2}KJg)}.
\end{align}
Hence,
\begin{align}
  A(f)+A^*(g) = C(F) + C^*(G) = \cU^*(A(F)+A^*(G))\cU
\end{align}
holds on $\Ffin(\sH)$, where 
\begin{align}
  F \coloneqq  \gamma^{-1/2}f - \gamma^{-1/2}KJg, \qquad 
  G \coloneqq  \gamma^{-1/2}g - \gamma^{-1/2}KJf.
\end{align}
Consequently, we have
\begin{align}
   \norm{ (A(f)+A^*(g))^n e^{-\frac12 \Delta^*(K)}\Omega}
 & = \norm{\cU^* (A(F)+A^*(G))^n \cU e^{-\frac12 \Delta^*(K)} \Omega} \\
 & = c\norm{(A(F)+A^*(G))^n \Omega}.
\end{align}

Using the standard estimates for creation and annihilation operators
and a simple induction argument, we obtain
\begin{align}
  \norm{ (A(F)+A^*(G))^n \Omega}
  \leq 2^n \max\{\norm{F},\norm{G}\}^n \sqrt{n!}.
\end{align}
Since $\norm{K}<1$ and $KJ$ commutes with $\gamma^{-1/2}$, we have
\begin{align}
  \norm{F},\, \norm{G}
  \leq \norm{\gamma^{-1/2}f} + \norm{\gamma^{-1/2}g}.
\end{align}
Therefore,
\begin{align}
  \norm{ (A(f)+A^*(g))^n e^{-\frac{1}{2}\Delta^*(K)}\Omega} 
  \leq  2^n c(\norm{\gamma^{-1/2}f}+\norm{\gamma^{-1/2}g})^n \sqrt{n!}
\end{align}
holds.
\end{proof}

\begin{Lemma}\label{lem:holo_AN}
Let $U\subset \CC$ be an open set, and let $f_j(z)\in \sH$ $(j=1,\cdots,n)$
be holomorphic functions on $U$.
Then, for any constant $c>0$, the operator-valued function
 \begin{align}
   A^{\sharp_1}(f_1(z^{\sharp_1}))\cdots  A^{\sharp_n}(f_n(z^{\sharp_n}))(\Nb+c)^{-n/2}
 \end{align}
is holomorphic on $U$ as a $\cB(\Fb(\sH))$-valued function.
Here, $A^\sharp$ denotes either $A$ or $A^*$.
Correspondingly, for $z\in\CC$, $z^\sharp$ denotes $\bar z$ or $z$, respectively.
\end{Lemma}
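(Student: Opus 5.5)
Since each $f_j(z)$ is holomorphic and the notation $A^{\sharp}(f(z^{\sharp}))$ is arranged so that the operator depends holomorphically on $z$, I would argue by induction on $n$. The induction rests on the fact that a product of bounded holomorphic operator-valued functions is holomorphic. The needed structure is already visible: $A^*(f)$ is linear in $f$, so $z\mapsto A^*(f(z))$ is linear in a holomorphic vector. $A(f)$ is anti-linear in $f$, so $z\mapsto A(f(\bar z))$ is linear in $\overline{f(\bar z)}$ in a suitable sense. Writing $f(\bar z)=\sum_k a_k (\bar z-z_0)^k$ locally gives $A(f(\bar z))=\sum_k (z-\bar z_0)^k A(a_k)$, a power series in $z$.

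\textbf{Case $n=1$.} I would use the standard bounds $\norm{A^\sharp(h)(\Nb+c)^{-1/2}}\le C_c\norm{h}$, valid for any $c>0$ with a constant depending only on $c$. Expand $f_1$ around a point $w_0$ of the relevant domain as a norm-convergent power series in $\sH$. Plugging this in, $A^{\sharp}(f_1(z^\sharp))(\Nb+c)^{-1/2}=\sum_k (z-z_0)^k A^\sharp(a_k)(\Nb+c)^{-1/2}$, with $z_0=w_0$ or $\bar w_0$ as appropriate. This series converges in $\cB(\Fb(\sH))$ by the bound above, which gives holomorphy.

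\textbf{General $n$.} I would commute the resolvent powers through the operators using the pull-through relations $(\Nb+c)^{-1/2}A^*(h)=A^*(h)(\Nb+c+1)^{-1/2}$ and $(\Nb+c)^{-1/2}A(h)=A(h)(\Nb+c-1)^{-1/2}$. The second relation holds on the range where $\Nb\ge1$, and is handled with care for $c\le1$, e.g. by inserting $(\Nb+c')$ with $c'$ large and using boundedness of $(\Nb+c')^{1/2}(\Nb+c)^{-1/2}$. This lets me write the full operator as a product
\[
\prod_{j=1}^n \Bigl[(\Nb+c_j)^{1/2\,(j-1)}A^{\sharp_j}(f_j(z^{\sharp_j}))(\Nb+c_j')^{-1/2\,j}\Bigr],
\]
or, more simply, as a product of $n$ factors of the form $B_j\,A^{\sharp_j}(f_j(z^{\sharp_j}))(\Nb+d_j)^{-1/2}\,D_j$. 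Here $B_j$ and $D_j$ are $z$-independent bounded functions of $\Nb$ (ratios like $(\Nb+a)^{1/2}(\Nb+b)^{-1/2}$ restricted to particle sectors). Each factor is holomorphic by the $n=1$ case, so the product is holomorphic.

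\textbf{Main obstacle.} The only delicate part is the bookkeeping of these number-operator shifts so that every intermediate factor is genuinely bounded. If this becomes messy, I would use a fallback. Expand each $f_j$ as a local power series and expand the product multilinearly into $\sum_{k_1,\dots,k_n}\prod_j (z-z_{0,j})^{k_j}\,A^{\sharp_1}(a^{(1)}_{k_1})\cdots A^{\sharp_n}(a^{(n)}_{k_n})(\Nb+c)^{-n/2}$. Then bound each term by the standard estimate $\norm{A^{\sharp_1}(h_1)\cdots A^{\sharp_n}(h_n)(\Nb+c)^{-n/2}}\le C_{n,c}\prod_j\norm{h_j}$, which follows by applying the one-operator bound sector by sector. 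Absolute convergence of the multiple series in operator norm on a small disc then gives holomorphy directly.
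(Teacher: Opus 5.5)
Your proposal is correct and follows essentially the paper's route. The case $n=1$ comes from the bound $\|A^\sharp(h)(\Nb+c)^{-1/2}\|\le C_c\|h\|$ together with the (anti)linearity of $h\mapsto A^\sharp(h)$; you expand $f_1$ in a local power series where the paper estimates the difference quotient directly. The general case uses the pull-through relations to rewrite the operator as a product of $n$ bounded holomorphic factors, with a $z$-independent bounded function of $\Nb$ absorbing the shifted constants. Your fallback via the multilinear estimate $\|A^{\sharp_1}(h_1)\cdots A^{\sharp_n}(h_n)(\Nb+c)^{-n/2}\|\le C_{n,c}\prod_j\|h_j\|$ is also valid and avoids the number-operator bookkeeping altogether.
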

\begin{proof}
First we consider the case $n=1$. Using the bound
$\norm{A^*(f)(\Nb+1)^{-1/2}} \leq \norm{f}$ we obtain
\begin{align}
&\left\| \frac{A^*(f(z+h))(\Nb+1)^{-1/2} -  A^*(f(z))(\Nb+1)^{-1/2}}{h} 
       -  A^*(f'(z))(\Nb+1)^{-1/2}\right\|  \\
& = \left\| A^*\left(\frac{f(z+h)-f(z)}{h}-f'(z)\right) (\Nb+1)^{-1/2} \right\| \\
& \leq \left\| \frac{f(z+h)-f(z)}{h}-f'(z)\right\| 
       \to 0 \qquad (h\to 0).
\end{align}
Hence $A^*(f(z))(\Nb+1)^{-1/2}$ is holomorphic.
The same argument shows that $z\mapsto A(f(\bar{z})) (\Nb+1)^{-1/2}$ is holomorphic.

Next, let $n\ge2$. By the case $n=1$, each factor
$A^{\sharp_j}(f_j(z^{\sharp_j}))(\Nb+n)^{-1/2}$ defines a holomorphic
$\cB(\Fb(\sH))$-valued function on $U$, and hence so does their product
\begin{equation}
\prod_{j=1}^n A^{\sharp_j}(f_j(z^{\sharp_j}))(\Nb+n)^{-1/2}.
\end{equation}
Using the canonical commutation relations, we can move the factors
$(\Nb+n)^{-1/2}$ to the right and rewrite the above product in the form
\begin{equation}
A^{\sharp_1}(f_1(z^{\sharp_1}))\cdots A^{\sharp_n}(f_n(z^{\sharp_n}))
(\Nb+c_1)^{-1/2}\cdots(\Nb+c_n)^{-1/2},
\end{equation}
with suitable constants $c_j>0$.
Finally, for any $c>0$ the operator
\begin{equation}
(\Nb+c)^{n/2}(\Nb+c_1)^{-1/2}\cdots(\Nb+c_n)^{-1/2}
\end{equation}
is bounded, which yields the claim.
\end{proof}

\begin{Theorem}\label{thm:1_2_inner}
Suppose that $K\in \cS_2(\sH)$ and $\norm{K}<1$.
Then
\begin{align}
 \inner{e^{A^*(f)}\Omega}{e^{-\frac{1}{2}\Delta^*(K)}\Omega} 
 = e^{-\inner{f}{KJf}/2}, \qquad f\in\sH.
\end{align}
\end{Theorem}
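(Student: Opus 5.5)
The plan is to expand both vectors in their $n$-particle components and compare them term by term. The coherent vector is
\begin{equation}
  e^{A^*(f)}\Omega=\sum_{m=0}^\infty \frac{1}{m!}A^*(f)^m\Omega ,
\end{equation}
whose $m$-particle component is $(m!)^{-1/2} f^{\otimes m}$. By Theorem \ref{thm:squeezed-state}, the series $e^{-\frac12\Delta^*(K)}\Omega$ converges in norm. Its $2n$-particle component is $\frac{(-1)^n}{2^n n!}(\Delta^*(K))^n\Omega$, and its odd components vanish. Hence the inner product equals
\begin{equation}
  \sum_{n=0}^\infty \frac{(-1)^n}{2^n n!\,(2n)!}\,\inner{A^*(f)^{2n}\Omega}{(\Delta^*(K))^n\Omega}.
\end{equation}
This series converges absolutely, because both vectors lie in $\Fb(\sH)$ and Cauchy--Schwarz applies componentwise.

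Next I compute each term using the adjoint. We have $\inner{A^*(f)^{2n}\Omega}{\Psi}=\inner{\Omega}{A(f)^{2n}\Psi}$, where $A(f)$ is antilinear in $f$. First I use $\Delta^*(K)=\Delta^*((K+K^\mathrm{T})/2)$ to reduce to the case $K=K^\mathrm{T}$. The key commutator is
\begin{equation}
  [A(f),\Delta^*(K)]=2A^*(KJf)\quad\text{on }\Ffin(\sH).
\end{equation}
This is the identity used in the proof of Theorem \ref{thm:C_EXP}, and it can be checked on $\Delta^*(\ketbra{u}{Jv})=A^*(u)A^*(v)$ by linearity and continuity in $\norm{\cdot}_2$. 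Note that $[A(f),A^*(KJf)]=\inner{f}{KJf}$. Together these give the generating identity
\begin{equation}
  A(f)\,e^{-\frac12\Delta^*(K)}\Omega=-A^*(KJf)\,e^{-\frac12\Delta^*(K)}\Omega .
\end{equation}
Rather than counting pairings in $A(f)^{2n}(\Delta^*(K))^n\Omega$ directly, I would replace $f$ by $sf$ with $s\in\RR$. I then set $h(s)\coloneqq\inner{e^{A^*(sf)}\Omega}{e^{-\frac12\Delta^*(K)}\Omega}$, where the inner product is antilinear in its first slot.

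For the ODE, I compute
\begin{equation}
  h'(s)=\inner{A^*(f)e^{A^*(sf)}\Omega}{\Psi}=\inner{e^{A^*(sf)}\Omega}{A(f)\Psi},\qquad \Psi\coloneqq e^{-\frac12\Delta^*(K)}\Omega .
\end{equation}
This is legitimate because $\Psi\in\dom(\Nb)$ by Theorem \ref{thm:GS-prop2}, and $e^{A^*(sf)}\Omega$ is an analytic vector, so termwise differentiation is justified. By the generating identity, $h'(s)=-\inner{A(KJf)e^{A^*(sf)}\Omega}{\Psi}$. Since $A(g)e^{A^*(sf)}\Omega=s\inner{g}{f}e^{A^*(sf)}\Omega$, this becomes $h'(s)=-s\,\ovl{\inner{KJf}{f}}\,h(s)$. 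Using $K=K^\mathrm{T}$, i.e. $JK^*J=K$, one checks $\ovl{\inner{KJf}{f}}=\inner{f}{KJf}$. With $h(0)=1$ this yields $h(s)=e^{-s^2\inner{f}{KJf}/2}$, and setting $s=1$ gives the claim.

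The step I expect to be the main obstacle is the domain bookkeeping. The generating identity has to hold on the infinite sum $\Psi$, not only on $\Ffin(\sH)$. I would prove it at the level of each $2n$-particle component: the commutator applied to $(\Delta^*(K))^n\Omega$ gives
\begin{equation}
  A(f)(\Delta^*(K))^n\Omega=2n\,A^*(KJf)(\Delta^*(K))^{n-1}\Omega,
\end{equation}
and each component of the identity then reads $\frac{(-1)^n}{2^n n!}\,2n=-\frac{(-1)^{n-1}}{2^{n-1}(n-1)!}$, which matches. Closedness of $A(f)$ and $A^*(g)$ on $\dom(\Nb^{1/2})$ then passes this to the sum, since $\Psi\in\dom(\Nb)$.
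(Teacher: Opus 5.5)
Your proof is correct and is essentially the paper's argument. Both reduce to $K=K^\mathrm{T}$ and rest on the single identity $A(f)e^{-\frac12\Delta^*(K)}\Omega=-A^*(KJf)e^{-\frac12\Delta^*(K)}\Omega$, and your ODE $h'(s)=-s\inner{f}{KJf}h(s)$ is just the generating-function form of the paper's moment recursion $\inner{\Omega}{A(f)^{2n}\Psi}=-(2n-1)\inner{f}{KJf}\inner{\Omega}{A(f)^{2n-2}\Psi}$; your componentwise check of that identity on $\Psi$ is, if anything, more careful than the paper's appeal to $\Ffin(\sH)$. One small correction: $\ovl{\inner{KJf}{f}}=\inner{f}{KJf}$ is plain Hermitian symmetry and needs no hypothesis on $K$, whereas what your reduction actually requires (besides the commutator) is $\inner{f}{K^\mathrm{T}Jf}=\inner{f}{KJf}$, so that the value obtained for $(K+K^\mathrm{T})/2$ is the claimed $e^{-\inner{f}{KJf}/2}$.
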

\begin{proof}
First we assume that $K=K^\mathrm{T}$ and set $c\coloneqq \inner{f}{KJf}$.
Using the commutation relation 
$[A(f),e^{-\frac{1}{2}\Delta^*(K)}] = -A^*(KJf)e^{-\frac{1}{2}\Delta^*(K)}$, which holds on 
$\Ffin(\sH)$,
 we obtain
\begin{align}
   \inner{\Omega}{A(f)^{2n} e^{-\frac{1}{2}\Delta^*(K)}\Omega}  
& = -\inner{\Omega}{A(f)^{2n-1} A^*(KJf) e^{-\frac{1}{2}\Delta^*(K)}\Omega}  \\
& = -(2n-1)c\inner{\Omega}{A(f)^{2n-2}e^{-\frac{1}{2}\Delta^*(K)}\Omega}  \\
& = (2n-1)!! (-c)^{n},
\end{align}
where $(-1)!!=1$. Thus, 
\begin{align}
  \inner{e^{A^*(f)}\Omega}{e^{-\frac{1}{2}\Delta^*(K)}\Omega} 
  = \sum_{n=0}^\infty \frac{(2n-1)!!(-c)^{n}}{(2n)!} 
  = \sum_{n=0}^\infty \frac{(-c)^{n}}{(2n)!!} 
  = \sum_{n=0}^\infty \frac{(-c)^{n}}{2^n n!} = e^{-c/2}.
\end{align}
If $K\neq K^\mathrm{T}$, we note that $\Delta^*(K)=\Delta^*((K+K^\mathrm{T})/2)$.
Since 
\begin{align}
  \inner{f}{K^\mathrm{T}Jf}=\inner{f}{JK^*f}=\inner{K^*f}{Jf}=\inner{f}{KJf},  
\end{align}
the same computation yields the desired result.
\end{proof}

\section{Some integral inequalities related to trace class operators}
\label{AppenD}

Let $T$ be an injective non-negative self-adjoint operator acting on a Hilbert space $\sH$.
\begin{Theorem}\label{thm:trace_est1} 
Let $A \in \cS_1(\sH)$. Then, the Bochner integral
\begin{align}
  B \coloneqq  \int_0^\infty T^{1/2}(T^2+t^2)^{-1} A T^{1/2} (T^2+t^2)^{-1}  \, t^2dt   \label{0331_3}
\end{align}
converges absolutely in $\cS_1(\sH)$, and the inequality
\begin{align}
  \norm{B}_1 
  \leq \int_0^\infty \norm{ T^{1/2}(T^2+t^2)^{-1} A T^{1/2} (T^2+t^2)^{-1}}_1  \, t^2dt
  \leq \frac{\pi}{4} \norm{A}_1.   \label{0331_4}
\end{align}
holds.
\end{Theorem}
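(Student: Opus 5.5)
The plan is to reduce to rank-one operators via the singular value decomposition of $A$, and then evaluate the resulting scalar integrals exactly with the spectral theorem for $T$.

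\emph{Well-definedness of the integrand.} For each $t>0$ the operator $T^{1/2}R_t$, with $R_t=(T^2+t^2)^{-1}$, is bounded, since $\sup_{\lambda\ge0}\lambda^{1/2}/(\lambda^2+t^2)<\infty$. It is also self-adjoint, being a function of $T$. Moreover $t\mapsto T^{1/2}R_t$ is continuous in operator norm on $(0,\infty)$. Hence
\[
t\mapsto T^{1/2}R_tAT^{1/2}R_t
\]
is a trace-norm continuous $\cS_1(\sH)$-valued function on $(0,\infty)$, and in particular it is Bochner measurable. So it suffices to prove the second inequality in \eqref{0331_4}. Absolute convergence of the Bochner integral and the first inequality then follow from the standard properties of Bochner integrals.

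\emph{Reduction to rank one.} Write $A=\sum_n s_n\ketbra{u_n}{v_n}$, where $\{u_n\}$ and $\{v_n\}$ are orthonormal systems, $s_n\ge0$, and $\sum_n s_n=\norm{A}_1$. Since $T^{1/2}R_t$ is bounded and self-adjoint,
\[
T^{1/2}R_t\ketbra{u}{v}T^{1/2}R_t=\ketbra{T^{1/2}R_tu}{T^{1/2}R_tv}.
\]
Its trace norm is $\norm{T^{1/2}R_tu}\,\norm{T^{1/2}R_tv}$. The triangle inequality in $\cS_1$ (the series converges in $\cS_1$) together with Tonelli then gives
\[
\int_0^\infty\norm{T^{1/2}R_tAT^{1/2}R_t}_1\,t^2dt\le\sum_n s_n\int_0^\infty\norm{T^{1/2}R_tu_n}\,\norm{T^{1/2}R_tv_n}\,t^2dt .
\]
Applying Cauchy--Schwarz in $L^2((0,\infty),t^2dt)$, it remains to show that for every $u\in\sH$,
\[
\int_0^\infty\norm{T^{1/2}R_tu}^2\,t^2dt=\frac{\pi}{4}\norm{u}^2 .
\]

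\emph{The key computation.} By the spectral theorem and Tonelli,
\[
\int_0^\infty\norm{T^{1/2}R_tu}^2t^2dt=\int_{(0,\infty)}\left(\int_0^\infty\frac{\lambda t^2}{(\lambda^2+t^2)^2}\,dt\right)d\norm{E_T(\lambda)u}^2 .
\]
Here the point $\lambda=0$ carries no mass because $T$ is injective. The elementary integral is $\int_0^\infty t^2(\lambda^2+t^2)^{-2}dt=\pi/(4\lambda)$, for instance via $t=\lambda\tan\theta$. So the inner integral equals $\pi/4$ for every $\lambda>0$, and the whole expression equals $\frac{\pi}{4}\norm{u}^2$.

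Combining the three steps,
\[
\int_0^\infty\norm{T^{1/2}R_tAT^{1/2}R_t}_1\,t^2dt\le\frac{\pi}{4}\sum_n s_n=\frac{\pi}{4}\norm{A}_1 .
\]
The only mildly delicate points are the measurability and continuity of the integrand and the use of injectivity to exclude $\lambda=0$; I do not expect a genuine obstacle.
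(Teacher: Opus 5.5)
Your proposal is correct and follows the paper's proof essentially step by step. Both use trace-norm continuity of the integrand in $t$, the singular value decomposition of $A$ with the rank-one identity $\|T^{1/2}R_t|u\rangle\langle v|T^{1/2}R_t\|_1=\|T^{1/2}R_tu\|\,\|T^{1/2}R_tv\|$, Cauchy--Schwarz in $L^2((0,\infty),t^2dt)$, and the spectral evaluation $\int_0^\infty T(T^2+t^2)^{-2}\,t^2dt=\pi/4$.
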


\begin{proof}
For $t>0$, since $T^{1/2}(T^2+t^2)^{-1}\in \cB(\sH)$, we have
 \begin{align}
   T^{1/2}(T^2+t^2)^{-1} A T^{1/2} (T^2+t^2)^{-1}  \in \cS_1(\sH).
 \end{align}
Since this is continuous in $t \in (0,\infty)$, to show that the integral converges, 
it suffices to prove that its trace norm is integrable with respect to $t^2dt$.
Since $A\in \cS_1(\sH)$, there exist positive numbers $\mu_n>0$ and orthonormal
systems $\{u_n\}_n$ and $\{v_n\}_n$ such that
\begin{align}
  A = \sum_{n=1}^N \mu_n \ketbra{u_n}{v_n}, \qquad \norm{A}_1 = \sum_{n=1}^N \mu_n <\infty,  \label{0331_5}
\end{align}
where $N\leq \infty$.
By applying the triangle inequality and the Cauchy--Schwarz inequality, we obtain:
\begin{align}
& \int_0^\infty \norm{ T^{1/2}(T^2+t^2)^{-1} A T^{1/2} (T^2+t^2)^{-1} }_1 \, t^2dt \\
& \leq  \int_0^\infty \sum_{n=1}^N \mu_n\norm{ T^{1/2}(T^2+t^2)^{-1} \ketbra{u_n}{v_n} T^{1/2} (T^2+t^2)^{-1} }_1 \, t^2dt \\
& =  \int_0^\infty \sum_{n=1}^N \mu_n\norm{ T^{1/2}(T^2+t^2)^{-1} u_n} \norm{T^{1/2} (T^2+t^2)^{-1} v_n} \, t^2dt \\
& \leq \sum_{n=1}^N \mu_n \left( \int_0^\infty \norm{ T^{1/2}(T^2+t^2)^{-1} u_n}^2 \, t^2 dt \right)^{1/2} 
 \left( \int_0^\infty \norm{ T^{1/2}(T^2+t^2)^{-1} v_n}^2 \, t^2dt  \right)^{1/2}.  \label{0331_2}
\end{align}
Noting that $\int_0^\infty T(T^2+t^2)^{-2}\, t^2dt = \pi/4$, we have
\begin{align}
  \eqref{0331_2} = \frac{\pi}{4}\sum_{n=1}^N \mu_n \norm{u_n} \norm{v_n} 
                 = \frac{\pi}{4}\norm{A}_1.
\end{align}
This implies that \eqref{0331_3} converges in $\cS_1(\sH)$ and that \eqref{0331_4} holds.
\end{proof}

\begin{Theorem}\label{thm:trace_est2}
 Let $A\in \cS_1(\sH)$. Then it holds that
 \begin{align}
   \int_0^\infty \norm{T^{1/2}(T^2+t^2)^{-1} A}_1^2 \, t^2 dt 
 \leq \frac{\pi}{4} \norm{A}_1^2.
 \end{align}
\end{Theorem}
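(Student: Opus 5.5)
We will follow the strategy used for Theorem~\ref{thm:trace_est1}: decompose $A$ by its singular value decomposition, $A=\sum_{n}\mu_n\ketbra{u_n}{v_n}$ with $\mu_n>0$, $\sum_n\mu_n=\norm{A}_1$, and $\{u_n\}$, $\{v_n\}$ orthonormal systems. For each $t>0$ we have $T^{1/2}(T^2+t^2)^{-1}A=\sum_n\mu_n\ketbra{T^{1/2}(T^2+t^2)^{-1}u_n}{v_n}$, and each summand is a rank-one operator whose trace norm is $\mu_n\norm{T^{1/2}(T^2+t^2)^{-1}u_n}\,\norm{v_n}=\mu_n\norm{T^{1/2}(T^2+t^2)^{-1}u_n}$. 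The triangle inequality in $\cS_1(\sH)$ then gives
\begin{equation}
  \norm{T^{1/2}(T^2+t^2)^{-1}A}_1 \leq \sum_n \mu_n \norm{T^{1/2}(T^2+t^2)^{-1}u_n} \eqqcolon \sum_n \mu_n a_n(t).
\end{equation}

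Next we square this bound and integrate against $t^2dt$. We will not expand the square directly. Instead we regard $t\mapsto\sum_n\mu_na_n(t)$ as a sum of elements of $L^2((0,\infty),t^2dt)$ and apply Minkowski's inequality in that space:
\begin{equation}
  \Big(\int_0^\infty \Big(\sum_n\mu_na_n(t)\Big)^2 t^2dt\Big)^{1/2}
  \leq \sum_n \mu_n \Big(\int_0^\infty a_n(t)^2\,t^2dt\Big)^{1/2}.
\end{equation}
(Equivalently, we may expand the square and use Cauchy--Schwarz on each cross term $\int a_na_m t^2dt$.)

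The remaining ingredient is the identity already used in the proof of Theorem~\ref{thm:trace_est1}. By the spectral theorem and Fubini (Tonelli), since everything is nonnegative,
\begin{equation}
  \int_0^\infty a_n(t)^2t^2dt=\Big\langle u_n,\int_0^\infty T(T^2+t^2)^{-2}t^2dt\,u_n\Big\rangle=\frac{\pi}{4}\norm{u_n}^2=\frac{\pi}{4}.
\end{equation}
This rests on the scalar computation $\int_0^\infty \lambda t^2(\lambda^2+t^2)^{-2}dt=\pi/4$ for $\lambda>0$, and $T$ is injective, so $\lambda>0$ almost everywhere with respect to the spectral measure. Combining the three displays gives $\big(\int_0^\infty\norm{T^{1/2}(T^2+t^2)^{-1}A}_1^2\,t^2dt\big)^{1/2}\leq(\pi/4)^{1/2}\sum_n\mu_n=(\pi/4)^{1/2}\norm{A}_1$, and squaring yields the claim.

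The only delicate points are measurability of $t\mapsto\norm{T^{1/2}(T^2+t^2)^{-1}A}_1$, which follows from trace-norm continuity in $t$, and the interchange of the infinite sum with the integral. The latter is harmless because Minkowski's inequality holds for countable sums of nonnegative functions by monotone convergence. We do not expect a genuine obstacle here.
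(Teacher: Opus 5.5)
Your proposal is correct and essentially matches the paper's proof. The paper also uses the singular value decomposition and the triangle inequality to get $\|T^{1/2}(T^2+t^2)^{-1}A\|_1\le\sum_n\mu_n\|T^{1/2}(T^2+t^2)^{-1}u_n\|$, then expands the square and applies Cauchy--Schwarz to each cross term using $\int_0^\infty\|T^{1/2}(T^2+t^2)^{-1}u\|^2\,t^2dt=\frac{\pi}{4}\|u\|^2$; your Minkowski inequality in $L^2((0,\infty),t^2dt)$ is just a packaged form of that same step.
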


\begin{proof}
From \eqref{0331_5} and the fact that $\norm{v_n}=1$, it follows that
  \begin{align}
  \norm{T^{1/2}(T^2+t^2)^{-1} A}_1
 \leq \sum_{n=1}^N \mu_n \norm{T^{1/2}(T^2+t^2)^{-1} u_n}.
  \end{align}
Thus, following a similar argument as in the proof of Theorem \ref{thm:trace_est1}, one obtains 
\begin{align}
&   \int_0^\infty \norm{T^{1/2}(T^2+t^2)^{-1} A}_1^2 \, t^2 dt  \\
& \leq \sum_{n,m=1}^N \mu_n \mu_m\int_0^\infty \norm{T^{1/2}(T^2+t^2)^{-1} u_n}
       \norm{T^{1/2}(T^2+t^2)^{-1} u_m} \, t^2dt \\
& \leq  \frac{\pi}{4}\sum_{n,m=1}^N\mu_n\mu_m = \frac{\pi}{4}\norm{A}_1^2.
\end{align}
\end{proof}

\begin{Theorem}\label{thm:trace_est3}
 Let $A_1,A_2\in \cS_1(\sH)$. 
 Assume that $Q_t \in \cB(\sH)$ is weakly measurable in $t\in (0,\infty)$,
i.e., $t\mapsto\inner{u}{Q_tv}$ is Borel measurable for all $u,v\in\sH$, 
and that $\sup_{t>0}\norm{Q_t}<\infty$.
Then, the Bochner integral
 \begin{align}
 B \coloneqq  \int_0^\infty T^{1/2} (T^2+t^2)^{-1} A_1 Q_t A_2 T^{1/2} (T^2+t^2)^{-1} \, t^2dt
 \end{align}
converges absolutely in $\cS_1(\sH)$, and the inequality 
\begin{align}
  \norm{B}_1
  & \leq \int_0^\infty \norm{ T^{1/2} (T^2+t^2)^{-1} A_1 Q_t A_2 T^{1/2} (T^2+t^2)^{-1}}_1  \, t^2dt \\
  & \leq \frac{\pi}{4} \left(\sup_{t>0}\norm{Q_t}\right) \norm{A_1}_1 \norm{A_2}_1
\end{align}
holds.
\end{Theorem}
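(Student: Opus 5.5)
The plan is to reduce Theorem~\ref{thm:trace_est3} to the estimate of Theorem~\ref{thm:trace_est2}, applied once on each side, with the middle factors $A_1 Q_t A_2$ controlled in operator norm. Set $D_t \coloneqq T^{1/2}(T^2+t^2)^{-1}$, a bounded self-adjoint operator. For each $t>0$ the integrand $D_t A_1 Q_t A_2 D_t$ is trace class.

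First I will settle measurability. For fixed $u,v$ the map $t\mapsto \inner{u}{D_tA_1Q_tA_2D_t v}$ is measurable, because $D_t$ is norm-continuous in $t$ and $Q_t$ is weakly measurable. Since $\sH$ is separable, $\cS_1(\sH)$ is separable. Pettis' theorem then gives strong measurability of the $\cS_1$-valued integrand once the norm is integrable. If one prefers to avoid Pettis, one can approximate $A_1,A_2$ by finite rank operators, for which the integrand is a finite sum of rank-one operators with measurable coefficients, and pass to the limit in $\cS_1$ using the bound below.

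The main step is the trace-norm estimate. Using $\norm{XY}_1\le \norm{X}_1\norm{Y}$ and Hölder in the form $\norm{XZY}_1\le \norm{X}_2\norm{Z}\norm{Y}_2$, I would instead exploit that $A_1,A_2$ are trace class on both sides:
\begin{align}
  \norm{D_tA_1Q_tA_2D_t}_1 \le \norm{D_tA_1}_1\,\norm{Q_t}\,\norm{A_2D_t}_1
  \le \Big(\sup_{s>0}\norm{Q_s}\Big)\norm{D_tA_1}_1\,\norm{D_tA_2^*}_1 .
\end{align}
This uses $\norm{XY}_1\le\norm{X}_1\norm{Y}$ with the trace-norm factor chosen as $D_tA_1$, together with $\norm{A_2D_t}_1=\norm{(A_2D_t)^*}_1=\norm{D_tA_2^*}_1$ and $\norm{Q_t}\le\sup\norm{Q_s}$. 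A single trace-norm factor alone would lose integrability, so the point is to keep two factors each carrying one $D_t$.

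Next I apply Cauchy--Schwarz in $t^2dt$ and then Theorem~\ref{thm:trace_est2} to each factor, noting $\norm{A_2^*}_1=\norm{A_2}_1$:
\begin{align}
  \int_0^\infty \norm{D_tA_1}_1\norm{D_tA_2^*}_1\,t^2dt
  \le \Big(\tfrac{\pi}{4}\norm{A_1}_1^2\Big)^{1/2}\Big(\tfrac{\pi}{4}\norm{A_2}_1^2\Big)^{1/2}
  = \tfrac{\pi}{4}\norm{A_1}_1\norm{A_2}_1 .
\end{align}
This gives absolute convergence of the Bochner integral in $\cS_1(\sH)$. The first inequality in the statement is the standard bound $\norm{\int F}_1\le\int\norm{F}_1$, and the second is the chain just described. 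I expect the only delicate point to be the measurability bookkeeping, since $Q_t$ is assumed only weakly measurable; the analytic estimate itself is direct.
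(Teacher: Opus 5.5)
Your proposal is correct and follows essentially the paper's proof. Both split the integrand by sub-multiplicativity into $\|T^{1/2}(T^2+t^2)^{-1}A_1\|_1$, $\|Q_t\|$ and a factor carrying the second resolvent, then apply Cauchy--Schwarz in $t^2\,dt$ and Theorem~\ref{thm:trace_est2}. The paper writes the last factor in operator norm, $\|A_2T^{1/2}(T^2+t^2)^{-1}\|$, and leaves implicit the step $\|A_2D_t\|\le\|A_2D_t\|_1=\|D_tA_2^*\|_1$ that you spell out. Your measurability argument (weak measurability together with separability of $\mathcal{S}_1(\sH)$, or finite-rank approximation of $A_1,A_2$) fills in what the paper only asserts.
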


\begin{proof}
From the assumptions, $T^{1/2} (T^2+t^2)^{-1} A_1 Q_t A_2 T^{1/2} (T^2+t^2)^{-1}$
is measurable in the trace norm.
By sub-multiplicativity, we have
  \begin{align}
&  \norm{T^{1/2} (T^2+t^2)^{-1} A_1 Q_t A_2 T^{1/2} (T^2+t^2)^{-1} }_1 \\
& \leq \norm{T^{1/2} (T^2+t^2)^{-1} A_1}_1 \norm{Q_t} \norm{A_2 T^{1/2} (T^2+t^2)^{-1} }.
  \end{align}
Integrating both sides and applying the Cauchy--Schwarz inequality, we obtain
\begin{align}
&  \int_0^\infty \norm{T^{1/2} (T^2+t^2)^{-1} A_1 Q_t A_2 T^{1/2} (T^2+t^2)^{-1} }_1 \, t^2 dt \\
& \leq \left(\sup_{t>0}\norm{Q_t}\right)
       \left( \int_0^\infty \norm{T^{1/2} (T^2+t^2)^{-1} A_1}_1^2 \, t^2 dt\right)^{1/2} \\
& \quad \times   \left( \int_0^\infty \norm{A_2 T^{1/2} (T^2+t^2)^{-1}}^2 \, t^2 dt\right)^{1/2}.
\end{align}
By using Theorem \ref{thm:trace_est2} to estimate the above expression, the conclusion follows.
\end{proof}

\section{Complex differentiation under the integral sign}\label{appenE}
For the reader's convenience, we recall a result on complex differentiation
 under the integral sign due to L.~Mattner~\cite{MR1823156}. 
This result is elementary but may not be widely known. 
By this theorem, one does not need to find an integrable dominating function for the derivative; 
it requires only the local uniform boundedness of the integrals with respect to the parameter.
\begin{Theorem}[{\cite{MR1823156}}] \label{thm:comp-diff}
Let $(X, \cB, \mu)$ be a $\sigma$-finite measure space, and let $U$ be a non-empty open subset of $\CC$. 
Suppose that a function $f : U \times X \to \CC$ satisfies the following conditions:
\begin{enumerate}
\item[\rm (i)] For each $z \in U$, the function $X \ni x \mapsto f(z,x) \in \CC$ is $\cB$-measurable.
\item[\rm (ii)] For each $x \in X$, the function $U \ni z \mapsto f(z,x) \in \CC$ is holomorphic.
\item[\rm (iii)] For any compact subset $K \subset U$, 
  \begin{equation}
      \sup_{z\in K} \int_X|f(z,x)| \, d\mu(x)  < \infty.
  \end{equation}
\end{enumerate}
Then, the function $F$ defined by
\begin{equation}
  F(z) \coloneqq  \int_X f(z,x) \, d\mu(x), \qquad z \in U
\end{equation}
is holomorphic on $U$. Furthermore, the following statements hold:
\begin{enumerate}
\item[\rm (1)] For each $n \in \NN$ and $z \in U$, the function $x \mapsto (\del^n f / \del z^n)(z,x)$ 
  is $\cB$-measurable and $\mu$-integrable.
\item[\rm (2)] For each $n \in \NN$ and $z \in U$, 
  \begin{equation}
      \frac{d^n F}{dz^n}(z) = \int_X \frac{\del^n f}{\del z^n} (z,x) \, d\mu(x).
  \end{equation}
\end{enumerate}
\end{Theorem}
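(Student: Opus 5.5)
The statement is the Mattner result (Theorem~\ref{thm:comp-diff}), and the plan is to reduce it to Morera's theorem plus Cauchy's integral formula; the reduction rests on Fubini--Tonelli, with hypothesis (iii) playing the role usually played by a dominating function. First I show that $F$ is continuous on $U$. Fix $z_0\in U$ and a closed disc $\overline{D}=\overline{D(z_0,2r)}\subset U$. For $z\in D(z_0,r)$, Cauchy's formula on the circle $|w-z_0|=2r$ gives
\begin{equation}
  |f(z,x)-f(z_0,x)| \leq \frac{|z-z_0|}{r}\cdot\frac{1}{2\pi}\int_0^{2\pi}|f(z_0+2re^{i\theta},x)|\,d\theta\cdot 2 .
\end{equation}
(The exact constant is unimportant.) Integrate in $x$ and apply Tonelli. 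The mixed measurability of $(\theta,x)\mapsto f(z_0+2re^{i\theta},x)$ holds because $f$ is continuous in $z$ and measurable in $x$, i.e.\ $f$ is Carathéodory. This bounds the right-hand side by $C|z-z_0|\sup_{w\in\partial D}\int_X|f(w,x)|\,d\mu$, which is finite by (iii). Hence $F$ is in fact locally Lipschitz, and in particular continuous.

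Next I prove holomorphy via Morera. For any closed triangle $\Delta\subset U$, the boundary $\partial\Delta$ is compact, so (iii) gives $\int_{\partial\Delta}\int_X|f(z,x)|\,d\mu\,|dz|<\infty$. Fubini then allows interchanging, and
\begin{equation}
  \oint_{\partial\Delta}F(z)\,dz=\int_X\oint_{\partial\Delta}f(z,x)\,dz\,d\mu(x)=0
\end{equation}
by Goursat applied to each $x$. By Morera's theorem $F$ is holomorphic on $U$.

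For (1) and (2), measurability of $\partial_z^n f(z,\cdot)$ follows because it is a pointwise limit of difference quotients, which are measurable. Cauchy's formula on a circle $\gamma$ around $z$ gives
\begin{equation}
  \partial_z^nf(z,x)=\frac{n!}{2\pi i}\oint_\gamma \frac{f(w,x)}{(w-z)^{n+1}}\,dw .
\end{equation}
Its absolute value is integrable in $x$ by Tonelli and (iii), applied with $K=\gamma$. Fubini then yields
\begin{equation}
  \int_X\partial_z^nf\,d\mu=\frac{n!}{2\pi i}\oint_\gamma\frac{F(w)}{(w-z)^{n+1}}\,dw=F^{(n)}(z).
\end{equation}

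The main technical point is the joint measurability in $(w,x)$ on the circle needed to invoke Fubini/Tonelli. I would handle it by noting that $f$ is continuous in $w$ and measurable in $x$, and that such Carathéodory functions are jointly measurable; this can be shown by approximating $f$ by step functions in $w$. $\sigma$-finiteness of $\mu$ is used exactly there.
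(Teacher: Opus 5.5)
The paper gives no proof of this theorem. It is quoted from Mattner \cite{MR1823156} in Appendix~\ref{appenE} for the reader's convenience, so there is no in-paper argument to compare against.

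Your argument is a correct, self-contained proof. Each step goes through as written:
\begin{itemize}
\item the Cauchy estimate on the circle $|w-z_0|=2r$ (your extra factor $2$ is harmless);
\item Tonelli combined with (iii) on that compact circle;
\item Fubini plus Goursat on $\partial\Delta$, followed by Morera;
\item the Fubini interchange in Cauchy's formula for $\partial_z^n f$.
\end{itemize}
The Carath\'eodory joint-measurability lemma, proved by step-function approximation in $w$, is exactly what these interchanges need.

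One small correction: that joint measurability does not use $\sigma$-finiteness. $\sigma$-finiteness enters only through Tonelli/Fubini. It could even be dropped, because $f(\cdot,x)\equiv 0$ off the $\sigma$-finite set $\bigcup_{z\in D}\{x : f(z,x)\neq 0\}$, where $D\subset U$ is countable and dense, so all integrals can be taken over that set.

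Your first two steps are also redundant. The $n=0$ case of your final computation already gives $F(z)=\frac{1}{2\pi i}\oint_\gamma \frac{F(w)}{w-z}\,dw$, with $F$ bounded and measurable on $\gamma$. That proves $F$ is holomorphic inside $\gamma$ directly, with no separate continuity estimate and no Morera.

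A common alternative route is to differentiate directly:
\begin{itemize}
\item Cauchy's formula bounds the difference quotients $(f(z,x)-f(z_0,x))/(z-z_0)$, uniformly for $|z-z_0|<r$, by $\frac{1}{2\pi r}\int_0^{2\pi}|f(z_0+2re^{i\theta},x)|\,d\theta$.
\item This bound is $\mu$-integrable by Tonelli and (iii), so dominated convergence gives $F'(z_0)=\int_X\partial_z f(z_0,x)\,d\mu(x)$.
\item Higher derivatives follow by induction, after checking that $\partial_z f$ again satisfies (i)--(iii).
\end{itemize}
That route produces (2) one derivative at a time. Yours establishes holomorphy first and reads off all derivatives at once from Cauchy's formula for $F$. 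Both rest on the same two ingredients: Cauchy's formula, and Tonelli/Fubini with (iii) applied on a circle.
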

\vspace*{10pt}
\noindent\textbf{Acknowledgments:}
This work was supported by JSPS KAKENHI (Grant Numbers JP16K17612, JP20K03628, JP23K25783, and JP24K06755).
This work was also supported by the Research Institute for Mathematical Sciences,
an International Joint Usage/Research Center located in Kyoto University.


\bibliographystyle{plain}
\bibliography{References}

\bigskip
\vfill 

\noindent
\begin{minipage}[t]{0.48\textwidth}
\raggedright
{\small
\textbf{Kota Imura}\\
Nagano Prefecture Fujimi High School,\\
Fujimi 3330, Nagano 399-0211, Japan\\
E-mail: \texttt{t30005206ik@g.nagano-c.ed.jp}

\medskip

\textbf{Shinnosuke Izumi}\\
Department of Mathematics,\\
Shinshu University,\\
Matsumoto, Nagano 390-8621, Japan\\
E-mail: \texttt{izumi@math.shinshu-u.ac.jp}
}
\end{minipage}%
\hfill
\begin{minipage}[t]{0.48\textwidth}
\raggedright
{\small
\textbf{Yasumichi Matsuzawa}\\
Department of Mathematics,\\
Faculty of Education, Shinshu University,\\
6-Ro, Nishi-nagano, Nagano 380-8544, Japan\\
E-mail: \texttt{myasu@shinshu-u.ac.jp}

\medskip

\textbf{Itaru Sasaki}\\
Department of Mathematics,\\
Shinshu University,\\
Matsumoto, Nagano 390-8621, Japan\\
E-mail: \texttt{isasaki@shinshu-u.ac.jp}
}
\end{minipage}

\end{document}